\theoremstyle{plain}
\newtheorem{theorem}{Theorem}[section]
\theoremstyle{definition}
\newtheorem{definition}{Definition}[section]
\newtheorem{assumption}[theorem]{Assumption}
\newtheorem{remark}[theorem]{Remark}
\newtheorem{lemma}[theorem]{Remark}
\newtheorem{corollary}[theorem]{Remark}
\begin{document}

\title{Analysis of Shuffling Beyond Pure Local Differential Privacy}



\author{%
  Shun Takagi \\
  LY Corporation \\
  \texttt{shutakag@lycorp.co.jp} \\
  \And
  Seng Pei Liew \\
  LY Corporation \\
  \texttt{sliew@lycorp.co.jp} \\
}
\maketitle

\begin{abstract}
Shuffling is a powerful way to amplify privacy of a local randomizer in private distributed data analysis.
Most existing analyses of how shuffling amplifies privacy are based on the pure local differential privacy (DP) parameter $\varepsilon_0$.
This paper raises the question of whether $\varepsilon_0$ adequately captures the privacy amplification.
For example, since the Gaussian mechanism does not satisfy pure local DP for any finite $\varepsilon_0$, does it follow that shuffling yields weak amplification?
To solve this problem, we revisit the privacy blanket bound of Balle et al.\ (the \emph{blanket divergence}) and develop a direct asymptotic analysis that bypasses $\varepsilon_0$.
Our key finding is that, asymptotically, the blanket divergence depends on the local mechanism only through a single scalar parameter $\chi$ and that this dependence is monotonic.
Therefore, this parameter serves as a proxy for shuffling efficiency, which we call the \emph{shuffle index}.
By applying this analysis to both upper and lower bounds of the shuffled mechanism's privacy profile, we obtain a band for its privacy guarantee through shuffle indices. 
Furthermore, we derive a simple structural, necessary and sufficient condition on the local randomizer under which this band collapses asymptotically.
$k$-RR families with $k\ge3$ satisfy this condition, while for generalized Gaussian mechanisms the condition may not hold but the resulting band remains tight. 
Finally, we complement the asymptotic theory with an FFT-based algorithm for computing the blanket divergence at finite $n$, which offers rigorously controlled relative error and near-linear running time in $n$, providing a practical numerical analysis for shuffle DP.

\end{abstract}

\section{Introduction}

Local privacy~\cite{duchiLocalPrivacyStatistical2013}, in which each user applies a local randomizer to their data before sending it, removes the need for a trusted curator and thus enables distributed private data analysis.
However, the price of eliminating trust is often a substantial loss in accuracy.
Shuffling, viewed as an anonymization layer that breaks the link between users and their messages, offers a compelling middle ground.
Since it only anonymizes messages across users, it preserves the utility of the analytics tasks already supported in the local model, such as frequency estimation, histograms/heavy hitters, and mean estimation.
Moreover, a growing line of work shows that this simple step can dramatically amplify privacy, allowing protocols to approach trusted-curator privacy-utility trade-offs~\cite{feldmanHidingClonesSimple2022,ballePrivacyBlanketShuffle2019, erlingssonAmplificationShufflingLocal2019}.
Given its practical utility, shuffling has become a key primitive for private distributed data collection and analytics~\cite{liewNetworkShufflingPrivacy2022,luoRM2AnswerCounting2025}.

From a practical data-management perspective, we need precise analysis of how and to what extent shuffling amplifies privacy to choose the best local randomizer for a given task.
However, existing tools are local randomizer-agnostic and largely tailored to the pure local differential privacy (DP) setting~\cite{ballePrivacyBlanketShuffle2019,erlingssonAmplificationShufflingLocal2019} (i.e., $(\varepsilon_0,0)$-local DP) and this $\varepsilon_0$-centric viewpoint has two important limitations.

First, the parameter $\varepsilon_0$ summarizes the local privacy
guarantee, but it is a rather crude descriptor from the viewpoint of
shuffling: it largely ignores the structural properties of the local
mechanism that actually govern its privacy amplification.
In particular, existing general bounds typically do not distinguish
between, say, a Laplace mechanism with a given noise scale, a
$k$-randomized response ($k$-RR) mechanism with a given $k$, or other
natural families of local mechanisms.
As a consequence, generic upper bounds expressed solely in terms of
$\varepsilon_0$ can be quite loose.
Indeed, several works have shown that by exploiting the specific
structure of a given mechanism one can obtain significantly tighter
bounds on the effect of shuffling
(e.g., for $k$-RR one can show that privacy amplification becomes
stronger as $k$ increases, even when $\varepsilon_0$ is kept
fixed~\cite{feldmanHidingClonesSimple2022,biswasTightDifferentialPrivacy2024}).
Yet, we currently lack a general understanding of what intrinsic
properties of a local mechanism govern its ``shuffle efficiency.''

Second, the restriction to pure local DP mechanisms excludes a
wide range of natural and practically relevant local randomizers.
Pure local DP is a convenient sufficient condition for privacy
amplification by shuffling, but it is far from necessary.
In practice, one often encounters approximate-DP local mechanisms or
mechanisms that do not satisfy any DP condition.
Surprisingly, even for the Gaussian mechanism, arguably one of the most prominent mechanisms in central DP, it is not known how to precisely
characterize privacy amplification by shuffling.
Balle et al.~\cite{ballePrivacyBlanketShuffle2019} already pointed out
that the Gaussian case is technically challenging, and subsequent work
by Koskela et al.~\cite{koskelaNumericalAccountingShuffle2022} observed that numerical accounting for Gaussian shuffling is nontrivial.
As a result, to date, existing results for shuffling of Gaussian mechanisms are essentially limited to lower bounds~\cite{liewShuffleGaussianMechanism2022,chuaHowPrivateAre2024}.
While there are attempts to generalize shuffling analysis to approximate local DP (i.e., $(\varepsilon_0,\delta_0>0)$-local DP) in a mechanism-agnostic fashion~\cite{feldmanHidingClonesSimple2022, cheuDistributedDifferentialPrivacy2019}, such general bounds tend to be overly pessimistic; they often show little visible amplification for moderate values of $n$, because they are driven by worst-case mechanisms that do not reflect the structure of the specific mechanism.

\paragraph{The privacy blanket and blanket divergence.}
A central tool in the analysis of shuffle DP is the \emph{privacy blanket} introduced by Balle et al.~\cite{ballePrivacyBlanketShuffle2019}.
Given a local randomizer $\mathcal R$, the blanket distribution is a data-independent distribution $\mathcal R_{\mathrm{BG}}$ defined using $\mathcal R$.
Balle et al.\ showed that the privacy profile $\delta(\varepsilon)$~\cite{ballePrivacyAmplificationSubsampling2018a} of the shuffled mechanism can be upper-bounded by a divergence computed under the blanket distribution; we refer to this quantity as the \emph{blanket divergence}.
This blanket-divergence bound is currently the best-known general upper bound on privacy amplification by shuffling~\cite{suDecompositionBasedOptimalBounds2025}.

However, the blanket divergence itself is still poorly understood, largely due to its analytical and numerical intractability.
In particular, Balle et al.~\cite{ballePrivacyBlanketShuffle2019} mainly
derive generic upper bounds via Hoeffding- and Bennett-type concentration
inequalities based on $\varepsilon_0$.
Consequently, these bounds can be quite loose and fail to capture how the blanket divergence depends on the structure of the underlying local randomizer.

This is partly because Balle et al.\ explicitly aim to obtain finite-$n$ upper bounds.
More generally, carrying out a precise finite-$n$ analysis of shuffling through the blanket
divergence is intrinsically challenging: DP requires a worst-case guarantee over all neighboring datasets (i.e., establishing a uniform upper bound over all neighboring input pairs).
Identifying the pairs that maximize the blanket divergence at finite $n$ is highly nontrivial in general.
In contrast, in the $\varepsilon_0$-local DP setting, the single parameter $\varepsilon_0$ already summarizes the worst case over all neighboring input pairs, so concentration-based bounds can be expressed directly in terms of $\varepsilon_0$ without explicitly resolving the supremum over all neighboring input pairs.

\paragraph{Our goal and approach.}
In light of the above limitations, our goal is to circumvent the apparent finite-$n$ bottlenecks of generic concentration-based analyses by taking a different route, namely an asymptotic but direct analysis of the blanket divergence. 
While this sacrifices exactness at any fixed $n$, it is tight and allows us to address the conceptual and practical limitations discussed above.

Concretely, the relevant quantity in blanket divergence can be written as a sum of $n$ i.i.d.\ random variables, so that its behavior is driven by a central limit theorem (CLT)-type phenomenon.
We leverage asymptotic expansions of the CLT~\cite{petrovSumsIndependentRandom1975} to obtain a sharp leading-order characterization of the blanket divergence.
Formally, in the regime\footnote{This corresponds to the moderate deviation regime required to apply the asymptotic expansion of the CLT.}
$\varepsilon_n = \omega(n^{-1/2})$ and
$\varepsilon_n = O(\sqrt{\log n / n})$, the blanket divergence admits the asymptotic expansion
\[
  \mathcal D^{\mathrm{blanket}}
  =
  \varphi\bigl(\chi\,\varepsilon_n\sqrt{n}\bigr)\,
  \left(
    \frac{1}{\chi^3}\,
    \frac{1}{\varepsilon_n^2 n^{3/2}}
  \right)\bigl(1+o(1)\bigr),
\]
where $\varphi$ denotes the PDF of the standard normal distribution and $\chi$ is a mechanism-dependent constant easily computed from the structure of the local randomizer.
Thus $\chi$ fully governs how well a given local randomizer interacts with
shuffling.
Crucially, this dependence is monotonic: a larger $\chi$ results in a smaller divergence, thereby implying stronger privacy guarantees.
This means that $\chi$ serves as a single-number index of privacy after shuffling; hence, we call it the \emph{shuffle index}, which substitutes for $\varepsilon_0$ as a more refined descriptor of the local randomizer's shuffle efficiency.
In other words, we can choose the best local randomizer for shuffling by choosing the one with the largest shuffle index $\chi$.

Moreover, we obtain an asymptotic characterization of the worst-case blanket divergence and hence the shuffled mechanism's $(\varepsilon\approx\frac{1}{\chi}\sqrt{\frac{\log n}{n}},\delta\approx\tfrac{\alpha}{n})$ guarantee of DP for any $\alpha>0$.
Because we can express not only upper bounds but also lower bounds in terms of the blanket divergence, we can quantitatively assess the tightness of the upper bound.
In particular, we derive a simple structural necessary and sufficient condition on the local randomizer under which the blanket divergence yields an asymptotically optimal characterization (i.e., matches the upper bound and the lower bound) of shuffling.
This condition is satisfied, for example, by the $k$-RR family with $k\geq 3$. 
Moreover, we show that the Laplace and Gaussian mechanisms do not satisfy the condition but the upper and lower bounds narrow in higher privacy regimes.

On the practical side, asymptotic expansions alone are not sufficient for numerical accounting at finite $n$.
We therefore complement our asymptotic analysis with a finite-$n$ algorithm for computing the blanket divergence.
Our algorithm is based on an FFT approximation of the distribution, combined with explicit control of the truncation, discretization, and aliasing errors.
Leveraging the asymptotic expansion, we show that the parameters can be tuned by a parameter $\eta$ so that the relative error is $O(\eta)$, while the running time scales as $\widetilde O(n/\eta)$.

\subsection*{Related Work and Our Contributions}
\label{sec:related-work}

A line of work seeks to go beyond the $\varepsilon_0$-centric view by exploiting additional structure of the local randomizer, typically in the form of a small number of scalar parameters.
Feldman et al.~\cite{feldmanStrongerPrivacyAmplification2023a} introduced a stronger clone paradigm and showed that, for mechanisms such as $k$-RR, one can parameterize the effect of shuffling using $\varepsilon_0$ together with two additional scalar quantities, yielding substantially tighter bounds than $\varepsilon_0$-only analyses, although even in the $k$-RR case exact optimality is not established.
Building on this, Wang et al.~\cite{wangPrivacyAmplificationShuffling2024b, wangShuffleModelDifferential2025a} refine the stronger clone-based analysis, using two scalar parameters, and provide sufficient conditions on the local randomizer under which their bounds become optimal. 
A subtle point is that the proof of Lemma~4.5 in~\cite{wangPrivacyAmplificationShuffling2024b} implicitly assumes the existence of a nontrivial post-processing map (see the errata of \cite{feldmanStrongerPrivacyAmplification2023a}).
Outside specific mechanism classes such as $k$-RR where no such map is needed (see Theorem 7.1 of \cite{feldmanStrongerPrivacyAmplification2023a}), the results should therefore be interpreted with some care.

Several works have proposed numerical methods for evaluating shuffle DP guarantees.
Koskela et al.~\cite{koskelaNumericalAccountingShuffle2022} use the clone paradigm together with FFT to approximate the privacy loss of shuffled mechanisms.
Their approach, however, is restricted to pure local DP mechanisms and inherits the looseness of the clone paradigm, as shown by Su et al.~\cite{suDecompositionBasedOptimalBounds2025}.
Su et al.~\cite{suDecompositionBasedOptimalBounds2025} propose a numerical analysis of the blanket divergence based on FFT, similar in spirit to ours.
However, they still assume pure local DP and do not provide rigorous control of the relative error; in addition, the running time scales as $O(n^2)$ in the worst case when targeting a constant additive error.
Biswas et al.~\cite{biswasTightDifferentialPrivacy2024} give an exact combinatorial method for computing the privacy guarantees of shuffled $k$-RR, but their analysis focuses on fixed neighboring database pairs rather than general $(\varepsilon,\delta)$-DP guarantees.

\paragraph{Our contributions.}
Compared with the above works, our contributions are as follows:
\begin{itemize}
  \item We are the first to give a unified shuffle DP analysis that does not assume pure local DP and applies to arbitrary local randomizers under mild regularity assumptions.

  \item We theoretically analyze the blanket divergence including the exact leading constant factor and summarize its leading behavior by a single shuffle index $\chi$, yielding mechanism-aware bounds and a necessary-and-sufficient optimality condition.

  \item We develop an FFT-based blanket-divergence accountant with rigorous relative-error $O(\eta)$ guarantees and near-linear $\widetilde O(n/\eta)$ running time.

  \item We empirically validate our asymptotic and FFT-based analysis, and in a distribution-estimation task we show that generalized Gaussian mechanisms achieve favorable privacy-utility trade-offs compared with pure local DP mechanisms.
\end{itemize}

\section{Preliminaries}

\paragraph{Notations}
$\Phi$ and $\varphi$ denote the CDF and PDF of the standard normal distribution.
$\Gamma(\cdot)$ denotes the Gamma function,
$\Gamma(t) = \int_0^\infty u^{t-1}\mathrm{e}^{-u}\,du$ for $t>0$.
$W(\cdot)$ denotes the principal branch of the Lambert
$W$-function.
Throughout the paper, whenever an expression can be written in terms of $\Phi$, $\Gamma$, $W$, and elementary functions, we will also refer to it as being available in closed form.
$\mathbb{E}[X]$ and $\mathrm{Var}(X)$ denote the expectation and variance of a random variable $X$.
$\mathbb{R}$ and $\mathbb{N}$ denote the set of real numbers and natural numbers, respectively.
$\mathbf{1}\{\cdot\}$ denotes the indicator function, which equals $1$ if the condition inside the brackets is satisfied, and $0$ otherwise. We use standard asymptotic notations $O(\cdot)$, $o(\cdot)$, $\Omega(\cdot)$, $\omega(\cdot)$, and $\Theta(\cdot)$.

\subsection{Differential Privacy (DP)}

We first recall the notion of DP and express it in a form based on the hockey-stick divergence.

\begin{definition}[Hockey-stick divergence]
Let $\mathcal Y$ be an output space equipped with a reference measure
(counting measure in the discrete case, Lebesgue measure in the continuous
case).
Let $P$ and $Q$ be probability distributions on $\mathcal Y$ that admit
densities $p$ and $q$ with respect to this reference measure, and fix a
parameter $\alpha \ge 1$.
The \emph{hockey-stick divergence} of $P$ from $Q$ of order $\alpha$ is 
\[
  \mathcal D_{\alpha}(P\|Q)
  :=
  \int_{\mathcal Y} \bigl[p(y) - \alpha\,q(y)\bigr]_+\,dy,
\]
where $[u]_+ := \max\{u,0\}$.
In the discrete case, the integral should be read as a sum over
$y\in\mathcal Y$.
\end{definition}

\begin{definition}[Differential privacy via hockey-stick divergence~\cite{dworkAlgorithmicFoundationsDifferential2013a, bartheDifferentialPrivacyComposition2013}]
Let $\mathcal X$ be an input domain and $\mathcal Z$ an output space.
A randomized mechanism is a map
$
  \mathcal M : \mathcal X^n \to \mathcal Z
$
that takes a dataset $x_{1:n}\in\mathcal X^n$ as input and outputs a
random element of $\mathcal Z$.
For each dataset $x_{1:n}$ we write $\mathcal M(x_{1:n})$ for the
corresponding output distribution on $\mathcal Z$, and we denote its
density by the same symbol when convenient.
Two datasets $x_{1:n},x'_{1:n}\in\mathcal X^n$ are said to be neighboring ($x_{1:n}\simeq x'_{1:n}$) if they differ in exactly one element. For a mechanism $\mathcal M$, $\varepsilon\ge0$, and $\delta\in [0,1]$, $\mathcal M$ is $(\varepsilon,\delta)$-DP if and only if
\[
  \delta_\mathcal{M}(\varepsilon) = \sup_{x_{1:n}\simeq x'_{1:n}}
  \mathcal D_{e^\varepsilon}\bigl(\mathcal M(x_{1:n}) \,\big\|\, \mathcal M(x'_{1:n})\bigr)\le\delta.
\]
The function $\delta_\mathcal{M}(\varepsilon)$ is referred to as the privacy profile of $\mathcal M$~\cite{ballePrivacyAmplificationSubsampling2018a}.
Sometimes, we say that $\mathcal M$ satisfies $(\varepsilon,\delta)$-DP for $x_{1:n}\simeq x'_{1:n}$ if $\mathcal D_{e^\varepsilon}\bigl(\mathcal M(x_{1:n}) \,\big\|\, \mathcal M(x'_{1:n})\bigr)\le\delta$~\cite{sommerPrivacyLossClasses2019}.
\end{definition}

In the shuffle model, the mechanism $\mathcal M$ is typically constructed by applying a local randomizer independently to each user's input.

\begin{definition}[Local randomizer and blanket distribution]
Let $\mathcal X$ be an input space and $\mathcal Y$ an output space.
A \emph{local randomizer} is a randomized mechanism
$
  \mathcal R : \mathcal X \to \mathcal Y.
$
For each input $x\in\mathcal X$ we write $\mathcal R_x$ for the
distribution of the random output $\mathcal R(x)$.
We assume that there is a reference measure on $\mathcal Y$ (counting
measure in the discrete case, Lebesgue measure in the continuous case)
such that each $\mathcal R_x$ admits a density, and we write
$
  \mathcal R_x(y)
$
for this probability mass function or density, depending on the context. The \emph{blanket distribution} $\mathcal R_{\mathrm{BG}}$ associated
with $\mathcal R$ is defined for $y\in\mathcal Y$ by,
\[
  \underline{\mathcal R}(y)
  := \inf_{x\in\mathcal X} \mathcal R_x(y),\qquad
  \gamma
  := \int_{\mathcal Y} \underline{\mathcal R}(y)\,dy \in (0,1],
  \qquad\mathcal R_{\mathrm{BG}}(y)
  := \frac{1}{\gamma}\,\underline{\mathcal R}(y).
\]
where, in the purely discrete case, the integral should be read as a sum
over $y\in\mathcal Y$.
The scalar $\gamma$ is called the \emph{blanket mass} of $\mathcal R$.
\end{definition}

Following Balle et al.~\cite{ballePrivacyBlanketShuffle2019}, we work in the
single-message \emph{randomize-then-shuffle} model.%
That is, we assume black-box access to an ideal functionality \emph{shuffler}\footnote{In contrast to the multi-message shuffling model, which typically
requires access to a stronger ideal functionality $\mathcal S$ capable of
splitting or decorrelating multiple messages from the same user (for example
by hiding timing and grouping information), the single-message shuffling model
only relies on a comparatively simple ``anonymization'' ideal functionality
that permutes user identities. This makes the single-message model
significantly easier to implement in practice, and therefore in this work we
restrict attention to the single-message shuffling setting.}.

\begin{definition}[Single-message shuffling and shuffled mechanism]

\label{def:single-message-shuffle}
Let $\mathcal R$ be a local randomizer. Given a dataset $x_{1:n}\in\mathcal X^n$, each user $i\in[n]$ applies the
local randomizer to obtain a single message
$
  Y_i := \mathcal R(x_i)\in\mathcal Y.
$
The \emph{shuffler} is a randomized map
$
  \mathcal S:\mathcal Y^n\to\mathcal Y^n
$
which samples a permutation $\pi$ uniformly at random from the symmetric
group on $[n]$ and outputs
$
  \mathcal S(y_1,\dots,y_n)
  := (y_{\pi(1)},\dots,y_{\pi(n)}).
$
A shuffled mechanism is $\mathcal{M}=\mathcal{S} \circ \mathcal{R}^n$, which shuffles the outputs of $\mathcal{R}$.
\end{definition}

Shuffled mechanisms are the main object of study in this paper and the following upper bound is known.

\begin{lemma}[Lemma 5.3 of Balle et al.~\cite{ballePrivacyBlanketShuffle2019}]
\label{lem:upper-blanket-divergence}
Fix $\varepsilon \geq 0$, let $x_{1:n} \simeq x_{1:n}^\prime$ be neighboring inputs where only the first element differs (i.e., $x_1 \neq x^\prime_1$), and $\mathcal{R}$ has blanket mass $\gamma$. 
Then, $\mathcal{S} \circ \mathcal{R}^n$ satisfies $(\varepsilon,\delta(x_{1:n}, x_{1:n}^\prime))$-DP for $x_{1:n}$ and $x_{1:n}^\prime$, where

\begin{align}
\nonumber\delta(x_{1:n}, x_{1:n}^\prime)
&\le
\mathcal{D}_{\mathrm{e}^\varepsilon,n,\mathcal{R}_\mathrm{BG},\gamma}^{\mathrm{blanket}}
  (\mathcal{R}_{x_1} \| \mathcal{R}_{x_1^\prime})
:=\nonumber
\frac{1}{n\gamma}
\mathbb{E}_{\substack{
  Y_{1:M}\stackrel{\mathrm{i.i.d.}}{\sim} \mathcal{R}_\mathrm{BG} \\
  M\sim \mathrm{Bin}(n,\gamma)
}}
\left[ \left( \sum_{i=1}^M l_\varepsilon(Y_i)\right)_+ \right],
\end{align}
where $(\cdot)_+$ denotes $\max\{0,\cdot\}$.

Here, $\mathrm{Bin}$ is a binomial distribution and $l_\varepsilon(y) := \frac{\mathcal{R}_{x_1}(y) - e^{\varepsilon} \mathcal{R}_{x_1^\prime}(y)}{\mathcal{R}_\mathrm{BG}(y)}$
is called the \textit{privacy amplification random variable} when $y=Y\sim \mathcal{R}_\mathrm{BG}$.
\end{lemma}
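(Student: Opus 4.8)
The plan is to follow the privacy-blanket argument of Balle et al.\ in three moves: first decompose each $\mathcal R_x$ into a data-dependent part plus the common blanket; then reduce the $n$-message shuffled distinguishing problem to a ``single message against $m$ blanket decoys'' problem by conditioning on data-independent randomness; and finally evaluate that reduced problem exactly and repackage the binomial expectation. For the decomposition, write $\mathcal R_x(y) = \gamma\,\mathcal R_{\mathrm{BG}}(y) + (1-\gamma)\,\nu_x(y)$ with $\nu_x(y) := (\mathcal R_x(y)-\underline{\mathcal R}(y))/(1-\gamma)$ (assume $\gamma<1$; the case $\gamma=1$ is trivial since the mechanism is then data-independent). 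This $\nu_x$ is a bona fide density because $\mathcal R_x(y)\ge\underline{\mathcal R}(y)$ pointwise and $\int\mathcal R_x=1$, $\int\underline{\mathcal R}=\gamma$. Hence $\mathcal S\circ\mathcal R^n$ on input $z_{1:n}$ is realized as follows: each user $i$ draws $b_i\sim\mathrm{Bernoulli}(\gamma)$ independently and emits $Y_i\sim\mathcal R_{\mathrm{BG}}$ if $b_i=1$ and $Y_i\sim\nu_{z_i}$ if $b_i=0$; the $Y_i$ are then uniformly permuted. The total blanket count is $M:=\sum_i b_i\sim\mathrm{Bin}(n,\gamma)$.

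Next, since only user $1$ differs between $x_{1:n}$ and $x'_{1:n}$, the joint law of $\Theta:=\bigl(\{i\ge 2 : b_i=0\},\,(Y_i)_{i\ge 2,\,b_i=0}\bigr)$ is \emph{identical} under the two inputs: the coins $b_2,\dots,b_n$ are data-independent and $\nu_{x_i}=\nu_{x'_i}$ for $i\ge 2$. Condition on $\Theta$. The remaining randomness producing the shuffled output is the uniform permutation of user $1$'s message $Y_1\sim\mathcal R_{x_1}$ together with the $M'$ blanket messages among $\{2,\dots,n\}$, each $\sim\mathcal R_{\mathrm{BG}}$, where $M':=\#\{i\ge 2: b_i=1\}\sim\mathrm{Bin}(n-1,\gamma)$; the messages already recorded in $\Theta$ are then merged back in and everything is reshuffled, which is a post-processing with independent extra randomness. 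The crucial point is that we do \emph{not} condition on $b_1$, so $Y_1\sim\mathcal R_{x_1}$, not $\nu_{x_1}$. By joint convexity of $\mathcal D_{e^\varepsilon}$ (from convexity of $u\mapsto\max\{u,0\}$) we may average over the common value of $\Theta$, and by the data-processing inequality the merge-and-reshuffle step can only decrease the divergence, so
\[
  \mathcal D_{e^\varepsilon}\bigl(\mathcal M(x_{1:n})\,\|\,\mathcal M(x'_{1:n})\bigr)
  \;\le\;
  \mathbb E_{M'\sim\mathrm{Bin}(n-1,\gamma)}\Bigl[\,\mathcal D_{e^\varepsilon}\bigl(P^{(M')}\,\|\,Q^{(M')}\bigr)\Bigr],
\]
where $P^{(m)}$ (resp.\ $Q^{(m)}$) is the law of the uniform shuffle of $(W,Z_1,\dots,Z_m)$ with $W\sim\mathcal R_{x_1}$ (resp.\ $W\sim\mathcal R_{x'_1}$) and $Z_1,\dots,Z_m\stackrel{\mathrm{i.i.d.}}{\sim}\mathcal R_{\mathrm{BG}}$.

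Finally, a direct computation of the shuffle density gives $P^{(m)}(v_0,\dots,v_m)=\frac{1}{m+1}\sum_{k=0}^m\mathcal R_{x_1}(v_k)\prod_{i\ne k}\mathcal R_{\mathrm{BG}}(v_i)$ on ordered $(m{+}1)$-tuples, and similarly for $Q^{(m)}$. Subtracting $e^\varepsilon Q^{(m)}$, pulling out $\frac{1}{m+1}$, and using $\prod_{i\ne k}\mathcal R_{\mathrm{BG}}(v_i)=\mathcal R_{\mathrm{BG}}(v_k)^{-1}\prod_i\mathcal R_{\mathrm{BG}}(v_i)$ on the support of $\mathcal R_{\mathrm{BG}}$ yields
\[
  \mathcal D_{e^\varepsilon}\bigl(P^{(m)}\,\|\,Q^{(m)}\bigr)
  = \frac{1}{m+1}\,\mathbb E_{Y_{1:m+1}\stackrel{\mathrm{i.i.d.}}{\sim}\mathcal R_{\mathrm{BG}}}\Bigl[\max\Bigl(\textstyle\sum_{i=1}^{m+1} l_\varepsilon(Y_i),\,0\Bigr)\Bigr]
  =: \frac{h(m+1)}{m+1},
\]
with the convention $h(0):=0$. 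Substituting $m=M'$ and $M=M'+1$, and using the elementary identity $\binom{n-1}{m-1}\frac{1}{m}=\frac{1}{n}\binom{n}{m}$, converts $\mathbb E_{M'\sim\mathrm{Bin}(n-1,\gamma)}[h(M'{+}1)/(M'{+}1)]$ into $\frac{1}{n\gamma}\,\mathbb E_{M\sim\mathrm{Bin}(n,\gamma)}[h(M)]$, which is exactly $\mathcal D^{\mathrm{blanket}}_{e^\varepsilon,n,\mathcal R_{\mathrm{BG}},\gamma}(\mathcal R_{x_1}\|\mathcal R_{x'_1})$, completing the bound.

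The main obstacle is the reduction step: one must isolate precisely the right chunk of data-independent randomness — the non-blanket status and emitted values of users $2,\dots,n$ — so that conditioning on it both collapses the problem to the clean ``one message versus $M'$ decoys'' shuffle \emph{and} keeps $Y_1$ distributed as $\mathcal R_{x_1}$ rather than $\nu_{x_1}$, and then one must combine joint convexity with the data-processing inequality with correctly matched mixing weights. A secondary technical point is that the displayed identity in the last step as written presumes $\mathrm{supp}(\mathcal R_{x_1})\cup\mathrm{supp}(\mathcal R_{x'_1})\subseteq\mathrm{supp}(\mathcal R_{\mathrm{BG}})$; under the regularity assumptions used throughout (the $\mathcal R_x$ share a common support) this holds, and otherwise the complementary region only adds a nonnegative term to $\delta$ that must be handled separately.
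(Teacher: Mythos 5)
The paper gives no proof of this lemma at all---it is imported verbatim as Lemma~5.3 of Balle et al.~\cite{ballePrivacyBlanketShuffle2019}---and your argument is a correct reconstruction of exactly that source's privacy-blanket proof: the mixture decomposition $\mathcal R_x=\gamma\mathcal R_{\mathrm{BG}}+(1-\gamma)\nu_x$, conditioning on the data-independent non-blanket randomness of users $2,\dots,n$ combined with joint convexity and data processing, the exact evaluation of the one-message-versus-$M'$-decoys hockey-stick divergence, and the reindexing $\tfrac{1}{m+1}\binom{n-1}{m}=\tfrac{1}{n}\binom{n}{m+1}$ (the same size-biasing identity the paper itself uses in the proof of Lemma~\ref{lemma:trans_fft}), and each step checks out. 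Your closing support caveat is necessary rather than cosmetic---without $\mathrm{supp}(\mathcal R_{x_1})\cup\mathrm{supp}(\mathcal R_{x_1'})\subseteq\mathrm{supp}(\mathcal R_{\mathrm{BG}})$ the stated bound can genuinely fail---but it holds automatically in Balle et al.'s pure local DP setting and under Assumption~\ref{assump:regularity} of this paper (finiteness of $\int \mathcal R_{x_1}(y)^2/\mathcal R_{\mathrm{ref}}(y)\,dy$ forces the required absolute continuity), so your proof covers the regime in which the lemma is invoked here.
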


We term this upper bound \textit{blanket divergence} $\mathcal{D}_{\mathrm{e}^\varepsilon, n,\mathcal{R}_\text{BG},\gamma}^{\text{blanket}}$.
We omit the parameters and simply write $\mathcal{D}^{\text{blanket}}$ when they are clear from the context.
Moreover, for the same mechanism, the following lower bound, also expressed in terms of the blanket divergence, is known.

\begin{theorem}[Theorem 14 of Su et al.~\cite{suDecompositionBasedOptimalBounds2025}]
\label{theorem:lower-blanket-divergence}
For any $x,a_1,a_1^\prime\in\mathcal{X},\varepsilon,n$, and local randomizer $\mathcal{R}$, it holds that:
$$
\sup_{x_{1:n} \simeq x_{1:n}^\prime}\mathcal{D}_{\mathrm{e}^\varepsilon}(\mathcal{S} \circ \mathcal{R}^n(x_{1:n}) \| \mathcal{S} \circ \mathcal{R}^n(x_{1:n}^\prime)) \geq \mathcal{D}_{\mathrm{e}^\varepsilon,n,\mathcal{R}_x,1}^{\text{blanket}}(\mathcal{R}_{a_1} \| \mathcal{R}_{a_1^\prime}).
$$
\end{theorem}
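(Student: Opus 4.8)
The plan is to exhibit a single neighboring pair of datasets for which the shuffled-output hockey-stick divergence already equals the right-hand side (up to a harmless support issue), so that the worst-case divergence on the left automatically dominates it. Fix $x,a_1,a_1'\in\mathcal{X}$. If $a_1=a_1'$ the right-hand side is $0$ for every $\varepsilon\ge 0$, because the privacy amplification variable is then $(1-e^\varepsilon)\mathcal{R}_{a_1}(y)/\mathcal{R}_x(y)\le 0$; so assume $a_1\ne a_1'$ and take $x_{1:n}=(a_1,x,\dots,x)$ and $x_{1:n}'=(a_1',x,\dots,x)$, i.e.\ users $2,\dots,n$ are ``frozen'' at $x$ and only user~$1$ distinguishes the two databases. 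The intuition is that freezing $n-1$ users at $x$ makes $\mathcal{R}_x$ play exactly the structural role of the blanket $\mathcal{R}_{\mathrm{BG}}$ in Lemma~\ref{lem:upper-blanket-divergence}, while choosing blanket mass $1$ encodes that each of the $n-1$ frozen messages is certain to be drawn from $\mathcal{R}_x$ (there is no data-dependent component to peel off), so the binomial in the blanket divergence degenerates to the point mass at $n$.

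First I would write down the density, with respect to the $n$-fold product reference measure on $\mathcal{Y}^n$, of the shuffled output under each of these two inputs. Since a uniform shuffle of i.i.d.\ samples is again i.i.d., $\mathcal{S}\circ\mathcal{R}^n(x,\dots,x)$ has density $q_x^{\otimes}(z)=\prod_{k=1}^n\mathcal{R}_x(z_k)$; and because the shuffle places user~$1$'s message in a uniformly random position while the remaining i.i.d.\ messages contribute $\prod_{k\ne j}\mathcal{R}_x(z_k)$ regardless of their order,
\[
  q_{a_1}(z_1,\dots,z_n)=\frac1n\sum_{j=1}^n\mathcal{R}_{a_1}(z_j)\prod_{k\ne j}\mathcal{R}_x(z_k),
\]
and likewise for $q_{a_1'}$. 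Hence on $\{z:\prod_k\mathcal{R}_x(z_k)>0\}$ the likelihood ratios collapse to $q_{a_1}/q_x^{\otimes}=\frac1n\sum_j\mathcal{R}_{a_1}(z_j)/\mathcal{R}_x(z_j)$ and similarly for $a_1'$; in particular the only relevant functional of $z$ is the empirical average of $l_\varepsilon(z_j)$ with $l_\varepsilon(y)=(\mathcal{R}_{a_1}(y)-e^\varepsilon\mathcal{R}_{a_1'}(y))/\mathcal{R}_x(y)$, which is as it should be given that the shuffle erases all order information.

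Next I would substitute into the hockey-stick divergence. Restricting the integral $\int_{\mathcal{Y}^n}[q_{a_1}-e^\varepsilon q_{a_1'}]_+$ to $(\mathrm{supp}\,\mathcal{R}_x)^n$ can only decrease it (any mass of $\mathcal{R}_{a_1}$ or $\mathcal{R}_{a_1'}$ lying off $\mathrm{supp}\,\mathcal{R}_x$ adds a non-negative ``singular'' term that we simply drop), and on $(\mathrm{supp}\,\mathcal{R}_x)^n$ the integrand factors as $\bigl(\prod_k\mathcal{R}_x(z_k)\bigr)\bigl[\tfrac1n\sum_j l_\varepsilon(z_j)\bigr]_+$, so
\[
  \mathcal{D}_{\mathrm{e}^\varepsilon}\bigl(\mathcal{S}\circ\mathcal{R}^n(x_{1:n}) \,\big\|\, \mathcal{S}\circ\mathcal{R}^n(x_{1:n}')\bigr)\;\ge\;\mathbb{E}_{Y_{1:n}\stackrel{\mathrm{i.i.d.}}{\sim}\mathcal{R}_x}\Bigl[\Bigl(\tfrac1n\sum_{j=1}^n l_\varepsilon(Y_j)\Bigr)_+\Bigr]=\mathcal{D}_{\mathrm{e}^\varepsilon,n,\mathcal{R}_x,1}^{\text{blanket}}(\mathcal{R}_{a_1}\|\mathcal{R}_{a_1'}),
\]
where the last equality is just unfolding the blanket-divergence definition with $\gamma=1$ (so $\mathrm{Bin}(n,1)$ is the point mass at $n$ and the prefactor $\tfrac{1}{n\gamma}$ is $\tfrac1n$). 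Since $x_{1:n}\simeq x_{1:n}'$, the left-hand side is bounded above by the supremum appearing on the left of the claimed inequality, which finishes the argument; moreover the bound is attained with equality whenever $\mathrm{supp}\,\mathcal{R}_{a_1}\cup\mathrm{supp}\,\mathcal{R}_{a_1'}\subseteq\mathrm{supp}\,\mathcal{R}_x$.

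I expect the proof to be essentially a direct verification, with the one delicate point being the passage above from an integral over all of $\mathcal{Y}^n$ to an expectation under $\mathcal{R}_x^{\otimes n}$: this requires care with supports and Radon--Nikodym derivatives when $\mathcal{R}_{a_1}$ or $\mathcal{R}_{a_1'}$ is not absolutely continuous with respect to $\mathcal{R}_x$. Because only an inequality is required, discarding the singular contribution is legitimate and costs nothing in the regular case, so this is a bookkeeping issue rather than a genuine obstacle; everything else reduces to the density computation and unfolding the definition of the blanket divergence.
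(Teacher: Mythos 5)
Your argument is correct. Note that the paper does not prove this statement itself — it is imported verbatim as Theorem~14 of Su et al.\ \cite{suDecompositionBasedOptimalBounds2025} and used as a black box (e.g.\ in Step~4 of the proof of Theorem~\ref{thm:moderate-deviation-shuffle}) — so there is no in-paper proof to compare against; your derivation is the natural decomposition-style argument underlying that cited result. The key steps all check out: the pair $(a_1,x,\dots,x)\simeq(a_1',x,\dots,x)$ is admissible once you dispose of $a_1=a_1'$ separately (where the right-hand side is $0$ since $l_\varepsilon\le0$ for $\varepsilon\ge0$); the shuffled-output density is exactly the stated mixture $\frac1n\sum_j\mathcal{R}_{a_1}(z_j)\prod_{k\ne j}\mathcal{R}_x(z_k)$; restricting the hockey-stick integral to $\{\prod_k\mathcal{R}_x(z_k)>0\}$ is legitimate because the integrand $[\,\cdot\,]_+$ is nonnegative, after which the factorization turns the integral into $\mathbb{E}_{Y_{1:n}\sim\mathcal{R}_x^{\otimes n}}[(\tfrac1n\sum_j l_\varepsilon(Y_j))_+]$, which is precisely $\mathcal{D}^{\mathrm{blanket}}_{\mathrm{e}^\varepsilon,n,\mathcal{R}_x,1}$ since $\mathrm{Bin}(n,1)$ degenerates to $n$ and the prefactor is $1/n$. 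Your closing remark about supports is also handled correctly: since only a lower bound is claimed, the possibly singular mass of $\mathcal{R}_{a_1},\mathcal{R}_{a_1'}$ off $\mathrm{supp}\,\mathcal{R}_x$ can simply be discarded.
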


To provide a unified treatment of the blanket divergences corresponding to the upper and lower bounds, we generalize the definition of the privacy amplification random variable as follows.
\begin{definition}[Privacy amplification random variable]
Fix the pair $x_1\neq x_1'\in\mathcal X$.
Let
$
  \mathcal R_{\mathrm{ref}}\in
  \bigl\{\mathcal R_{\mathrm{BG}}\bigr\}
  \,\cup\,
  \bigl\{\mathcal R_x : x\in\mathcal X\bigr\}
$.
We define the \emph{privacy amplification random variable} with respect to
$\mathcal R_{\mathrm{ref}}$ by
\[
  l_{\varepsilon}(Y;x_1,x_1^\prime,\mathcal{R}_{\mathrm{ref}})
  :=
  \frac{\mathcal R_{x_1}(Y) - e^{\varepsilon}\mathcal R_{x_1'}(Y)}
       {\mathcal R_{\mathrm{ref}}(Y)}.
\]
Throughout the paper, whenever we write
$l_{\varepsilon}(Y;x_1,x_1',\mathcal R_{\mathrm{ref}})$
without further comment, the random variable $Y$ is understood to be drawn
from the reference distribution $\mathcal R_{\mathrm{ref}}$. When the dependence on $(x_1,x_1',\mathcal R_{\mathrm{ref}})$ is clear from
context, we simply write $l_{\varepsilon}(Y)$ or $l_{\varepsilon}$. Two choices of the reference distribution $\mathcal R_{\mathrm{ref}}$ will be
particularly important in this work:
\begin{itemize}
  \item If $\mathcal R_{\mathrm{ref}}=\mathcal R_{\mathrm{BG}}$ is the blanket
  distribution, then $l_{\varepsilon}(Y)$ is
  the privacy amplification random variable underlying the
  blanket-divergence \emph{upper bound} of
  Balle et al.~\cite{ballePrivacyBlanketShuffle2019}.

  \item If $\mathcal R_{\mathrm{ref}}=\mathcal R_x$ for some $x\in\mathcal X$,
  then $l_{\varepsilon}(Y)$ is the privacy amplification random variable
  appearing in the \emph{lower bound} of
  Su et al.~\cite{suDecompositionBasedOptimalBounds2025}.
\end{itemize}
\end{definition}

\subsection{Local Randomizers Beyond Pure Local DP}

Most existing analyses of shuffling focus on
local randomizers that satisfy \emph{pure} local DP
(i.e., $(\varepsilon_0,0)$-DP at the user level).
One of the main goals of this paper is to remove this restriction and
develop a more general analysis that applies to arbitrary local
randomizers satisfying the following mild regularity assumptions, without requiring
pure local DP at the local level.

\begin{assumption}[Regularity conditions for the local randomizer]
\label{assump:regularity}
Consider the local randomizer $\mathcal R:\mathcal X\to\mathcal Y$.
We assume that there exists $\rho_0>0$ such that, for every pair
$x_1\neq x_1'\in\mathcal X$ and every reference distribution
$
  \mathcal R_{\mathrm{ref}}\in
  \bigl\{\mathcal R_{\mathrm{BG}}\bigr\}
  \,\cup\,
  \bigl\{\mathcal R_x : x\in\mathcal X\bigr\},
$
the following conditions hold uniformly over all $|\varepsilon|\le\rho_0$.

\begin{enumerate}
  \item[\textup{(1)}]
  \textbf{Uniform moment bounds.}
  For every integer $k\ge 1$,
  $
    \mathbb E_{Y\sim \mathcal{R}_{\mathrm{ref}}}
    \bigl[\,|l_{\varepsilon}(Y;x_1,x_1^\prime,\mathcal R_{\mathrm{ref}})|^k\,\bigr]
    <\infty.
  $

\item[\textup{(2)}]
\textbf{Non-degenerate variance.}
The variance
$
  \sigma^2 := \mathrm{Var}_{Y\sim \mathcal{R}_{\mathrm{ref}}}\bigl(l_{0}(Y;x_1,x_1^\prime,\mathcal R_{\mathrm{ref}})\bigr)
$
is strictly positive, and the following conditions hold:
$
  \int \frac{\mathcal R_{x_1}(y)^2}{\mathcal R_{\mathrm{ref}}(y)}\,dy < \infty
$ and
$
  \int \frac{\mathcal R_{x_1'}(y)^2}{\mathcal R_{\mathrm{ref}}(y)}\,dy < \infty.
$
  \item[\textup{(3)}]
  \textbf{Structural condition.}
  Either of the following holds.
  \begin{enumerate}
\item[\textup{(Cont)}]

$l_{\varepsilon}(Y;x_1,x_1',\mathcal{R}_{\mathrm{ref}})$ has a nontrivial absolutely continuous component with respect to Lebesgue
measure whose support contains a fixed nondegenerate bounded interval
$I\subset\mathbb{R}$ independent of
$\varepsilon$.
    \item[\textup{(Bound)}]
    There exists a constant $C<\infty$ such that
    $
      |l_{\varepsilon}(Y;x_1,x_1^\prime,\mathcal R_{\mathrm{ref}})|
      \le C
      \quad\text{almost surely}.
    $
  \end{enumerate}
\end{enumerate}

We denote by $\mathfrak{R}$ the class of local randomizers $\mathcal R$ that satisfy Assumption~\ref{assump:regularity}.
\end{assumption}

Assumption~\ref{assump:regularity} is mild: any nontrivial local randomizer that satisfies pure local DP automatically satisfies these conditions, and even when pure local DP does not hold, if the privacy amplification random variable $l_0(Y;x_1,x_1',\mathcal R_{\mathrm{ref}})$ is non-degenerate (i.e., not almost surely constant), then one can enforce the bounded case~\textup{(Bound)} in~(3) by truncating the output space to a large but bounded interval (i.e., by slightly modifying the local randomizer).

Accordingly, we formulate our results in a mechanism-agnostic way and do
not commit to a specific local randomizer.
Nevertheless, it is useful to keep in mind a concrete running example satisfying Assumption~\ref{assump:regularity} without pure local DP and truncation.
In particular, we will often refer to the \emph{generalized Gaussian
mechanism}, which forms a flexible family of location mechanisms (parameterized by the shape $\beta$) and simultaneously encompasses the Laplace mechanism
($\beta=1$) and the Gaussian mechanism ($\beta=2$).
The latter case is especially noteworthy, as the Gaussian mechanism has
been explicitly highlighted in prior work
(e.g., Balle et al.~\cite{ballePrivacyBlanketShuffle2019} and
Koskela et al.~\cite{koskelaNumericalAccountingShuffle2022}) as technically challenging to analyze in the shuffle
model.

\begin{definition}[Generalized Gaussian local randomizer ($\beta$-Gaussian mechanism)]
\label{def:generalized-gaussian-mechanism}
Let $\mathcal{X} := [0,1]$ be the input domain and fix
parameters $c>0$ and $\beta\in [1,2]$. The \emph{generalized Gaussian
local randomizer} $\mathcal{R}:\mathcal{X}\to\mathcal{Y}$ is defined by
$
  \mathcal{Y} := \mathbb{R},
  \mathcal{R}(x) = Y,\ \text{where } Y\sim\phi_x^{\beta}
$
and $\phi_x^{\beta}$ denotes the generalized Gaussian density
\[
  \phi_{x}^{\beta}(y)
  :=
  \frac{\beta}{2c\,\Gamma(1/\beta)}
  \exp\!\left(
    -\Bigl|\frac{y-x}{c}\Bigr|^{\beta}
  \right),
  \qquad y\in\mathbb{R}.
\]
In other words, for each input $x\in\mathcal{X}$, the output $\mathcal{R}(x)$ is obtained by adding generalized Gaussian noise (with parameters $c,\beta$) to the input $x$.
The variance of this distribution is given by $\mathrm{Var}(Y) = c^{2}\,\frac{\Gamma(3/\beta)}{\Gamma(1/\beta)}$; by choosing \(c\) appropriately, we can set
$
  \mathrm{Var}(Y)=\sigma_0^2.
$
We call $\mathcal{R}$ the $\beta$-Gaussian mechanism.
\end{definition}

\section{Analysis of Shuffling via the Central Limit Theorem (CLT)}
\label{sec:clt-analysis}

Here, we first review the blanket divergence and its interpretation.
Then, we analyze the blanket divergence to obtain an asymptotic expansion.
Next, we analyze the privacy profile of the shuffled mechanism based on this expansion.
Then, we interpret and explore the shuffle indices of various local randomizers.
Finally, we derive a necessary and sufficient condition under which the blanket divergence yields an asymptotically optimal characterization.

\subsection{Review of Blanket Divergence}
\label{sec:review_blanket_divergence}
Given a shuffled mechanism $\mathcal M=\mathcal S\circ \mathcal R^n$, the direct computation of the privacy profile requires the two output distributions of $\mathcal M(x_{1:n})$ and $\mathcal M(x'_{1:n})$ where $x_{1:n}$ and $x'_{1:n}$ are the worst-case neighboring datasets.
Given a dataset $x_{1:n}$, the output distribution of the shuffled mechanism $\mathcal M=\mathcal S\circ \mathcal R^n$ can be written as the permutation-mixture
\[
\Pr[\mathcal M(x_{1:n}) = y_{1:n}]
=
\frac{1}{n!}\sum_{\pi\in S_n}\;
\prod_{i=1}^n \mathcal R_{x_i}\bigl(y_{\pi(i)}\bigr),
\qquad y_{1:n}\in\mathcal Y^n.
\]
That is, evaluating the shuffled density naively involves a mixture with $n!$ components.
Moreover, we need to apply a nonlinear transformation to this and integrate the result, together with a maximization over all neighboring datasets to identify the worst case.
Altogether, this direct computation quickly becomes intractable.
We therefore turn to the privacy blanket approach proposed by \citet{ballePrivacyBlanketShuffle2019}.

Here, we briefly recall it.
The local randomizer admits the mixture decomposition
$
\mathcal R_x
=
\gamma\,\mathcal R_{\mathrm{BG}}
+
(1-\gamma)\,\mathcal R_x^{\mathrm{LO}},
$
where $\mathcal R_x^{\mathrm{LO}}$ is the corresponding leftover distribution.
Based on this decomposition, we consider the extended mechanism
$\widetilde{\mathcal M}$ depicted in Fig.~\ref{fig:blanket_divergence_interepretation}.
We can interpret a blanket divergence as an upper bound on the hockey-stick divergence of $\widetilde{\mathcal M}$.

In $\widetilde{\mathcal M}$, each user ($\mathrm{id}> 1$) draws a message from $\mathcal R_{\mathrm{BG}}$ with probability $\gamma$, and from $\mathcal R_{x_\mathrm{id}}^{\mathrm{LO}}$ with probability $1-\gamma$.
Messages sampled from the leftover part are revealed together with
their user indices, whereas messages sampled from the blanket part are released only after being shuffled.
User~$1$ outputs $\mathcal R_{x_1}$ (resp.\ $\mathcal R_{x_1'}$) and $Y_1$ is always placed in the shuffled blanket multiset.
The hockey-stick divergence of $\widetilde{\mathcal M}$ is upper-bounded by applying its joint convexity to the mixture over the number of blanket messages, which yields the blanket divergence (see Section~\ref{sec:interepretation-blanket-divergence} for details).
Importantly, the shuffled mechanism $\mathcal M=\mathcal S\circ\mathcal R^n$ can be obtained by shuffling all outputs of $\widetilde{\mathcal M}$.
Therefore, by the data processing inequality, the hockey-stick divergence of $\mathcal M$ is also upper-bounded by the blanket divergence.

Next, in Fig.~\ref{fig:blanket_divergence_interepretation}, we consider a fixed $x_2\in\mathcal X$ and restrict attention to the neighboring-dataset instance
$
(x_1,x_2,\dots,x_2)\ \simeq\ (x_1',x_2,\dots,x_2).
$
In this setting, if we take $\gamma=1$ and choose the reference distribution as $\mathcal R_{\mathrm{BG}}=\mathcal R_{x_2}$, then the resulting construction of $\widetilde{\mathcal M}$ coincides in distribution with the  shuffled mechanism output for these two datasets (i.e., $\mathcal{M}((x_1,x_2,\dots,x_2))$ and $\mathcal{M}((x_1',x_2,\dots,x_2))$). 
The inequalities introduced earlier, namely the convexity step in $\gamma$ and the data processing inequality, do not appear here, so that the blanket divergence matches the hockey-stick divergence of $\widetilde{\mathcal{M}}$, and thus that of the shuffled mechanism $\mathcal{M}$.
Since this neighboring instance is feasible in the supremum that defines the privacy profile, it follows that this blanket divergence provides a lower bound.

\begin{remark}[Relationship to the clone paradigm]
\label{remark:clone_paradigm}
The blanket divergence viewpoint with $\gamma=1$ is closely related to the clone paradigm~\cite{feldmanHidingClonesSimple2022,feldmanStrongerPrivacyAmplification2023a}.
The clone paradigm reduces shuffle privacy analysis to an auxiliary mechanism; shuffling of $n$ categorical distributions $(\mathcal{R}^{\mathrm{Cat}}_{b}, \mathcal{R}^{\mathrm{Cat}}_{\mathrm{ref}}, \dots, \mathcal{R}^{\mathrm{Cat}}_{\mathrm{ref}})$, where $b\in\{0,1\}$ changes the target user.
In this sense, this reduced mechanism can be interpreted as a particular $\gamma=1$ instance of $\widetilde{\mathcal{M}}$, with an appropriate choice of $\mathcal R^{\mathrm{Cat}}_{b}$ and $\mathcal R^{\mathrm{Cat}}_{\mathrm{ref}}$.
Note that \citet{suDecompositionBasedOptimalBounds2025} showed that the blanket-based reduction can be at least as tight as the clone reductions.
Motivated by this and by the mechanism-aware nature of the blanket divergence, we focus primarily on the blanket-divergence upper bound of
\citet{ballePrivacyBlanketShuffle2019} in what follows.
\end{remark}

\begin{figure}[t]
  \centering
  \begin{minipage}[t]{0.4\linewidth}
    \vspace{0pt} 
    \centering
    \includegraphics[width=\linewidth]{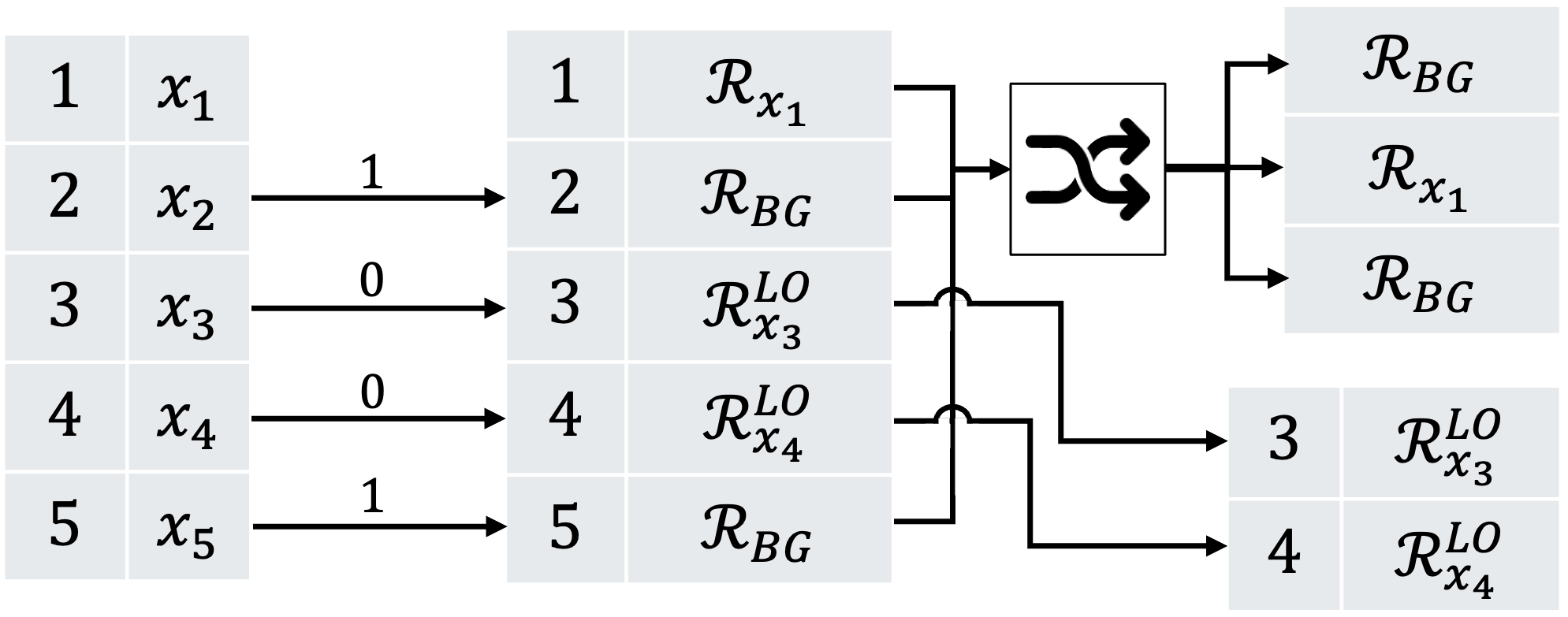}
    \captionof{figure}{$\widetilde{\mathcal{M}}$, which yields the blanket divergence.}
    \label{fig:blanket_divergence_interepretation}


    \includegraphics[width=\linewidth]{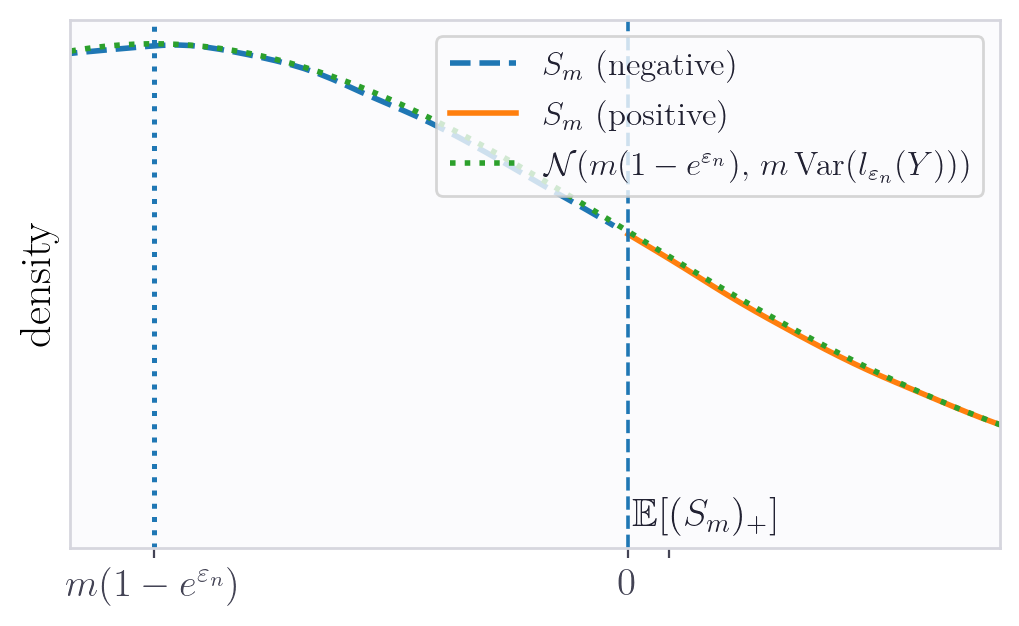}
    \captionof{figure}{The intuition behind the CLT of the blanket divergence with $m=100$.}
    \label{fig:clt_image}
  \end{minipage}
  \hfill
  \begin{minipage}[t]{0.5\linewidth}
    \vspace{0pt} 
    \centering
    \includegraphics[width=\linewidth]{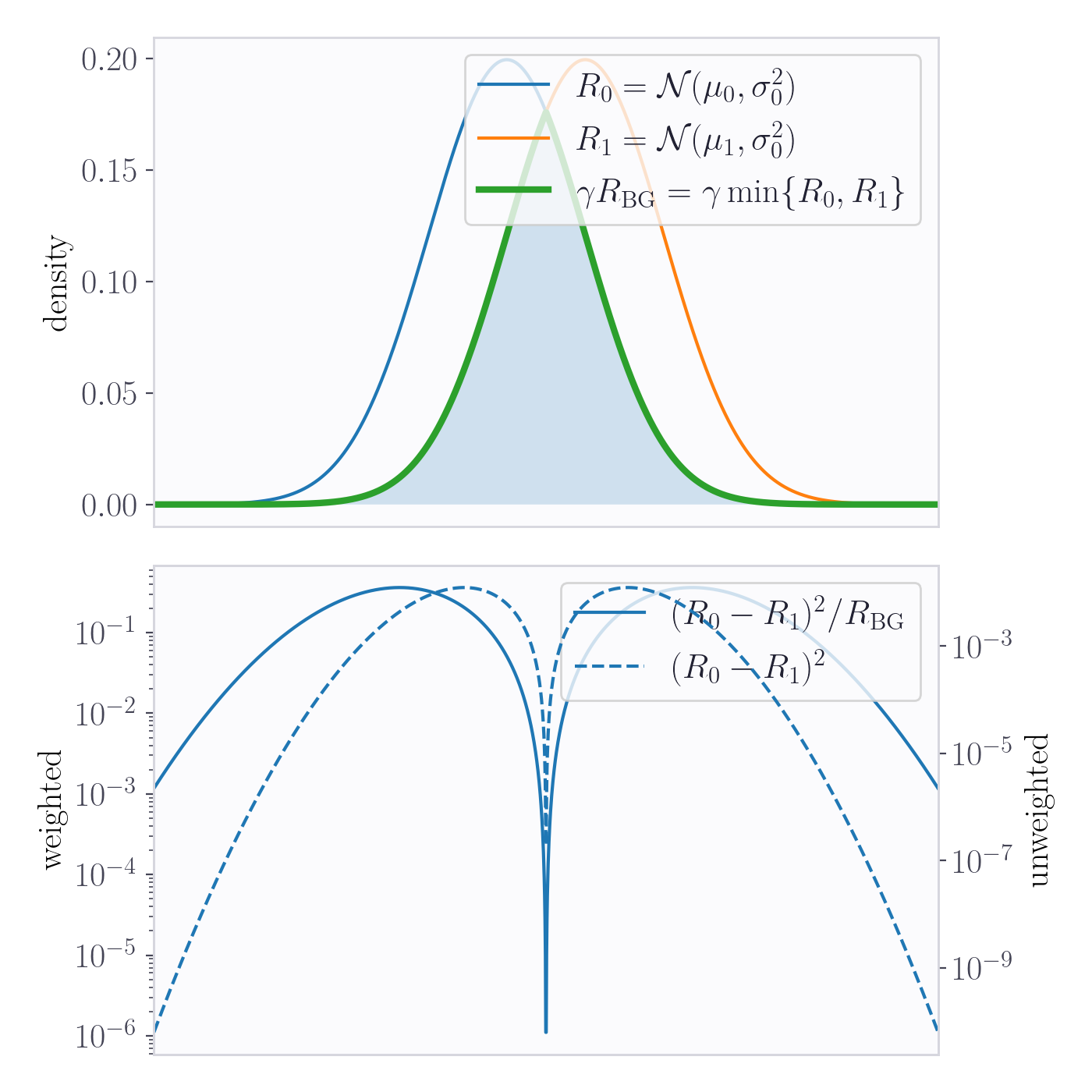}
    \captionof{figure}{The intuition of $\sigma$ with the Gaussian mechanism where $\sigma_0=2$. $x$ axis represents the output of the local randomizer and is common for both images.}
    \label{fig:sigma_image}
  \end{minipage}

\end{figure}

\subsection{Analysis of Blanket Divergence}
\label{sec:analysis_blanket_divergence}
As shown in Lemma~\ref{lem:upper-blanket-divergence}, the blanket divergence no longer involves an explicit permutation. 
This is because, conditioned on the number of blanket messages $M=m$ in $\widetilde{\mathcal{M}}$, the $m-1$ blanket outputs are i.i.d.\ draws from the blanket distribution $\mathcal R_{\mathrm{BG}}$; moreover, thanks to the symmetry attained by shuffling, the analysis of the hockey-stick divergence reduces to that of a sum of $m$ i.i.d.\ random variables.
Exploiting this tractability via the CLT, we can derive the following asymptotic expansion.

\begin{restatable}[Asymptotic expansion of the blanket divergence]{lemma}{blanketasymptoticsgeneral}
\label{lem:blanket-asymptotics-general}
Given $\mathcal{R}\in \mathfrak{R}$, fix a pair $x_1\neq x_1'\in\mathcal X$ and let
$
  \mathcal R_{\mathrm{ref}}
  \in
  \bigl\{\mathcal R_{\mathrm{BG}}\bigr\}
  \,\cup\,
  \bigl\{\mathcal R_x : x\in\mathcal X\bigr\}.
$
For $\gamma\in(0,1]$, define
$
  \sigma^2
  := \mathrm{Var}_{Y\sim\mathcal{R}_{\mathrm{ref}}}\bigl(l_{\varepsilon=0}(Y;x_1, x_1^\prime,\mathcal{R}_{\mathrm{ref}})\bigr)
$
and
$
  \chi := \sqrt{\gamma}/{\sigma}.
$
Assume that as $n\to\infty$,
$
  \varepsilon_n = \omega(\sqrt{1/n})
$
and
$
  \varepsilon_n = O\!\left(\sqrt{\log n / n}\right).
$
Then, as $n\to\infty$,
\begin{align}
  \label{eq:blanket-asymptotics-general}
  \mathcal D^{\mathrm{blanket}}_{e^{\varepsilon_n},n,\mathcal R_{\mathrm{ref}},\gamma}(\mathcal{R}_{x_1} \| \mathcal{R}_{x_1^\prime})
  &\;=\;
  \varphi\bigl(\chi\varepsilon_n\sqrt{n}\bigr)\,
  \left(
    \frac{1}{\chi^3}\,
    \frac{1}{\varepsilon_n^2 n^{3/2}}
  \right)\bigl(1+o(1)\bigr).
\end{align}
\end{restatable}
Here, we present an informal derivation based on the CLT to build intuition of how the result arises. 
The rigorous proof with the asymptotic expansion of CLT~\cite{slastnikovLimitTheoremsModerate1979,angstWeakCramerCondition2017,petrovSumsIndependentRandom1975} is in Appendix~\ref{app:proof-blanket-asymptotics-general}.

Recall that the blanket divergence can be written as
\[
\mathcal D^{\mathrm{blanket}}
=
\frac{1}{n\gamma}\;
\mathbb E_{M}\!\left[
\mathbb E\!\left[\Bigl(\sum_{i=1}^{M} l_{\varepsilon_n}(Y_i)\Bigr)_+ \,\Big|\, M\right]
\right],
\qquad
M\sim\mathrm{Bin}(n,\gamma),\;\; Y_i\stackrel{\mathrm{i.i.d.}}{\sim}\mathcal R_{\mathrm{ref}}.
\]
Let us condition on $M=m$ and define
$
S_m := \sum_{i=1}^{m} l_{\varepsilon_n}(Y_i).
$
The inner expectation becomes $\mathbb E[(S_m)_+]$, whose behavior is shown in Fig.~\ref{fig:clt_image}.
Since $S_m$ is a sum of i.i.d.\ random variables, the CLT suggests
\[
S_m \approx N_m := \mathcal N(\mu_m,s_m^2),
\qquad
\mu_m := m\,\mathbb E[l_{\varepsilon_n}(Y)]=m(1-e^{\varepsilon_n}),
\quad
s_m^2 := m\,\mathrm{Var}(l_{\varepsilon_n}(Y)),
\]
for large $m$. For a Gaussian $N_m\sim\mathcal N(\mu_m,s^2_m)$ one has the identity
$
\mathbb E[(N_m)_+]
=
s_m\,\varphi\!\left(\frac{\mu_m}{s_m}\right)
+
\mu_m\,\Phi\!\left(\frac{\mu_m}{s_m}\right).
$
Using Mills' ratio (in the regime $|\mu_m/s_m|\to\infty$), we obtain
\[
\mathbb E[(N_m)_+]
\;\approx\;
\frac{s_m^3}{\mu_m^2}\,
\varphi\!\left(\frac{\mu_m}{s_m}\right)
\;=\;
\frac{\mathrm{Var}(l_{\varepsilon_n}(Y))^{3/2}}{(e^{\varepsilon_n}-1)^2}\,
\frac{1}{\sqrt m}\,
\varphi\!\left(
\frac{(e^{\varepsilon_n}-1)\sqrt m}{\sqrt{\mathrm{Var}(l_{\varepsilon_n}(Y))}}
\right),
\]
$M\sim\mathrm{Bin}(n,\gamma)$ concentrates around its mean $m=n\gamma$, so we plug in $m=n\gamma$ in the leading term. Moreover, for small $\varepsilon_n$, $\mathrm{Var}(l_{\varepsilon_n}(Y))\approx \mathrm{Var}(l_{0}(Y))=: \sigma^2$.
With $\chi:=\sqrt{\gamma}/\sigma$, we obtain
\[
\mathbb E_M[(N_{M})_+]
\approx
\frac{\sigma^3}{(e^{\varepsilon_n}-1)^2}\;
\frac{1}{\sqrt{n\gamma}}\;
\varphi\!\bigl(\chi(e^{\varepsilon_n}-1)\sqrt n\bigr).
\]
Finally, recalling the prefactor $1/(n\gamma)$, and using $\sigma^3/\gamma^{3/2}=1/\chi^3$, we arrive at
\[
\mathcal D^{\mathrm{blanket}}
\approx
\frac{1}{n\gamma}\,\mathbb E_M[(N_{M})_+]
\approx
\varphi\!\bigl(\chi(e^{\varepsilon_n}-1)\sqrt n\bigr)\,
\left(
\frac{1}{\chi^3}\,
\frac{1}{(e^{\varepsilon_n}-1)^2\,n^{3/2}}
\right),
\]
which matches the leading term in~\eqref{eq:blanket-asymptotics-general} using $\mathrm{e}^{\varepsilon_n}\approx 1+\varepsilon_n$.
The rigorous proof in Appendix~\ref{app:proof-blanket-asymptotics-general} shows that this normal
approximation is indeed valid under Assumption~\ref{assump:regularity} (i.e., $\mathcal{R}\in\mathfrak{R}$) in the regime $\varepsilon_n=\omega(\sqrt{1/n})$ and $\varepsilon_n=O(\sqrt{\log n/n})$.

Our key conceptual finding from Lemma~\ref{lem:blanket-asymptotics-general} is that the leading term depends on the local randomizer only through the single parameter
$
  \chi = \sqrt{\gamma}/{\sigma}.
$
That is, at the level of the leading asymptotics, the dependence of the blanket divergence on the local randomizer $\mathcal R$ is fully encoded by the one-dimensional index $\chi$. 
Furthermore, the leading term is monotone in $\chi$: a larger $\chi$ yields a smaller leading term in the blanket divergence, and hence corresponds to stronger privacy amplification by shuffling. 
This motivates interpreting $\chi$ as a measure of the shuffle efficiency of the local randomizer. 
Owing to its central role, we refer to $\chi$ as the \emph{shuffle index}.

\subsection{Main Result}
\label{sec:main_result}
Up to this point, we have analyzed the blanket divergence for a fixed pair $(x_1,x_1',\mathcal{R}_\mathrm{ref})$.  
In contrast, ($\varepsilon_n,\delta_n$)-DP requires a guarantee that holds uniformly over all neighboring datasets $x_{1:n}\simeq x'_{1:n}$ (i.e., upper bounding the privacy profile $\delta_n\geq \delta_\mathcal{M}(\varepsilon_n)$).
This section aims to relate the result of the blanket divergence to the privacy profile of the shuffled mechanism.

Recall from Section~\ref{sec:review_blanket_divergence} that the privacy profile can be sandwiched between two blanket divergences. 
By combining this property with the monotonicity of the leading term with respect to the shuffle index $\chi$, evaluating the privacy profile reduces to identifying the worst-case value of $\chi$ over all pairs $(x_1,x_1')$. 
To derive the upper bound, under the reference distribution $\mathcal R_{\mathrm{ref}} = \mathcal R_{\mathrm{BG}}$, we must choose the pair $(x_1,x_1')$ that minimizes the shuffle index. 
Conversely, to establish the lower bound, under $\mathcal R_{\mathrm{ref}} = \mathcal R_x$ for some $x\in\mathcal X$ (with $\gamma=1$), we are free to select any pair $(x_1,x_1')$; naturally, to maximize this lower bound, we should again choose the pair that minimizes the shuffle index.
Formally, we use the following specific shuffle indices for the upper and lower bounds, respectively.
\begin{definition}
\label{def:shuffle_index}
Let $\gamma\in(0,1]$ be the blanket mass of $\mathcal{R}\in\mathfrak{R}$.
The \emph{lower shuffle index} is defined by
$
  \chi_{\mathrm{lo}}(\mathcal R)
  :=
  \sqrt{\gamma}/\sup_{x_1\simeq x_1'\in\mathcal X}
  \sqrt{
    \mathrm{Var}_{Y\sim \mathcal{R}_{\mathrm{BG}}}
    \!\left[
      l_0\!\left(Y; x_1,x_1',\mathcal{R}_{\mathrm{BG}}\right)
    \right]
  }.
$
The \emph{upper shuffle index} is defined by
$
  \chi_{\mathrm{up}}(\mathcal R)
  :=
  1/\sup_{x_1\simeq x_1'\in\mathcal X}
  \ \sup_{x\in\mathcal{X}}
  \sqrt{
    \mathrm{Var}_{Y\sim \mathcal{R}_{x}}
    \!\left[
      l_0\!\left(Y; x_1,x_1',\mathcal{R}_{x}\right)
    \right]
  }.
$
\end{definition}

With these specific shuffle indices in hand, we can formally establish the upper and lower bounds for the privacy profile of the shuffled mechanism, as stated in the following corollary.

\begin{corollary}[Privacy profile bounds]
\label{lem:shuffle-index-profile-seq-no-t}
Let $\mathcal R\in \mathfrak R$.
Assume that as $n\to\infty$,
$
\varepsilon_n=\omega(\sqrt{1/n})
$
and
$
\varepsilon_n=O\!\left(\sqrt{\log n/n}\right).
$
Then, there exist sequences
$e_n^{\mathrm{up}},e_n^{\mathrm{lo}}\to 0$ as $n\to\infty$ such that
\[
  f_{n,\varepsilon_n}(\chi_{\mathrm{up}}(\mathcal R))\,\bigl(1+e_n^{\mathrm{up}}\bigr)
  \le
  \delta_{\mathcal S\circ \mathcal R^n}(\varepsilon_n)
  \le
  f_{n,\varepsilon_n}(\chi_{\mathrm{lo}}(\mathcal R))\,\bigl(1+e_n^{\mathrm{lo}}\bigr),
\]
for all sufficiently large $n$ where
$
  f_{n,\varepsilon_n}(\chi)
  :=
  \varphi\bigl(\chi\,\varepsilon_n\sqrt{n}\bigr)\,
  \left(
    \frac{1}{\chi^3}\,
    \frac{1}{\varepsilon_n^2 n^{3/2}}
  \right).
$
\end{corollary}

This corollary is a direct consequence of applying the asymptotic expansion established in Lemma~\ref{lem:blanket-asymptotics-general} to the lower and upper bounds (Lemma~\ref{lem:upper-blanket-divergence} and Theorem~\ref{theorem:lower-blanket-divergence}).

In DP, it is common to set $\delta_n$ to be $\alpha/n$ for some fixed constant $\alpha>0$.
Our next result characterizes the corresponding $\varepsilon_n$.

\begin{restatable}[Shuffling delta band]{theorem}{shufflingdeltaband}
\label{thm:moderate-deviation-shuffle}
Let $\mathcal{R}\in\mathfrak{R}$ with blanket mass $\gamma$.
Fix $\alpha>0$. For $n\ge2$ and $\chi>0$, set\footnote{%
The explicit formula for $\varepsilon_n(\alpha,\chi)$ in
Theorem~\ref{thm:moderate-deviation-shuffle} is obtained by solving the
leading-order approximation $\eqref{eq:blanket-asymptotics-general}=\alpha/n$ in closed form via the Lambert $W$-function.
For our numerical experiments and practical use, however, we instead
compute $\varepsilon_n$ by numerically solving the more accurate equation based on the full asymptotic expression Eq.~\eqref{eq:A-n-expansion}, using a simple bisection method.
}
\begin{equation}
  \label{eq:asymptoptic_epsilon}
  \varepsilon_n(\alpha,\chi)
  := \log\left(
      1
      +\sqrt{\frac{2}{\chi^2 n}\,
              W\!\left(
                \frac{\sqrt{n}}{2\alpha\chi\sqrt{2\pi}}
              \right)}
    \right)
\end{equation}
Then, there exists a sequence
$(\varepsilon^\ast_n\in [\varepsilon_n\bigl(\alpha,\chi_{\mathrm{up}}(\mathcal{R})\bigr),
  \varepsilon_n\bigl(\alpha,\chi_{\mathrm{lo}}(\mathcal{R})\bigr)])_{n\ge1}$ such that
$
  \delta_{\mathcal{S}\circ\mathcal{R}^n}(\varepsilon^\ast_n)=\frac{\alpha}{n}(1+o(1))
$ for all sufficiently large $n$.
\end{restatable}

See Appendix~\ref{app:proof-moderate-deviations-shuffle} for the full proof. 
This is obtained by inverting the relationship between $\varepsilon_n$ and $\delta_n$ from the bounds established in Corollary~\ref{lem:shuffle-index-profile-seq-no-t}. 
Specifically, by equating the leading term of the privacy profile $f_{n,\varepsilon_n}(\chi)$ to the target $\delta_n = \alpha/n$, we can solve for $\varepsilon_n$ to derive the explicit formula $\varepsilon_n(\alpha, \chi)$ for both the upper and lower bounds.
Informally, Theorem~\ref{thm:moderate-deviation-shuffle} shows that for any fixed target constant $\alpha>0$
we construct a sequence $(\varepsilon_n^\ast)_{n\ge2}$ whose values lie
between two explicit curves $\varepsilon_n(\alpha,\chi_{\mathrm{up}}(\mathcal{R}))$
and $\varepsilon_n(\alpha,\chi_{\mathrm{lo}}(\mathcal{R}))$, and for which $\delta_n
 = \frac{\alpha}{n}(1+o(1))$.\footnote{%
  More generally, one can obtain any prescribed
  polynomial rate $\delta_n = \Theta(n^{-\beta})$ with $\beta>1$ in the leading
  term.  In this work we focus on the regime $\delta_n = \alpha/n(1+o(1))$ because in the DP literature, setting $\delta_n$ to be $\alpha/n$ is a common practice.
 }
In other words, the band precisely captures $\varepsilon^\ast_n$ that realizes $\delta_n\approx\alpha/n$.

Another important implication of Theorem~\ref{thm:moderate-deviation-shuffle} is that the asymptotic expansion of $\eqref{eq:asymptoptic_epsilon}= \frac{1}{\chi}\sqrt{\frac{\log n}{n}}\left(1+o(1)\right)$ as $n\to\infty$ reveals a simple relationship.
The amplified privacy parameter $\varepsilon_n$ is inversely proportional to the shuffle index $\chi$. 
Consequently, the width of the band is asymptotically captured by the ratio
\[
  \frac{\varepsilon_n(\alpha,\chi_{\mathrm{lo}}(\mathcal{R}))}
       {\varepsilon_n(\alpha,\chi_{\mathrm{up}}(\mathcal{R}))}
  \;=\;
  \frac{\chi_{\mathrm{lo}}(\mathcal{R})}{\chi_{\mathrm{up}}(\mathcal{R})}\,
  \left(1+o(1)\right).
\]
Hence, if $\chi_{\mathrm{lo}}(\mathcal{R})/\chi_{\mathrm{up}}(\mathcal{R})$ is close to 1, the band is narrow, indicating that the shuffle index provides a tight characterization of the amplified privacy parameter.

\subsection{Shuffle Index}
\label{sec:shuffle_index}

In this subsection, we explore the shuffle index $\chi$: its interpretation, examples, and the analysis of the tightness of the blanket-divergence bounds through the shuffle index.

\paragraph{Interpretation of Shuffle Index.}

Technically, by the CLT as shown in Section~\ref{sec:analysis_blanket_divergence}, the leading asymptotics of the blanket divergence with $(x_1,x_1^\prime,\mathcal R_{\mathrm{ref}})$ can be summarized by the two parameters $\gamma$ and $\sigma$. 
Here we defined
$
\sigma^2
:=
\mathrm{Var}_{Y\sim\mathcal R_{\mathrm{ref}}}\left(\frac{\mathcal R_{x_1}(Y)-\mathcal R_{x_1'}(Y)}{\mathcal R_{\mathrm{ref}}(Y)}\right)
=
\int_{\mathcal Y}
\frac{\bigl(\mathcal R_{x_1}(y)-\mathcal R_{x_1'}(y)\bigr)^2}{\mathcal R_{\mathrm{ref}}(y)}\,dy
$
so that $\sigma$ is a $\chi^2$-type distance between $\mathcal R_{x_1}$ and $\mathcal R_{x_1'}$ with respect to the reference distribution $\mathcal R_{\mathrm{ref}}$.
This differs from the usual $\chi^2$ divergence, whose reference distribution is $\mathcal R_{x_1}$ or $\mathcal R_{x_1'}$; here the reference distribution is the third distribution $\mathcal R_{\mathrm{ref}}$.
In Fig.~\ref{fig:sigma_image}, we take $\mathcal R$ to be Gaussian and plot the integrand of $\sigma^2$, namely $(\mathcal R_{x_1}-\mathcal R_{x_1'})^2/\mathcal R_{\mathrm{BG}}$.
Compared to the ordinary squared distance $(\mathcal R_{x_1}-\mathcal R_{x_1'})^2$, we see that regions where the blanket is thin are emphasized.
This distinction represents the shuffled setting: we can interpret $\sigma$ as distinguishability of $\mathcal{R}_{x_1}$ and $\mathcal{R}_{x_1'}$ from the blanket $\mathcal R_{\mathrm{BG}}$ view.

To reiterate, the blanket divergence can be interpreted as the natural upper bound of the hockey-stick divergence between $\mathcal S(\mathcal R_{x_1},\mathcal R_{\mathrm{ref}},\ldots,\mathcal R_{\mathrm{ref}})$ and $\mathcal S(\mathcal R_{x_1'},\mathcal R_{\mathrm{ref}},\ldots,\mathcal R_{\mathrm{ref}})$.
$\gamma$ represents the fraction of messages that fall into the blanket, and hence the effective message size is $M\approx n\gamma$.
On the other hand, since $1/\sigma$ is the reciprocal of the distinguishability measure in the above setting, it quantifies the per-message indistinguishability.
Because the contribution of $\gamma$ to indistinguishability appears through the $\sqrt{n}$ scaling of fluctuations, the index aggregates these effects as $\sqrt{\gamma}/\sigma$.

\paragraph{Examples of Shuffle Indices}

\begin{table}[t]
\caption{
  Shuffle index asymptotics of $\beta$-Gauss with $\mathrm{Var}(\mathcal{R}(x)) = \sigma_0^2$ where $c_\beta = \frac{\Gamma(1/\beta)}{\beta\sqrt{\Gamma(3/\beta)\Gamma(2-1/\beta)}}$ and $k$-RR.
}
\centering
\small
\setlength{\tabcolsep}{6pt}
\renewcommand{\arraystretch}{1.25}

\makebox[\linewidth][c]{%
  \begin{minipage}[t]{0.5\linewidth}
    \centering
    \begin{tabular}{@{}lcc@{}}
      \toprule
      $\beta$-Gauss
      & $\sigma_0\to\infty$
      & $\sigma_0\to 0$ \\
      \midrule
      $\chi_{\mathrm{lo}}$
      &
      $c_\beta \sigma_0\left(1  +o\left(1\right)\right)$
      &
      $\exp\!\bigl(-\Omega(\sigma_0^{-\beta})\bigr)$
      \\
      $\chi_{\mathrm{lo}}/\chi_{\mathrm{up}}$
      &
      $1-O\!\left(\frac{1}{\sigma_0}\right)$
      &
      $\frac{1}{\sqrt{2}} + o(1)$
      \\
      \bottomrule
    \end{tabular}
  \end{minipage}
  \begin{minipage}[t]{0.5\linewidth}
    \centering
    \begin{tabular}{@{}lcc@{}}
      \toprule
      $k$RR $\mid$ $2$RR
      & $\varepsilon_0\to 0$
      & $\varepsilon_0\to \infty$ \\
      \midrule
      $\chi_{\mathrm{lo}}$
      &
      $\sqrt{\frac{k}{2}} \frac{1}{\varepsilon_0}\left(1+O(\varepsilon_0)\right)$
      &
      $\frac{e^{-\frac{\varepsilon_0}{2}}}{\sqrt2}\left(1+O(e^{-\varepsilon_0})\right)$
      \\
      $\chi_{\mathrm{lo}}/\chi_{\mathrm{up}}$
      &
      $1$ $\mid$ $1-O(\varepsilon_0)$
      &
      $1$ $\mid$ $\frac{1}{\sqrt{2}} + O(e^{-\varepsilon_0})$
      \\
      \bottomrule
    \end{tabular}
  \end{minipage}%
}
\label{tab:shuffle_index_asymptotics}
\end{table}

We illustrate the shuffle index of two local randomizers: the $\beta$-Gaussian mechanism and $k$-RR. 
Concretely, we examine (i) the asymptotics of the lower shuffle index $\chi_{\mathrm{lo}}$, and (ii) the ratio $\chi_{\mathrm{lo}}/\chi_{\mathrm{up}}$ as a proxy for the tightness of the blanket-divergence sandwich. 
Closed-form derivations and some plots of shuffle indices are deferred to Appendix~\ref{app:computation-shuffle-indices}. 
The resulting asymptotics are summarized in Table~\ref{tab:shuffle_index_asymptotics}: for $\beta$-Gaussian we parametrize by the local noise scale $\sigma_0$, while for $k$-RR we parametrize by the pure local-DP level $\varepsilon_0$, and in both cases we report limits as the parameter tends to $0$ and $\infty$.
These asymptotics precisely capture the fundamental behavior established in theoretical bounds~\cite{erlingssonAmplificationShufflingLocal2019} and explain recent empirical observations~\cite{chenPrivacyAmplificationCompression2023} that shuffling yields highly efficient privacy amplification in the high-privacy regime, while providing negligible amplification in the low-privacy regime.

\textbf{$\beta$-Gaussian.}
For large noise ($\sigma_0\to\infty$), the lower shuffle index grows linearly, $\chi_{\mathrm{lo}}\approx c_\beta\sigma_0$, where $c_\beta$ depends only on the shape parameter $\beta$.
Concretely, $c_\beta$ increases from $1/\sqrt{2}$ at $\beta=1$ (Laplace) up to $1$ at $\beta=2$ (Gaussian), and then decreases for $\beta>2$; hence, in the high-noise regime, the Gaussian case $\beta=2$ yields the largest shuffle index.
Moreover, the tightness ratio converges to $1$, so the blanket-divergence bounds essentially coincide asymptotically when $\sigma_0$ is sufficiently large.
That is, in the mean estimation task, since $\sigma_0$ monotonically determines the accuracy of the estimate, $\beta=2$ induces the best privacy-utility trade-off in the high-noise regime (see Section~\ref{sec:distribution-estimation} for the exploration of this topic).
In contrast, in the low-noise regime ($\sigma_0\to 0$), $\chi_{\mathrm{lo}}$ decays exponentially to $0$, which reflects that shuffling provides little amplification when the local noise is too small.

\begin{remark}[Dimensional dependence]
\label{remark:dimensional_dependence}
For the Gaussian case ($\beta=2$), it is natural to extend the input domain to $d$ dimensions.
Concretely, consider the Gaussian local randomizer $\mathcal R(x)=x+\mathcal N(0,\sigma_0^2 I_d)$ with fixed $\sigma_0$ and inputs in an $\ell_2$-ball of fixed radius.
Then the blanket mass satisfies $\gamma=\exp(-\Theta(\sqrt d))$, i.e., it decays exponentially fast in $\sqrt d$.
Intuitively, this decay is driven by the thin-shell phenomenon of high-dimensional Gaussians: a typical output exponentially concentrates on a thin shell around its mean $x$ (i.e., $\|\mathcal R(x)-x\|\approx \sigma_0\sqrt d$).
On this typical shell, the blanket distribution takes an infimum over all inputs in the $\ell_2$-ball, so the blanket becomes exponentially thin in $\sqrt d$ (see Section~\ref{app:dim-dependence-gamma} for details).
Moreover, while $\chi_{\mathrm{up}}$ is not directly affected by $\gamma$, $\chi_{\mathrm{lo}}$ is suppressed by the factor $\sqrt{\gamma}$ and thus shrinks exponentially with $\sqrt{d}$; consequently, the gap between the corresponding upper and lower bounds becomes large.
Closing this gap appears to require techniques that go beyond the blanket divergence, which remains an open problem.
\end{remark}

\textbf{$k$-RR.}
In large noise ($\varepsilon_0\to 0$), we have $\chi_{\mathrm{lo}}\approx\sqrt{\frac{k}{2}}\frac{1}{\varepsilon_0}$.
Importantly, for $k\ge 3$ the tightness ratio satisfies $\chi_{\mathrm{lo}}/\chi_{\mathrm{up}}=1$, i.e., the two indices coincide and the blanket-divergence sandwich is tight at leading order; for $2$-RR, the ratio is also close to $1$ as $\varepsilon_0\to 0$.
In the low-noise regime ($\varepsilon_0\to\infty$), $\chi_{\mathrm{lo}}$ again vanishes exponentially, and thus shuffling becomes ineffective.

\begin{remark}[Comparison to the clone paradigm]
\label{remark:comparison_to_clone}
As noted in Remark~\ref{remark:clone_paradigm}, the clone paradigm can be understood in terms of a blanket divergence with $\gamma=1$ and an appropriate choice of the categorical distributions. 
Hence, we can induce a shuffle index of the clone paradigm.
For $k$-RR, the explicit reduction of \citet[Theorem~7.4]{feldmanStrongerPrivacyAmplification2023a} leads to a shuffle index that coincides with our $\chi_{\mathrm{lo}}$ (see Section~\ref{app:krr_comparison_to_clone} for details).
While their work only demonstrates empirical closeness to known lower bounds, our framework provides a theoretical and asymptotic explanation for the closeness.
\end{remark}

\paragraph{Tightness of Blanket-Divergence Analysis.}
We have seen that $\chi_\mathrm{lo}/\chi_\mathrm{up}=1$ for $k$-RR with $k\ge 3$.
Here, we explore the structural conditions under which this equality holds, which implies that the blanket-divergence bounds are asymptotically tight in the sense of Theorem~\ref{thm:moderate-deviation-shuffle}. 
Formally, we have the following characterization.

\begin{restatable}{theorem}{globalbandcollapse}
\label{thm:global-band-collapse}
Let $\mathcal{R}\in\mathfrak{R}$ with blanket mass $\gamma$ and
assume that the suprema in the definitions of $\chi_{\mathrm{lo}}(\mathcal{R})$ and $\chi_{\mathrm{up}}(\mathcal{R})$ are attained by some pairs.
Then, $\chi_{\mathrm{up}}(\mathcal{R}) = \chi_{\mathrm{lo}}(\mathcal{R})$ if and only if there exists a pair which attains the suprema for the lower shuffle index $(x_1^\ast,x_1^{\prime\,\ast})$ and an input $x^\ast\in\mathcal X$ such that
  \[
    \mathcal{R}_{x^\ast}(y)
    = \gamma\mathcal{R}_{\mathrm{BG}}(y)
    \ \text{ for almost every }y\in\bigl\{ y\in\mathcal Y : \mathcal{R}_{x_1^\ast}(y)\neq \mathcal{R}_{x_1^{\prime\,\ast}}(y) \bigr\}.
  \]
In particular, when these equivalent conditions hold, the lower and upper curves in Theorem~\ref{thm:moderate-deviation-shuffle} coincide asymptotically, and
there exists an asymptotic privacy curve $\varepsilon_n^\ast(\alpha)$ realizing $\delta_{\mathcal{S}\circ\mathcal{R}^n}(\varepsilon_n^\ast(\alpha)) = \frac{\alpha}{n}(1+o(1))$ such that
$
  \varepsilon_n^\ast(\alpha)
  = \varepsilon_n\bigl(\alpha,\chi_{\mathrm{lo}}(\mathcal{R})\bigr)\,(1+o(1))
$ as $n\to\infty.$
\end{restatable}
We defer the proof of Theorem~\ref{thm:global-band-collapse} to
Appendix~\ref{app:proof-shuffle-index-structure}. 
In particular, the theorem applies to the case of $k$-RR ($k\geq 3$), and hence to mechanisms that are built on top of $k$-RR or share the same blanket structure, such as local hashing~\cite{wangLocallyDifferentiallyPrivate2017} and (for sufficiently small caps) PrivUnit~\cite{asiOptimalAlgorithmsMean2022}. 
Thus, in many practically relevant shuffle protocols, the blanket-divergence
analysis is asymptotically optimal in the sense of Theorem~\ref{thm:global-band-collapse}.

\section{Analysis of Blanket Divergence via the Fast Fourier Transform (FFT)}

The results in the previous section provide an asymptotic characterization of the blanket divergence. 
However, in applications one is ultimately interested in the finite-$n$ behavior of the shuffled mechanism.
To this end, we develop a complementary, non-asymptotic approach for computing the blanket divergence at finite $n$.

Our goal is to design an algorithm that, for a given local randomizer
$\mathcal R$ and parameters $(n,\varepsilon)$, produces rigorous upper and
lower bounds on
$\mathcal D^{\mathrm{blanket}}_{\mathrm e^\varepsilon,n,\mathcal R_{\mathrm{ref}},\gamma}
(\mathcal R_{x_1}\|\mathcal R_{x_1'})$ with explicitly controlled error and
provable running time guarantees. 
The main idea is to exploit the same
structure that underlies our CLT-based analysis: the blanket divergence can be
expressed in terms of sums of i.i.d.\ privacy amplification random variables.
This observation allows us to approximate the distribution of the sum by means
of the Fast Fourier Transform (FFT). 
To obtain rigorous bounds rather than
heuristic approximations, we carefully track the errors introduced by
truncation of the summands, discretization on a grid, and FFT wrap-around.
By combining these error bounds with the asymptotic expansion from
Lemma~\ref{lem:blanket-asymptotics-general}, we can both tune the numerical
parameters (window size, grid resolution, truncation level) and analyze the
overall complexity of the method.

The starting point is another representation of the blanket divergence.

\begin{restatable}{lemma}{blankettransformfft}
\label{lemma:trans_fft}
Let $M\sim 1+\mathrm{Bin}(n-1, \gamma)$. Then,

$
\mathcal{D}_{\mathrm{e}^\varepsilon,n,\mathcal{R}_\mathrm{ref},\gamma}^\mathrm{blanket}(\mathcal{R}_{x_1} \| \mathcal{R}_{x_1^\prime})=\Pr_{\substack{Y_1\sim\mathcal{R}_{x_1} \\ Y_{2:n}\stackrel{\mathrm{i.i.d.}}{\sim} \mathcal{R}_\mathrm{ref}}}\left[\sum_{i=1}^{M} l_\varepsilon(Y_i)>0\right]-\mathrm{e}^\varepsilon\Pr_{\substack{Y_1\sim\mathcal{R}_{x_1^\prime} \\ Y_{2:n}\stackrel{\mathrm{i.i.d.}}{\sim} \mathcal{R}_\mathrm{ref}}}\left[\sum_{i=1}^{M} l_\varepsilon(Y_i)>0\right].
$
\end{restatable}

The proof is provided in Appendix~\ref{app:proof-blanket-transform-fft}.
Here, we work with a representation of the blanket divergence that does not involve explicit expectations or integrals, which is particularly convenient for
deriving rigorous bounds on the truncation error.
In particular, in our FFT-based computation we must truncate the support of
the random variable in order to approximate an unbounded distribution on a finite grid.
This issue is relevant for non-pure local DP (see, e.g., the discussion of the Gaussian mechanism in Balle et al.~\cite{ballePrivacyBlanketShuffle2019}); hence this formulation is crucial for our purposes.

Su et al.~\cite{suDecompositionBasedOptimalBounds2025} also propose an FFT-based
accountant for the blanket divergence, but their running time scales
as $O(n^2)$ in the number of users when targeting a constant additive error.
In contrast, by adopting the concentration-inequality-based error control of
Gopi et al.~\cite{gopiNumericalCompositionDifferential2021} with our results in Section~\ref{sec:clt-analysis}, we can tune the
truncation and discretization parameters so that our algorithm runs in
near-linear time $\widetilde O(n)$ while achieving a prescribed relative error.

\subsection{FFT Algorithm}
\label{sec:fft}

Algorithms~\ref{alg:main_term_random_m}
and~\ref{alg:calc_final_prob} together form our FFT-based procedure.
As we show in Theorem~\ref{thm:fft-blanket-error} below, the output of
these algorithms yields rigorous upper and lower bounds on the blanket
divergence.

\renewcommand{\algorithmicrequire}{\textbf{Input:}}
\renewcommand{\algorithmicensure}{\textbf{Output:}}

\begin{algorithm}
\caption{\textsc{CalculatePMF}}
\label{alg:main_term_random_m}
\begin{algorithmic}[1]
\Require
\State $n, \gamma$: Parameters for the number of summands.
\State $F_{\mathrm{ref}}(t)$: The CDF of the summands $l(Y_i)$ for $i \geq 2$.
\State $[s, s+w^\mathrm{in}]$: Truncation interval for $l(Y_i)$, $i \geq 2$.
\State $h$: Bin width for discretization.
\State $w^\mathrm{out}$: The window size of FFT.

\State \Comment{\textbf{Step 1: Create PMF for a single truncated and discretized summand}}
\State $p_{\mathrm{out}} \gets F_{\mathrm{ref}}(s) + (1 - F_{\mathrm{ref}}(s+w^\mathrm{in}))$.
\State $p_{\mathrm{in}} \gets 1 - p_{\mathrm{out}}$.
\State Let $Z^{\mathrm{tr}}$ be a random variable for a single summand conditioned on being in $[s, s+w^\mathrm{in}]$.
\State The CDF of $Z^{\mathrm{tr}}$ is $F_{Z^{\mathrm{tr}}}(t) = \left(F_{\mathrm{ref}}(t) - F_{\mathrm{ref}}(s)\right)/p_{\mathrm{in}}$.
\State $\mu_{Z^\mathrm{tr}} \gets \left(
       (s+w^{\mathrm{in}})F_{\mathrm{ref}}(s+w^{\mathrm{in}})
       - sF_{\mathrm{ref}}(s)
       - \displaystyle\int_{s}^{s+w^{\mathrm{in}}} F_{\mathrm{ref}}(z)\,dz\right) / \left(F_{\mathrm{ref}}(s+w^{\mathrm{in}}) - F_{\mathrm{ref}}(s)\right).$
\State Define a discrete grid $\mathcal{G} = \{x_j = j \cdot h \mid j \in \mathbb{Z}\}$.
\State For each grid point $x_j$ in the truncated range, compute the PMF of the discretized variable $Z^{\mathrm{di}}$: $\mathbf{p}_{Z^{\mathrm{di}}}[j] \gets F_{Z^{\mathrm{tr}}}(x_j + h/2) - F_{Z^{\mathrm{tr}}}(x_j - h/2)$.
\State $\mu_{Z^\mathrm{di}} \gets \sum_j x_j  \mathbf{p}_{Z^\mathrm{di}}[j]$
\State \Comment{\textbf{Step 2: Calculate the characteristic function of the sum $S$}}
\State $\mu_{S^{\mathrm{tr}}} \gets (n-1)\,\gamma\,\mu_{Z}^{\mathrm{tr}}$, $\mu_{S^{\mathrm{di}}} \gets (n-1)\,\gamma\,\mu_{Z}^{\mathrm{di}}$
\State Set $N \gets \mathrm{Round}(w^\mathrm{out}/h)$, construct zero-padded $\mathbf{p}_{\mathrm{pad}}$ of length $N$ from $\mathbf{p}_{Z^{\mathrm{di}}}$, and let $\omega_k = \frac{2\pi k}{N h}$.
\State $\mathbf{\psi}_{Z^{\mathrm{di}}} \gets \texttt{FFT}\bigl(\texttt{FFTSHIFT}(\mathbf{p}_{\mathrm{pad}})\bigr)$
\State $\mathbf{\psi}_S \gets \big((1-\gamma)\,\mathbf{1} + \gamma\,\mathbf{\psi}_{Z^{\mathrm{di}}}\big)^{\odot (n-1)}$,  $\mathbf{\psi} \gets \big[e^{-i\,\omega_k\,\mu_{S^\mathrm{di}}}\big]_{k=0}^{N-1}$ where $\odot$ denotes element-wise operations.
\State $\mathbf{p}_{S-\mu_{S^\mathrm{di}}} \gets  \mathrm{Re}\left(\texttt{IFFTSHIFT}\left(\texttt{IFFT}\big(\mathbf{\psi}\odot\mathbf{\psi}_S\big)\right)\right)$ \Comment{PMF of the mean-centered sum $S-\mu_{S^\mathrm{di}}$. Re$(\cdot)$ denotes the real part.}
\State \Return $\mathbf{p}_{S-\mu_{S^\mathrm{di}}}$, $\mu_{S^{\mathrm{tr}}}$

\end{algorithmic}
\end{algorithm}

\renewcommand{\algorithmicrequire}{\textbf{Input:}}
\renewcommand{\algorithmicensure}{\textbf{Output:}}
\begin{algorithm}[H]
\caption{\textsc{CalculateMainTerm}}
\label{alg:calc_final_prob}
\begin{algorithmic}[1]
\Require
\State $\mathbf{p}_{S-\mu_{S^\mathrm{di}}}$, $\mu_{S^{\mathrm{tr}}}$: The PMF and the mean of the truncated sum.
\State $F(t)$: The CDF of $l_\varepsilon(Y;x_1,x_1',\mathcal R_{\mathrm{ref}})$ under $\mathcal{R}_{x_1}$ or $\mathcal{R}_{x_1^\prime}$.
\State $c$: The constant for the discretization error bound.

\State $p_{\text{main}} \gets 0$.
\For{each grid point $z_j$ with corresponding PMF value $\mathbf{p}_{S-\mu_{S^\mathrm{di}}}[j]$}
    \State $p_{\text{main}} \gets p_{\text{main}} + \mathbf{p}_{S-\mu_{S^\mathrm{di}}}[j] \cdot \left(1-F(-c-z_j-\mu_{S^{\mathrm{tr}}})\right)$.
\EndFor
\State \Return $p_{\text{main}}$
\end{algorithmic}
\end{algorithm}

\begin{restatable}{theorem}{fftblanketerror}\footnote{
Our error analysis assumes exact real arithmetic. 
While practical implementations relying on floating-point arithmetic introduce round-off errors, analyzing their propagation is orthogonal to the algorithmic errors characterized in this work.
}
\label{thm:fft-blanket-error}
Fix a distinct pair $x_1,x_1' \in \mathcal{X}$ and let $\mathcal R$ be a local randomizer.
Let the reference distribution be
$
  \mathcal R_{\mathrm{ref}}
  \in
  \bigl\{\mathcal R_{\mathrm{BG}}\bigr\}
  \,\cup\,
  \bigl\{\mathcal R_x : x\in\mathcal X\bigr\}.
$
For $Y\sim\mathcal R_{\mathrm{ref}}$, denote by $F_{\mathrm{ref}}$ the CDF of
the privacy amplification random variable
$l_\varepsilon(Y;x_1,x_1',\mathcal R_{\mathrm{ref}})$, and for
$j\in\{x_1,x_1'\}$ denote by $F_j$ the CDF of
$l_\varepsilon(Y;x_1,x_1',\mathcal R_{\mathrm{ref}})$ under $Y\sim\mathcal R_j$.
Let $(B_i)_{i=2}^n$ be i.i.d.\ $\mathrm{Bernoulli}(\gamma)$ and
$(Y_i)_{i=2}^n$ be i.i.d.\ with $\mathcal R_{\mathrm{ref}}$,
independent of $(B_i)_{i=2}^n$.

Given parameters $(n,\gamma,F_{\mathrm{ref}},s,w^{\mathrm{in}},h,w^{\mathrm{out}})$, we define the truncation event
$
  \perp
  :=
  \Bigl\{
    \exists i\in\{2,\dots,n\} :
    B_i = 1,\ 
    l_\varepsilon\bigl(Y_i;x_1,x_1',\mathcal R_{\mathrm{ref}}\bigr)
      \notin [s,s+w^{\mathrm{in}}]
  \Bigr\}
$ and let
$
  \bigl(\mathbf{p}_{S-\mu_S^{\mathrm{di}}},\mu_S^{\mathrm{tr}}\bigr)
  :=
  \textsc{CalculatePMF}
  \bigl(
    n,\gamma,F_{\mathrm{ref}},
    s,w^{\mathrm{in}},h,w^{\mathrm{out}}
  \bigr).
$
For $x\in\{x_1,x_1'\}$ and $c\ge0$, define
$
  P(c;x)
  :=
  \bigl(1-\Pr[\perp]\bigr)\,
  \textsc{CalculateMainTerm}
  \bigl(
    \mathbf{p}_{S-\mu_S^{\mathrm{di}}},\mu_S^{\mathrm{tr}},
    F_x,c
  \bigr).
$

Then, for any $c \ge 0$,
\begin{align*}
  &P(-c;x_1) - \mathrm e^\varepsilon P(c;x_1')
   - \delta_{\mathrm{err}}^{\mathrm{low}}(c,h,s,w^{\mathrm{in}},w^{\mathrm{out}})
   \leq
  D_{\mathrm e^\varepsilon,n,\mathcal R_{\mathrm{ref}},\gamma}^{\mathrm{blanket}}
    (\mathcal R_{x_1}\|\mathcal R_{x_1'})
  \\
  &\qquad\leq
  P(c;x_1) - \mathrm e^\varepsilon P(-c;x_1')
  + \delta_{\mathrm{err}}^{\mathrm{up}}(c,h,s,w^{\mathrm{in}},w^{\mathrm{out}}),
\end{align*}
where $\delta_{\mathrm{err}}^{\mathrm{low}}$ and
$\delta_{\mathrm{err}}^{\mathrm{up}}$ are the total error terms 
, each of which
admits a rigorous explicit bound.
\end{restatable}

See Appendix~\ref{app:proof-fft-blanket-error} for the full proof. At a high level, Algorithm~\ref{alg:main_term_random_m}
(\textsc{CalculatePMF}) constructs, via FFT, a discrete approximation to
the distribution of the sum
$
  \sum_{i=2}^{M} l_\varepsilon(Y_i),
$
where $M\sim 1+\mathrm{Bin}(n-1,\gamma)$ as in
Lemma~\ref{lemma:trans_fft} and $Y_i$ are i.i.d.\ draws from $\mathcal{R}_{\mathrm{ref}}$.
More precisely, we first truncate the privacy amplification random
variable to a bounded interval, discretize it on a uniform grid, and
then use the FFT to compute the probability mass function of the
resulting discretized sum. This step replaces the sum of
i.i.d.\ (continuous) variables by a finitely supported discrete
distribution that approximates
$\sum_{i=2}^{M} l_\varepsilon(Y_i)$ up to controlled
truncation, discretization and wrap-around errors.

Algorithm~\ref{alg:calc_final_prob}
(\textsc{CalculateMainTerm}) then uses this discrete distribution approximating $\sum_{i=2}^{M} l_\varepsilon(Y_i)$ together with the CDFs of $l_\varepsilon(Y)$ under
$\mathcal R_{x_1}$ and $\mathcal R_{x_1'}$ to approximate the
probabilities
$
  \Pr\!\Bigl[\sum_{i=2}^{M} l_\varepsilon(Y_i) > -\,l_\varepsilon(Y_1)\Bigr].
$ 
Because the distribution of $\sum_{i=2}^{M} l_\varepsilon(Y_i)$ has been
discretized, these probabilities reduce to finite sums over grid points,
and are therefore straightforward to evaluate numerically.

\subsection{Analysis of FFT Algorithm}
\label{sec:analysis_fft_algorithm}

Algorithms~\ref{alg:main_term_random_m} and~\ref{alg:calc_final_prob} depend
on several numerical parameters
$(c,s,w^{\mathrm{in}},h,w^{\mathrm{out}})$ that control truncation,
discretization, and FFT wrap-around.
Using the asymptotic characterization of the blanket divergence from
Lemma~\ref{lem:blanket-asymptotics-general}, we can tune these parameters to control both the relative error and the
computational cost of the FFT-based approximation.
The following theorem makes this precise.

\begin{restatable}{theorem}{fftrelativeerror}
\label{thm:fft-relative-error}
Fix a distinct pair $x_1,x_1' \in \mathcal{X}$ and let $\mathcal R\in\mathfrak{R}$.
Let the reference distribution be
$
  \mathcal R_{\mathrm{ref}}
  \in
  \bigl\{\mathcal R_{\mathrm{BG}}\bigr\}
  \,\cup\,
  \bigl\{\mathcal R_x : x\in\mathcal X\bigr\}.
$
Fix target knobs
$
  \eta_{\mathrm{main}},\,
  \eta_{\mathrm{trunc}},\,
  \eta_{\mathrm{disc}},\,
  \eta_{\mathrm{alias}}
  \in \mathbb{R}.
$
Let $\varepsilon_n\to0$ be a sequence of privacy levels such that
$
  \varepsilon_n= \Theta\bigl(\sqrt{\log n / n}\bigr).
$
We denote by $L_n$ and $U_n$ the lower and upper bounds from
Theorem~\ref{thm:fft-blanket-error}, that is,
$
  L_n
  :=
  P(-c;x_1) - \mathrm e^\varepsilon P(c;x_1')
  - \delta^{\mathrm{low}}_{\mathrm{err}}
$ and
$
  U_n
  :=
  P(c;x_1) - \mathrm e^\varepsilon P(-c;x_1')
  + \delta^{\mathrm{up}}_{\mathrm{err}}
$
We define our numerical approximation as the midpoint
$
  \widehat D_n := \frac{U_n + L_n}{2}.
$

Then we can tune parameters
$
  (c_n,s_n,w_n^{\mathrm{in}},h_n,w_n^{\mathrm{out}})
$
depending on
$(\eta_{\mathrm{main}},\eta_{\mathrm{trunc}},\eta_{\mathrm{disc}},\eta_{\mathrm{alias}})$
such that
\[
  \frac{|\widehat D_n - \mathcal{D}^{\mathrm{blanket}}|}{\mathcal{D}^{\mathrm{blanket}}}
  = O\bigl(
    \eta_{\mathrm{main}}
    + \eta_{\mathrm{trunc}}
    + \eta_{\mathrm{disc}}
    + \eta_{\mathrm{alias}}
  \bigr)
  \qquad (n\to\infty),
\]
when $|\mu_{S^{\mathrm{di}}}-\mu_{S^{\mathrm{tr}}}|=O(c_n)$\footnote{While this condition is not theoretically guaranteed for discrete mechanisms, we empirically confirmed $|\mu_{S^{\mathrm{di}}}-\mu_{S^{\mathrm{tr}}}|\leq c_n$ in the regime of our experiments.} and $l_\varepsilon(Y;x_1,x_1^\prime,\mathcal{R}_{\mathrm{ref}})$ is a non-lattice distribution\footnote{A finitely supported random variable is lattice iff its support is contained in a set of the form $a + h\mathbb{Z}$ for some $a\in\mathbb{R}$ and $h>0$, that is, all pairwise differences between support points are integer multiples of a common nonzero number. For mechanisms with at least three output symbols this requires special algebraic relations, so for generic choices of parameters $l_\varepsilon(Y;x_1,x_1',\mathcal{R}_{\mathrm{ref}})$ is non-lattice. The $2$-RR is a nongeneric exception: in that case $l_\varepsilon$ takes only two values and is lattice.}.
The overall running time of
Algorithms~\ref{alg:main_term_random_m}--\ref{alg:calc_final_prob} with these parameters
is
$
  O\left(\frac{n}{\eta_{\mathrm{main}}}(\log \frac{n}{\eta_{\mathrm{main}}})^3\right),
$
where $\log\log$ factors are omitted.
\end{restatable}

See Appendix~\ref{app:proof-fft-relative-error} for the full proof. In particular, in the regime
$\varepsilon_n = \Theta(\sqrt{\log n / n})$, the theorem
shows that we can prescribe a target relative error level
$\eta>0$ and choose the numerical knobs so that the relative error becomes $O(\eta)$, which leads to an overall running time
$
  \widetilde O\!\left(\frac{n}{\eta}\right),
$
up to polylogarithmic factors in $n$ and $1/\eta$.
We provide details on parameter selection and implementation in Appendix~\ref{sec:fft-implementation-details}.

We briefly discuss regimes in which the algorithm can become less effective in practice. 
First, when the shuffle index $\chi$ is small, the relevant distributions for the privacy-amplification sum are effectively more spread out, so representing them on a discrete grid requires a larger window and more bins; heuristically, the required grid size (and hence runtime) can grow like $O(1/\chi^2)$, making computations expensive for small $\chi$.
In practice, as discussed in Section~\ref{sec:shuffle_index}, privacy amplification by shuffling is meaningful only when the shuffle index is at least moderately large; the parameter regime of primary interest typically corresponds to larger $\chi$, where the FFT computation remains well behaved.
Second, our parameter-tuning rules are asymptotic, so for small $n$ they may be miscalibrated and yield insufficient accuracy.
That said, the method produces certified upper and lower bounds, so one can directly inspect the resulting gap to assess tightness and, if needed, retune the numerical parameters to achieve the desired accuracy. 
Finally, while floating-point round-off is not captured by the real-arithmetic analysis, prior work shows that the FFT-based computations can remain accurate down to about $10^{-10}$ when using long double precision~\cite{gopiNumericalCompositionDifferential2021}, which is sufficient for the typical range of $\delta$ values of interest in DP.

\section{Experiments}
\begin{figure}[t]
  \centering
  \begin{subfigure}{0.59\linewidth}
    \centering
    \begin{tabular}{cc}
      \includegraphics[width=0.48\linewidth,height=0.18\textheight,keepaspectratio]{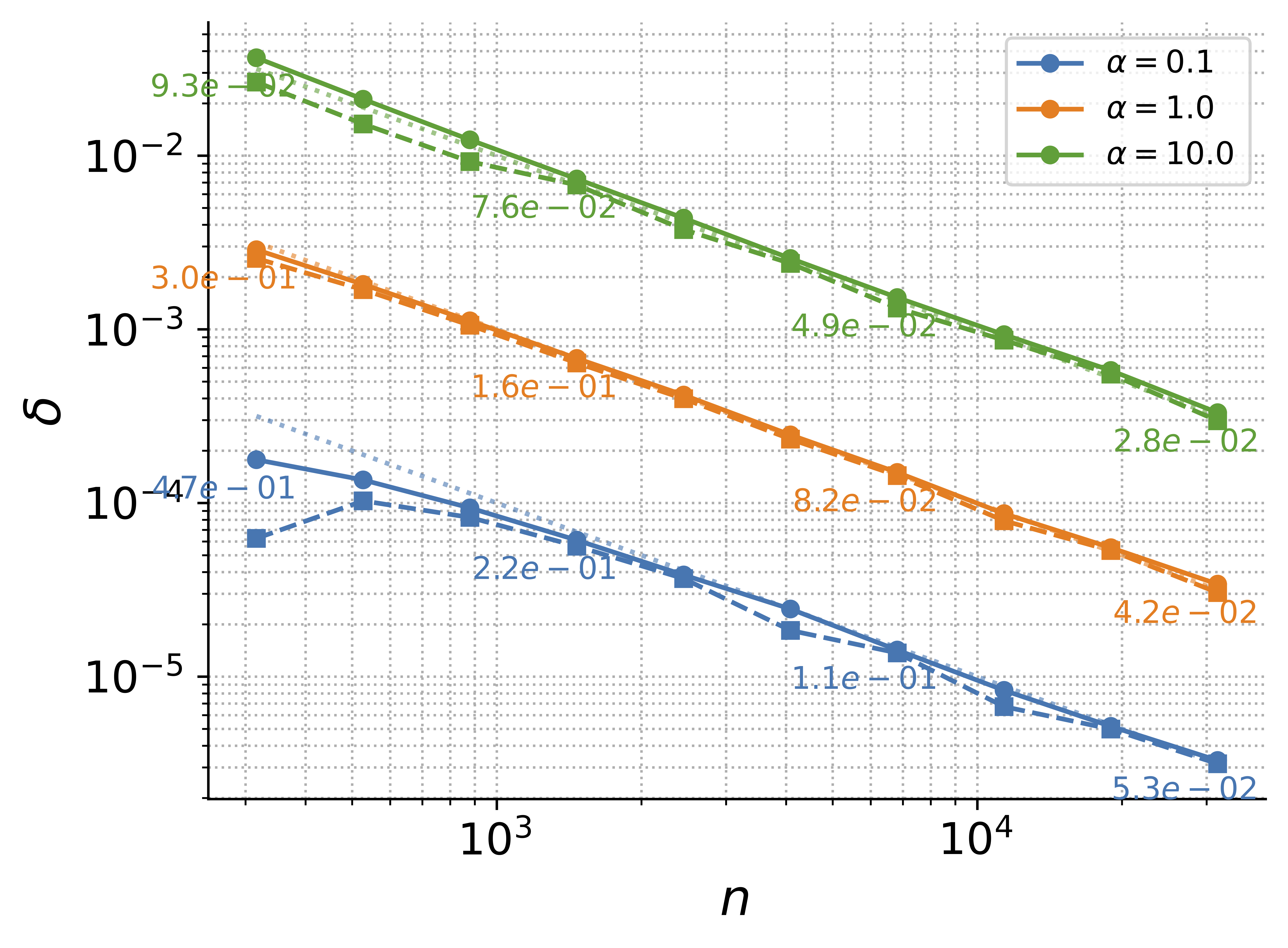} &
      \includegraphics[width=0.48\linewidth,height=0.18\textheight,keepaspectratio]{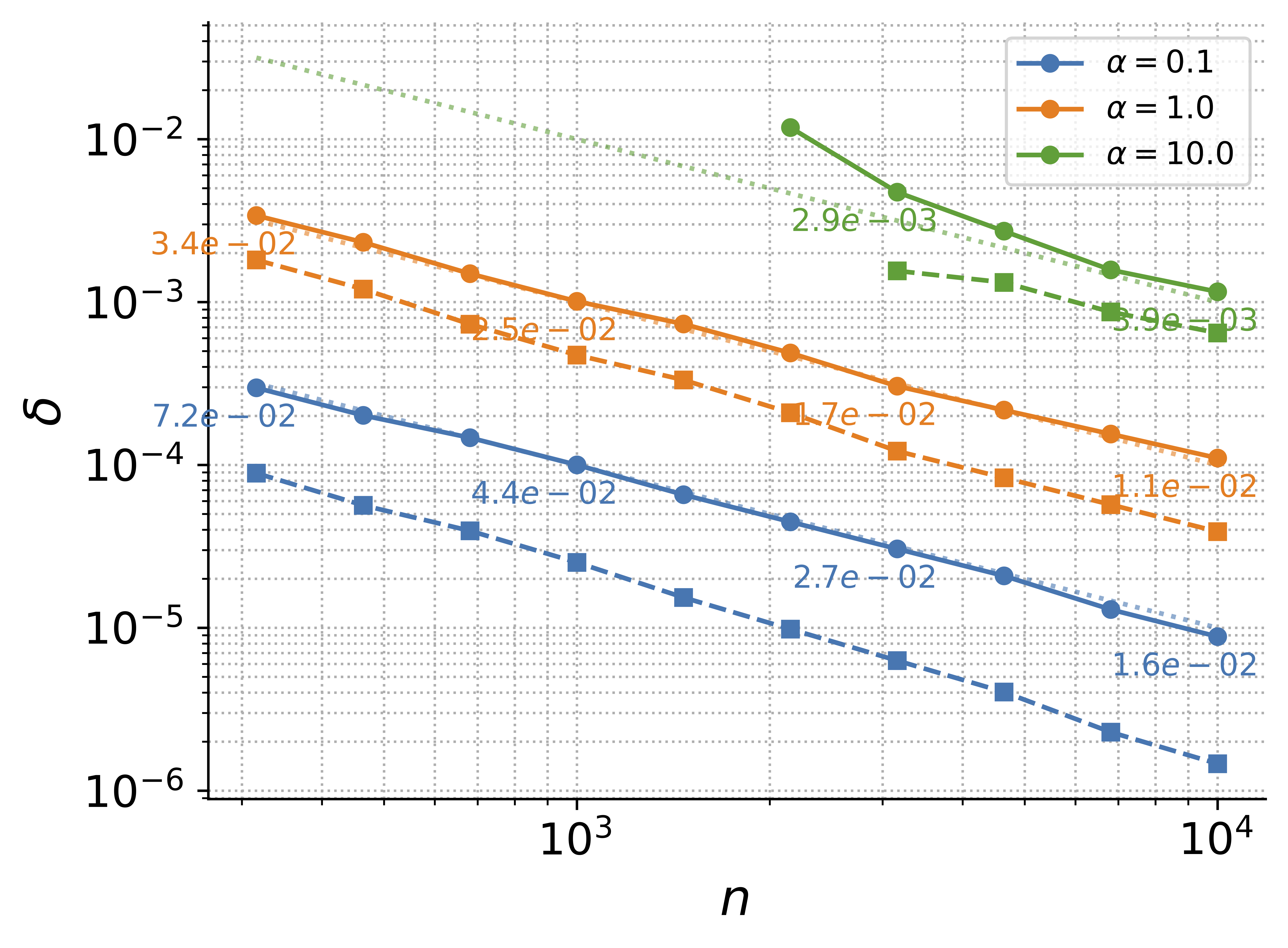}
    \end{tabular}
    \caption{Upper ($\circ$) and lower ($\square$) shuffled mechanism's DP $\delta$ for $\varepsilon_n(\alpha,\underline\chi_{\mathrm{lo}})$ of
      3-RR with $\varepsilon_0=2$ (left) and the Gaussian mechanism with
      $\sigma_0=2$ (right), both with annotated $\varepsilon_n$; the dotted
      line indicates $\alpha/n$.}
    \label{fig:delta_bands}
  \end{subfigure}
  \begin{subfigure}{0.4\linewidth}
    \centering
    \includegraphics[width=\linewidth,height=0.18\textheight,keepaspectratio]{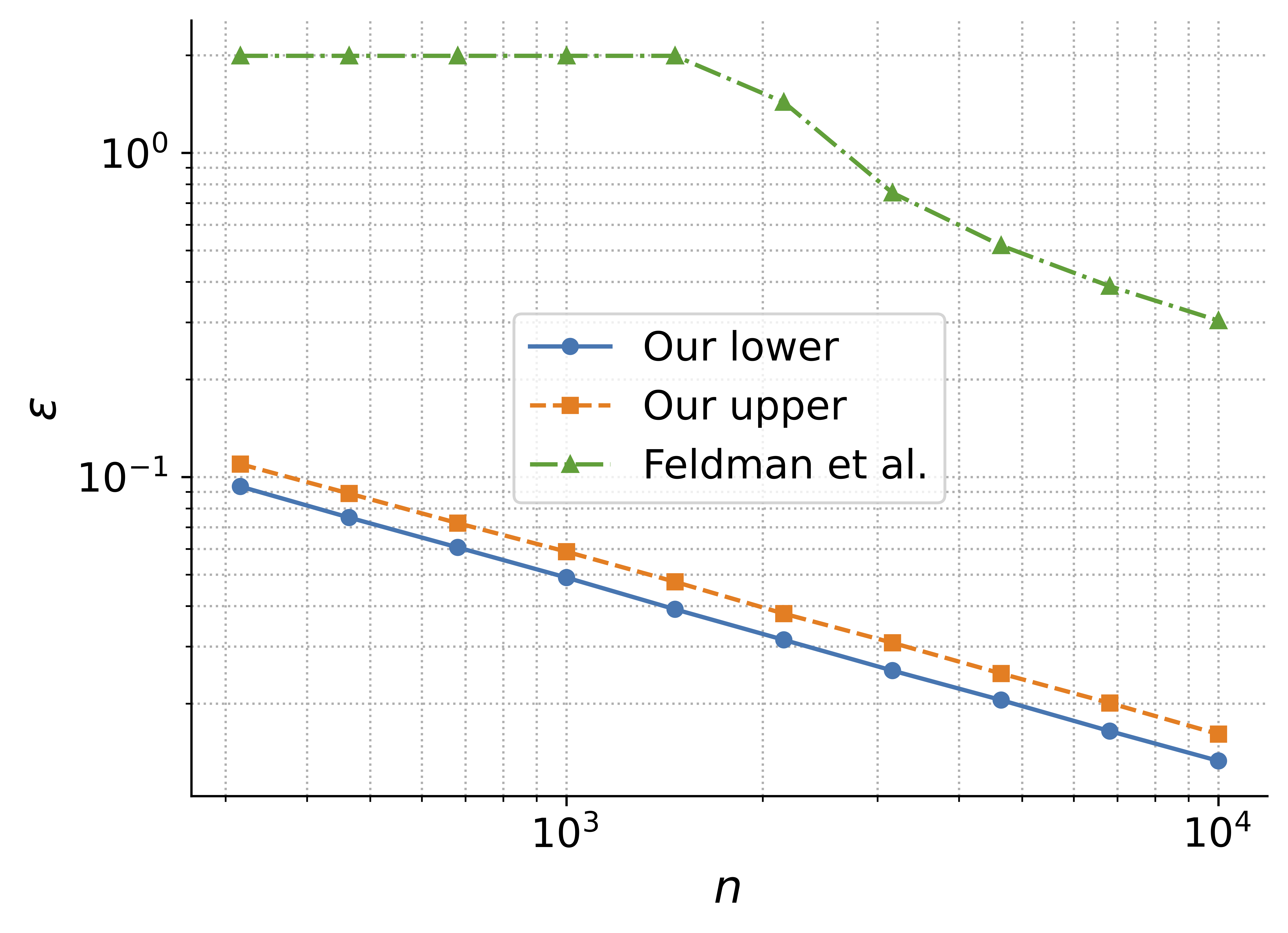}
    \caption{Comparison of the shuffled $\varepsilon$ as a function of $n$ between our analysis and that of \citet{feldmanHidingClonesSimple2022}. We use the Gaussian mechanism with $\sigma_0=2$ and fix $\delta=10^{-5}$.}
    \label{fig:vs_feldman}
  \end{subfigure}
  \begin{subfigure}{0.31\linewidth}
    \centering
    \includegraphics[width=0.9\linewidth,height=0.18\textheight,keepaspectratio]{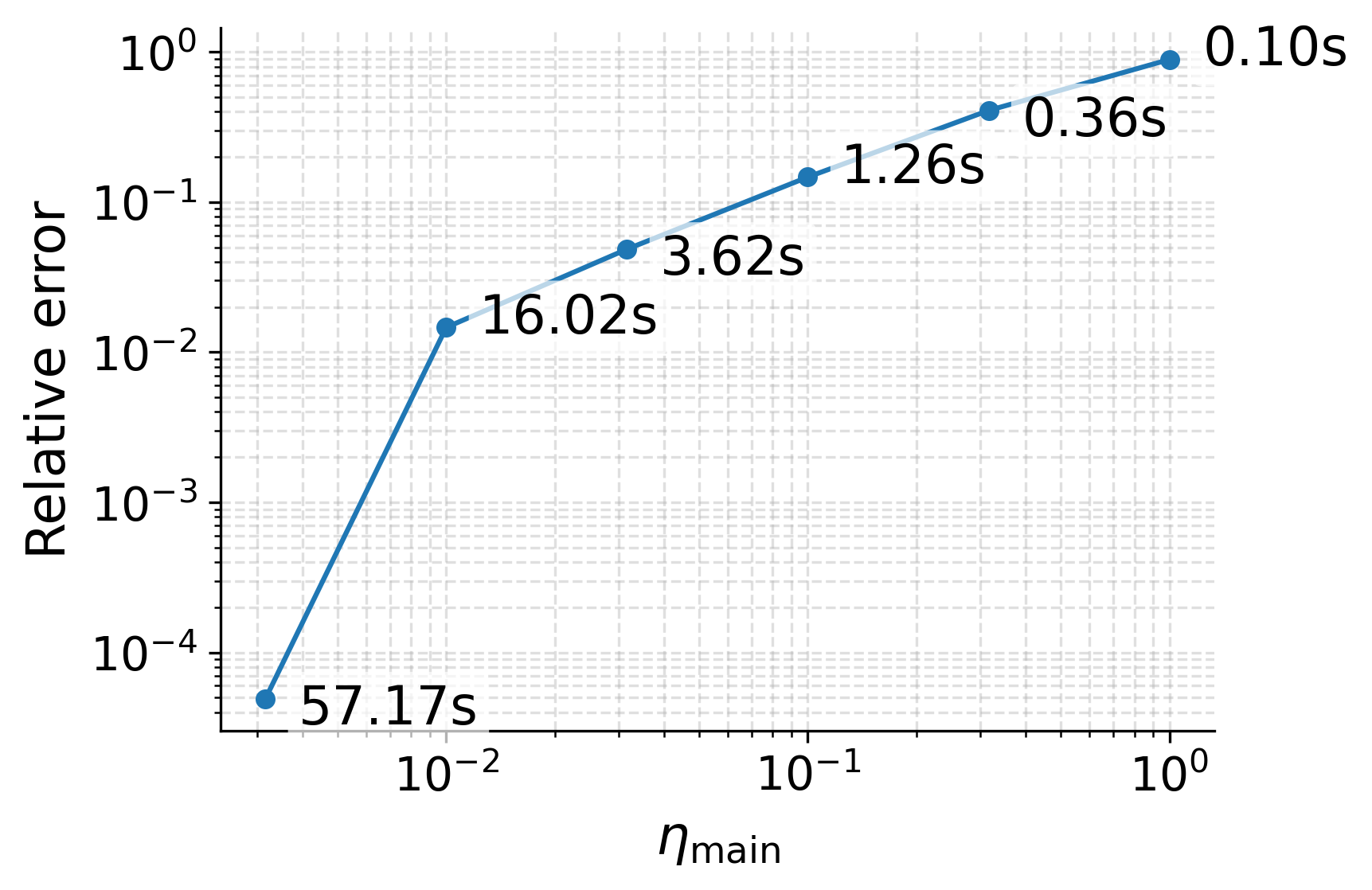}
    \caption{Relative error as a function of
      the accuracy knob $\eta_{\mathrm{main}}$ with annotated runtime.}
    \label{fig:eta_dependence}
  \end{subfigure}
  \begin{subfigure}{0.31\linewidth}
    \centering
    \includegraphics[width=\linewidth,height=0.18\textheight,keepaspectratio]{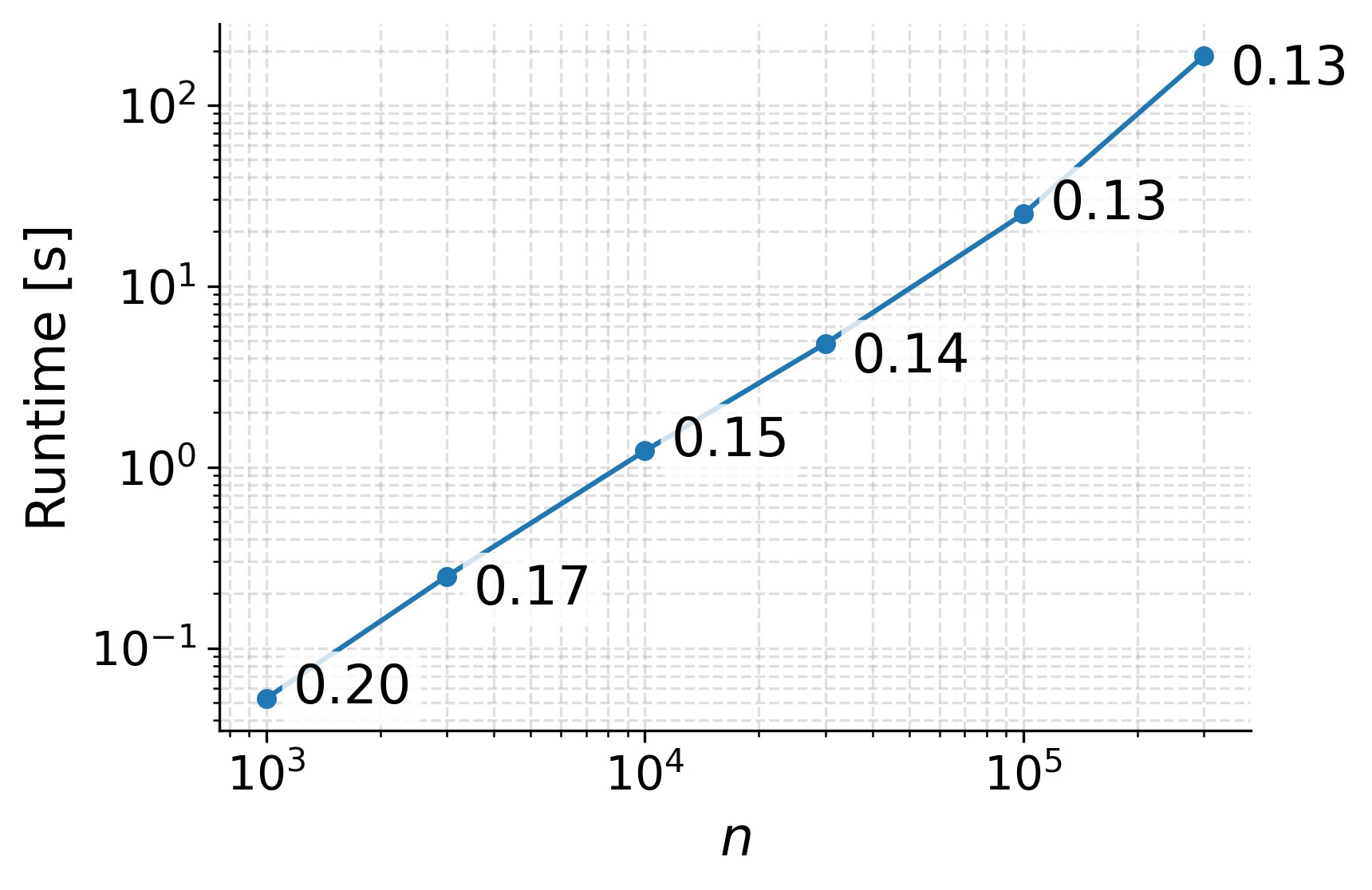}
    \caption{Runtime as a function of
      the number of users $n$ with annotated relative error.}
    \label{fig:n_dependence}
  \end{subfigure}
  \begin{subfigure}{0.31\linewidth}
    \centering
    \includegraphics[width=\linewidth,height=0.18\textheight,keepaspectratio]{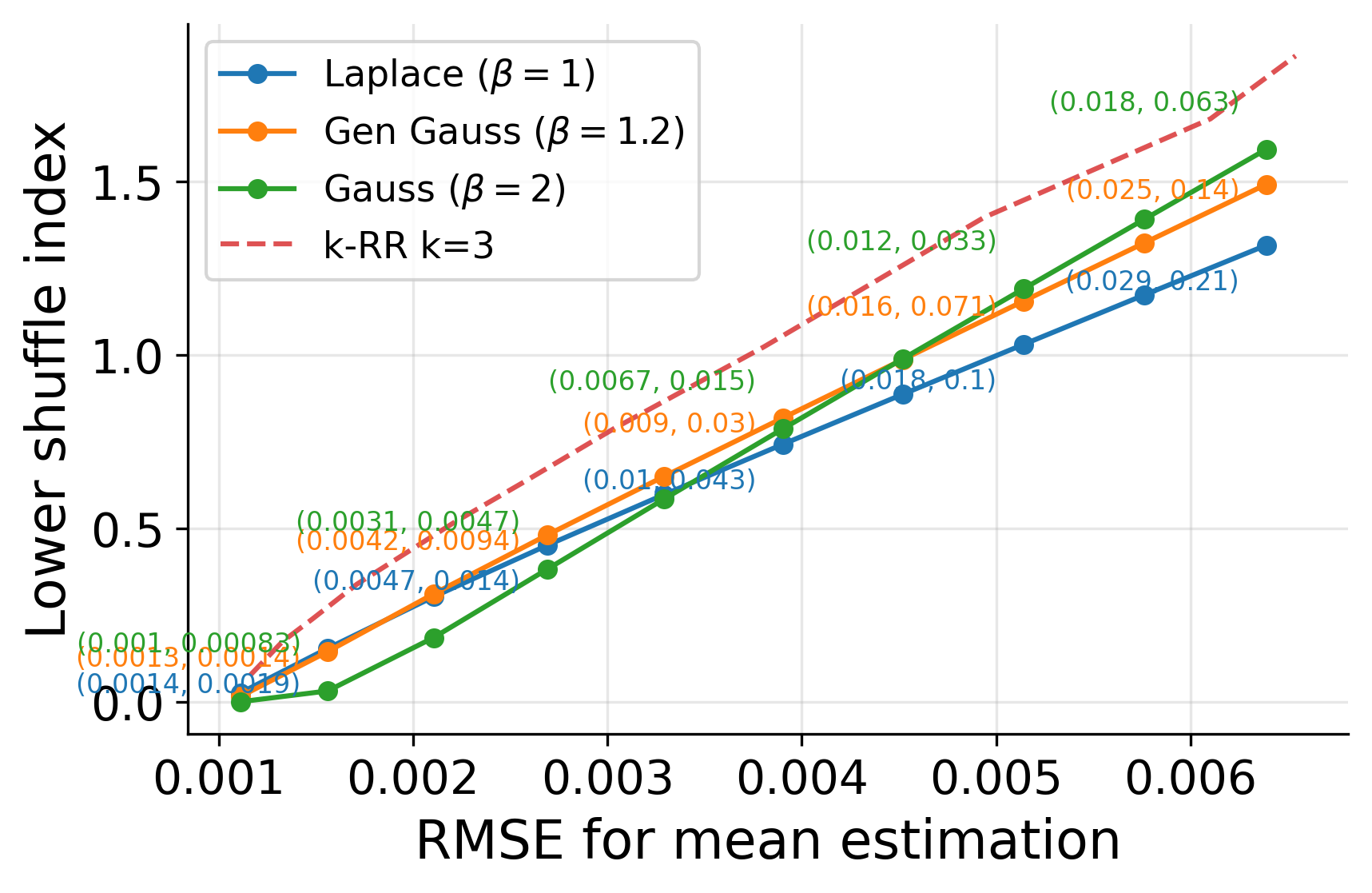}
    \caption{$\chi_\mathrm{lo}$ as a function of the RMSE for mean estimation with annotated RMSE for variance and $3$rd moment estimation.}
    \label{fig:shuffle_index_vs_mean_sigma}
  \end{subfigure}

  \caption{
    Experimental validation of the FFT-based blanket-divergence accountant and
    the shuffle-index analysis.
  }
  \label{fig:experiments}
\end{figure}

In this section, we complement our theoretical results with five experiments.
The first two visualize the privacy curves from Theorems~\ref{thm:moderate-deviation-shuffle} and~\ref{thm:global-band-collapse}.
The next two illustrate the accuracy-runtime trade-offs of our FFT-based accountant from
Theorems~\ref{thm:fft-blanket-error} and~\ref{thm:fft-relative-error}.
The final experiment compares generalized Gaussian mechanisms with $k$-RR in a distribution-estimation task.

Unless stated otherwise, all computations are carried out in double precision and we choose
$(c,s,w^{\mathrm{in}},h,w^{\mathrm{out}})$ according to
Theorem~\ref{thm:fft-blanket-error} and set $\eta_{\mathrm{trunc}},
  \eta_{\mathrm{disc}},\eta_{\mathrm{alias}}=10^{-3}$.
For each mechanism we first identify the asymptotically worst-case neighboring
pair $(x_1,x_1')$ and the corresponding shuffle indices
$\chi_{\mathrm{lo}}$ and $\chi_{\mathrm{up}}$ as in
Theorem~\ref{thm:moderate-deviation-shuffle}, and then run the FFT-based
accountant at this pair.

\subsection{Privacy Curves and Shuffle Index}
\label{subsec:exp-shuffle-index}

We empirically evaluate the privacy guarantees of
Theorem~\ref{thm:moderate-deviation-shuffle}.
For a fixed target constant $\alpha>0$ and a given local randomizer we set
$
  \varepsilon_n
  = \varepsilon_n(\alpha,\chi_{\mathrm{lo}})
$
and explore the behavior of the shuffled mechanism's hockey-stick divergence
$\delta_n$ as a function of $n$ for two mechanisms.
That is, we use the blanket-based \emph{upper} and \emph{lower} bounds on the
\emph{true} shuffled hockey-stick divergence:
the Balle et al.\ upper bound
(Lemma~\ref{lem:upper-blanket-divergence}) and the Su et al.\ lower bound
(Theorem~\ref{theorem:lower-blanket-divergence}), both evaluated via our
FFT accountant at the asymptotically worst-case pair.

The left panel of Fig.~\ref{fig:delta_bands} shows the results for the case of $3$-RR with $\varepsilon_0=2$.
For each $n$ we plot the upper and lower bounds on the true hockey-stick divergence $\delta_n$ of the shuffled mechanism corresponding to $\varepsilon_n(\alpha,\chi_{\mathrm{lo}})$.
Across the entire range of $n$ the two curves are practically indistinguishable.
This behavior is exactly what Theorem~\ref{thm:global-band-collapse}
predicts for $k$-RR with $k\ge3$.
On the theoretical side, the lower and upper shuffle indices coincide,
$\chi_{\mathrm{lo}} = \chi_{\mathrm{up}}$, so the
asymptotic band in Theorem~\ref{thm:moderate-deviation-shuffle} collapses.
Empirically, we see that this collapse already occurs for moderate values
of $n$, confirming that the blanket-divergence analysis is effectively
optimal for this mechanism.

The right panel of Fig.~\ref{fig:delta_bands} reports the same experiment for the Gaussian local randomizer with noise parameter $\sigma=2$.
Here the upper and lower curves form a visible but relatively narrow band.
This matches our theoretical findings: the Gaussian mechanism does not satisfy the structural condition of
Theorem~\ref{thm:global-band-collapse}, so the shuffle indices $\chi_{\mathrm{lo}}$ and $\chi_{\mathrm{up}}$ need not coincide.
Nevertheless, their ratio stays close to one (above $0.7$ in our parameter range), and the band remains tight across all $n$.

Next, we compare our direct analysis of the blanket divergence with the mechanism-agnostic shuffle analysis of Feldman et al.~\cite{feldmanHidingClonesSimple2022} of the calibrated Gaussian local randomizer~\cite{balleImprovingGaussianMechanism2018}.
Fig.~\ref{fig:vs_feldman} shows the results.
Because of the mechanism-agnostic nature and the reliance on a coarse reduction to a pure LDP surrogate, the resulting bounds incur a large constant-factor loss.
Empirically, we observe that even at $n=10^3$, the method predicts essentially no privacy amplification, and for larger $n$ the predicted shuffled $\varepsilon$ remains $10$ times larger than our direct analysis.

\subsection{Accuracy and Runtime of the FFT Accountant}
\label{subsec:exp-fft}

We empirically validate the error and complexity guarantees of
Theorems~\ref{thm:fft-blanket-error} and~\ref{thm:fft-relative-error}.
Throughout this subsection we fix the local randomizer to be $3$-RR with $\varepsilon_0=2$ and vary either the
accuracy knob $\eta_{\mathrm{main}}$ or the number of users~$n$.

Fig.~\ref{fig:eta_dependence} shows the effect of the parameter
$\eta_{\mathrm{main}}$ on both accuracy and runtime.
For each value of $\eta_{\mathrm{main}}$ we run the FFT-based algorithm for a
fixed $(n,\varepsilon)$.
Theorem~\ref{thm:fft-blanket-error} provides rigorous lower and upper bounds
on the blanket divergence,
$
  D_{\mathrm{low}} \;\le\;
  \mathcal{D}^{\mathrm{blanket}}
  \;\le\; D_{\mathrm{up}},
$
so the true value is guaranteed to lie in the interval
$[D_{\mathrm{low}},D_{\mathrm{up}}]$.
As our notion of relative error, we plot the certified relative uncertainty bandwidth
$
  \frac{D_{\mathrm{up}} - D_{\mathrm{low}}}{D_{\mathrm{up}}}.
$
As predicted by Theorem~\ref{thm:fft-relative-error}, the relative error shrinks approximately linearly as $\eta_{\mathrm{main}}$ decreases, while the
runtime grows roughly like $1/\eta_{\mathrm{main}}$.

In Fig.~\ref{fig:n_dependence} we fix $\eta_{\mathrm{main}}$ and vary the number of users $n$.
For each $n$ we again compute the interval
$[D_{\mathrm{low}},D_{\mathrm{up}}]$ and plot both the runtime and the
relative error.
The relative error remains essentially flat as $n$ increases, confirming
that our parameter selection keeps the relative error under control uniformly
in $n$.
At the same time, the runtime grows almost linearly in $n$, with only mild
logarithmic deviations.
This matches the theoretical $\widetilde O(n/\eta)$ complexity of
Theorem~\ref{thm:fft-relative-error}.

\subsection{Distribution Estimation}
\label{sec:distribution-estimation}
Finally, we compare the generalized Gaussian mechanisms with $k$-RR in a distribution estimation task.
We fix the data-generating distribution to be $X \sim \mathrm{Unif}[0,1]$ and set the number of users to $n = 10^5$.
For each local mechanism, we evaluate the empirical root-mean-squared error (RMSE) of the mean estimation.
Fig.~\ref{fig:shuffle_index_vs_mean_sigma} plots the RMSE on the horizontal axis and the lower shuffle index~$\chi_{\mathrm{lo}}$ on the vertical axis.
For generalized Gaussian mechanisms we additionally annotate each point with the RMSEs for variance and third-moment estimation, whereas for $k$-RR we only plot the mean-accuracy vs.\ shuffle-index trade-off.

In this experiment we focus on $k$-RR with $k=3$, which we found to yield the best shuffle index among the values of $k$ under the same mean-estimation accuracy.
Under this choice, the figure shows that, for the same level of mean RMSE, $3$-RR achieves the largest lower shuffle index in our parameter range, and hence provides the strongest privacy.
However, a crucial caveat is that $3$-RR (and, more generally, $k$-RR) does not natively support estimation of higher-order moments: estimating the variance or third central moment typically requires designing a separate mechanism (and hence additional privacy budget).
In contrast, generalized Gaussian mechanisms simultaneously support estimation of multiple moments from the same shuffled output.
Fig.~\ref{fig:shuffle_index_vs_mean_sigma} reveals that the generalized Gaussian family exhibits a nontrivial dependence on the shape parameter~$\beta$: the value of $\beta$ that optimizes the shuffle index varies with the target mean-estimation accuracy.
In particular, in more noisy regimes (larger local noise), the Gaussian mechanism ($\beta=2$) attains the largest lower shuffle index among the generalized Gaussian family.
Moreover, over a wide range of mean-accuracy levels, the Gaussian mechanism also achieves the smallest RMSE for variance and third-moment estimation among the generalized Gaussian mechanisms, as indicated by the annotations.
Taken together, these results suggest that generalized Gaussian mechanisms ($\beta>1$) provide a better privacy-utility trade-off than pure local DP mechanisms.
Importantly, this kind of optimization is made possible by our unified framework beyond pure local DP.

\section{Limitation and Conclusion}



Our moderate-deviation guarantees are formulated in terms of an \emph{asymptotically} worst-case neighboring pair, that is, a pair $(x_1^\ast,x_1^{\prime\ast})$ that attains the worst-case shuffle index in the limit $n\to\infty$.
While this is natural from the point of view of our CLT-based analysis, it does not rule out the possibility that, for finite $n$, special pairs might induce slightly worse $(\varepsilon,\delta)$ trade-offs than those induced by the asymptotic worst-case pair.
Establishing conditions under which the asymptotically worst-case pair is also worst-case at finite $n$ is a problem that we leave for future work.

Our work is, to the best of our knowledge, the first to go beyond the pure local DP viewpoint and organize privacy amplification by shuffling via the shuffle index.
Our framework provides a unified way to reason about shuffle amplification and to obtain tight $(\varepsilon,\delta)$ guarantees.
We believe that these conceptual and technical advantages are valuable both for the theoretical understanding of shuffle DP and for its practical deployment in real systems.

\bibliographystyle{plain}
\bibliography{shuffle-gauss}


\appendix






\section{Missing Proofs}

\subsection{The proof of Lemma~\ref{lem:blanket-asymptotics-general}}
\label{app:proof-blanket-asymptotics-general}

\blanketasymptoticsgeneral*

To prove Lemma~\ref{lem:blanket-asymptotics-general}, we start by deriving two alternative representations of the blanket divergence that are more amenable to asymptotic analysis.

\begin{restatable}[Transformations for the blanket divergence]{lemma}{blankettransform}
\label{lem:blanket-transform}
Let $\mathcal{R}$ be a local randomizer and a fixed pair
$x_1,x_1' \in \mathcal X$.
Let the reference distribution be
$
  \mathcal R_{\mathrm{ref}}
  \in
  \bigl\{\mathcal R_{\mathrm{BG}}\bigr\}
  \,\cup\,
  \bigl\{\mathcal R_x : x\in\mathcal X\bigr\}.
$
For $n\in\mathbb N$, $\gamma\in(0,1]$, and $\varepsilon\geq 0$, define the
following random variables:
  $Y_1,\dots,Y_n\stackrel{\mathrm{i.i.d.}}{\sim}\mathcal R_{\mathrm{ref}}$,
  $X_i := l_{\varepsilon}(Y_i;x_1,x_1',\mathcal R_{\mathrm{ref}})$
    for $i=1,\dots,n$,
  $B_1,\dots,B_n$ i.i.d.\ $\mathrm{Bernoulli}(\gamma)$, independent of
    $(Y_i)_{i=1}^n$,
    and
  $Z_i := B_i X_i$ and $S_m := \sum_{i=1}^m Z_i$ for $m\ge1$.
Let
$
  \mu_\varepsilon := \mathbb E[Z_1]
$ and
$
  \sigma_\varepsilon^2 := \mathrm{Var}(Z_1)\in(0,\infty).
$
For each $m\ge1$ define the normalized sum
$
  T_m
  := \frac{S_m - m\mu_\varepsilon}{\sigma_\varepsilon\sqrt{m}}
$
and its CDF
$
  F_m(x) := \Pr\left[T_m\le x\right]
$
together with the thresholds
$
  t_m := -\frac{\mu_\varepsilon\sqrt{m}}{\sigma_\varepsilon},
  $
and
$
  \eta_m(z) := \frac{z}{\sigma_\varepsilon\sqrt{m}}.
$

Then the blanket divergence admits the following two equivalent representations:
  \begin{equation}
    \label{eq:blanket-transform-i}
    \begin{aligned}
    \mathcal D^{\mathrm{blanket}}_{e^{\varepsilon},n,\mathcal R_{\mathrm{ref}},\gamma}(
      \mathcal{R}_{x_1} \| \mathcal{R}_{x_1^\prime})
    &= \frac{1}{\gamma}\,
       \mathbb E\Bigl[
         Z_1\,
         \Pr\bigl[
           S_{n-1} > -Z_1
         \bigr]
       \Bigr] 
    &= \frac{1}{\gamma}\,
       \mathbb E\!\left[
         Z_1\,
         \bigl\{
           1 - F_{n-1}\bigl(t_{n-1} - \eta_{n-1}(Z_1)\bigr)
         \bigr\}
       \right].
    \end{aligned}
  \end{equation}

  \begin{equation}
    \label{eq:blanket-transform-ii}
    \begin{aligned}
    \mathcal D^{\mathrm{blanket}}_{e^{\varepsilon},n,\mathcal R_{\mathrm{ref}},\gamma}(
      \mathcal{R}_{x_1} \| \mathcal{R}_{x_1^\prime})
    &= \frac{1}{n\gamma}
       \int_0^\infty
         \Pr[S_n>u]\,du 
    &= \frac{\sigma_\varepsilon}{\sqrt{n}\gamma}\,
       \int_{t_n}^\infty \bigl(1-F_n(x)\bigr)\,dx.
    \end{aligned}
  \end{equation}
\end{restatable}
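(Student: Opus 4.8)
The plan is to start from the defining formula for the (generalized) blanket divergence, i.e.\ the expression in Lemma~\ref{lem:upper-blanket-divergence} with $\mathcal R_{\mathrm{BG}}$ replaced by the given reference distribution $\mathcal R_{\mathrm{ref}}$ and the given $\gamma\in(0,1]$:
\[
  \mathcal D^{\mathrm{blanket}}_{e^{\varepsilon},n,\mathcal R_{\mathrm{ref}},\gamma}(\mathcal R_{x_1}\|\mathcal R_{x_1'})
  = \frac{1}{n\gamma}\,
  \mathbb E_{\substack{Y_{1:M}\stackrel{\mathrm{i.i.d.}}{\sim}\mathcal R_{\mathrm{ref}}\\ M\sim\mathrm{Bin}(n,\gamma)}}
  \Bigl[\max\Bigl(\textstyle\sum_{i=1}^{M} l_{\varepsilon}(Y_i;x_1,x_1',\mathcal R_{\mathrm{ref}}),\,0\Bigr)\Bigr].
\]
The first step is to rewrite the binomial-indexed sum as the thinned sum $S_n=\sum_{i=1}^n Z_i=\sum_{i=1}^n B_iX_i$. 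Since the expectand depends on $Y_{1:M}$ only through the partial sum, it suffices to establish the distributional identity $\sum_{i=1}^{M}l_{\varepsilon}(Y_i)\overset{d}{=}S_n$: conditioning on the Bernoulli vector $(B_i)_{i\le n}$, $S_n$ is a sum of $\sum_i B_i$ i.i.d.\ copies of $l_{\varepsilon}(Y;x_1,x_1',\mathcal R_{\mathrm{ref}})$ with $Y\sim\mathcal R_{\mathrm{ref}}$, and $\sum_i B_i\sim\mathrm{Bin}(n,\gamma)$ has the same law as $M$, which matches the law of $\sum_{i=1}^M l_\varepsilon(Y_i)$. Hence $\mathcal D^{\mathrm{blanket}}=\tfrac{1}{n\gamma}\,\mathbb E[(S_n)_+]$, and this is finite since Assumption~\ref{assump:regularity}(1) gives $\mathbb E|X_1|<\infty$ and therefore $\mathbb E|S_n|<\infty$.

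For representation~\eqref{eq:blanket-transform-ii}, I would apply the layer-cake (tail-integral) identity to the nonnegative variable $(S_n)_+$, obtaining $\mathbb E[(S_n)_+]=\int_0^\infty\Pr[(S_n)_+>u]\,du=\int_0^\infty\Pr[S_n>u]\,du$ (using $\{(S_n)_+>u\}=\{S_n>u\}$ for $u>0$), so dividing by $n\gamma$ gives the first form. For the normalized form, substitute $u=\sigma_{\varepsilon}\sqrt{n}\,x+n\mu_{\varepsilon}$, so $du=\sigma_{\varepsilon}\sqrt{n}\,dx$ and the lower limit $u=0$ becomes $x=-\mu_{\varepsilon}\sqrt{n}/\sigma_{\varepsilon}=t_n$; since $S_n\mapsto T_n$ is an increasing affine map, $\Pr[S_n>u]=\Pr[T_n>(u-n\mu_\varepsilon)/(\sigma_\varepsilon\sqrt n)]=1-F_n\bigl((u-n\mu_\varepsilon)/(\sigma_\varepsilon\sqrt n)\bigr)$, which holds for every $u$ irrespective of atoms of $F_n$. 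This yields $\tfrac{\sigma_{\varepsilon}}{\sqrt{n}\,\gamma}\int_{t_n}^\infty(1-F_n(x))\,dx$.

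For representation~\eqref{eq:blanket-transform-i}, I would use exchangeability. Writing $(S_n)_+=S_n\mathbf 1\{S_n>0\}$ and using that $Z_1,\dots,Z_n$ are i.i.d., $\mathbb E[(S_n)_+]=\sum_{i=1}^n\mathbb E[Z_i\mathbf 1\{S_n>0\}]=n\,\mathbb E[Z_n\mathbf 1\{S_n>0\}]$. Decomposing $S_n=S_{n-1}+Z_n$ with $S_{n-1}=\sum_{i=1}^{n-1}Z_i$ independent of $Z_n$ and $Z_n\overset{d}{=}Z_1$, and conditioning on $Z_n$, gives $\mathbb E[Z_n\mathbf 1\{S_{n-1}+Z_n>0\}]=\mathbb E\bigl[Z_1\,\Pr[S_{n-1}>-Z_1]\bigr]$, where $\Pr[S_{n-1}>-Z_1]$ abbreviates $z\mapsto\Pr[S_{n-1}>-z]$ evaluated at $z=Z_1$ with $S_{n-1}$ an independent copy; multiplying by $\tfrac{1}{n\gamma}\cdot n=\tfrac1\gamma$ gives the first form. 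For the normalized form, apply the same affine reparametrization to the law of $S_{n-1}$: $\Pr[S_{n-1}>-z]=1-F_{n-1}\bigl((-z-(n-1)\mu_\varepsilon)/(\sigma_\varepsilon\sqrt{n-1})\bigr)$, and a direct computation gives $(-z-(n-1)\mu_\varepsilon)/(\sigma_\varepsilon\sqrt{n-1})=t_{n-1}-\eta_{n-1}(z)$; substituting $z=Z_1$ produces $\tfrac1\gamma\,\mathbb E[Z_1\{1-F_{n-1}(t_{n-1}-\eta_{n-1}(Z_1))\}]$. (For $n=1$ the identity reduces directly to $\tfrac1\gamma\mathbb E[(Z_1)_+]$ since $S_0\equiv0$, so the $F_{n-1}$-form is read for $n\ge2$.)

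The main obstacle I anticipate is step one: making the thinning identity $\sum_{i=1}^M l_{\varepsilon}(Y_i)\overset{d}{=}S_n$ rigorous, the crucial observation being that one needs only equality in distribution because the functional is a function of the partial sum. Everything after that is a routine tail-integral change of variables and an exchangeability/conditioning argument; the only remaining care is to keep strict versus weak inequalities consistent when passing between $\{S>u\}$, tails, and CDFs, which is harmless because $\Pr[T>x]=1-F(x)$ holds for any CDF $F$. Notably, no CLT, Edgeworth, or concentration input is needed here; those enter only when these two representations are later combined with the asymptotic expansions used in this section.
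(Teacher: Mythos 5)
Your proposal is correct and follows essentially the same route as the paper's proof: the Bernoulli-selector (thinning) identity reducing the divergence to $\tfrac{1}{n\gamma}\mathbb E[(S_n)_+]$, the symmetry/conditioning argument for the first representation, and the layer-cake identity plus the affine change of variables for the second. The only differences (isolating $Z_n$ instead of $Z_1$, the remarks on atoms, finiteness, and the $n=1$ case) are cosmetic and do not change the argument.
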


We retain both transformations because the conditions required to identify the leading term in Lemma~\ref{lem:blanket-asymptotics-general} differ between the continuous (Cont) and bounded (Bound) assumptions of the local randomizer (see Assumption~\ref{assump:regularity}). 
In the continuous case, we apply Theorem~\ref{thm:edgeworth-triangular} to the representation \eqref{eq:blanket-transform-i}.
In the bounded case (Bound), we instead apply Theorem~\ref{thm:slastnikov-triangular} to the representation \eqref{eq:blanket-transform-ii}. 
The resulting leading terms coincide, and we obtain the result.

\begin{proof}
We start from Balle et al.'s representation (Lemma~\ref{lem:upper-blanket-divergence}), adapted to our notation.
Let $Y_1,\dots,Y_n \stackrel{\mathrm{i.i.d.}}{\sim}\mathcal R_{\mathrm{ref}}$,
let $X_i := l_\varepsilon(Y_i;x_1,x_1',\mathcal R_{\mathrm{ref}})$, and let
$B_1,\dots,B_n$ be i.i.d.\ $\mathrm{Bernoulli}(\gamma)$, independent of
$(Y_i)_{i=1}^n$. Set $Z_i := B_i X_i$ and
\[
  S_n := \sum_{i=1}^n Z_i.
\]
Then,
\begin{equation}
  \label{eq:balle-representation}
  \mathcal D^{\mathrm{blanket}}_{e^{\varepsilon},n,\mathcal R_{\mathrm{ref}},\gamma}(
    \mathcal{R}_{x_1} \| \mathcal{R}_{x_1^\prime})
  = \frac{1}{n\gamma}\,
    \mathbb E\bigl[(S_n)_+\bigr],
  \qquad
  (S_n)_+ := \max\{S_n,0\}.
\end{equation}
Indeed, if $M\sim\mathrm{Bin}(n,\gamma)$ and $(Y_i)_{i\ge1}$ are i.i.d.\
$\mathcal R_{\mathrm{ref}}$ then
$\sum_{i=1}^M l_\varepsilon(Y_i)$ has the same distribution as
$\sum_{i=1}^n Z_i$, so \eqref{eq:balle-representation} is equivalent to
Balle et al.'s original formula.

We now show that \eqref{eq:balle-representation} implies the two
representations in the statement.

\smallskip
\noindent\emph{Proof of \eqref{eq:blanket-transform-i}.}
We have
\[
  (S_n)_+
  = S_n \mathbf 1\{S_n>0\}
  = \Bigl(\sum_{i=1}^n Z_i\Bigr)\mathbf 1\{S_n>0\}
  = \sum_{i=1}^n Z_i\,\mathbf 1\{S_n>0\}.
\]
Taking expectations and using linearity of expectation,
\[
  \mathbb E[(S_n)_+]
  = \sum_{i=1}^n
      \mathbb E\bigl[Z_i\,\mathbf 1\{S_n>0\}\bigr].
\]
Since $(Z_i)_{i=1}^n$ are i.i.d., all summands are equal, hence
\[
  \mathbb E[(S_n)_+]
  = n\,\mathbb E\bigl[Z_1\,\mathbf 1\{S_n>0\}\bigr].
\]
Substituting this into \eqref{eq:balle-representation} yields
\begin{equation}
  \label{eq:D-blanket-Z1}
  \mathcal D^{\mathrm{blanket}}_{e^{\varepsilon},n,\mathcal R_{\mathrm{ref}},\gamma}(
    \mathcal{R}_{x_1} \| \mathcal{R}_{x_1^\prime})
  = \frac{1}{\gamma}\,
    \mathbb E\bigl[Z_1\,\mathbf 1\{S_n>0\}\bigr].
\end{equation}

Next, decompose the sum $S_n$ as
\[
  S_n = Z_1 + \sum_{i=2}^n Z_i.
\]
Define
\[
  S_{2:n} := \sum_{i=2}^n Z_i.
\]
Then
\[
  \{S_n>0\} = \{Z_1 + S_{2:n} > 0\}.
\]
Using the definition of conditional expectation and the fact that
$Z_1$ is independent of $S_{2:n}$,
\[
\begin{aligned}
  \mathbb E\bigl[Z_1\,\mathbf 1\{S_n>0\}\bigr]
  &= \mathbb E\Bigl[
       Z_1\,\mathbf 1\{Z_1 + S_{2:n}>0\}
     \Bigr] \\
  &= \mathbb E\Bigl[
       Z_1\,\Pr \bigl[
         Z_1 + S_{2:n}>0
         \,\big|\,
         Z_1
       \bigr]
     \Bigr] \\
  &= \mathbb E\Bigl[
       Z_1\,\Pr \bigl[
         S_{2:n} > -Z_1
       \bigr]
     \Bigr].
\end{aligned}
\]
Since $(Z_i)_{i=1}^n$ are i.i.d., we have
\[
  S_{2:n} = \sum_{i=2}^n Z_i \;\stackrel{d}{=}\; \sum_{i=1}^{n-1} Z_i =: S_{n-1},
\]
and $S_{2:n}$ is independent of $Z_1$. Hence
\[
  \Pr \bigl[S_{2:n} > -Z_1\bigr]
  = \Pr \bigl[S_{n-1} > -Z_1\bigr],
\]
and therefore
\begin{equation}
  \label{eq:D-blanket-Sn-1}
  \mathcal D^{\mathrm{blanket}}_{e^{\varepsilon},n,\mathcal R_{\mathrm{ref}},\gamma}(
    \mathcal{R}_{x_1} \| \mathcal{R}_{x_1^\prime})
  = \frac{1}{\gamma}\,
    \mathbb E\bigl[
      Z_1\,\Pr \bigl[S_{n-1} > -Z_1\bigr]
    \bigr].
\end{equation}

By definition,
\[
  \mu_\varepsilon := \mathbb E[Z_1],
  \qquad
  \sigma_\varepsilon^2 := \mathrm{Var}(Z_1) \in (0,\infty),
\]
and
\[
  T_{n-1}
  := \frac{S_{n-1} - (n-1)\mu_\varepsilon}{\sigma_\varepsilon\sqrt{n-1}},
  \qquad
  F_{n-1}(x) := \Pr\left[T_{n-1}\le x\right].
\]
We can rewrite $S_{n-1}$ as
\[
  S_{n-1} = (n-1)\mu_\varepsilon + \sigma_\varepsilon\sqrt{n-1}\,T_{n-1}.
\]
For any real $z$,
\[
\begin{aligned}
  \{S_{n-1} > -z\}
  &\iff
  \bigl\{
    (n-1)\mu_\varepsilon + \sigma_\varepsilon\sqrt{n-1}\,T_{n-1}
    > -z
  \bigr\} \\
  &\iff
  \left\{
    T_{n-1}
    >
    \frac{-z - (n-1)\mu_\varepsilon}{\sigma_\varepsilon\sqrt{n-1}}
  \right\}.
\end{aligned}
\]
Define
\[
  t_{n-1}
  := -\frac{\mu_\varepsilon\sqrt{n-1}}{\sigma_\varepsilon},
  \qquad
  \eta_{n-1}(z)
  := \frac{z}{\sigma_\varepsilon\sqrt{n-1}},
\]
so that
\[
  \frac{-z - (n-1)\mu_\varepsilon}{\sigma_\varepsilon\sqrt{n-1}}
  = t_{n-1} - \eta_{n-1}(z).
\]
Therefore,
\[
  \Pr \left[S_{n-1}>-z\right]
  = \Pr \left(
      T_{n-1} > t_{n-1} - \eta_{n-1}(z)
    \right)
  = 1 - F_{n-1}\left(t_{n-1} - \eta_{n-1}(z)\right).
\]
Applying this with $z=Z_1$ in \eqref{eq:D-blanket-Sn-1}, we obtain
\[
\begin{aligned}
  \mathcal D^{\mathrm{blanket}}_{e^{\varepsilon},n,\mathcal R_{\mathrm{ref}},\gamma}(
    \mathcal{R}_{x_1} \| \mathcal{R}_{x_1^\prime})
  &= \frac{1}{\gamma}\,
     \mathbb E\Bigl[
       Z_1\,
       \bigl\{
         1 - F_{n-1}\bigl(t_{n-1} - \eta_{n-1}(Z_1)\bigr)
       \bigr\}
     \Bigr],
\end{aligned}
\]
which is exactly \eqref{eq:blanket-transform-i}.

\smallskip
\noindent\emph{Proof of \eqref{eq:blanket-transform-ii}.}
For any real number $x$ we have the identity
\[
  x_+ = \max\{x,0\}
      = \int_0^\infty \mathbf 1\{x>u\}\,du.
\]
Applying this with $x = S_n$ and using Tonelli's theorem (Fubini's theorem
for nonnegative functions) gives
\[
\begin{aligned}
  \mathbb E[(S_n)_+]
  &= \mathbb E\Bigl[\int_0^\infty \mathbf 1\{S_n>u\}\,du\Bigr] \\
  &= \int_0^\infty \Pr\left[S_n>u\right]\,du.
\end{aligned}
\]
Substituting into \eqref{eq:balle-representation},
\begin{equation}
  \label{eq:D-blanket-tail-Sn}
  \mathcal D^{\mathrm{blanket}}_{e^{\varepsilon},n,\mathcal R_{\mathrm{ref}},\gamma}(
    \mathcal{R}_{x_1} \| \mathcal{R}_{x_1^\prime})
  = \frac{1}{n\gamma}
    \int_0^\infty
      \Pr\left[S_n>u\right]\,du.
\end{equation}

We now express the tail probability in terms of the normalized sum $T_n$.
By definition,
\[
  T_n
  := \frac{S_n - n\mu_\varepsilon}{\sigma_\varepsilon\sqrt{n}},
  \qquad
  F_n(x) := \Pr\left[T_n\le x\right],
\]
so that
\[
  S_n = n\mu_\varepsilon + \sigma_\varepsilon\sqrt n\,T_n.
\]
For any $u\in\mathbb R$,
\[
\begin{aligned}
  \Pr\left[S_n>u\right]
  &= \Pr\Bigl[
       n\mu_\varepsilon + \sigma_\varepsilon\sqrt n\,T_n
       > u
     \Bigr] \\
  &= \Pr\Bigl[
       T_n
       >
       \frac{u - n\mu_\varepsilon}{\sigma_\varepsilon\sqrt n}
     \Bigr] \\
  &= 1 - F_n\Bigl(
       \frac{u - n\mu_\varepsilon}{\sigma_\varepsilon\sqrt n}
     \Bigr).
\end{aligned}
\]
Perform the change of variables
\[
  x = \frac{u - n\mu_\varepsilon}{\sigma_\varepsilon\sqrt n},
  \qquad
  u = n\mu_\varepsilon + \sigma_\varepsilon\sqrt n\,x,
  \qquad
  du = \sigma_\varepsilon\sqrt n\,dx.
\]
When $u=0$, the corresponding lower limit is
\[
  x = \frac{0 - n\mu_\varepsilon}{\sigma_\varepsilon\sqrt n}
    = -\frac{\mu_\varepsilon\sqrt n}{\sigma_\varepsilon}
    =: t_n.
\]
Thus
\[
\begin{aligned}
  \int_0^\infty \Pr[S_n>u]\,du
  &= \int_0^\infty
       \Bigl\{
         1 - F_n\Bigl(
               \frac{u - n\mu_\varepsilon}{\sigma_\varepsilon\sqrt n}
             \Bigr)
       \Bigr\}\,du \\
  &= \int_{t_n}^\infty
       \bigl(1 - F_n(x)\bigr)\,
       \sigma_\varepsilon\sqrt n\,dx \\
  &= \sigma_\varepsilon\sqrt n
     \int_{t_n}^\infty
       \bigl(1 - F_n(x)\bigr)\,dx.
\end{aligned}
\]
Substituting this into \eqref{eq:D-blanket-tail-Sn} gives
\[
\begin{aligned}
  \mathcal D^{\mathrm{blanket}}_{e^{\varepsilon},n,\mathcal R_{\mathrm{ref}},\gamma}(
    \mathcal{R}_{x_1} \| \mathcal{R}_{x_1^\prime})
  &= \frac{1}{n\gamma}\,
     \sigma_\varepsilon\sqrt n
     \int_{t_n}^\infty
       \bigl(1 - F_n(x)\bigr)\,dx \\
  &= \frac{\sigma_\varepsilon}{\sqrt n\,\gamma}\,
     \int_{t_n}^\infty
       \bigl(1 - F_n(x)\bigr)\,dx,
\end{aligned}
\]
which is \eqref{eq:blanket-transform-ii}. This completes the proof.
\end{proof}

We now begin the proof of Lemma~\ref{lem:blanket-asymptotics-general}.
We analyze the blanket divergence by deriving an asymptotic expansion based on central limit theorems (CLT). The general CLT-based results used in our analysis are stated below in a form tailored to our setting, with the level of abstraction
kept to the minimum necessary. In particular, we treat the continuous case
\textup{(Cont)} and the bounded case \textup{(Bound)} separately, using
a triangular-array Edgeworth expansion in the former and a triangular-array
moderate deviation theorem in the latter.

\begin{theorem}[Triangular-array Edgeworth expansion after Angst--Poly~\cite{angstWeakCramerCondition2017}]
\label{thm:edgeworth-triangular}
Let $(X_{i,n})_{1\le i\le n,\,n\ge1}$ be a triangular array of
real-valued random variables such that, for each fixed $n\ge1$,
the variables $X_{1,n},\dots,X_{n,n}$ are independent and identically
distributed (row-wise i.i.d.), while the common distribution may depend
on $n$. For each $n$ set
$
  S_n := \sum_{i=1}^n X_{i,n},
$
$
  B_n^2 := \mathrm{Var}(S_n),
$ and 
$
  F_n(x) := \Pr\left[\frac{S_n}{B_n}\le x\right].
$

Fix an integer $s\ge 3$. Assume:
  (i)  $\forall r\in \mathbb{N}, \sup_n \mathbb{E}[|X_{1,n}|^r]<\infty$.
  (ii) For all $n$, $\mathbb{E}[X_{1,n}]=0$ and $\inf_n \mathrm{Var}(X_{1,n})\geq C>0$.
  (iii) $X_{1,n}$ satisfies Cram\'er's condition. That is,
$
  \sup_n \limsup_{|t|\to\infty} \bigl|\mathbb{E}\big[e^{i t \cdot X_{1,n}}\big]\bigr| < 1.
$
For $r\ge3$ denote by
$
  \kappa_{r,n} := \frac{1}{B_n^r}\sum_{i=1}^n \kappa_r(X_{i,n})
$
the normalized cumulants of $S_n/B_n$, and let $P_{1,n},\dots,P_{s-2,n}$
be polynomials in $x$, whose coefficients are universal polynomials in
$\kappa_{3,n},\dots,\kappa_{s,n}$.

Then we have the Edgeworth expansion
\[
  \sup_{x\in\mathbb{R}}
  \biggl|
    F_n(x) - \Phi(x)
    - \sum_{j=1}^{s-2} P_{j,n}(x)\,\varphi(x)
  \biggr|
  = o\bigl(n^{-(s-2)/2}\bigr),
  \qquad n\to\infty.
\]
\end{theorem}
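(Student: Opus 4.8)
The plan is to deduce the statement from a triangular-array Edgeworth expansion of Bhattacharya--Rao / Angst--Poly type, by checking that the three uniform hypotheses (i)--(iii) supply exactly the ingredients that proof requires, and then recalling where each hypothesis enters. First I would normalize: writing $\sigma_n^2:=\mathrm{Var}(X_{1,n})$ we have $B_n^2=n\sigma_n^2$, and by (i) the family $\{X_{1,n}\}_n$ has uniformly bounded absolute moments of every order, so $\sigma_n^2\le C'$; combined with (ii) this gives $Cn\le B_n^2\le C'n$. Setting $Y_{i,n}:=X_{i,n}/B_n$, for each fixed $n$ the row $Y_{1,n},\dots,Y_{n,n}$ is i.i.d.\ with mean $0$ and $\sum_i\mathrm{Var}(Y_{i,n})=1$, and $S_n/B_n=\sum_{i=1}^n Y_{i,n}$. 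The normalized cumulants are $\kappa_{r,n}=n\,\kappa_r(X_{1,n})/B_n^r$, and since $|\kappa_r(X_{1,n})|$ is bounded uniformly in $n$ by (i) and $B_n^2\ge Cn$ by (ii), we get $|\kappa_{r,n}|=O(n^{-(r-2)/2})$ uniformly; this is the only role of (i)--(ii) in the algebraic part.

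Next I would set up the formal expansion via characteristic functions. Let $\psi_n(t):=\mathbb E[e^{itS_n/B_n}]=\phi_n(t/B_n)^n$, where $\phi_n$ is the characteristic function of $X_{1,n}$. On a central range $|t|\le\delta\sqrt n$ a Taylor expansion of $\log\phi_n$, valid thanks to the uniform moment bounds, gives $\log\psi_n(t)=-\tfrac{t^2}{2}+\sum_{r=3}^{s}\tfrac{(it)^r}{r!}\kappa_{r,n}+o(1)$; exponentiating and collecting terms by powers of $n^{-1/2}$ yields, in the usual way, $\psi_n(t)=e^{-t^2/2}\bigl(1+\sum_{j=1}^{s-2}\widetilde P_{j,n}(it)\bigr)+R_n(t)$ with $\widetilde P_{j,n}$ universal polynomials in $\kappa_{3,n},\dots,\kappa_{s,n}$ and $R_n$ a controlled remainder. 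Fourier inversion using $e^{-t^2/2}(it)^k\leftrightarrow(-1)^k\varphi^{(k)}$ defines the Edgeworth distribution function $G_n(x):=\Phi(x)+\sum_{j=1}^{s-2}P_{j,n}(x)\varphi(x)$ with density $g_n$, and the uniform moment bounds give $\sup_n\sup_x|g_n'(x)|<\infty$.

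It then remains to bound $\sup_x|F_n(x)-G_n(x)|$ by $o(n^{-(s-2)/2})$. I would invoke Esseen's smoothing inequality, $\sup_x|F_n(x)-G_n(x)|\le\frac{1}{\pi}\int_{-T}^{T}|\psi_n(t)-\widehat g_n(t)|/|t|\,dt+c\,T^{-1}\sup_x|g_n'(x)|$, and take $T=T_n$ growing polynomially (e.g.\ $T_n=n^{(s-1)/2}$) so the boundary term is $o(n^{-(s-2)/2})$. Splitting $[-T_n,T_n]$ at $\delta\sqrt n$: on $|t|\le\delta\sqrt n$ the remainder estimate of the previous step dominates the integrand, and integration gives $o(n^{-(s-2)/2})$; on $\delta\sqrt n\le|t|\le T_n$, $|\widehat g_n(t)|$ is Gaussian times a polynomial hence negligible, while $|\psi_n(t)|=|\phi_n(t/B_n)|^n$ with $|t/B_n|$ ranging over $[\delta'',T_n/B_n]$.

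The main obstacle, and the one genuinely new point relative to the classical i.i.d.\ Edgeworth theorem, is this intermediate/tail frequency range: I need a constant $\eta>0$ with $\sup_n\sup_{|u|\ge\delta''}|\phi_n(u)|\le 1-\eta<1$, since then $|\psi_n(t)|\le(1-\eta)^n$ and its contribution is at most $(1-\eta)^n\log T_n=o(n^{-M})$ for every $M$. For a single law this follows from non-latticeness (forced by Cramér) on compact $u$-ranges together with the $\limsup$ bound at infinity; the difficulty here is uniformity in $n$, because the distribution varies with the row. The plan is to obtain it by compactness: tightness of $\{\mathcal L(X_{1,n})\}_n$ from (i) and uniform integrability of squares force any subsequential weak limit $\mu$ to have variance in $[C,C']$, and if the claimed spectral gap failed one would extract such a $\mu$ with $|\widehat\mu(u_\infty)|=1$ for some $u_\infty\ne 0$; ruling this out on compact ranges, and invoking the uniform Cramér hypothesis (iii) directly in the regime $|u|\to\infty$ where no weak-limit argument is available, yields the uniform $\eta$. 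Making this last step fully rigorous in the triangular-array setting is precisely the content of the Angst--Poly framework, which I would cite for it; everything else is the standard smoothing-plus-cumulant machinery, carried out with constants that are uniform in $n$ thanks to (i)--(ii).
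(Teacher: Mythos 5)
The paper does not actually prove this statement: it is imported wholesale from Angst--Poly (their Theorem~4.3), and the accompanying remark only explains that the row-wise i.i.d., one-dimensional case with Petrov-type strengthened hypotheses (all moments finite, classical Cram\'er condition) is a specialization of their result. Your proposal instead tries to reprove it by the classical Esseen-smoothing route, which is fine as an outline for the central frequency range $|t|\le\delta\sqrt n$: the normalization, the cumulant bounds $|\kappa_{r,n}|=O(n^{-(r-2)/2})$, and the smoothing inequality with $T_n$ polynomial are all standard and correctly deployed. The only genuinely nontrivial ingredient is the uniform spectral gap $\sup_n\sup_{|u|\ge\delta''}|\phi_n(u)|\le 1-\eta$, and this is exactly the point you end up delegating to Angst--Poly anyway, so in substance your proof, like the paper's, rests on that citation.

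The concrete gap is in your claim that the uniform gap on compact frequency ranges can be obtained from (i)--(iii) by a tightness/weak-limit argument. If the gap fails along a subsequence, the subsequential weak limit $\mu$ satisfies $|\widehat\mu(u_\infty)|=1$ for some $u_\infty\ne0$, i.e.\ $\mu$ is a (shifted) lattice law; but a non-degenerate lattice law is perfectly compatible with variance in $[C,C']$, so there is nothing to rule out. Indeed hypotheses (i)--(iii) as stated do not exclude this scenario: take $X_{1,n}$ to be a fixed centered bounded lattice variable convolved with $N(0,n^{-2})$. Then all moments are uniformly bounded, the variance is bounded below, and $\limsup_{|t|\to\infty}|\phi_n(t)|=0$ for every $n$, so (iii) holds with supremum $0$; yet $|\phi_n(u_0)|=e^{-u_0^2/(2n^2)}\to1$ at the reciprocal lattice frequency, and the uniform gap fails (and with it the $o(n^{-(s-2)/2})$ expansion for $s\ge4$). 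The trouble is that condition (iii) controls each row only beyond an $n$-dependent threshold, so it neither passes to weak limits nor yields uniformity over $n$ on fixed annuli. Closing this requires a genuinely uniform Cram\'er-type input --- in the paper's actual application it comes from the structural assumption that all rows share a fixed absolutely continuous component (Assumption~\ref{assump:regularity}(Cont), preserved under truncation), which gives $\sup_n\sup_{|t|\ge\delta}|\phi_n(t)|\le 1-m\epsilon_\delta$ directly --- or it must be taken from Angst--Poly's framework as a black box, which is what the paper does; your compactness argument by itself cannot supply it.
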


\begin{remark}
Theorem~\ref{thm:edgeworth-triangular} is obtained by specializing the
general triangular-array Edgeworth expansion of Angst and
Poly~\cite[Thm.~4.3]{angstWeakCramerCondition2017} to the
one-dimensional row-wise i.i.d.\ case. Their theorem covers much more
general triangular arrays (including non-identically distributed and
vector-valued arrays); in this paper we only need the resulting
triangular-array extension of the classical Cramér-type Edgeworth
expansion of Petrov~\cite[Ch.~VI, Thms.~3–4]{petrovSumsIndependentRandom1975}.
Compared with the hypotheses of~\cite[Thm.~4.3]{angstWeakCramerCondition2017},
we adopt stronger assumptions in the spirit of Petrov: we require the
existence of moments of all integer orders and replace the weak mean
Cramér condition by the classical Cramér condition for the
common distribution of $X_{1,n}$. When the distribution of $X_{1,n}$ does not depend on
$n$, Theorem~\ref{thm:edgeworth-triangular} reduces to Petrov's
one-dimensional Edgeworth expansion for sums of i.i.d.\ random
variables.
\end{remark}

\begin{theorem}[Triangular-array moderate deviations after Slastnikov~\cite{slastnikovLimitTheoremsModerate1979}]
\label{thm:slastnikov-triangular}
Let $(X_{i,n})_{1\le i\le n,\,n\ge1}$ be a triangular array of
real-valued random variables such that, for each fixed $n\ge1$,
the variables $X_{1,n},\dots,X_{n,n}$ are independent and identically
distributed (row-wise i.i.d.), while the common distribution may depend
on $n$. For each $n$ set
$
  S_n := \sum_{i=1}^n X_{i,n},
$
$
  B_n^2 := \mathrm{Var}(S_n),
$ and
$
  F_n(x) := \Pr\!\left[\frac{S_n}{B_n}\le x\right].
$
Assume:
  (i) $\mathbb{E}[X_{1,n}] = 0$ for all $n\ge1$;
  (ii) there exists a constant $\sigma_0>0$ such that
  $\mathrm{Var}(X_{1,n}) \ge \sigma_0^2$ for all $n\ge1$ (equivalently, $B_n^2= \Theta(n)$);
  (iii) the array is uniformly bounded: there exists $K<\infty$ such that
  $|X_{1,n}|\le K$ almost surely for all $n\ge1$.
Then for every fixed $c>0$ we have, as $n\to\infty$,
\[
  \sup_{0\le x\le c\sqrt{\log n}}
  \left|
    \frac{1-F_n(x)}{1-\Phi(x)} - 1
  \right|
  \;\longrightarrow\; 0.
\]
\end{theorem}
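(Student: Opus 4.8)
The plan is to prove the statement by a Cram\'er-type exponential-tilting argument combined with a triangular-array Berry--Esseen bound, exploiting that the summands are uniformly bounded and hence possess exponential moments of every order, with cumulant generating functions whose derivatives are controlled uniformly over the array. Write $W_n := S_n/B_n$, set $\sigma_n^2 := \mathrm{Var}(X_{1,n})\in[\sigma_0^2,K^2]$ (so $B_n^2 = n\sigma_n^2$), and $\psi_n(t):=\log\mathbb{E}[\mathrm{e}^{tX_{1,n}}]$; then $\psi_n(0)=\psi_n'(0)=0$, $\psi_n''(0)=\sigma_n^2$, and all $\psi_n^{(k)}$ are bounded on $|t|\le 1$ by a constant depending only on $K$. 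Since $1-\Phi(x)$ is bounded below on any bounded range of $x$, the claim for $0\le x\le 1$ follows immediately from the ordinary triangular-array Berry--Esseen estimate $\sup_x|F_n(x)-\Phi(x)| = O(n^{-1/2})$, so it suffices to show $\Pr[W_n>x] = (1-\Phi(x))(1+o(1))$ with the $o(1)$ uniform over $1\le x\le c\sqrt{\log n}$.

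\textbf{Tilting and the exact identity.} For each such $x$ choose the tilt $h=h_n(x)$ solving $n\psi_n'(h) = xB_n$; from $\psi_n'(h)=\sigma_n^2 h + O(Kh^2)$ this gives $h = \tfrac{x}{\sqrt{n}\,\sigma_n}(1+O(\sqrt{\log n/n}))\to 0$ uniformly. Let $\widetilde{\mathbb{P}}_n$ be the tilted (Esscher) law with $\mathrm{d}\widetilde{\mathbb{P}}_n = \mathrm{e}^{hS_n - n\psi_n(h)}\,\mathrm{d}\mathbb{P}$, under which $S_n$ is a sum of i.i.d.\ variables supported in $[-K,K]$ with mean $n\psi_n'(h) = xB_n$ and variance $n\widetilde\sigma_n^2$, where $\widetilde\sigma_n^2 := \psi_n''(h) = \sigma_n^2(1+o(1))$. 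Putting $V_n := (S_n - xB_n)/(\widetilde\sigma_n\sqrt{n})$ and $b := h\widetilde\sigma_n\sqrt{n}$, the standard change-of-measure identity gives
\[
  \Pr[W_n>x]
  = \mathrm{e}^{\,n(\psi_n(h)-h\psi_n'(h))}\;
    \widetilde{\mathbb{E}}\bigl[\mathrm{e}^{-bV_n}\mathbf{1}\{V_n>0\}\bigr].
\]

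\textbf{Estimating the two factors.} For the exponent, Taylor expansion of $\psi_n$ with the uniform bounds on $\psi_n^{(k)}$ yields $n(\psi_n(h)-h\psi_n'(h)) = -\tfrac12 B_n^2 h^2 + O(nh^3) = -\tfrac12 x^2 + O((\log n)^{3/2}n^{-1/2})$ uniformly, and likewise $b = x(1+O(\sqrt{\log n/n}))$, so $b^2-x^2\to 0$ uniformly. For the expectation I would write $\widetilde{\mathbb{E}}[\mathrm{e}^{-bV_n}\mathbf{1}\{V_n>0\}] = \widetilde{G}_n(0) - b\int_0^\infty \mathrm{e}^{-bv}\widetilde{G}_n(v)\,dv$ with $\widetilde{G}_n(v):=\widetilde{\mathbb{P}}_n[V_n>v]$, apply the triangular-array Berry--Esseen bound $\sup_v|\widetilde{G}_n(v)-(1-\Phi(v))| = O(n^{-1/2})$ (valid because the normalized tilted summands have unit variance and third moment at most $K^3/\widetilde\sigma_n^3$), and use the elementary identity $\tfrac12 - b\int_0^\infty \mathrm{e}^{-bv}(1-\Phi(v))\,dv = \mathrm{e}^{b^2/2}(1-\Phi(b))$. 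This gives $\widetilde{\mathbb{E}}[\mathrm{e}^{-bV_n}\mathbf{1}\{V_n>0\}] = \mathrm{e}^{b^2/2}(1-\Phi(b)) + O(n^{-1/2})$, and since Mills' ratio gives $\mathrm{e}^{b^2/2}(1-\Phi(b)) = \Theta(b^{-1})$, which on the zone is $\Omega((\log n)^{-1/2})$, the $O(n^{-1/2})$ term is genuinely lower order. Assembling the pieces, $\Pr[W_n>x] = \mathrm{e}^{-x^2/2}\,\mathrm{e}^{b^2/2}(1-\Phi(b))(1+o(1)) = \mathrm{e}^{(b^2-x^2)/2}(1-\Phi(b))(1+o(1))$, and a two-sided Mills-ratio comparison (using $b^2-x^2\to 0$ and $b/x\to 1$) converts $\mathrm{e}^{(b^2-x^2)/2}(1-\Phi(b))$ into $(1-\Phi(x))(1+o(1))$, which is the assertion.

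\textbf{Main obstacle and remarks.} The principal difficulty is uniformity over the whole moderate-deviation zone: the tilt $h_n(x)$ depends jointly on $x$ and $n$, so one is really controlling a two-parameter family of triangular arrays, and one must verify that the Berry--Esseen constant, the Taylor remainders for $\psi_n$, and the Mills-ratio estimates are all dominated by constants depending only on $K$, $\sigma_0$, and $c$; the uniform boundedness of the tilted summands together with $\widetilde\sigma_n^2\in[\tfrac12\sigma_0^2, 2K^2]$ for large $n$ is precisely what makes this work, and the split at $x=1$ keeps $b$ bounded away from $0$ in the tilting regime, making the lower bound $\mathrm{e}^{b^2/2}(1-\Phi(b)) = \Omega((\log n)^{-1/2})$ transparent. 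Two remarks. First, no non-lattice hypothesis is required, because the argument never invokes a local limit theorem and absorbs all lattice effects into the $O(n^{-1/2})$ Berry--Esseen error, consistent with the absence of such a hypothesis in the statement. Second, one may instead shortcut the entire argument by checking that uniformly bounded row-wise i.i.d.\ arrays satisfy the hypotheses of Slastnikov's general moderate-deviation theorem~\cite{slastnikovLimitTheoremsModerate1979}, whose validity zone $x = o(n^{1/6})$ comfortably contains $x = O(\sqrt{\log n})$, and quoting it directly.
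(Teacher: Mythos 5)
Your proof is correct, but it takes a genuinely different route from the paper. The paper does not reprove the result at all: it obtains Theorem~\ref{thm:slastnikov-triangular} by direct specialization of Slastnikov's triangular-array moderate deviation theorem, merely checking that uniform boundedness implies his moment condition, that (i)--(iii) give $B_n^2=\Theta(n)$, and that the zone $0\le x\le c\sqrt{\log B_n}$ can be rewritten as $0\le x\le c\sqrt{\log n}$ since $\log B_n=\tfrac12\log n+O(1)$ --- which is exactly the ``shortcut'' you mention in your final remark. Your main argument instead gives a self-contained Cram\'er--Petrov style proof: split off the compact range $x\le 1$ via Berry--Esseen, then tilt exponentially, use the change-of-measure identity, a triangular-array Berry--Esseen bound for the tilted sums, the exact identity $\tfrac12-b\int_0^\infty e^{-bv}(1-\Phi(v))\,dv=e^{b^2/2}(1-\Phi(b))$, and Mills-ratio comparisons to return from $b$ to $x$; all the uniformity issues you flag (existence and smallness of the tilt $h_n(x)$, $\widetilde\sigma_n^2$ bounded below for large $n$, $b^2-x^2\to0$, and the $O(n^{-1/2})$ error being dominated by the $\Omega((\log n)^{-1/2})$ main term) are handled by the uniform bound $|X_{1,n}|\le K$ and $\sigma_n^2\ge\sigma_0^2$, so the argument goes through. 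What each approach buys: the paper's citation-based derivation is short and keeps the appendix lean, while your tilting proof is self-contained, yields an explicit rate for the $o(1)$ (of order $(\log n)^{3/2}n^{-1/2}+\sqrt{\log n}\,n^{-1/2}$), extends immediately to wider zones $x=o(n^{1/6})$, and makes transparent the point --- only implicit in the paper --- that no non-lattice condition is needed because the CDF-level statement absorbs lattice effects into the Berry--Esseen error.
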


\begin{remark}
Theorem~\ref{thm:slastnikov-triangular} is a specialization of
Slastnikov's moderate deviation theorem for triangular arrays
\cite[Thm.~1.1]{slastnikovLimitTheoremsModerate1979}
to the row-wise i.i.d.\ case, under the uniform boundedness
assumption~(iii).
In Slastnikov's notation, assumption~(iii) implies his moment condition for any fixed $c>0$, and (i)–(iii) together yield
$B_n^2 = \mathrm{Var}(S_n) = \Theta(n)$.
Hence his conclusion
$
  1-F_n(x) \sim 1-\Phi(x)
$
holds uniformly in the moderate deviation region
$0 \le x \le c\sqrt{\log B_n}$ for any fixed $c>0$.
Since $\log B_n = \tfrac12 \log n + O(1)$, this is equivalent to
uniformity on $0 \le x \le c\sqrt{\log n}$ as stated in
Theorem~\ref{thm:slastnikov-triangular}.
\end{remark}

\begin{proof}


We first explain the structure of the proof.
The argument proceeds in three steps.

First, in order to work with bounded summands that simplify the analysis, we introduce a
mean-preserving truncation of the privacy amplification variable in the (Cont) case and define the corresponding truncated blanket divergence. 
We show that the truncation error is negligible compared to the main term; see Lemma~\ref{lem:truncation-blanket}.

Second, for the truncated variables we analyze the expectation appearing in
the transformation formula of Lemma~\ref{lem:blanket-transform}.  In the
continuous case (Cont) we apply a triangular-array Edgeworth expansion
(Theorem~\ref{thm:edgeworth-triangular}) to the normalized sums and derive
a precise asymptotic expression for the expectation; this is
summarized in Lemma~\ref{lem:inner-expectation}.  In the discrete bounded
case (Bound) we instead rely on Slastnikov's triangular-array moderate
deviation theorem (Theorem~\ref{thm:slastnikov-triangular}); see
Corollary~\ref{cor:blanket-asymptotics-discrete-Disc}.

Finally, we express the resulting asymptotic formula in terms of the
variance
\(
  \sigma^2 = \mathrm{Var}(l_0(Y))
\)
of the underlying privacy amplification variable and the parameter
\(
  \chi = \sqrt{\gamma}/\sigma.
\)
This is achieved by showing that
\(
  \sigma_{\varepsilon_n} \to \sqrt{\gamma}\,\sigma
\)
and that the Gaussian tail factor
\(
  \varphi(t_n)
\)
can be reparametrized as
\(
  \varphi(\chi(\mathrm e^{\varepsilon_n}-1)\sqrt{n})
\)
up to a relative $o(1)$ error.  Putting the three steps together yields
the claimed asymptotic expansion in Lemma~\ref{lem:blanket-asymptotics-general}.

\begin{definition}[Truncated blanket divergence]
  \footnote{%
  The truncation step is not essential for the validity of the asymptotic expansion.
  One could work directly with $l_{\varepsilon_n}(Y)$ by splitting expectations over
  the events $\{|l_{\varepsilon_n}(Y)|\le \tau_n\}$ and its complement, and controlling
  the tail contribution via higher moments and Markov's inequality.
  We introduce the truncated variables $X_{i,n}^{\tau_n}$ only for technical convenience:
  the uniform bound $|X_{i,n}^{\tau_n}|\le C \tau_n$ allows us to treat the error terms
  in the Edgeworth and Taylor expansions in a simpler and more uniform way.%
}
\label{def:truncated-blanket}
Let $(\varepsilon_n)_{n\ge 1}$ be a sequence of positive numbers and
$(\tau_n)_{n\ge 1}$ a sequence of truncation levels with $\tau_n>0$.
For each $n\ge1$, let $Y_1,Y_2,\dots$ be i.i.d.\ with distribution
$\mathcal R_{\mathrm{ref}}$ and define
\[
  X_{i,n} := l_{\varepsilon_n}(Y_i;x_1,x_1^\prime,\mathcal{R}_{\mathrm{ref}}), \qquad
  \mu_n := \mathbb E[X_{1,n}] = 1 - \mathrm e^{\varepsilon_n}.
\]
Let $B_1,B_2,\dots$ be i.i.d.\ $\mathrm{Bernoulli}(\gamma)$,
independent of $(Y_i)_{i\ge1}$, and set
\[
  Z_{i,n} := B_i X_{i,n}, \qquad i\ge1.
\]

We introduce the truncated random variable at level $\tau_n$ for $l_{\varepsilon_n}$ where
$Y\sim \mathcal R_{\mathrm{ref}}$ by mean preserving truncation:
\[
  \tilde l_{\varepsilon_n,\tau_n}(Y)
  := l_{\varepsilon_n}(Y)\,
     \mathbf 1\bigl\{|l_{\varepsilon_n}(Y)| \le \tau_n\bigr\},
\]
\[
  l_{\varepsilon_n,\tau_n}(Y)
  := \tilde l_{\varepsilon_n,\tau_n}(Y)
     - \mathbb E\bigl[\tilde l_{\varepsilon_n,\tau_n}(Y_1)\bigr]
     + \mu_n,
\]
and write
\[
  X_{i,n}^{\tau_n} := l_{\varepsilon_n,\tau_n}(Y_i), \qquad
  Z_{i,n}^{\tau_n} := B_i X_{i,n}^{\tau_n}, \qquad i\ge 1.
\]

For each $n$, define the partial sums
\[
  S_n := \sum_{i=1}^n Z_{i,n}, \qquad
  S_n^{\tau_n} := \sum_{i=1}^n Z_{i,n}^{\tau_n}.
\]
Using the Bernoulli–selector representation of the blanket divergence,
we write
\[
  \mathcal D^{\mathrm{blanket}}
  := \frac{1}{n\gamma}\,\mathbb E\bigl[(S_n)_+\bigr],
  \qquad
  \mathcal D^{\mathrm{tblanket}}
  := \frac{1}{n\gamma}\,\mathbb E\bigl[(S_n^{\tau_n})_+\bigr].
\]
We call $\mathcal D^{\mathrm{tblanket}}$ the \emph{truncated blanket divergence}
at level $\tau_n$.
\end{definition}

\begin{lemma}[Truncation error for the blanket divergence]
\label{lem:truncation-blanket}
Assume that the local randomizer $\mathcal R$ satisfies
Assumption~\ref{assump:regularity} with (Cont).
Then, for every integer $p\ge 2$ there exists a constant
$C<\infty$ such that
\begin{equation}
  \label{eq:truncation-error-basic}
  \bigl|\mathcal D^{\mathrm{blanket}}
        -\mathcal D^{\mathrm{tblanket}}\bigr|
  \;\le\;
  \frac{C}{\tau_n^{p-1}},
  \qquad n\ge 1.
\end{equation}
In particular, if $\tau_n = n^\alpha$ for some $0<\alpha<1/2$, then
\[
  \bigl|\mathcal D^{\mathrm{blanket}}
        -\mathcal D^{\mathrm{tblanket}}\bigr|
  = O\bigl(n^{-\alpha(p-1)}\bigr), n\to\infty.
\]
so by choosing $p$ sufficiently large the truncation error can be made
smaller than any prescribed polynomial order in $n$.

Moreover, the truncated random variables
$Z_{i,n}^{\tau_n} := B_i l_{\varepsilon_n,\tau_n}(Y_i)$ also satisfy
the conditions of Assumption~\ref{assump:regularity} (possibly with different constants
in that assumption) uniformly in $n$.
In addition, there exist constants $C<\infty$ and $n_0\in\mathbb N$ such that
\[
  |Z_{i,n}^{\tau_n}|
  \le C \tau_n \qquad \text{almost surely for all }n\ge n_0,\; i\ge 1.
\]
\end{lemma}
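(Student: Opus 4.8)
The plan is to bound $|\mathcal D^{\mathrm{blanket}}-\mathcal D^{\mathrm{tblanket}}|$ using the $1$-Lipschitz property of $x\mapsto(x)_+$ and then reduce to a single truncated-moment estimate. Since $|(a)_+-(b)_+|\le|a-b|$, I would first write
\[
  \bigl|\mathcal D^{\mathrm{blanket}}-\mathcal D^{\mathrm{tblanket}}\bigr|
  \le\frac{1}{n\gamma}\,\mathbb E\bigl|S_n-S_n^{\tau_n}\bigr|
  \le\frac{1}{n\gamma}\sum_{i=1}^n\mathbb E\bigl|Z_{i,n}-Z_{i,n}^{\tau_n}\bigr|
  =\frac{1}{\gamma}\,\mathbb E\bigl|Z_{1,n}-Z_{1,n}^{\tau_n}\bigr|,
\]
using that the summand differences are row-wise i.i.d. Setting $R_1:=l_{\varepsilon_n}(Y_1)\mathbf 1\{|l_{\varepsilon_n}(Y_1)|>\tau_n\}$, the definition of the mean-preserving truncation gives $X_{1,n}-X_{1,n}^{\tau_n}=R_1-\mathbb E[R_1]$, hence $Z_{1,n}-Z_{1,n}^{\tau_n}=B_1(R_1-\mathbb E[R_1])$ with $B_1\perp R_1$, so $\mathbb E|Z_{1,n}-Z_{1,n}^{\tau_n}|\le 2\gamma\,\mathbb E|R_1|$. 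For any integer $p\ge2$ and all $n$ with $|\varepsilon_n|\le\rho_0$, Assumption~\ref{assump:regularity}(1) provides $M_p:=\sup_{|\varepsilon|\le\rho_0}\mathbb E[|l_\varepsilon(Y)|^p]<\infty$, and on $\{|l_{\varepsilon_n}(Y_1)|>\tau_n\}$ one has $|l_{\varepsilon_n}(Y_1)|\le|l_{\varepsilon_n}(Y_1)|^p/\tau_n^{p-1}$, so $\mathbb E|R_1|\le M_p/\tau_n^{p-1}$. This yields \eqref{eq:truncation-error-basic} with $C=2M_p$ (the finitely many $n$ with $|\varepsilon_n|>\rho_0$ being absorbed into $C$, since those $p$-th moments are still finite by Minkowski), and specializing $\tau_n=n^\alpha$ and letting $p\to\infty$ gives the stated faster-than-polynomial rate.

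Next I would verify that the truncated array inherits the hypotheses needed downstream, with constants uniform in $n$. The moment bounds are immediate once one notes $|\tilde l_{\varepsilon_n,\tau_n}(Y)|\le|l_{\varepsilon_n}(Y)|$, so truncation does not inflate moments, while the correction $|\mathbb E[\tilde l_{\varepsilon_n,\tau_n}(Y_1)]-\mu_n|\le M_1+|\mu_n|$ stays bounded; hence $\sup_n\mathbb E[|Z_{1,n}^{\tau_n}|^k]<\infty$ for each $k$. For non-degeneracy of the variance, I would use $l_\varepsilon(Y)-l_0(Y)=-(\mathrm e^{\varepsilon}-1)\,\mathcal R_{x_1'}(Y)/\mathcal R_{\mathrm{ref}}(Y)$ together with $\int\mathcal R_{x_1'}(y)^2/\mathcal R_{\mathrm{ref}}(y)\,dy<\infty$ from Assumption~\ref{assump:regularity}(2), the truncation-tail bound $\mathbb E[l_{\varepsilon_n}(Y)^2\mathbf 1\{|l_{\varepsilon_n}(Y)|>\tau_n\}]\le M_3/\tau_n$, and the vanishing mean correction to conclude $l_{\varepsilon_n,\tau_n}(Y)\to l_0(Y)$ in $L^2$; since $\mathrm{Var}(Z_{1,n}^{\tau_n})=\gamma\,\mathrm{Var}(X_{1,n}^{\tau_n})+\gamma(1-\gamma)\mu_n^2\ge\gamma\,\mathrm{Var}(X_{1,n}^{\tau_n})\to\gamma\sigma^2>0$, this gives a uniform lower bound for $n\ge n_0$. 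For the Cramér condition, the (Cont) hypothesis gives $l_{\varepsilon_n}(Y)$ an absolutely continuous component supported on a fixed bounded interval $I$; once $\tau_n$ is large enough that $I\subset[-\tau_n,\tau_n]$ this component is untouched by truncation, the subsequent deterministic shift only multiplies the characteristic function by a unimodular factor, and the Bernoulli selector replaces $\psi(t)$ by $(1-\gamma)+\gamma\psi(t)$, whose $\limsup_{|t|\to\infty}$ modulus is at most $(1-\gamma)+\gamma\limsup_{|t|\to\infty}|\psi(t)|<1$; thus $\sup_{n\ge n_0}\limsup_{|t|\to\infty}|\mathbb E[e^{itZ_{1,n}^{\tau_n}}]|<1$.

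Finally, the almost-sure bound is direct: $|Z_{i,n}^{\tau_n}|\le|l_{\varepsilon_n,\tau_n}(Y_i)|\le|\tilde l_{\varepsilon_n,\tau_n}(Y_i)|+|\mathbb E[\tilde l_{\varepsilon_n,\tau_n}(Y_1)]-\mu_n|\le\tau_n+M_1+|\mu_n|\le2\tau_n$ for all $n\ge n_0$ large enough that $\tau_n\ge M_1+1$ and $|\mu_n|\le1$. The analytic content here is routine; the only step demanding care is checking that \emph{every} clause of Assumption~\ref{assump:regularity} — especially the uniform Cramér condition and the uniform lower bound on the variance — survives the $n$-dependent truncation, since this is exactly what licenses the later application of the triangular-array Edgeworth expansion (Theorem~\ref{thm:edgeworth-triangular}) to the array $(Z_{i,n}^{\tau_n})$.
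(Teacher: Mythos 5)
Your proposal is correct and follows essentially the same route as the paper: the $1$-Lipschitz bound on $(\cdot)_+$, reduction to $\mathbb E|Z_{1,n}-Z_{1,n}^{\tau_n}|$, the identity $X_{1,n}-X_{1,n}^{\tau_n}=R_1-\mathbb E[R_1]$ for the tail part, and a Markov/moment bound giving $O(\tau_n^{-(p-1)})$, followed by the same inheritance checks (moments via $|X^{\tau_n}|\le|X|+\mathbb E|X|$, the untouched absolutely continuous component on $I$, and the $O(\tau_n)$ almost-sure bound). Your slightly tighter constant (exploiting $B_1\perp R_1$ so the $\gamma$ cancels) and your more explicit verification of the variance lower bound and Cramér condition are harmless refinements of the paper's terser argument.
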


\begin{proof}
Write
\[
  \Delta_{i,n} := Z_{i,n} - Z_{i,n}^{\tau_n}
  = B_i\bigl(X_{i,n} - X_{i,n}^{\tau_n}\bigr), \qquad i\ge 1.
\]
By the definition of the blanket divergence and the truncated blanket divergence,
\[
  \mathcal D^{\mathrm{blanket}}
  - \mathcal D^{\mathrm{tblanket}}
  = \frac{1}{n\gamma}\,
    \mathbb E\bigl[(S_n)_+ - (S_n^{\tau_n})_+\bigr].
\]
Since $x\mapsto x_+$ is $1$-Lipschitz, we have
\[
  \bigl|(S_n)_+ - (S_n^{\tau_n})_+\bigr|
  \le \bigl|S_n - S_n^{\tau_n}\bigr|
  = \biggl|\sum_{i=1}^n \Delta_{i,n}\biggr|.
\]
Therefore
\begin{equation}
  \label{eq:truncation-step1-Z}
  \bigl|\mathcal D^{\mathrm{blanket}}
        - \mathcal D^{\mathrm{tblanket}}\bigr|
  \le \frac{1}{n\gamma}\,
       \mathbb E\biggl|\sum_{i=1}^n \Delta_{i,n}\biggr|.
\end{equation}

By the triangle inequality,
\[
  \mathbb E\biggl|\sum_{i=1}^n \Delta_{i,n}\biggr|
  \le \sum_{i=1}^n \mathbb E\bigl[|\Delta_{i,n}|\bigr]
  = n\,\mathbb E\bigl[|\Delta_{1,n}|\bigr].
\]
Substituting this into \eqref{eq:truncation-step1-Z} yields
\[
  \bigl|\mathcal D^{\mathrm{blanket}}
        - \mathcal D^{\mathrm{tblanket}}\bigr|
  \le \frac{1}{\gamma}\,\mathbb E\bigl[|\Delta_{1,n}|\bigr].
\]

It remains to estimate $\mathbb E[|\Delta_{1,n}|]$.
By the definition of the mean-preserving truncation,
\[
  X_{1,n}^{\tau_n}
  = X_{1,n}\mathbf 1\{|X_{1,n}|\le \tau_n\}
    - \mathbb E\bigl[X_{1,n}\mathbf 1\{|X_{1,n}|\le \tau_n\}\bigr]
    + \mu_n,
\]
and since $\mathbb E[X_{1,n}] = \mu_n$ we obtain, after rearranging,
\[
  X_{1,n}^{\tau_n}
  = X_{1,n}\mathbf 1\{|X_{1,n}|\le \tau_n\}
    + \mathbb E\bigl[X_{1,n}\mathbf 1\{|X_{1,n}|> \tau_n\}\bigr].
\]
Hence
\[
  X_{1,n} - X_{1,n}^{\tau_n}
  = X_{1,n}\mathbf 1\{|X_{1,n}|>\tau_n\}
    - \mathbb E\bigl[X_{1,n}\mathbf 1\{|X_{1,n}|>\tau_n\}\bigr].
\]
Using $|B_1|\le1$, we get
\[
  |\Delta_{1,n}|
  = |B_1|\;|X_{1,n} - X_{1,n}^{\tau_n}|
  \le |X_{1,n}|\mathbf 1\{|X_{1,n}|>\tau_n\}
       + \mathbb E\bigl[|X_{1,n}|\mathbf 1\{|X_{1,n}|>\tau_n\}\bigr].
\]
Taking expectations gives
\[
  \mathbb E\bigl[|\Delta_{1,n}|\bigr]
  \le 2\,\mathbb E\bigl[|X_{1,n}|\mathbf 1\{|X_{1,n}|>\tau_n\}\bigr].
\]

Let $p\ge 2$ be fixed. By Markov's inequality and the
uniform moment bound in Assumption~\ref{assump:regularity}, we have
\[
  \mathbb E\bigl[|X_{1,n}|\mathbf 1\{|X_{1,n}|>\tau_n\}\bigr]
  \le \frac{\mathbb E[|X_{1,n}|^p]}{\tau_n^{p-1}}
  \le \frac{C_p}{\tau_n^{p-1}},
\]
for some constant $C_p<\infty$ independent of $n$.
Combining the last three displays yields
\[
  \mathbb E\bigl[|\Delta_{1,n}|\bigr]
  \le \frac{2C_p}{\tau_n^{p-1}}.
\]
Recalling that
\(
  \bigl|\mathcal D^{\mathrm{blanket}}
        - \mathcal D^{\mathrm{tblanket}}\bigr|
  \le \gamma^{-1}\mathbb E[|\Delta_{1,n}|]
\),
we obtain \eqref{eq:truncation-error-basic} with
$C := 2C_p/\gamma$, which proves the first claim.

For the second claim, take $\tau_n = n^\alpha$ with $0<\alpha<1/2$. Then
\[
  \bigl|\mathcal D^{\mathrm{blanket}}
        - \mathcal D^{\mathrm{tblanket}}\bigr|
  \le C n^{-\alpha(p-1)}.
\]
Given any $K>0$, choose $p$ large enough so that
$\alpha(p-1) > K$. Then
\(
  n^K
  \bigl|\mathcal D^{\mathrm{blanket}}
        - \mathcal D^{\mathrm{tblanket}}\bigr|
  \to 0
\)
as $n\to\infty$, i.e., the truncation error decays faster than
the rate $n^{-K}$.

It remains to check that the truncated random variables
$Z_{i,n}^{\tau_n} := B_i l_{\varepsilon_n,\tau_n}(Y_i)$ inherit
Assumption~\ref{assump:regularity}.
Write $X_n := l_{\varepsilon_n}(Y_1)$ and note that
\[
  X_{1,n}^{\tau_n}
  = X_n \mathbf{1}\{|X_n|\le \tau_n\}
    + \mathbb{E}[X_n \mathbf{1}\{|X_n|>\tau_n\}],
\]
so
\[
  |X_{1,n}^{\tau_n}|
  \le |X_n| + \mathbb{E}|X_n|.
\]
Thus all moment bounds required in
Assumption~\ref{assump:regularity} for $X_n$ carry over
to $X_{1,n}^{\tau_n}$ (up to different constants), and a fortiori to
$Z_{1,n}^{\tau_n} = B_1 X_{1,n}^{\tau_n}$ since $|B_1|\le1$.

For the continuity part of Assumption~\ref{assump:regularity}, the distribution of
$X_n$ has a common absolutely continuous component on a fixed bounded
interval $I=[a,b]$, with mass and density bounded away from zero and infinity,
uniformly in $n$.
Since $Z_{1,n} = B_1 X_{1,n}$ with $B_1$ independent of $X_{1,n}$,
the distribution of $Z_{1,n}$ is a mixture of an atom at $0$ (coming from $B_1=0$)
and the distribution of $X_{1,n}$ (coming from $B_1=1$), so $Z_{1,n}$ and
$Z_{1,n}^{\tau_n}$ inherit the same absolutely continuous component on $I$
(up to a constant factor and, for $Z_{1,n}^{\tau_n}$, up to a translation).

Finally, by the above display and the uniform moment bounds, there exist
constants $C<\infty$ and $n_0\in\mathbb N$ such that
\[
  |X_{1,n}^{\tau_n}|\le C \tau_n \quad\text{for all }n\ge n_0,
\]
and therefore $|Z_{1,n}^{\tau_n}| = |B_1||X_{1,n}^{\tau_n}|\le C \tau_n$ as well.
The same bound holds for all $i\ge1$ by identical distribution, which
completes the proof.
\end{proof}

We now proceed to the second step using the truncated blanket divergence.

\begin{lemma}[Evaluation of the expectation]
\label{lem:inner-expectation}
Assume the setting of Assumption~\ref{assump:regularity} and
Definition~\ref{def:truncated-blanket}.  Let $(\varepsilon_n)_{n\ge1}$
be a sequence with $\varepsilon_n>0$, $\varepsilon_n\to0$,
$\varepsilon_n=\omega(n^{-1/2})$, and
$\varepsilon_n = O\!\bigl(\sqrt{(\log n)/n}\bigr)$.

For each $n$ let $Y_1,Y_2,\dots$ be i.i.d.\ with distribution
$\mathcal R_{\mathrm{ref}}$, let $B_1,B_2,\dots$ be i.i.d.\
$\mathrm{Bernoulli}(\gamma)$ independent of $(Y_i)_{i\ge1}$, and define
the truncated variables
\[
  Z_{i,n}^{\tau_n} := B_i\,l_{\varepsilon_n,\tau_n}(Y_i),\qquad i\ge1.
\]
Write
\[
  Z := Z_{1,n}^{\tau_n},\qquad
  \mu_{n} := \mathbb E[Z],\qquad
  \sigma_{\varepsilon_n}^2 := \mathrm{Var}(Z)\in(0,\infty).
\]
For $n\ge2$ define
\[
  S_{n-1} := \sum_{i=2}^n Z_{i,n}^{\tau_n},\qquad
  T_{n-1} := \frac{S_{n-1} - (n-1)\mu_n}{\sigma_{\varepsilon_n}\sqrt{n-1}},
\]
and let $F_{n-1}$ denote the distribution function of $T_{n-1}$.
Set
\[
  t_n := -\frac{\mu_n\sqrt{n-1}}{\sigma_{\varepsilon_n}},
  \qquad
  \eta_n(z) := \frac{z}{\sigma_{\varepsilon_n}\sqrt{n-1}}.
\]

Then there exists a sequence $(r_n)_{n\ge1}$ with $r_n\in\mathbb R$
such that
\begin{equation}
  \label{eq:inner-expectation-asymptotic}
  \mathbb E\Bigl[
    Z\bigl(1-F_{n-1}\bigl(t_n-\eta_n(Z)\bigr)\bigr)
  \Bigr]
  =
  \frac{\varphi(t_n)}{\sqrt{n-1}}\,
  \frac{\sigma_{\varepsilon_n}}{t_n^2}\,
  \bigl(1 + r_n\bigr),
\end{equation}
and $r_n\to0$ as $n\to\infty$.

Moreover, the sequence $(r_n)$ depends only on the constants appearing
in Assumption~\ref{assump:regularity}.
\end{lemma}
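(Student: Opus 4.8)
The plan is to evaluate the expectation by two successive approximations: first replace the unknown distribution $F_{n-1}$ of the normalized truncated sum by a high-order Edgeworth expansion, then Taylor-expand the Gaussian (and Edgeworth-polynomial) pieces around the threshold $t_n$ and extract the leading term through Mills' ratio. I work in the (Cont) case with truncation level $\tau_n=n^\alpha$ for a fixed $\alpha\in(0,1/2)$ and begin by collecting the \emph{a priori} estimates. By Lemma~\ref{lem:truncation-blanket} the truncated summands $Z_{i,n}^{\tau_n}$ satisfy Assumption~\ref{assump:regularity} uniformly in $n$ and obey $|Z_{i,n}^{\tau_n}|\le C\tau_n$, so all their moments and cumulants are uniformly bounded, and the centered variables $X_{i,n}:=Z_{i,n}^{\tau_n}-\mu_n$ carry a common absolutely continuous component on a fixed interval whose uniformly positive mass yields Cram\'er's condition $\sup_n\limsup_{|t|\to\infty}|\mathbb E[\mathrm e^{\mathrm i t X_{1,n}}]|<1$. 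Using $\int\mathcal R_{x_1'}^2/\mathcal R_{\mathrm{ref}}<\infty$ together with a Markov tail bound gives $l_{\varepsilon_n,\tau_n}(Y)\to l_0(Y)$ in $L^2(\mathcal R_{\mathrm{ref}})$, hence $\sigma_{\varepsilon_n}^2\to\gamma\sigma^2\in(0,\infty)$, so $\sigma_{\varepsilon_n}$ is bounded above and away from $0$. Since $\mu_n=\gamma(1-\mathrm e^{\varepsilon_n})$ and $\varepsilon_n=\omega(n^{-1/2})$, $\varepsilon_n=O(\sqrt{\log n/n})$, one gets $t_n=\Theta(\varepsilon_n\sqrt n)$, so $t_n\to\infty$ while $t_n^2\le K\log n$ eventually for a fixed constant $K$; and $\delta_n:=C\tau_n/(\sigma_{\varepsilon_n}\sqrt{n-1})\ge|\eta_n(Z)|$ almost surely satisfies $\delta_n\to0$ and $\delta_n\sqrt{\log n}\to0$, which makes $\varphi(\xi)=\varphi(t_n)(1+o(1))$ and $|H_m(\xi)|=O((\log n)^{m/2})$ hold uniformly over $|\xi-t_n|\le\delta_n$.

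\textbf{Edgeworth reduction and the leading term.} Fix an integer $s\ge3$, later chosen with $s>K+3$. Theorem~\ref{thm:edgeworth-triangular} applied to $(X_{i,n})$ gives $\sup_x\bigl|F_{n-1}(x)-\Phi(x)-\sum_{j=1}^{s-2}P_{j,n-1}(x)\varphi(x)\bigr|=o(n^{-(s-2)/2})$ with $P_{j,n-1}(x)=O(n^{-j/2}(1+|x|^{3j}))$ uniformly, since $\kappa_{r,n-1}=O(n^{-(r-2)/2})$. Substituting this into $\mathbb E[Z(1-F_{n-1}(t_n-\eta_n(Z)))]$ splits it as $\mathbb E[Z(1-\Phi(t_n-\eta_n(Z)))]-\sum_{j=1}^{s-2}\mathbb E[Z\,P_{j,n-1}(t_n-\eta_n(Z))\varphi(t_n-\eta_n(Z))]+E_n^{(0)}$ with $|E_n^{(0)}|\le\mathbb E|Z|\cdot o(n^{-(s-2)/2})=o(n^{-(s-2)/2})$. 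For the Gaussian piece I use $1-\Phi(t_n-a)=(1-\Phi(t_n))+\varphi(t_n)a+R_2(a)$, $|R_2(a)|=O\bigl((\log n)^{1/2}\varphi(t_n)\bigr)a^2$, set $a=\eta_n(Z)$, and take expectations via $\mathbb E[Z]=\mu_n$, $\mathbb E[Z\eta_n(Z)]=\mathbb E[Z^2]/(\sigma_{\varepsilon_n}\sqrt{n-1})$, $\mathbb E[|Z|\eta_n(Z)^2]=O(1/n)$, using $\mathbb E[Z^2]=\sigma_{\varepsilon_n}^2(1+t_n^2/(n-1))$ and the Mills-ratio expansion $1-\Phi(t_n)=\varphi(t_n)(t_n^{-1}-t_n^{-3}+O(t_n^{-5}))$. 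The two main contributions $(1-\Phi(t_n))\mu_n$ and $\varphi(t_n)\mathbb E[Z^2]/(\sigma_{\varepsilon_n}\sqrt{n-1})$ have leading parts $-\sigma_{\varepsilon_n}\varphi(t_n)/\sqrt{n-1}$ and $+\sigma_{\varepsilon_n}\varphi(t_n)/\sqrt{n-1}$, which cancel, leaving $\frac{\sigma_{\varepsilon_n}\varphi(t_n)}{\sqrt{n-1}\,t_n^2}\bigl(1+O(t_n^{-2})+t_n^4/(n-1)\bigr)$; since $t_n\to\infty$ and $t_n^4/(n-1)=O((\log n)^2/n)\to0$, this equals $L_n(1+o(1))$ with $L_n:=\frac{\varphi(t_n)}{\sqrt{n-1}}\,\frac{\sigma_{\varepsilon_n}}{t_n^2}$.

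\textbf{All remaining terms are $o(L_n)$.} From $\varphi(t_n)\ge c\,n^{-K/2}$, boundedness of $\sigma_{\varepsilon_n}$, and $t_n^2\le K\log n$ we get $L_n\ge c'n^{-(K+1)/2}/\log n$; choosing $s>K+3$ then forces $E_n^{(0)}/L_n\to0$, and the Taylor-remainder contribution $O((\log n)^{1/2}\varphi(t_n)/n)$ has ratio $O((\log n)^{3/2}/\sqrt n)\to0$ with $L_n$. For each Edgeworth correction term I do \emph{not} bound $|Z|$ by $C\tau_n$ but Taylor-expand $P_{j,n-1}(t_n-a)\varphi(t_n-a)$ around $t_n$ as well, so that its leading contribution is $\mu_n P_{j,n-1}(t_n)\varphi(t_n)$, which carries the small factor $|\mu_n|=\Theta(\varepsilon_n)$; with $P_{j,n-1}(t_n)=O(n^{-j/2}(\log n)^{3j/2})$ its ratio to $L_n$ is $O\bigl(\varepsilon_n^3\,n^{(3-j)/2}(\log n)^{3j/2}\bigr)$, which tends to $0$ for every $j\ge1$ because $\varepsilon_n^3 n\to0$ in this regime, and the subsequent Taylor terms are smaller still. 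Assembling everything gives $\mathbb E[Z(1-F_{n-1}(t_n-\eta_n(Z)))]=L_n(1+r_n)$ with $r_n\to0$; tracing the bounds shows $r_n$ is governed only by the constants of Assumption~\ref{assump:regularity} (and the fixed constants of the $\varepsilon_n$-regime).

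\textbf{Main obstacle.} The delicate point is the comparison in the third step: the leading term $L_n=\Theta(\varphi(t_n)\,n^{-1/2}t_n^{-2})$ is exponentially small (of order $n^{-K/2}$ up to polynomial factors), so the uniform $o(n^{-(s-2)/2})$ error of the Edgeworth theorem is harmless only once $s$ is taken large relative to the \emph{fixed} constant $K$ with $t_n^2\le K\log n$; and the Edgeworth correction terms must be expanded far enough to expose the gain $\mu_n=\Theta(\varepsilon_n)$, since the crude bound $\mathbb E|Z|=O(1)$ would leave the $j=1$ correction comparable to $L_n$. Establishing $\sigma_{\varepsilon_n}^2\to\gamma\sigma^2$ (needed here for the variance lower bound, and downstream in Lemma~\ref{lem:blanket-asymptotics-general} to reparametrize $\varphi(t_n)$ as $\varphi(\chi(\mathrm e^{\varepsilon_n}-1)\sqrt n)$) via $L^2$-convergence of the truncated privacy amplification variable is the other technical ingredient, though it is routine given Assumption~\ref{assump:regularity}(2).
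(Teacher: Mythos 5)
Your proposal is correct and follows essentially the same route as the paper's proof: apply the triangular-array Edgeworth expansion to the truncated summands, split the expectation into a Gaussian main piece, Edgeworth-polynomial corrections, and a uniform remainder, then Taylor-expand around $t_n$ and use Mills' ratio so that the $O(\varphi(t_n)/\sqrt{n})$ contributions cancel and the $\sigma_{\varepsilon_n}\varphi(t_n)/(\sqrt{n-1}\,t_n^2)$ term survives, with the order $s$ chosen large relative to the constant in $t_n^2=O(\log n)$. The only (harmless) deviations are that you expose the factor $\mu_n$ in every Edgeworth correction term, whereas the paper does this explicitly only for $j=1$ and uses a crude bound for $j\ge2$, and that you make the choice $s>K+3$ explicit where the paper just says "$s$ sufficiently large."
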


\begin{proof}
Fix $n\ge2$. Throughout the proof we suppress the dependence on $n$ in
the notation whenever this causes no confusion, and write
$Z_i := Z_{i,n}^{\tau_n}$.

\medskip\noindent\textbf{Step 1: Edgeworth expansion for the truncated
array.}
For each $n$ the variables $Z_1,Z_2,\dots$ are i.i.d.\ with
\[
  \mu_n := \mathbb E[Z_1],\qquad
  \sigma_{\varepsilon_n}^2 := \mathrm{Var}(Z_1).
\]
By Assumption~\ref{assump:regularity} and the truncation
Lemma~\ref{lem:truncation-blanket}, the triangular array
$(Z_{i,n}^{\tau_n})_{1\le i\le n,\,n\ge1}$ satisfies the hypotheses of
Theorem~\ref{thm:edgeworth-triangular} (uniformly in $n$).
In particular, for any fixed integer $s\ge3$ there exist
polynomials $P_{1,k},\dots,P_{s-2,k}$ and constants
$C_s<\infty$ and $k_0\in\mathbb N$ such that, for all
$k\ge k_0$, the distribution $F_k$ of
\[
  T_k
  := \frac{S_k-k\mu_n}{\sigma_{\varepsilon_n}\sqrt{k}},
  \qquad
  S_k:=\sum_{i=1}^k Z_i,
\]
satisfies
\[
  \sup_{u\in\mathbb R}
  \biggl|
    F_k(u) - \Phi(u)
    - \sum_{j=1}^{s-2}k^{-j/2}P_{j,k}(u)\varphi(u)
  \biggr|
  \le C_s k^{-(s-2)/2}.
\]

We apply this with $k=n-1$ (for $n$ large enough so that $n-1\ge k_0$).
Thus,
\begin{equation}
  \label{eq:edgeworth-remainder-n}
  \sup_{u\in\mathbb R}
  \biggl|
    F_{n-1}(u) - \Phi(u)
    - \sum_{j=1}^{s-2}(n-1)^{-j/2}P_{j,n-1}(u)\varphi(u)
  \biggr|
  \le C_s (n-1)^{-(s-2)/2}.
\end{equation}
The polynomials $P_{j,n-1}$ and the constant $C_s$ may be chosen
independently of $n$.

In the sequel we use the shorthand
\[
  t_n := -\frac{\mu_n\sqrt{n-1}}{\sigma_{\varepsilon_n}},
  \qquad
  \eta_n(z) := \frac{z}{\sigma_{\varepsilon_n}\sqrt{n-1}}.
\]

\medskip\noindent\textbf{Step 2: Decomposition of the inner expectation.}
We start from
\[
  \mathbb E\Bigl[
    Z\bigl(1-F_{n-1}(t_n-\eta_n(Z))\bigr)
  \Bigr].
\]
Using the expansion \eqref{eq:edgeworth-remainder-n} at the point
$u=t_n-\eta_n(Z)$, we write
\[
  1-F_{n-1}(u)
  = (1-\Phi(u))
    - \sum_{j=1}^{s-2}(n-1)^{-j/2}P_{j,n-1}(u)\varphi(u)
    + R_{n-1,s}(u),
\]
where the remainder $R_{n-1,s}$ satisfies
\(
  |R_{n-1,s}(u)|
  \le C_s (n-1)^{-(s-2)/2}
\)
for all $u\in\mathbb R$. Substituting $u=t_n-\eta_n(Z)$ and taking the
expectation with respect to $Z$ yields
\begin{align*}
  \mathbb E\Bigl[
    Z\bigl(1-F_{n-1}(t_n-\eta_n(Z))\bigr)
  \Bigr]
  &= A_n - \sum_{j=1}^{s-2}B_{n,j} + E_n,
\end{align*}
where we set
\begin{align*}
  A_n
  &:= \mathbb E\Bigl[
        Z\bigl(1-\Phi(t_n-\eta_n(Z))\bigr)
      \Bigr],\\
  B_{n,j}
  &:= (n-1)^{-j/2}\,
      \mathbb E\Bigl[
        Z P_{j,n-1}\bigl(t_n-\eta_n(Z)\bigr)
          \varphi\bigl(t_n-\eta_n(Z)\bigr)
      \Bigr],\\
  E_n
  &:= \mathbb E\Bigl[
        Z R_{n-1,s}\bigl(t_n-\eta_n(Z)\bigr)
      \Bigr].
\end{align*}

\medskip\noindent\textbf{Step 3: Bound on the Edgeworth remainder $E_n$.}
By Lemma~\ref{lem:truncation-blanket} there exist constants
$C,h_0>0$ such that, for all $n$ large enough,
\(
  |Z|\le C \tau_n
\)
almost surely. Using the bound on $R_{n-1,s}$ and this truncation, we
obtain
\[
  |E_n|
  \le \mathbb E\Bigl[
    |Z|\,
    \bigl|R_{n-1,s}\bigl(t_n-\eta_n(Z)\bigr)\bigr|
  \Bigr]
  \le C \tau_n C_s (n-1)^{-(s-2)/2}
  = O\bigl(\tau_n (n-1)^{-(s-2)/2}\bigr).
\]

\medskip\noindent\textbf{Step 4: Evaluation of $A_n$.}
Define $g(z):=1-\Phi(z)$. Then
\[
  g'(z) = -\varphi(z),\qquad
  g''(z) = z\varphi(z).
\]
By Taylor's theorem with Lagrange remainder, for each $z$ there exists
$\xi=\xi(z)$ between $t_n$ and $t_n-\eta_n(z)$ such that
\begin{equation}
  \label{eq:taylor-g-n}
  1-\Phi\bigl(t_n-\eta_n(z)\bigr)
  = g(t_n) - g'(t_n)\eta_n(z)
    + \frac12 g''(\xi)\eta_n(z)^2.
\end{equation}
We now control $g''(\xi)$. By the explicit formula for the Gaussian
density,
\[
  \frac{\varphi(\xi)}{\varphi(t_n)}
  = \exp\Bigl(-\frac{\xi^2-t_n^2}{2}\Bigr)
  = \exp\Bigl(
      t_n\theta\eta_n(z)-\tfrac12\theta^2\eta_n(z)^2
    \Bigr),
\]
for some $\theta\in[0,1]$. Under our scaling assumptions,
$\varepsilon_n = O(\sqrt{(\log n)/n})$ and
$\varepsilon_n=\omega(n^{-1/2})$, and using that $\mu_n$ is of order
$\varepsilon_n$ and $\sigma_{\varepsilon_n}$ is bounded away from $0$
and $\infty$ (by Assumption~\ref{assump:regularity} and the truncation
lemma), we have
\[
  t_n
  = -\frac{\mu_n\sqrt{n-1}}{\sigma_{\varepsilon_n}}
  = \Theta\bigl(\varepsilon_n\sqrt{n}\bigr),
\]
so $t_n\to\infty$ and $t_n^2 = O(\log n)$.

Moreover, choosing $\tau_n = n^\alpha$ for some fixed
$0<\alpha<1/2$, we have $|z|\le C \tau_n$ and hence
\[
  |\eta_n(z)|
  = O\Bigl(\frac{\tau_n}{\sqrt{n}}\Bigr)
  = O\bigl(n^{\alpha-1/2}\bigr)
  = o(1).
\]
Thus $|t_n\eta_n(z)| = o(1)$ and therefore
$\varphi(\xi)\le C_\xi\varphi(t_n)$ for some constant $C_\xi$
independent of $n$. Furthermore,
$|\xi|\le |t_n|+|\eta_n(z)|\le C(1+t_n)$, so altogether
\[
  |g''(\xi)| = |\xi|\varphi(\xi)
  \le C'(1+t_n)\varphi(t_n).
\]

Substituting \eqref{eq:taylor-g-n} into the definition of $A_n$ and
using $\mathbb E[Z]=\mu_n$ gives
\begin{align*}
  A_n
  &= \mathbb E\Bigl[
       Z\bigl(1-\Phi(t_n-\eta_n(Z))\bigr)
     \Bigr]\\
  &= \mu_n(1-\Phi(t_n))
     +\varphi(t_n)\mathbb E[Z\eta_n(Z)]
     +\frac12\mathbb E\bigl[Z g''(\xi)\eta_n(Z)^2\bigr].
\end{align*}
Note that
\[
  \mathbb E[Z\eta_n(Z)]
  = \frac{1}{\sigma_{\varepsilon_n}\sqrt{n-1}}\mathbb E[Z^2]
  = \frac{\sigma_{\varepsilon_n}^2+\mu_n^2}
         {\sigma_{\varepsilon_n}\sqrt{n-1}}.
\]
Using the bound on $g''(\xi)$ and the uniform boundedness of the third
moment, we obtain
\[
  \Bigl|
    \mathbb E\bigl[Z g''(\xi)\eta_n(Z)^2\bigr]
  \Bigr|
  \le C''\frac{\varphi(t_n)(1+t_n)}{n-1}.
\]
Therefore
\begin{equation}
  \label{eq:A-n-expansion}
  A_n
  = \mu_n(1-\Phi(t_n))
    +\frac{\sigma_{\varepsilon_n}^2+\mu_n^2}
          {\sigma_{\varepsilon_n}\sqrt{n-1}}\varphi(t_n)
    +O\Bigl(\frac{\varphi(t_n)(1+t_n)}{n-1}\Bigr).
\end{equation}

\medskip\noindent\textbf{Step 5: Evaluation of $B_{n,1}$.}
In one dimension, the first Edgeworth polynomial has the explicit form
\[
  P_{1,n-1}(u)
  = \frac{\kappa_{3,n-1}}{6}(u^2-1),
\]
where $\kappa_{3,n-1}$ is the normalized third cumulant of $T_{n-1}$.
Since the summands are i.i.d., we have
\[
  \kappa_{3,n-1}
  = \frac{\kappa_3(Z_1)}{\sigma_{\varepsilon_n}^3\sqrt{n-1}}
  =: \frac{\kappa_3}{\sigma_{\varepsilon_n}^3\sqrt{n-1}},
\]
where $\kappa_3$ is the third cumulant of $Z_1$, uniformly bounded in
$n$ by the moment assumptions. Hence
\[
  B_{n,1}
  = \frac{\kappa_3}{6\sigma_{\varepsilon_n}^3\sqrt{n-1}}\,
    \mathbb E\Bigl[
      Z\varphi(u)(u^2-1)
    \Bigr],
  \qquad u:=t_n-\eta_n(Z).
\]
Define $D(u):=\varphi(u)(u^2-1)$. A direct computation gives
\[
  D'(u) = -u\varphi(u)(u^2-3),
  \qquad
  D''(u) = \varphi(u)(u^4-6u^2+3).
\]
By Taylor's theorem,
\[
  D(u)
  = D(t_n) + D'(t_n)\eta_n(Z) + \tfrac12 D''(\xi')\eta_n(Z)^2
\]
for some $\xi'=\xi'(Z)$ between $t_n$ and $u$. Using the same argument
as for $g''(\xi)$, we obtain
\(
  |D''(\xi')|
  \le C_D\varphi(t_n)(1+t_n^4)
\)
for all large $n$. Therefore
\begin{align*}
  D(u)
  &= \varphi(t_n)(t_n^2-1)
    +\varphi(t_n)t_n(t_n^2-3)\eta_n(Z)
    +O\bigl(\varphi(t_n)t_n^4\eta_n(Z)^2\bigr),
\end{align*}
and hence
\begin{align}
  B_{n,1}
  &= \frac{\kappa_3}{6\sigma_{\varepsilon_n}^3\sqrt{n-1}}
     \mathbb E\bigl[Z D(u)\bigr]\notag\\
  &= \frac{\kappa_3\mu_n}{6\sigma_{\varepsilon_n}^3\sqrt{n-1}}
     \varphi(t_n)(t_n^2-1)
   + \frac{\kappa_3}{6\sigma_{\varepsilon_n}^3\sqrt{n-1}}
     \varphi(t_n)t_n(t_n^2-3)\mathbb E[Z\eta_n(Z)]\notag\\
  &\quad
   + O\Bigl(
       \frac{\varphi(t_n)t_n^4}{\sigma_{\varepsilon_n}^2(n-1)}
       \mathbb E[Z^3]
     \Bigr).\label{eq:B-n-1-raw}
\end{align}
Using $\mathbb E[Z\eta_n(Z)]$ as above and bounded third moment, we
obtain
\begin{equation}
  \label{eq:B-n-1-expansion}
  B_{n,1}
  = \frac{\kappa_3\mu_n}{6\sigma_{\varepsilon_n}^3\sqrt{n-1}}
      \varphi(t_n)(t_n^2-1)
    + \frac{\kappa_3(\sigma_{\varepsilon_n}^2+\mu_n^2)}
           {6\sigma_{\varepsilon_n}^4}
      \varphi(t_n)t_n(t_n^2-3)\frac{1}{n-1}
    + O\Bigl(
        \frac{\varphi(t_n)(1+t_n^4)}{(n-1)^{3/2}}
      \Bigr).
\end{equation}

\medskip\noindent\textbf{Step 6: Bounds for $B_{n,j}$ with $j\ge2$.}
By standard properties of Edgeworth polynomials
(cf.~Theorem~\ref{thm:edgeworth-triangular}), for each $j\ge2$ there
exist constants $C_j<\infty$ and integers $k_j\le3j$ such that
\[
  |P_{j,n-1}(u)| \le C_j\bigl(1+|u|^{k_j}\bigr),
  \qquad u\in\mathbb R.
\]
As above, we have $t_n^2=O(\log n)$ and $|\eta_n(Z)|=O(\tau_n/\sqrt{n})=o(1)$, so
$|u|=|t_n-\eta_n(Z)|\le C(1+t_n)$ and
$\varphi(u)\le C'\varphi(t_n)$ for all large $n$. Since $|Z|\le C\tau_n$,
we obtain
\begin{align}
  |B_{n,j}|
  &\le (n-1)^{-j/2}\,
      \mathbb E\bigl[
        |Z||P_{j,n-1}(u)|\varphi(u)
      \bigr]\notag\\
  &\le C_j''(n-1)^{-j/2}\,\varphi(t_n)\,(1+t_n^{k_j}),
  \label{eq:B-n-j-bound}
\end{align}
uniformly in $n$ for $2\le j\le s-2$.

\medskip\noindent\textbf{Step 7: Extraction of the main term.}
Combining \eqref{eq:A-n-expansion}, \eqref{eq:B-n-1-expansion},
\eqref{eq:B-n-j-bound}, and the bound on $E_n$, we obtain
\begin{align}
  \mathbb E\Bigl[
    Z\bigl(1-F_{n-1}(t_n-\eta_n(Z))\bigr)
  \Bigr]
  &= A_n - B_{n,1} - \sum_{j=2}^{s-2}B_{n,j} + E_n\notag\\
  &= \mu_n(1-\Phi(t_n))
     +\frac{\sigma_{\varepsilon_n}^2+\mu_n^2}
            {\sigma_{\varepsilon_n}\sqrt{n-1}}\varphi(t_n)\notag\\
  &\quad
     - \frac{\kappa_3\mu_n}{6\sigma_{\varepsilon_n}^3\sqrt{n-1}}
         \varphi(t_n)(t_n^2-1)\notag\\
  &\quad
     + O\Bigl(
         \frac{\varphi(t_n)(1+t_n^4)}{n-1}
       \Bigr)
     + O\Bigl(
         \sum_{j=2}^{s-2}
           (n-1)^{-j/2}\varphi(t_n)t_n^{k_j}
       \Bigr)\notag\\
  &\quad
     + O\bigl(\tau_n (n-1)^{-(s-2)/2}\bigr).
     \label{eq:inner-expansion-pre-n}
\end{align}

We now use the Mills ratio expansion
\[
  1-\Phi(t_n)
  = \frac{\varphi(t_n)}{t_n}
    \Bigl(1-\frac{1}{t_n^2}
           +O\Bigl(\frac{1}{t_n^4}\Bigr)\Bigr),
  \qquad t_n\to\infty.
\]
Under our scaling assumptions,
\[
  t_n
  = -\frac{\mu_n\sqrt{n-1}}{\sigma_{\varepsilon_n}}
  = \Theta\bigl(\varepsilon_n\sqrt{n}\bigr),
\]
so $t_n\to\infty$ and $t_n^2=O(\log n)$.

Substituting the Mills ratio into the first term of
\eqref{eq:inner-expansion-pre-n} and using the identity
$t_n=-\mu_n\sqrt{n-1}/\sigma_{\varepsilon_n}$, a straightforward
algebraic simplification shows that the terms of order $O(\varphi(t_n)/\sqrt{n})$ cancel out, and the leading contribution from $A_n$ is
\[
  \frac{\varphi(t_n)}{\sqrt{n-1}}\,
  \frac{\sigma_{\varepsilon_n}}{t_n^2},
\]
while the remaining terms in \eqref{eq:inner-expansion-pre-n} are of
smaller order.

More precisely, we can write
\begin{equation}
  \label{eq:inner-main-plus-remainder-n}
  \mathbb E\Bigl[
    Z\bigl(1-F_{n-1}(t_n-\eta_n(Z))\bigr)
  \Bigr]
  = \frac{\varphi(t_n)}{\sqrt{n-1}}\,
    \frac{\sigma_{\varepsilon_n}}{t_n^2}\,
    \bigl(1+r_n\bigr),
\end{equation}
where the error term $r_n$ collects:
\begin{itemize}
  \item[(a)] the higher-order terms in the Mills ratio expansion
    (of order $O(t_n^{-2})$);
  \item[(b)] the $O(\cdot)$-terms in \eqref{eq:A-n-expansion}
    and \eqref{eq:B-n-1-expansion};
  \item[(c)] the sum over $j\ge2$ in \eqref{eq:B-n-j-bound};
  \item[(d)] the remainder $E_n$.
\end{itemize}

Using $t_n^2=O(\log n)$, $\tau_n=n^\alpha$ with $0<\alpha<1/2$, and
choosing $s$ sufficiently large in the Edgeworth expansion, we deduce
from the bounds in \eqref{eq:inner-expansion-pre-n},
\eqref{eq:B-n-j-bound}, and the explicit form of the Mills ratio
remainder that $r_n\to0$ as $n\to\infty$. All constants involved depend
only on the constants in Assumption~\ref{assump:regularity}, and the
preceding estimates are uniform over the distinct pair $(x_1,x_1')$
and the admissible choices of $\mathcal R_{\mathrm{ref}}$.

Thus \eqref{eq:inner-main-plus-remainder-n} proves
\eqref{eq:inner-expectation-asymptotic}, completing the proof.
\end{proof}

\paragraph{Final replacement of $\sigma_{\varepsilon_n}$ by $\sigma$.}
We now relate the variance $\sigma_{\varepsilon_n}^2 := \mathrm{Var}(Z_{1,n}^{\tau_n})$ of the truncated variable appearing in Lemma~\ref{lem:inner-expectation} to the limiting variance $\sigma^2 = \mathrm{Var}(l_0(Y))$.

First, let $\sigma_{\mathrm{untr}}^2(\varepsilon) := \mathrm{Var}(B_1 l_{\varepsilon}(Y_1))$ denote the variance of the original \emph{untruncated} privacy amplification variable. From the explicit integral representation
\[
  \mathbb E\bigl[l_{\varepsilon}(Y)^2\bigr]
  = \int \frac{\bigl(\mathcal R_{x_1}(y) - \mathrm e^{\varepsilon}\mathcal R_{x_1^\prime}(y)\bigr)^2}
              {\mathcal R_{\mathrm{ref}}(y)}\,dy,
\]
we observe that $\varepsilon \mapsto \sigma_{\mathrm{untr}}^2(\varepsilon)$ is continuously differentiable. Specifically,
\[
  \sigma_{\mathrm{untr}}^2(\varepsilon)
  = \gamma\bigl(a_1 + a_2 \mathrm e^{2\varepsilon} - 2 a_3 \mathrm e^{\varepsilon}\bigr)
    - \gamma^2(1-\mathrm e^{\varepsilon})^2,
\]
where $a_1, a_2, a_3$ depend only on the reference distributions and they are uniformly bounded from Assumption~\ref{assump:regularity}.
Differentiation shows that the derivative is uniformly bounded for $\varepsilon$ near $0$.
Since $\sigma_{\mathrm{untr}}^2(0) = \gamma \sigma^2$, the Lipschitz continuity implies
\[
  \bigl|\sigma_{\mathrm{untr}}(\varepsilon_n) - \sqrt{\gamma}\,\sigma\bigr|
  \le C |\varepsilon_n|,
\]
for some constant $C<\infty$.

Second, we compare the truncated variance $\sigma_{\varepsilon_n}^2$ with the untruncated variance $\sigma_{\mathrm{untr}}^2(\varepsilon_n)$.
Recall that the truncated variable $Z_{1,n}^{\tau_n}$ is obtained by a mean-preserving truncation at level $\tau_n = n^\alpha$.
By the Cauchy--Schwarz inequality and the moment bounds from Assumption~\ref{assump:regularity}, the difference is dominated by the tail contribution:
\[
  \bigl|\sigma_{\varepsilon_n}^2 - \sigma_{\mathrm{untr}}^2(\varepsilon_n)\bigr|
  \le \mathbb E\bigl[Z_{1,n}^2 \mathbf 1\{|Z_{1,n}| > \tau_n\}\bigr]
  \le \frac{C_p}{\tau_n^{p-2}}.
\]
By choosing $p$ sufficiently large, this truncation error decays faster than any polynomial in $n$, and in particular is $o(\varepsilon_n)$.
Combining this with the Lipschitz bound yields
\begin{equation}
  \label{eq:sigma-convergence-final}
  \sigma_{\varepsilon_n}
  = \sqrt{\gamma}\,\sigma + O(\varepsilon_n).
\end{equation}

Finally, we simplify the Gaussian factor $\varphi(t_n)$.
Define the parameter $\chi := \sqrt{\gamma}/\sigma$.
Using the definition $t_n = -\mu_n\sqrt{n-1}/\sigma_{\varepsilon_n}$ and the expansion \eqref{eq:sigma-convergence-final}, we write
\[
  t_n
  = -\frac{\mu_n\sqrt{n-1}}{\sqrt{\gamma}\,\sigma(1+O(\varepsilon_n))}
  = -\frac{\mu_n\sqrt{n-1}}{\sqrt{\gamma}\,\sigma} + r_n,
\]
where the error term $r_n$ satisfies
$r_n = O(\mu_n\sqrt{n-1}\cdot \varepsilon_n) = O(\sqrt{n} \varepsilon_n^2)$.
Since we assume $\varepsilon_n = O(\sqrt{(\log n)/n})$, we have 
\[
t_n = -\chi(\mathrm e^{\varepsilon_n}-1)\sqrt{n-1}(1+r_n) = -\chi(\mathrm e^{\varepsilon_n}-1)\sqrt{n-1}(1+o(1)).
\]

A Taylor expansion of the Gaussian density $\varphi(t)$ around $\bar t_n := -\mu_n\sqrt{n-1}/(\sqrt{\gamma}\sigma)$ gives
\[
  \varphi(t_n)
  = \varphi(\bar t_n)\bigl(1 + o(1)\bigr)
  = \varphi\bigl(\chi(\mathrm e^{\varepsilon_n}-1)\sqrt{n-1}\bigr)\bigl(1+o(1)\bigr).
\]

\medskip

By continuity of $x\mapsto x^3$ on
$(0,\infty)$ we obtain
\begin{equation}
  \label{eq:sigma-cube-Z}
  \sigma_{\varepsilon_n}^3
  = \sigma_0^3\bigl(1+o(1)\bigr)
  = \bigl(\sqrt{\gamma}\,\sigma\bigr)^3\bigl(1+o(1)\bigr)
  = \gamma^{3/2}\sigma^3\bigl(1+o(1)\bigr).
\end{equation}

\medskip

From the Edgeworth-based analysis with $Z_{i,n}^{\tau_n}$ (i.e., \eqref{eq:inner-main-plus-remainder-n}) we previously
obtained an expansion of the form
\[
  \mathcal D^{\mathrm{blanket}}
  = \frac{1}{\gamma}\frac{\varphi(t_n)}{\sqrt{n-1}}\,
    \frac{\sigma_{\varepsilon_n}}{t_n^2}\,
    \bigl(1+o(1)\bigr),
\]
where the factor $1/\gamma$ comes from the transformation
$\mathcal D^{\mathrm{blanket}}
 = \gamma^{-1}\mathbb E[Z_1\mathbf 1\{T_{n-1}>-Z_1\}]$.
Using \eqref{eq:sigma-convergence-final} and \eqref{eq:sigma-cube-Z}, and recalling
$\mu_n = \gamma(1-{\rm e}^{\varepsilon_n})$, we obtain
\[
  \mathcal D^{\mathrm{blanket}}
  =
  \varphi\bigl(\chi(\mathrm e^{\varepsilon_n}-1)\sqrt{n-1}\bigr)\,
  \left(
    \frac{1}{\chi^3}\,
    \frac{1}{\bigl(\mathrm e^{\varepsilon_n}-1\bigr)^2 (n-1)^{3/2}}
  \right)\bigl(1+o(1)\bigr).
\]
Finally, we may replace $n-1$ by $n$ everywhere in the display, at the
cost of modifying the $o(1)$ term, which yields the desired asymptotic
expression for the blanket divergence.

\qedhere

Next, we state the proof for the (Bound) case in the similar way as above.

\begin{corollary}[Discrete bounded case]
\label{cor:blanket-asymptotics-discrete-Disc}
Under the assumptions of Lemma~\ref{lem:blanket-asymptotics-general},
the same conclusion holds if Assumption~\ref{assump:regularity} is
replaced by its bounded version~\textup{(Bound)}.
\end{corollary}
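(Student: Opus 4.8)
The plan is to run the same three-step scheme as in the (Cont) case, but working from the representation \eqref{eq:blanket-transform-ii} together with the Slastnikov-type moderate deviation theorem (Theorem~\ref{thm:slastnikov-triangular}) in place of the Edgeworth expansion; crucially, in the bounded case no truncation is needed, so the argument is shorter. First I would set $Z_{i,n} := B_i\,l_{\varepsilon_n}(Y_i;x_1,x_1',\mathcal R_{\mathrm{ref}})$, $\mu_n := \mathbb E[Z_{1,n}] = \gamma(1-\mathrm e^{\varepsilon_n})$, $\sigma_{\varepsilon_n}^2 := \mathrm{Var}(Z_{1,n})$, and $W_{i,n} := Z_{i,n}-\mu_n$, and verify the hypotheses of Theorem~\ref{thm:slastnikov-triangular} for the row-wise i.i.d.\ array $(W_{i,n})$. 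The bounded version (Bound) of Assumption~\ref{assump:regularity} gives $|l_{\varepsilon_n}(Y)|\le C$ almost surely uniformly in $n$, hence $|W_{i,n}|\le 2C$; the mean is zero; and $\sigma_{\varepsilon_n}^2$ admits the same explicit representation $\sigma_{\varepsilon_n}^2 = \gamma(a_1 + a_2\mathrm e^{2\varepsilon_n} - 2a_3\mathrm e^{\varepsilon_n}) - \gamma^2(1-\mathrm e^{\varepsilon_n})^2$ used in the continuous case, with $a_1,a_2,a_3$ uniformly bounded, so it is Lipschitz in $\varepsilon_n$ near $0$ and $\sigma_{\varepsilon_n}^2 = \gamma\sigma^2 + O(\varepsilon_n)$ with $\gamma\sigma^2 > 0$ by the non-degenerate-variance assumption; in particular $\sigma_{\varepsilon_n}^2$ is bounded away from $0$ and $\infty$ for all large $n$.

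Next I would invoke Theorem~\ref{thm:slastnikov-triangular}, which yields, for every fixed $c>0$, $\sup_{0\le x\le c\sqrt{\log n}}\bigl|(1-F_n(x))/(1-\Phi(x))-1\bigr|\to 0$, where $F_n$ is the CDF of $T_n := (S_n-n\mu_n)/(\sigma_{\varepsilon_n}\sqrt n)$, $S_n := \sum_{i=1}^n Z_{i,n}$; this is exactly the $F_n$ appearing in \eqref{eq:blanket-transform-ii}. Since $-\mu_n = \gamma(\mathrm e^{\varepsilon_n}-1) = \Theta(\varepsilon_n)$ and $\sigma_{\varepsilon_n}$ is bounded above and below, the threshold $t_n := -\mu_n\sqrt n/\sigma_{\varepsilon_n}$ satisfies $t_n = \Theta(\varepsilon_n\sqrt n)$, and the regime $\varepsilon_n = \omega(n^{-1/2})$, $\varepsilon_n = O(\sqrt{\log n/n})$ forces $t_n\to\infty$ and $t_n\le c_0\sqrt{\log n}$ for a fixed constant $c_0$ and all large $n$. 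I would then fix $c$ with $c^2 > c_0^2 + 2$ and, using $\mathcal D^{\mathrm{blanket}} = \frac{\sigma_{\varepsilon_n}}{\sqrt n\,\gamma}\int_{t_n}^\infty(1-F_n(x))\,dx$, split the integral at $c\sqrt{\log n}$: on $[t_n,c\sqrt{\log n}]$ the Slastnikov estimate (whose error is uniform in $x$) lets me replace $1-F_n$ by $(1-\Phi)(1+o(1))$, while on $[c\sqrt{\log n},\infty)$ Bernstein's inequality for the bounded variables $W_{i,n}$ gives $1-F_n(x)\le\exp\bigl(-\tfrac{x^2}{2}(1-o(1))\bigr)$ over the (bounded) support, so that the tail integral is $O(n^{1/2-c^2/2(1-o(1))})$, which by the choice of $c$ is negligible compared with the main order $\varphi(t_n)/t_n^2$ (bounded below by $n^{-c_0^2/2}/(\sqrt{2\pi}\,t_n^2)$). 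Combining with the Mills-ratio identity $\int_{t_n}^\infty(1-\Phi(x))\,dx = \varphi(t_n) - t_n(1-\Phi(t_n)) = \varphi(t_n)/t_n^2\,(1+o(1))$ gives $\mathcal D^{\mathrm{blanket}} = \frac{\sigma_{\varepsilon_n}}{\sqrt n\,\gamma}\,\frac{\varphi(t_n)}{t_n^2}\,(1+o(1))$.

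Finally I would recast this as \eqref{eq:blanket-asymptotics-general}. Substituting $t_n = -\mu_n\sqrt n/\sigma_{\varepsilon_n}$ and $-\mu_n = \gamma(\mathrm e^{\varepsilon_n}-1)$ gives $\frac{\sigma_{\varepsilon_n}}{\sqrt n\,\gamma\,t_n^2} = \frac{\sigma_{\varepsilon_n}^3}{\gamma^3(\mathrm e^{\varepsilon_n}-1)^2 n^{3/2}}$, and $\sigma_{\varepsilon_n}^3/\gamma^3 \to (\gamma\sigma^2)^{3/2}/\gamma^3 = \sigma^3/\gamma^{3/2} = 1/\chi^3$ with $\chi = \sqrt\gamma/\sigma$. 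For the Gaussian factor I would set $\bar t_n := \chi(\mathrm e^{\varepsilon_n}-1)\sqrt n = -\mu_n\sqrt n/(\sqrt\gamma\,\sigma)$ and use $\sigma_{\varepsilon_n} = \sqrt\gamma\,\sigma + O(\varepsilon_n)$ to get $t_n - \bar t_n = \bar t_n\,O(\varepsilon_n) = O(\varepsilon_n^2\sqrt n)$ and $t_n + \bar t_n = O(\varepsilon_n\sqrt n)$, hence $t_n^2 - \bar t_n^2 = O(\varepsilon_n^3 n) = o(1)$ in the regime $\varepsilon_n = O(\sqrt{\log n/n})$; therefore $\varphi(t_n) = \varphi(\bar t_n)(1+o(1)) = \varphi(\chi(\mathrm e^{\varepsilon_n}-1)\sqrt n)(1+o(1))$, which gives the first equality in \eqref{eq:blanket-asymptotics-general}. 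The second equality follows from $\mathrm e^{\varepsilon_n}-1 = \varepsilon_n(1+O(\varepsilon_n))$ by the same type of estimate (applied both inside $\varphi$ and to $1/(\mathrm e^{\varepsilon_n}-1)^2$). As in the (Cont) case, the $o(1)$ terms depend only on the constants in Assumption~\ref{assump:regularity} and are uniform over the pair $(x_1,x_1')$ and over $\mathcal R_{\mathrm{ref}}$.

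The step I expect to be the main obstacle is the control of the tail integral beyond the Slastnikov moderate-deviation window: because the main term is only polynomially small in $n$ (of order $n^{-c_0^2/2}$ up to logarithmic factors), the splitting level $c\sqrt{\log n}$ and the Bernstein exponent have to be chosen carefully---quantitatively, with $c$ large enough relative to the fixed constant $c_0$ governing $t_n = O(\sqrt{\log n})$---so that the tail is genuinely of smaller order than the main term; everything else is routine and parallels the continuous-case argument, only simpler, since no truncation step and no Edgeworth-polynomial bookkeeping are required.
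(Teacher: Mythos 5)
Your proposal is correct and follows essentially the same route as the paper's own proof: it uses the integral representation \eqref{eq:blanket-transform-ii}, Slastnikov's triangular-array moderate deviation theorem in place of the Edgeworth expansion, a split of the tail integral at a $\Theta(\sqrt{\log n})$ level with an exponential bound beyond it, the Mills-ratio expansion $\int_{t_n}^\infty(1-\Phi(x))\,dx=\varphi(t_n)/t_n^2\,(1+o(1))$, and the final replacement $\sigma_{\varepsilon_n}=\sqrt{\gamma}\,\sigma+O(\varepsilon_n)$ with the reparametrization $\varphi(t_n)=\varphi(\chi(\mathrm e^{\varepsilon_n}-1)\sqrt n)(1+o(1))$. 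The only minor difference is quantitative: the paper bounds the far tail by a generic Chernoff estimate $\Pr[T_n>x]\le \mathrm e^{-c_1x^2}$ and then chooses the splitting constant large enough relative to $c_1$ and the polynomial order of the main term, whereas your stated exponent $\tfrac{x^2}{2}(1-o(1))$ from Bernstein is uniform only for $x=o(\sqrt n)$ (so your specific choice $c^2>c_0^2+2$ needs either an extra split of the range or a constant chosen relative to $c_1$) --- a point you already flag and which is routine to repair.
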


\begin{proof}
We use the integral representation (i.e., (ii) of Lemma~\ref{lem:blanket-transform})\footnote{We deliberately work with the integral representation (ii) in
Lemma~\ref{lem:blanket-transform} rather than the linear form (i).
In the discrete bounded case, starting from (i) would lead to terms of
the form $\mathbb{E}[Z_1(1-F_{n-1}(u))\rho_n (u)]$, and controlling the
contribution of the Slastnikov remainder $\rho_n (u)$ relative to the main term
would require an explicit convergence rate for the moderate deviation
error, which is not available in the version we use.  By contrast, the
integral representation keeps the factor $(1+\rho_n(x))$ inside the
same tail integral and allows us to obtain a relative $o(1)$ error
without an explicit rate.  Conversely, in the continuous case we avoid
representation (ii), because our Edgeworth-based proof works under
moment assumptions only and does not assume the existence of
exponential moments, which are convenient for the Chernoff-type tail
bounds needed in the discrete bounded analysis.} together with Bernoulli selectors.
For each $n$ let $Y_1,\dots,Y_n$ be i.i.d.\ with distribution
$\mathcal R_{\mathrm{ref}}$, let $B_1,\dots,B_n$ be i.i.d.\
$\mathrm{Bernoulli}(\gamma)$ independent of $(Y_i)_{i\ge1}$.
That is,
  \begin{equation}
    \label{eq:D-blanket-Tn}
    \begin{aligned}
    \mathcal D^{\mathrm{blanket}}_{e^{{\varepsilon_n}},n,\mathcal R_{\mathrm{ref}},\gamma}(
      \mathcal{R}_{x_1} \| \mathcal{R}_{x_1^\prime})
    = \frac{\sigma_{\varepsilon_n}}{\sqrt{n}\gamma}\,
       \int_{t_n}^\infty \bigl(1-F_n(x)\bigr)\,dx.
    \end{aligned}
  \end{equation}
By assumption (Bound) and the definition of $Z_i$ there exists $K<\infty$
such that $|Z_i|\le K$ almost surely, uniformly in $n$ and in the choice
of $(x_1,x_1',\mathcal R_{\mathrm{ref}})$.

Recall
\[
  \mu_n := \mathbb E[Z_1]
        = \gamma\,\mathbb E[X_1]
        = \gamma(1-{\rm e}^{\varepsilon_n}),
  \qquad
  \sigma_{\varepsilon_n}^2 := \mathrm{Var}(Z_1),
\]
and 
\[
  T_n
  := \frac{S_n-n\mu_n}{\sigma_{\varepsilon_n}\sqrt{n}},
  \qquad
  F_n(x) := \Pr\left[T_n\le x\right]
\]

\medskip\noindent
\textbf{Step 1: Moderate deviations for $T_n$.}
The array $(Z_i)_{1\le i\le n,\,n\ge1}$ is row-wise i.i.d., uniformly
bounded, and satisfies
\(
  \mathbb E[Z_1]=\mu_n,\;
  \mathrm{Var}(Z_1)=\sigma_{\varepsilon_n}^2\ge C>0
\)
for all $n$, where $C$ depends only on the constants in
Assumption~\ref{assump:regularity}. Hence Slastnikov's moderate
deviation theorem for triangular arrays
(Theorem~\ref{thm:slastnikov-triangular}) applies to the standardized
sums $T_n$.

In particular, for every fixed $c>0$ there exists a sequence
$\Delta_n\downarrow0$ such that
\begin{equation}
  \label{eq:slastnikov-Z}
  \sup_{0\le x\le c\sqrt{\log n}}
  \left|
    \frac{\Pr\left[T_n>x\right]}{1-\Phi(x)} - 1
  \right|
  \le \Delta_n,
  \qquad n\to\infty.
\end{equation}
Equivalently, there exists a function $\rho_n:[0,\infty)\to\mathbb R$
such that
\[
  \Pr\left[T_n>x\right] = (1-\Phi(x))\bigl(1+\rho_n(x)\bigr),
\]
and
\[
  \sup_{0\le x\le c\sqrt{\log n}}|\rho_n(x)|
  \le \Delta_n,\qquad \Delta_n\xrightarrow[n\to\infty]{}0.
\]

Under the scaling assumptions
\(
  \varepsilon_n=\omega(n^{-1/2}),
  \varepsilon_n=O(\sqrt{(\log n)/n})
\),
we have $\mu_n = \gamma(1-{\rm e}^{\varepsilon_n}) = O(\varepsilon_n)$
and $\sigma_{\varepsilon_n}$ bounded away from $0$ and $\infty$,
so
\[
  t_n
  = -\frac{\mu_n\sqrt{n}}{\sigma_{\varepsilon_n}}
  = \Theta\bigl(\varepsilon_n\sqrt{n}\bigr)
  = O\bigl(\sqrt{\log n}\bigr),
\]
and in particular $t_n\to\infty$ and, for suitable $c>0$ and all large
$n$,
\[
  t_n \le c\sqrt{\log n}.
\]

\medskip\noindent
\textbf{Step 2: Extraction of the main term.}
Substituting the representation of $\Pr\left[T_n>x\right]$ into
\eqref{eq:D-blanket-Tn} yields
\[
  \mathcal D^{\mathrm{blanket}}
  = \frac{\sigma_{\varepsilon_n}}{\gamma\sqrt{n}}
    \int_{t_n}^\infty (1-\Phi(x))\bigl(1+\rho_n(x)\bigr)\,dx.
\]
Define
\[
  \mathrm{Main}_n
  := \int_{t_n}^\infty (1-\Phi(x))\,dx,
  \qquad
  \mathrm{Err}_n
  := \int_{t_n}^\infty (1-\Phi(x))\rho_n(x)\,dx,
\]
so that
\begin{equation}
  \label{eq:D-main+err}
  \mathcal D^{\mathrm{blanket}}
  = \frac{\sigma_{\varepsilon_n}}{\gamma\sqrt{n}}\,
    \bigl(\mathrm{Main}_n + \mathrm{Err}_n\bigr).
\end{equation}

Fix $B>1$ and split
\[
  \mathrm{Err}_n
  = \int_{t_n}^{B\sqrt{\log n}} (1-\Phi(x))\rho_n(x)\,dx
    + \int_{B\sqrt{\log n}}^\infty (1-\Phi(x))\rho_n(x)\,dx
  =: I_1 + I_2.
\]

For $I_1$, note that for all large $n$ we have
$t_n\le B\sqrt{\log n}\le c\sqrt{\log n}$, so by
\eqref{eq:slastnikov-Z},
\[
  \sup_{x\in[t_n,B\sqrt{\log n}]}|\rho_n(x)|
  \le \Delta_n.
\]
Hence
\[
  |I_1|
  \le \Delta_n \int_{t_n}^{B\sqrt{\log n}} (1-\Phi(x))\,dx
  \le \Delta_n \int_{t_n}^\infty (1-\Phi(x))\,dx
  = \Delta_n\,\mathrm{Main}_n.
\]

For $I_2$ we use crude exponential bounds on the tails.
Since $Z_i$ are bounded, a Chernoff bound yields the existence of
$c_1>0$ such that
\[
  \Pr\left[T_n>x\right] \le {\rm e}^{-c_1 x^2},
  \qquad x\ge0.
\]
Moreover, by Mills' bound there exists $C_1>0$ such that
\[
  1-\Phi(x) \le C_1 {\rm e}^{-x^2/2},
  \qquad x\ge0.
\]
From the definition of $\rho_n$ we have
\(
  (1-\Phi(x))\rho_n(x)
  = \Pr\left[T_n>x\right] - (1-\Phi(x))
\),
and thus
\[
  |(1-\Phi(x))\rho_n(x)|
  \le \Pr\left[T_n>x\right] + (1-\Phi(x))
  \le {\rm e}^{-c_1 x^2} + C_1 {\rm e}^{-x^2/2}.
\]
Therefore
\begin{align*}
  |I_2|
  &\le \int_{B\sqrt{\log n}}^\infty
          \bigl({\rm e}^{-c_1 x^2} + C_1{\rm e}^{-x^2/2}\bigr)\,dx\\
  &\le C_2\, n^{-K(B)},
\end{align*}
for some finite constant $C_2<\infty$ and
$K(B):=\min\{c_1 B^2,\tfrac12 B^2\}>0$.

On the other hand, by the standard asymptotic expansion of the Mills
integral,
\begin{equation}
  \label{eq:Mills-integral}
  \mathrm{Main}_n
  = \int_{t_n}^\infty (1-\Phi(x))\,dx
  = \frac{\varphi(t_n)}{t_n^2}\bigl(1+O(t_n^{-2})\bigr),
  \qquad t_n\to\infty.
\end{equation}
Since $t_n^2 = \Theta(\log n)$, we have
\(
  \mathrm{Main}_n
  = \Theta(\varphi(t_n)/t_n^2)
\),
which decays only polynomially in $n$. Thus, by choosing $B>0$ large
enough, we can ensure that $n^{-K(B)} = o(\mathrm{Main}_n)$, and hence
$|I_2| = o(\mathrm{Main}_n)$.

Combining the bounds for $I_1$ and $I_2$, we obtain
\[
  \mathrm{Err}_n
  = \mathrm{Main}_n\cdot o(1),
\]
and therefore, by \eqref{eq:Mills-integral},
\[
  \mathrm{Main}_n + \mathrm{Err}_n
  = \frac{\varphi(t_n)}{t_n^2}\bigl(1+o(1)\bigr).
\]
Substituting back into \eqref{eq:D-main+err} gives
\begin{equation}
  \label{eq:D-blanket-pre-final}
  \mathcal D^{\mathrm{blanket}}
  = \frac{\sigma_{\varepsilon_n}}{\gamma\sqrt{n}}\,
    \frac{\varphi(t_n)}{t_n^2}\,
    \bigl(1+o(1)\bigr),
  \qquad
  t_n := -\frac{\mu_n\sqrt{n}}{\sigma_{\varepsilon_n}}.
\end{equation}
Using $t_n^2 = \mu_n^2 n/\sigma_{\varepsilon_n}^2$, we can rewrite this
as
\[
  \mathcal D^{\mathrm{blanket}}
  = \varphi(t_n)\,
    \frac{\sigma_{\varepsilon_n}^3}
         {\gamma\,\mu_n^2 n^{3/2}}\,
    \bigl(1+o(1)\bigr).
\]

\medskip\noindent
\textbf{Step 3: Final replacement of $\sigma_{\varepsilon_n}$ by $\sigma$.}
It remains to express the right-hand side in terms of the variance
$\sigma^2 = \mathrm{Var}(l_0(Y))$ and the parameter
$\chi=\sqrt{\gamma}/\sigma$. As in the ``final replacement'' paragraph
for the Bernoulli–weighted variables $Z_\varepsilon=B\,l_\varepsilon(Y)$,
one checks that:

\begin{itemize}
  \item the map $\varepsilon\mapsto\sigma_\varepsilon^2
          :=\mathrm{Var}(Z_\varepsilon)$ is $C^1$ and Lipschitz on
        $[-\varepsilon_0,\varepsilon_0]$, so
        $|\sigma_{\varepsilon_n}-\sigma_0(Z)|\le C|\varepsilon_n|$
        with $\sigma_0(Z)^2=\gamma\sigma^2$;
  \item consequently
        \(
          \sigma_{\varepsilon_n}
          = \sigma_0(Z)\bigl(1+o(1)\bigr)
          = \sqrt{\gamma}\,\sigma\bigl(1+o(1)\bigr)
        \)
        and
        \(
          \sigma_{\varepsilon_n}^3
          = \gamma^{3/2}\sigma^3\bigl(1+o(1)\bigr);
        \)
  \item with
        \(
          t
          := -\chi\bigl({\rm e}^{\varepsilon_n}-1\bigr)\sqrt{n}
        \)
        and
        $\mu_n=\gamma(1-{\rm e}^{\varepsilon_n})$, we have
        $t_n-t=o(1)$ and hence
        $\varphi(t_n) = \varphi(t)\bigl(1+o(1)\bigr)$.
\end{itemize}

Substituting these into \eqref{eq:D-blanket-pre-final} yields
\[
  \mathcal D^{\mathrm{blanket}}
  =
  \varphi\bigl(\chi({\rm e}^{\varepsilon_n}-1)\sqrt{n}\bigr)\,
  \left(
    \frac{1}{\chi^3}\,
    \frac{1}{\bigl({\rm e}^{\varepsilon_n}-1\bigr)^2 n^{3/2}}
  \right)\bigl(1+o(1)\bigr),
\]
and this is exactly \eqref{eq:blanket-asymptotics-general}.
The $o(1)$ term depends only on the constants in
Assumption~\ref{assump:regularity}.
\end{proof}

\end{proof}

\subsection{The proof of Theorem~\ref{thm:moderate-deviation-shuffle}}
\label{app:proof-moderate-deviations-shuffle}

\shufflingdeltaband*

\begin{proof}[Proof of Theorem~\ref{thm:moderate-deviation-shuffle}]
For each $n\ge1$ and $\varepsilon\ge0$, define the global privacy profile
\[
  \delta_n(\varepsilon)
  :=
  \sup_{x_{1:n}\simeq x'_{1:n}}
  \mathcal D_{e^\varepsilon}\bigl(
    \mathcal M(x_{1:n})
    \,\big\|\,
    \mathcal M(x'_{1:n})
  \bigr),
  \qquad
  \mathcal M = \mathcal S\circ\mathcal R^n.
\]
By definition of the hockey-stick divergence, the map
$\alpha\mapsto\mathcal D_\alpha(P\|Q)$ is nonincreasing, hence
$\varepsilon\mapsto\delta_n(\varepsilon)$ is nonincreasing for each fixed
$n$.

\medskip\noindent
\textbf{Step 1: A uniform asymptotic form for the blanket divergence.}
For each pair $x_1\neq x_1'\in\mathcal X$ and each reference distribution
$\mathcal R_{\mathrm{ref}}\in\{\mathcal R_{\mathrm{BG}}\}\cup\{\mathcal R_x:x\in\mathcal X\}$,
Lemma~\ref{lem:blanket-asymptotics-general} gives an asymptotic expansion
of the corresponding blanket divergence in the moderate-deviation regime.
It is convenient to isolate the leading term as
\[
  g_n(\varepsilon;\chi)
  :=
  \varphi\bigl(\chi(\mathrm e^\varepsilon-1)\sqrt{n}\bigr)\,
  \left(
    \frac{1}{\chi^3}\,
    \frac{1}{(\mathrm e^\varepsilon-1)^2 n^{3/2}}
  \right).
\]

By Assumption~\ref{assump:regularity}, the constants in the Edgeworth and
moderate-deviation bounds can be chosen uniformly over all pairs
$(x_1,x_1')$ and all admissible $\mathcal R_{\mathrm{ref}}$. Hence there
exists a sequence $\rho_n\downarrow0$ such that, for every
choice of $(x_1,x_1',\mathcal R_{\mathrm{ref}})$ and every sequence
$(\varepsilon_n)$ with
\[
  \varepsilon_n\to0,\qquad
  \varepsilon_n=\omega(n^{-1/2}),\qquad
  \varepsilon_n=O\!\Bigl(\sqrt{\tfrac{\log n}{n}}\Bigr),
\]
we have
\begin{equation}
  \label{eq:unif-blanket-asymptotic}
  \mathcal D^{\mathrm{blanket}}_{e^{\varepsilon_n},n,\mathcal R_{\mathrm{ref}},\gamma}
  \;=\;
  g_n\bigl(\varepsilon_n;\chi(x_1,x_1';\mathcal R_{\mathrm{ref}})\bigr)\,
  \bigl(1+\theta_n(x_1,x_1',\mathcal R_{\mathrm{ref}})\bigr),
  \qquad
  \bigl|\theta_n(x_1,x_1',\mathcal R_{\mathrm{ref}})\bigr|\le\rho_n,
\end{equation}
where $\chi(x_1,x_1';\mathcal R_{\mathrm{ref}})$ is the shuffle index
associated with $(x_1,x_1')$ and $\mathcal R_{\mathrm{ref}}$ as in
Lemma~\ref{lem:blanket-asymptotics-general}.

A direct computation shows that for each fixed $n$ and $\varepsilon>0$ the
map $\chi\mapsto g_n(\varepsilon;\chi)$ is strictly decreasing on
$(0,\infty)$. Indeed, if we set $u:=\chi(\mathrm e^\varepsilon-1)\sqrt{n}$,
then
\[
  g_n(\varepsilon;\chi)
  =
  \frac{1}{\sqrt{2\pi}}
  \exp\!\Bigl(-\frac{u^2}{2}\Bigr)
  \cdot
  \frac{1}{\chi u^2\sqrt{n}},
\]
and the right-hand side is strictly decreasing in $\chi$.

\medskip\noindent
\textbf{Step 2: Solving the leading equation via the Lambert $W$-function.}
Fix $\alpha>0$ and $\chi>0$. We now solve the equation
$g_n(\varepsilon;\chi)=\alpha/n$ explicitly. As above, write
$u:=\chi(\mathrm e^\varepsilon-1)\sqrt{n}>0$. Then
\[
  g_n(\varepsilon;\chi)
  =
  \frac{1}{\sqrt{2\pi}}
  \exp\!\Bigl(-\frac{u^2}{2}\Bigr)
  \cdot
  \frac{1}{\chi u^2\sqrt{n}}.
\]
The equation $g_n(\varepsilon;\chi)=\alpha/n$ is thus equivalent to
\[
  \frac{1}{\sqrt{2\pi}}
  \exp\!\Bigl(-\frac{u^2}{2}\Bigr)
  \frac{1}{\chi u^2\sqrt{n}}
  = \frac{\alpha}{n},
\]
which can be rearranged as
\[
  \exp\!\Bigl(\frac{u^2}{2}\Bigr)
  =
  \frac{\sqrt{n}}{\alpha\sqrt{2\pi}\,\chi u^2}.
\]
Multiplying both sides by $u^2/2$ yields
\[
  \frac{u^2}{2}\,
  \exp\!\Bigl(\frac{u^2}{2}\Bigr)
  =
  \frac{\sqrt{n}}{2\alpha\chi\sqrt{2\pi}}.
\]
Setting $z:=u^2/2$ we obtain $z\,e^z=\frac{\sqrt{n}}{2\alpha\chi\sqrt{2\pi}}$,
hence
\[
  z = W\!\left(
    \frac{\sqrt{n}}{2\alpha\chi\sqrt{2\pi}}
  \right),
  \qquad
  u^2
  =
  2\,W\!\left(
    \frac{\sqrt{n}}{2\alpha\chi\sqrt{2\pi}}
  \right),
\]
where $W(\cdot)$ is the principal branch of the Lambert $W$-function. It
follows that
\[
  \mathrm e^\varepsilon-1
  = \frac{u}{\chi\sqrt{n}}
  =
  \sqrt{
    \frac{2}{\chi^2 n}\,
    W\!\left(
      \frac{\sqrt{n}}{2\alpha\chi\sqrt{2\pi}}
    \right)
  },
\]
and hence
\[
  \varepsilon_n(\alpha,\chi)
  :=
  \log\left(
    1+
    \sqrt{
      \frac{2}{\chi^2 n}\,
      W\!\left(
        \frac{\sqrt{n}}{2\alpha\chi\sqrt{2\pi}}
      \right)
    }
  \right)
\]
is the unique solution of $g_n(\varepsilon;\chi)=\alpha/n$ with
$\varepsilon>0$.

From the standard asymptotics $W(z)=\log z-\log\log z+O(1)$ as $z\to\infty$ we
see that
\[
  W\!\left(
    \frac{\sqrt{n}}{2\alpha\chi\sqrt{2\pi}}
  \right)
  = \Theta(\log n),
\]
and therefore
\[
  \mathrm e^{\varepsilon_n(\alpha,\chi)}-1
  =
  \Theta\left(\sqrt{\frac{\log n}{n}}\right),
  \qquad
  \varepsilon_n(\alpha,\chi)
  =
  \Theta\left(\sqrt{\frac{\log n}{n}}\right),
\]
which implies $\varepsilon_n(\alpha,\chi)\to0$,
$\varepsilon_n(\alpha,\chi)=\omega(n^{-1/2})$ and
$\varepsilon_n(\alpha,\chi)=O(\sqrt{\log n/n})$. Thus $\varepsilon_n(\alpha,\chi)$
lies in the moderate-deviation regime where
Lemma~\ref{lem:blanket-asymptotics-general} is applicable. Moreover, from
the explicit expression it is clear that $\chi\mapsto\varepsilon_n(\alpha,\chi)$
is nonincreasing.

\medskip\noindent
\textbf{Step 3: An upper bound via the blanket divergence.}
We first derive an asymptotic upper bound on $\delta_n(\varepsilon)$.
By Lemma~\ref{lem:upper-blanket-divergence}, for any neighboring
datasets $x_{1:n}\simeq x'_{1:n}$ differing in the first coordinate we
have
\[
  \mathcal D_{e^\varepsilon}\bigl(
    \mathcal M(x_{1:n})
    \,\big\|\,
    \mathcal M(x'_{1:n})
  \bigr)
  \;\le\;
  \mathcal D^{\mathrm{blanket}}_{e^\varepsilon,n,\mathcal R_{\mathrm{BG}},\gamma}
    (\mathcal R_{x_1}\,\|\,\mathcal R_{x_1'}).
\]
Taking the supremum over $x_{1:n}\simeq x'_{1:n}$ yields
\[
  \delta_n(\varepsilon)
  \le
  \sup_{x_1\neq x_1'}
  \mathcal D^{\mathrm{blanket}}_{e^\varepsilon,n,\mathcal R_{\mathrm{BG}},\gamma}
    (\mathcal R_{x_1}\,\|\,\mathcal R_{x_1'}).
\]

For each pair $(x_1,x_1')$ define
\[
  \chi_{\mathrm{lo}}(x_1,x_1')
  :=
  \sqrt{
    \frac{\gamma}{
      \mathrm{Var}_{\mathcal R_{\mathrm{BG}}}
      \bigl(
        l_0(Y;x_1,x_1',\mathcal R_{\mathrm{BG}})
      \bigr)
    }
  },
\]
and let
\[
  \underline\chi_{\mathrm{lo}}
  :=
  \inf_{x_1\neq x_1'} \chi_{\mathrm{lo}}(x_1,x_1').
\]
Assumption~\ref{assump:regularity} implies that the variance in the
denominator is finite and bounded away from zero uniformly over all
pairs, so that $0<\underline\chi_{\mathrm{lo}}<\infty$.

Applying the uniform asymptotic
\eqref{eq:unif-blanket-asymptotic} with $\mathcal R_{\mathrm{ref}}=\mathcal R_{\mathrm{BG}}$
and $\varepsilon_n=\varepsilon_n(\alpha,\underline\chi_{\mathrm{lo}})$
(which lies in the moderate-deviation regime by Step~2) and using the
monotonicity of $g_n$ in $\chi$ yields
\[
  \begin{aligned}
  \mathcal D^{\mathrm{blanket}}_{e^{\varepsilon_n},n,\mathcal R_{\mathrm{BG}},\gamma}
    (\mathcal R_{x_1}\,\|\,\mathcal R_{x_1'})
  &=
  g_n\bigl(\varepsilon_n(\alpha,\underline\chi_{\mathrm{lo}});
           \chi_{\mathrm{lo}}(x_1,x_1')\bigr)\,
  \bigl(1+\theta_n(x_1,x_1',\mathcal R_{\mathrm{BG}})\bigr) \\
  &\le
  (1+\rho_n)\,
  g_n\bigl(\varepsilon_n(\alpha,\underline\chi_{\mathrm{lo}});
           \underline\chi_{\mathrm{lo}}\bigr),
  \end{aligned}
\]
uniformly over all $x_1\neq x_1'$. By construction of
$\varepsilon_n(\alpha,\underline\chi_{\mathrm{lo}})$ in Step~2 we have
\[
  g_n\bigl(\varepsilon_n(\alpha,\underline\chi_{\mathrm{lo}});
           \underline\chi_{\mathrm{lo}}\bigr)
  = \frac{\alpha}{n},
\]
hence
\[
  \sup_{x_1\neq x_1'}
  \mathcal D^{\mathrm{blanket}}_{e^{\varepsilon_n},n,\mathcal R_{\mathrm{BG}},\gamma}
    (\mathcal R_{x_1}\,\|\,\mathcal R_{x_1'})
  \le
  \frac{\alpha}{n}\,(1+\rho_n).
\]
Combining this with the upper bound on $\delta_n$ gives
\[
  \delta_n\bigl(\varepsilon_n(\alpha,\underline\chi_{\mathrm{lo}})\bigr)
  \le
  \frac{\alpha}{n}\,(1+\rho_n)
  =
  \frac{\alpha}{n}\,(1+o(1)).
\]
Therefore the mechanism $\mathcal M$ satisfies
\[
  \bigl(\varepsilon_n(\alpha,\underline\chi_{\mathrm{lo}}),
        \tfrac{\alpha}{n}(1+o(1))\bigr)\text{-DP}.
\]

\medskip\noindent
\textbf{Step 4: A lower bound via Su et al.'s blanket representation.}
We now show how the lower bound. By Theorem~\ref{theorem:lower-blanket-divergence}, for any
$x,a_1,a_1'\in\mathcal X$ and any $\varepsilon\ge0$,
\[
  \delta_n(\varepsilon)
  \ge
  \mathcal D^{\mathrm{blanket}}_{e^\varepsilon,n,\mathcal R_x,1}
    (\mathcal R_{a_1}\,\|\,\mathcal R_{a_1'}).
\]

For each pair $(x_1,x_1')$ define
\[
  \chi_{\mathrm{up}}(x_1,x_1')
  :=
  \inf_{x\in\mathcal X}
  \sqrt{
    \frac{1}{\mathrm{Var}_{\mathcal R_x}
      \bigl(
        l_0(Y;x_1,x_1',\mathcal R_x)
      \bigr)
    }
  },
\]
and set
\[
  \underline\chi_{\mathrm{up}}
  :=
  \inf_{x_1\neq x_1'} \chi_{\mathrm{up}}(x_1,x_1').
\]
Again, Assumption~\ref{assump:regularity} guarantees
$0<\underline\chi_{\mathrm{up}}<\infty$.

Fix $\eta>0$. By definition of the infimum, there exist
$x^{(\eta)}\in\mathcal X$ and a pair
$(a_1^{(\eta)},a_1^{(\eta)}{}')$ with $a_1^{(\eta)}\neq a_1^{(\eta)}{}'$ such
that
\[
  \chi\bigl(a_1^{(\eta)},a_1^{(\eta)}{}';\mathcal R_{x^{(\eta)}}\bigr)
  \le
  \underline\chi_{\mathrm{up}}+\eta,
\]
where the left-hand side is the shuffle index associated with the
reference distribution $\mathcal R_{x^{(\eta)}}$.
Applying the asymptotic
\eqref{eq:unif-blanket-asymptotic} with
$\mathcal R_{\mathrm{ref}}=\mathcal R_{x^{(\eta)}}$ and
$\varepsilon_n=\varepsilon_n(\alpha,\underline\chi_{\mathrm{up}})$, and
using that $g_n(\varepsilon;\chi)$ is decreasing in $\chi$, we obtain
\[
  \begin{aligned}
  \mathcal D^{\mathrm{blanket}}_{e^{\varepsilon_n},n,\mathcal R_{x^{(\eta)}},1}
    \bigl(
      \mathcal R_{a_1^{(\eta)}}
      \,\big\|\,
      \mathcal R_{a_1^{(\eta)}{}'}
    \bigr)
  &=
  g_n\bigl(\varepsilon_n(\alpha,\underline\chi_{\mathrm{up}});
           \chi(a_1^{(\eta)},a_1^{(\eta)}{}';\mathcal R_{x^{(\eta)}})\bigr)\,
  \bigl(1+\theta_n(x^{(\eta)},a_1^{(\eta)},a_1^{(\eta)}{}')\bigr) \\
  &\ge
  (1-\rho_n)\,
  g_n\bigl(\varepsilon_n(\alpha,\underline\chi_{\mathrm{up}});
           \underline\chi_{\mathrm{up}}+\eta\bigr),
  \end{aligned}
\]
for all sufficiently large $n$.

By construction of $\varepsilon_n(\alpha,\underline\chi_{\mathrm{up}})$ we
have
\[
  g_n\bigl(\varepsilon_n(\alpha,\underline\chi_{\mathrm{up}});
           \underline\chi_{\mathrm{up}}\bigr)
  =
  \frac{\alpha}{n}.
\]
The continuity of $g_n(\varepsilon;\chi)$ in $\chi$ and the fact that
$g_n$ is smooth in $\chi$ on $(0,\infty)$ imply that, for each fixed $n$,
\[
  g_n\bigl(\varepsilon_n(\alpha,\underline\chi_{\mathrm{up}});
           \underline\chi_{\mathrm{up}}+\eta\bigr)
  =
  \frac{\alpha}{n}\bigl(1+\kappa_n(\eta)\bigr),
\]
where $\kappa_n(\eta)\to0$ as $\eta\downarrow0$, uniformly in $n$ on
compact subsets of $(0,\infty)$ for $\underline\chi_{\mathrm{up}}$. Hence,
for any fixed $\eta>0$,
\[
  \mathcal D^{\mathrm{blanket}}_{e^{\varepsilon_n},n,\mathcal R_{x^{(\eta)}},1}
    \bigl(
      \mathcal R_{a_1^{(\eta)}}
      \,\big\|\,
      \mathcal R_{a_1^{(\eta)}{}'}
    \bigr)
  \ge
  \frac{\alpha}{n}
  \bigl(1-\rho_n\bigr)\bigl(1+\kappa_n(\eta)\bigr).
\]
Combining this with Theorem~\ref{theorem:lower-blanket-divergence} gives
\[
  \delta_n\bigl(\varepsilon_n(\alpha,\underline\chi_{\mathrm{up}})\bigr)
  \;\ge\;
  \frac{\alpha}{n}
  \bigl(1-\rho_n\bigr)\bigl(1+\kappa_n(\eta)\bigr).
\]
Letting $n\to\infty$ and then $\eta\downarrow0$ shows that there exists a
sequence $\xi_n\to0$ such that
\[
  \delta_n\bigl(\varepsilon_n(\alpha,\underline\chi_{\mathrm{up}})\bigr)
  \ge
  \frac{\alpha}{n}\,(1-\xi_n)
  =
  \frac{\alpha}{n}\,(1+o(1)).
\]

\medskip\noindent
\textbf{Step 5: Existence of a true optimal privacy curve.}
Fix $\alpha>0$. We now show that, for all sufficiently large $n$, there
exists an optimal privacy parameter $\varepsilon_n^{\ast}$
satisfying
\[
  \delta_n\bigl(\varepsilon_n^{\ast}\bigr)
  = \frac{\alpha}{n}.
\]

The hockey-stick divergence is continuous, nonincreasing in $\varepsilon$, and converges to $0$ as $\varepsilon\to\infty$ because $\mathcal{R}$ satisfies Assumption~\ref{assump:regularity}; the distributions are mutually absolutely continuous.
Thus, for each fixed $n$,
$
  \lim_{\varepsilon\to\infty}\delta_n(\varepsilon) = 0.
$

Fix $\alpha>0$. Since $\alpha/n\to0$ as $n\to\infty$, for all
sufficiently large $n$ we have
\[
  \delta_n(0) > \frac{\alpha}{n}
  \qquad\text{and}\qquad
  \lim_{\varepsilon\to\infty}\delta_n(\varepsilon)=0
  < \frac{\alpha}{n}.
\]
By monotonicity, the set
\[
  S_n(\alpha)
  :=
  \Bigl\{\varepsilon\ge0:
     \delta_n(\varepsilon)\le\frac{\alpha}{n}
  \Bigr\}
\]
is nonempty and contained in $[0,\infty)$. Define
\[
  \varepsilon_n^{\ast}(\alpha)
  :=
  \inf S_n(\alpha)\in[0,\infty).
\]
The right-continuity of $\delta_n$ implies
$\delta_n(\varepsilon_n^{\ast}(\alpha))\le\alpha/n$, while the
definition of the infimum and the strict inequality
$\delta_n(0)>\alpha/n$ imply
$\delta_n(\varepsilon_n^{\ast}(\alpha))\ge\alpha/n$. Hence, for
all sufficiently large $n$,
\[
  \delta_n\bigl(\varepsilon_n^{\ast}(\alpha)\bigr)
  = \frac{\alpha}{n}.
\]

Thus we have constructed an optimal privacy curve
$(\varepsilon_n^{\ast}(\alpha))_{n\ge1}$: for each large $n$ it
is the smallest privacy parameter for which the shuffled mechanism
$\mathcal M$ satisfies $(\varepsilon,\alpha/n)$-DP. Comparing this curve
with the explicit bounds obtained in Steps~3 and~4 shows that it lies
asymptotically within the band
\[
  \varepsilon_n(\alpha,\underline\chi_{\mathrm{up}})
  \;\le\;
  \varepsilon_n^{\ast}(\alpha)
  \;\le\;
  \varepsilon_n(\alpha,\underline\chi_{\mathrm{lo}})
  \quad\text{up to a vanishing relative error as }n\to\infty,
\]
which completes the argument.

\end{proof}

\subsection{The proof of Theorem~\ref{thm:global-band-collapse}}
\label{app:proof-shuffle-index-structure}

\globalbandcollapse*

\begin{proof}

In the purely discrete case, all integrals over $\mathcal Y$ should be
read as sums over $y\in\mathcal Y$.
Recall the notation from Section~\ref{sec:shuffle_index}.
For $x\in\mathcal X$ we write $\mathcal R_x(y)$ for the density (or pmf)
of $\mathcal R_x$ with respect to the reference measure on $\mathcal Y$.
We set
\[
  \underline{\mathcal R}(y)
  := \inf_{z\in\mathcal X} \mathcal R_z(y),\qquad
  \gamma := \int_{\mathcal Y} \underline{\mathcal R}(y)\,dy,
  \qquad
  \mathcal R_{\mathrm{BG}}(y)
  := \frac{1}{\gamma}\,\underline{\mathcal R}(y).
\]
Thus $\underline{\mathcal R}(y) = \gamma\,\mathcal R_{\mathrm{BG}}(y)$
for all $y\in\mathcal Y$.
For $x_1\neq x_1'\in\mathcal X$ we define the disagreement set
\[
  A(x_1,x_1')
  := \bigl\{ y\in\mathcal Y : \mathcal R_{x_1}(y)\neq \mathcal R_{x_1'}(y)\bigr\}.
\]

We first prove the following lemma.

\begin{lemma}[Pairwise ordering and equality condition]
\label{lem:pairwise-chi-structure}
Let $x_1\neq x_1'\in\mathcal X$ be fixed and consider the pairwise
shuffle indices
\[
  \chi_{\mathrm{lo}}(x_1,x_1')
  := \sqrt{
        \frac{\gamma}{
          \mathrm{Var}_{Y\sim\mathcal R_{\mathrm{BG}}}
          \bigl(l_0(Y;x_1,x_1',\mathcal R_{\mathrm{BG}})\bigr)
        }
      },
  \qquad
  \chi_{\mathrm{up}}(x_1,x_1')
  := \inf_{x\in\mathcal X}
      \sqrt{
        \frac{1}{
          \mathrm{Var}_{Y\sim\mathcal R_x}
          \bigl(l_0(Y;x_1,x_1',\mathcal R_x)\bigr)
        }
      }.
\]
Then
\[
  \chi_{\mathrm{up}}(x_1,x_1')
  \;\ge\;
  \chi_{\mathrm{lo}}(x_1,x_1')
  \qquad\text{for all }x_1\neq x_1'.
\]
Moreover, equality holds if and only if there exists an input
$x^\ast\in\mathcal X$ such that
\[
  \mathcal R_{x^\ast}(y)
  = \underline{\mathcal R}(y)
  \qquad\text{for almost every }y\in A(x_1,x_1').
\]
\end{lemma}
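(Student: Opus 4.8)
The plan is to rewrite both $\chi_{\mathrm{lo}}(x_1,x_1')$ and $\chi_{\mathrm{up}}(x_1,x_1')$ as reciprocals of a single integral over the disagreement set, and then compare pointwise using $\mathcal R_x\ge\underline{\mathcal R}$. Put $\Delta(y):=\mathcal R_{x_1}(y)-\mathcal R_{x_1'}(y)$, so $A:=A(x_1,x_1')=\{y:\Delta(y)\neq 0\}$, and note that $l_0(Y;x_1,x_1',\mathcal R_{\mathrm{ref}})=\Delta(Y)/\mathcal R_{\mathrm{ref}}(Y)$ has mean zero under $Y\sim\mathcal R_{\mathrm{ref}}$, since $\int\Delta=1-1=0$. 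Hence its variance equals its second moment,
\[
  \mathrm{Var}_{\mathcal R_{\mathrm{ref}}}\!\bigl(l_0\bigr)
  = \int_{\mathcal Y}\frac{\Delta(y)^2}{\mathcal R_{\mathrm{ref}}(y)}\,dy
  = \int_{A}\frac{\Delta(y)^2}{\mathcal R_{\mathrm{ref}}(y)}\,dy ,
\]
the last step because $\Delta$ vanishes off $A$. Assumption~\ref{assump:regularity} guarantees that, for every admissible reference distribution, this integral lies in $(0,\infty)$; in particular $\int_A\Delta^2/\underline{\mathcal R}<\infty$, which forces $\underline{\mathcal R}(y)>0$ — and a fortiori $\mathcal R_x(y)>0$ for every input $x$ — at a.e.\ $y\in A$, while $A$ has positive measure.

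Substituting the two admissible choices of $\mathcal R_{\mathrm{ref}}$, using $\mathcal R_{\mathrm{BG}}=\underline{\mathcal R}/\gamma$, and writing $V(x):=\int_A\Delta^2/\mathcal R_x\in(0,\infty)$, I would obtain
\[
  \chi_{\mathrm{lo}}(x_1,x_1')^2
  = \Bigl(\int_A \Delta^2/\underline{\mathcal R}\Bigr)^{-1},
  \qquad
  \chi_{\mathrm{up}}(x_1,x_1')^2
  = \Bigl(\sup_{x\in\mathcal X} V(x)\Bigr)^{-1},
\]
the second equality because $t\mapsto\sqrt{1/t}$ is continuous and strictly decreasing on $(0,\infty)$. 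Since $\mathcal R_x(y)\ge\underline{\mathcal R}(y)$ pointwise for every $x$ (the infimum defining $\underline{\mathcal R}$ ranges over all inputs, $x$ among them), we have $V(x)\le\int_A\Delta^2/\underline{\mathcal R}$ for every $x$, hence $\sup_x V(x)\le\int_A\Delta^2/\underline{\mathcal R}$; passing to reciprocals and square roots yields $\chi_{\mathrm{up}}(x_1,x_1')\ge\chi_{\mathrm{lo}}(x_1,x_1')$.

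The displayed formulas further show that $\chi_{\mathrm{up}}(x_1,x_1')=\chi_{\mathrm{lo}}(x_1,x_1')$ holds exactly when $\sup_x V(x)=\int_A\Delta^2/\underline{\mathcal R}$. The ``if'' direction is then immediate: if $\mathcal R_{x^\ast}(y)=\underline{\mathcal R}(y)$ for a.e.\ $y\in A$ then $V(x^\ast)=\int_A\Delta^2/\underline{\mathcal R}$, so the supremum meets its upper bound. For the ``only if'' direction I would invoke attainment of the supremum over inputs — this is the attainment hypothesis carried from Theorem~\ref{thm:global-band-collapse} — so that at a maximizing $x^\ast$ the identity $V(x^\ast)=\int_A\Delta^2/\underline{\mathcal R}$ rearranges into
\[
  \int_A \Delta(y)^2\,\frac{\mathcal R_{x^\ast}(y)-\underline{\mathcal R}(y)}{\underline{\mathcal R}(y)\,\mathcal R_{x^\ast}(y)}\,dy = 0 .
\]
The integrand is nonnegative a.e.\ on $A$ (as $\mathcal R_{x^\ast}\ge\underline{\mathcal R}>0$ there), hence vanishes a.e.; since $\Delta(y)\neq 0$ for \emph{every} $y\in A$, this forces $\mathcal R_{x^\ast}(y)=\underline{\mathcal R}(y)$ for a.e.\ $y\in A$, which is the claimed structural condition (recall $\underline{\mathcal R}=\gamma\,\mathcal R_{\mathrm{BG}}$).

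The one genuinely delicate point I expect is the attainment used in the ``only if'' direction: without a maximizing input, one could conclude only that some maximizing sequence $\mathcal R_{x_k}$ converges to $\underline{\mathcal R}$ on $A$ in an $L^1(\Delta^2/\underline{\mathcal R})$ sense, which need not produce a single witness $x^\ast$. The remaining ingredients — reducing the variance to an integral over $A$, the monotonicity $\mathcal R_x\ge\underline{\mathcal R}\Rightarrow 1/\mathcal R_x\le 1/\underline{\mathcal R}$, and ``a nonnegative integrable function with zero integral vanishes a.e.'' — are routine, provided the measure-theoretic facts established in the first paragraph (finiteness of the variances, strict positivity of $\underline{\mathcal R}$ and of each $\mathcal R_x$ on $A$) are in hand.
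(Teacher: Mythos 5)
Your proof follows essentially the same route as the paper's: both reduce each variance to $\int_{A}\Delta^2/\mathcal R_{\mathrm{ref}}$ over the disagreement set, use the pointwise bound $\mathcal R_x\ge\underline{\mathcal R}=\gamma\,\mathcal R_{\mathrm{BG}}$ for the ordering, and characterize equality by noting that a nonnegative integrand with zero integral must vanish a.e.\ on $A$ where $\Delta\neq 0$. The attainment subtlety you flag in the ``only if'' direction is equally unaddressed in the paper's own proof, which passes from $\sup_{x}\sigma_x^2=\sigma_{\mathrm{BG}}^2/\gamma$ to the existence of a witness $x^\ast$ without comment, so your treatment matches the paper's.
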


\begin{proof}
Fix $x_1\neq x_1'\in\mathcal X$ and abbreviate
\[
  \Delta(y)
  := \mathcal R_{x_1}(y) - \mathcal R_{x_1'}(y),
  \qquad y\in\mathcal Y.
\]
Note that $\int_{\mathcal Y}\Delta(y)\,dy=0$, since both
$\mathcal R_{x_1}$ and $\mathcal R_{x_1'}$ are probability distributions
with respect to the same reference measure.

\textbf{Step 1: Variance under $\mathcal R_{\mathrm{BG}}$.}
For $Y\sim\mathcal R_{\mathrm{BG}}$ we have
\[
  l_0(Y;x_1,x_1',\mathcal R_{\mathrm{BG}})
  = \frac{\mathcal R_{x_1}(Y)-\mathcal R_{x_1'}(Y)}
         {\mathcal R_{\mathrm{BG}}(Y)}
  = \frac{\Delta(Y)}{\mathcal R_{\mathrm{BG}}(Y)}.
\]
Since
$\mathbb E_{\mathcal R_{\mathrm{BG}}}[l_0(Y;x_1,x_1',\mathcal R_{\mathrm{BG}})]
= 0$, its variance is
\begin{align*}
  \sigma_{\mathrm{BG}}^2(x_1,x_1')
  &:= \mathrm{Var}_{Y\sim\mathcal R_{\mathrm{BG}}}
      \bigl(l_0(Y;x_1,x_1',\mathcal R_{\mathrm{BG}})\bigr)\\
  &= \int_{\mathcal Y}
       \frac{\Delta(y)^2}{\mathcal R_{\mathrm{BG}}(y)}\,dy.
\end{align*}
On the complement of $A(x_1,x_1')$ we have $\Delta(y)=0$, so
\[
  \sigma_{\mathrm{BG}}^2(x_1,x_1')
  = \int_{A(x_1,x_1')}
      \frac{\Delta(y)^2}{\mathcal R_{\mathrm{BG}}(y)}\,dy.
\]

\textbf{Step 2: Variance under $\mathcal R_x$.}
Similarly, for any $x\in\mathcal X$ and $Y\sim\mathcal R_x$,
\[
  l_0(Y;x_1,x_1',\mathcal R_x)
  = \frac{\mathcal R_{x_1}(Y)-\mathcal R_{x_1'}(Y)}
         {\mathcal R_x(Y)}
  = \frac{\Delta(Y)}{\mathcal R_x(Y)},
\]
and again the mean is zero. Hence
\begin{align*}
  \sigma_x^2(x_1,x_1')
  &:= \mathrm{Var}_{Y\sim\mathcal R_x}
      \bigl(l_0(Y;x_1,x_1',\mathcal R_x)\bigr)\\
  &= \int_{\mathcal Y}
       \frac{\Delta(y)^2}{\mathcal R_x(y)}\,dy
   = \int_{A(x_1,x_1')}
       \frac{\Delta(y)^2}{\mathcal R_x(y)}\,dy.
\end{align*}

\textbf{Step 3: Comparison of variances.}
By definition of the blanket, for all $x\in\mathcal X$ and $y\in\mathcal Y$
\[
  \underline{\mathcal R}(y)
  \le \mathcal R_x(y).
\]
Since
$\mathcal R_{\mathrm{BG}}(y)
 = \underline{\mathcal R}(y)/\gamma$, this implies
\[
  \mathcal R_x(y)
  \;\ge\;
  \underline{\mathcal R}(y)
  = \gamma\,\mathcal R_{\mathrm{BG}}(y)
  \qquad\text{for all }x\in\mathcal X,\ y\in\mathcal Y.
\]
Consequently, for all $x\in\mathcal X$ and all $y$ with $\Delta(y)\neq 0$,
\[
  \frac{\Delta(y)^2}{\mathcal R_x(y)}
  \;\le\;
  \frac{\Delta(y)^2}{\gamma\,\mathcal R_{\mathrm{BG}}(y)}.
\]
Integrating over $A(x_1,x_1')$ yields
\[
  \sigma_x^2(x_1,x_1')
  = \int_{A(x_1,x_1')}
      \frac{\Delta(y)^2}{\mathcal R_x(y)}\,dy
  \;\le\;
  \frac{1}{\gamma}
  \int_{A(x_1,x_1')}
      \frac{\Delta(y)^2}{\mathcal R_{\mathrm{BG}}(y)}\,dy
  = \frac{1}{\gamma}\,\sigma_{\mathrm{BG}}^2(x_1,x_1').
\]
Taking the supremum over $x\in\mathcal X$ gives
\[
  \sup_{x\in\mathcal X} \sigma_x^2(x_1,x_1')
  \;\le\;
  \frac{1}{\gamma}\,\sigma_{\mathrm{BG}}^2(x_1,x_1').
\]

By definition,
\[
  \chi_{\mathrm{lo}}(x_1,x_1')
  = \sqrt{\frac{\gamma}{\sigma_{\mathrm{BG}}^2(x_1,x_1')}},
  \qquad
  \chi_{\mathrm{up}}(x_1,x_1')
  = \inf_{x\in\mathcal X}
      \sqrt{\frac{1}{\sigma_x^2(x_1,x_1')}}
  = \frac{1}{\sqrt{\sup_{x\in\mathcal X} \sigma_x^2(x_1,x_1')}}.
\]
Using the inequality on the variances we obtain
\[
  \chi_{\mathrm{up}}(x_1,x_1')
  = \frac{1}{\sqrt{\sup_x \sigma_x^2(x_1,x_1')}}
  \;\ge\;
  \sqrt{\frac{\gamma}{\sigma_{\mathrm{BG}}^2(x_1,x_1')}}
  = \chi_{\mathrm{lo}}(x_1,x_1').
\]

\textbf{Step 4: Characterization of equality.}
We now analyze when equality
$\chi_{\mathrm{up}}(x_1,x_1')=\chi_{\mathrm{lo}}(x_1,x_1')$ holds.
By the definitions above this is equivalent to
\[
  \sup_{x\in\mathcal X} \sigma_x^2(x_1,x_1')
  = \frac{1}{\gamma}\,\sigma_{\mathrm{BG}}^2(x_1,x_1').
\]
From the pointwise inequality
\[
  \frac{\Delta(y)^2}{\mathcal R_x(y)}
  \;\le\;
  \frac{\Delta(y)^2}{\gamma\,\mathcal R_{\mathrm{BG}}(y)},
  \qquad y\in A(x_1,x_1'),
\]
we see that for any fixed $x\in\mathcal X$ we have
\[
  \sigma_x^2(x_1,x_1')
  = \frac{1}{\gamma}\,\sigma_{\mathrm{BG}}^2(x_1,x_1')
  \quad\Longleftrightarrow\quad
  \mathcal R_x(y)
  = \gamma\,\mathcal R_{\mathrm{BG}}(y)
  = \underline{\mathcal R}(y)
  \quad\text{for almost every }y\in A(x_1,x_1').
\]
Indeed, if $\mathcal R_x(y)>\gamma\,\mathcal R_{\mathrm{BG}}(y)$ on a
subset of $A(x_1,x_1')$ of positive measure where $\Delta(y)\neq0$, the
inequality would be strict on that set and hence strict after integration.

Thus equality
$\sup_x \sigma_x^2(x_1,x_1') = \sigma_{\mathrm{BG}}^2(x_1,x_1')/\gamma$
holds if and only if there exists $x^\ast\in\mathcal X$ such that
$\sigma_{x^\ast}^2(x_1,x_1') = \sigma_{\mathrm{BG}}^2(x_1,x_1')/\gamma$,
which in turn is equivalent to
\[
  \mathcal R_{x^\ast}(y)
  = \underline{\mathcal R}(y)
  \quad\text{for almost every }y\in A(x_1,x_1').
\]
Combining this with the previous step proves the lemma.
\end{proof}

Assume that the infima in the definitions of
$\chi_{\mathrm{lo}}$ and $\chi_{\mathrm{up}}$ are
attained by some distinct pairs
$(x_1^{\mathrm{lo}},x_1^{\prime\,\mathrm{lo}})$ and
$(x_1^{\mathrm{up}},x_1^{\prime\,\mathrm{up}})$, respectively.

\textbf{Step 1: $(i)\Rightarrow(ii)$.}
Suppose first that
$\chi_{\mathrm{up}}=\chi_{\mathrm{lo}}$.
By definition,
\[
  \chi_{\mathrm{lo}}
  = \chi_{\mathrm{lo}}(x_1^{\mathrm{lo}},x_1^{\prime\,\mathrm{lo}}),
  \qquad
  \chi_{\mathrm{up}}
  = \chi_{\mathrm{up}}(x_1^{\mathrm{up}},x_1^{\prime\,\mathrm{up}}).
\]
Lemma~\ref{lem:pairwise-chi-structure} implies that for every neighboring
pair $(x_1,x_1')$,
\[
  \chi_{\mathrm{up}}(x_1,x_1')
  \;\ge\;
  \chi_{\mathrm{lo}}(x_1,x_1').
\]
In particular,
\[
  \chi_{\mathrm{up}}(x_1^{\mathrm{lo}},x_1^{\prime\,\mathrm{lo}})
  \;\ge\;
  \chi_{\mathrm{lo}}(x_1^{\mathrm{lo}},x_1^{\prime\,\mathrm{lo}})
  = \chi_{\mathrm{lo}},
\]
and by the definition of $\chi_{\mathrm{up}}$,
\[
  \chi_{\mathrm{up}}
  \le \chi_{\mathrm{up}}(x_1^{\mathrm{lo}},x_1^{\prime\,\mathrm{lo}}).
\]
Combining these inequalities with
$\chi_{\mathrm{up}}=\chi_{\mathrm{lo}}$ yields
\[
  \chi_{\mathrm{up}}(x_1^{\mathrm{lo}},x_1^{\prime\,\mathrm{lo}})
  = \chi_{\mathrm{lo}}(x_1^{\mathrm{lo}},x_1^{\prime\,\mathrm{lo}})
  = \chi_{\mathrm{lo}}.
\]
Applying Lemma~\ref{lem:pairwise-chi-structure} to the pair
$(x_1^\ast,x_1^{\prime\,\ast})
 := (x_1^{\mathrm{lo}},x_1^{\prime\,\mathrm{lo}})$, we conclude that
there exists $x^\ast\in\mathcal X$ such that
\[
  \mathcal R_{x^\ast}(y)
  = {\mathcal R}(y)
  \quad\text{for almost every }y\in A(x_1^\ast,x_1^{\prime\,\ast}),
\]
and, by construction,
$\chi_{\mathrm{lo}}(x_1^\ast,x_1^{\prime\,\ast})=\chi_{\mathrm{lo}}$.
This is precisely condition~\textup{(ii)}.

\textbf{Step 2: $(ii)\Rightarrow(i)$.}
Conversely, suppose there exist a distinct pair
$(x_1^\ast,x_1^{\prime\,\ast})$ and $x^\ast\in\mathcal X$ such that
\[
  \chi_{\mathrm{lo}}(x_1^\ast,x_1^{\prime\,\ast})
  = \chi_{\mathrm{lo}}
  \qquad\text{and}\qquad
  \mathcal R_{x^\ast}(y)
  = \underline{\mathcal R}(y)
  \ \text{ for almost every }y\in A(x_1^\ast,x_1^{\prime\,\ast}).
\]
By Lemma~\ref{lem:pairwise-chi-structure}, this implies
\[
  \chi_{\mathrm{up}}(x_1^\ast,x_1^{\prime\,\ast})
  = \chi_{\mathrm{lo}}(x_1^\ast,x_1^{\prime\,\ast})
  = \chi_{\mathrm{lo}}.
\]
On the other hand, for any distinct pair $(x_1,x_1')$ we have
\[
  \chi_{\mathrm{up}}(x_1,x_1')
  \;\ge\;
  \chi_{\mathrm{lo}}(x_1,x_1')
  \;\ge\;
  \chi_{\mathrm{lo}}
\]
by Lemma~\ref{lem:pairwise-chi-structure} and the definition of
$\chi_{\mathrm{lo}}$.
Therefore
\[
  \chi_{\mathrm{up}}
  = \inf_{x_1\neq x_1'} \chi_{\mathrm{up}}(x_1,x_1')
  \;\ge\;
  \chi_{\mathrm{lo}},
\]
while the existence of $(x_1^\ast,x_1^{\prime\,\ast})$ with
$\chi_{\mathrm{up}}(x_1^\ast,x_1^{\prime\,\ast})=\chi_{\mathrm{lo}}$
shows that
\[
  \chi_{\mathrm{up}}
  \le \chi_{\mathrm{up}}(x_1^\ast,x_1^{\prime\,\ast})
  = \chi_{\mathrm{lo}}.
\]
Hence $\chi_{\mathrm{up}}=\chi_{\mathrm{lo}}$, which
is condition~\textup{(i)}. This completes the proof.

\end{proof}

\subsection{The proof of Lemma~\ref{lemma:trans_fft}}
\label{app:proof-blanket-transform-fft}

\blankettransformfft*

\begin{proof}
Recall the definition of the blanket divergence (generalized to an
arbitrary reference distribution $\mathcal R_{\mathrm{ref}}$):
for $M_0 \sim \mathrm{Bin}(n,\gamma)$ and
$Y_1,Y_2,\dots \stackrel{\mathrm{i.i.d.}}{\sim}\mathcal R_{\mathrm{ref}}$,
\[
  \mathcal{D}_{\mathrm{e}^\varepsilon,n,\mathcal{R}_\mathrm{ref},\gamma}^\text{blanket}(\mathcal{R}_{x_1} \| \mathcal{R}_{x_1^\prime})
  \;=\;
  \frac{1}{n\gamma}
  \,\mathbb{E}\!\left[
    \Bigl( \sum_{i=1}^{M_0} l_\varepsilon(Y_i) \Bigr)_{+}
  \right],
\]
where $a_+ := \max\{a,0\}$ and
\[
  l_{\varepsilon}(y)
  = l_{\varepsilon}(y;x_1,x_1',\mathcal R_{\mathrm{ref}})
  := \frac{\mathcal R_{x_1}(y)-\mathrm e^\varepsilon \mathcal R_{x_1'}(y)}
           {\mathcal R_{\mathrm{ref}}(y)}.
\]

We will transform this expression into the claimed probability
representation in two steps.

\medskip\noindent
\textbf{Step 1: Size-biasing the binomial and isolating one summand.}
Let $X_i := l_\varepsilon(Y_i)$ and
\[
  S_m := \sum_{i=1}^m X_i, \qquad m\ge1.
\]
Then, for each fixed $m\ge1$,
\[
  (S_m)_+ = S_m \mathbf 1\{S_m>0\}
  = \sum_{i=1}^m X_i\,\mathbf 1\{S_m>0\}.
\]
Taking expectation and using the symmetry of the $X_i$,
\[
  \mathbb E\bigl[(S_m)_+\bigr]
  = \sum_{i=1}^m \mathbb E\bigl[X_i\,\mathbf 1\{S_m>0\}\bigr]
  = m\,\mathbb E\bigl[X_1\,\mathbf 1\{S_m>0\}\bigr].
\]
Hence, conditioning on $M_0$,
\begin{align*}
  \mathcal{D}_{\mathrm{e}^\varepsilon,n,\mathcal{R}_\mathrm{ref},\gamma}^\text{blanket}
  &= \frac{1}{n\gamma}\,
     \mathbb{E}\bigl[ (S_{M_0})_+ \bigr] \\
  &= \frac{1}{n\gamma}\,
     \mathbb E\bigl[\,\mathbb E\bigl[(S_{M_0})_+\bigm|M_0\bigr]\,\bigr] \\
  &= \frac{1}{n\gamma}\,
     \mathbb E\bigl[\,M_0\,
       \mathbb E\bigl[X_1\,\mathbf 1\{S_{M_0}>0\}\bigm|M_0\bigr]
     \bigr].
\end{align*}
Now recall that the size-biased version of a nonnegative integer-valued
random variable $M_0$ is defined by
\[
  \Pr[M=m]
  := \frac{m\,\Pr[M_0=m]}{\mathbb E[M_0]}\qquad(m\ge1).
\]
When $M_0\sim\mathrm{Bin}(n,\gamma)$, its size-biased version is
\[
  M \;\stackrel{d}{=}\; 1+\mathrm{Bin}(n-1,\gamma),
\]
and $\mathbb E[M_0] = n\gamma$.
Therefore, for any function $g$,
\[
  \frac{1}{n\gamma}\,\mathbb E\bigl[M_0 g(M_0)\bigr]
  = \mathbb E\bigl[g(M)\bigr].
\]
Applying this with
$g(m) = \mathbb E[X_1\,\mathbf 1\{S_m>0\}\mid M_0=m]$, we obtain
\[
  \mathcal{D}_{\mathrm{e}^\varepsilon,n,\mathcal{R}_\mathrm{ref},\gamma}^\text{blanket}
  = \mathbb E\bigl[
      X_1\,\mathbf 1\{S_M>0\}
    \bigr],
\]
where now $M\sim 1+\mathrm{Bin}(n-1,\gamma)$ and
$S_M = \sum_{i=1}^M X_i = \sum_{i=1}^M l_\varepsilon(Y_i)$.

In other words,
\begin{equation}
  \label{eq:blanket-as-expectation-X1}
  \mathcal{D}_{\mathrm{e}^\varepsilon,n,\mathcal{R}_\mathrm{ref},\gamma}^\text{blanket}
  =
  \mathbb E_{\substack{
    M\sim 1+\mathrm{Bin}(n-1,\gamma)\\
    Y_{1:n}\stackrel{\mathrm{i.i.d.}}{\sim}\mathcal R_{\mathrm{ref}}
  }}
  \Bigl[
    l_\varepsilon(Y_1)\,
    \mathbf 1\Bigl\{
      \sum_{i=1}^M l_\varepsilon(Y_i) > 0
    \Bigr\}
  \Bigr].
\end{equation}

\medskip\noindent
\textbf{Step 2: Change of measure on $Y_1$.}
By definition of $l_\varepsilon$,
\[
  l_\varepsilon(y)\,\mathcal R_{\mathrm{ref}}(y)
  = \mathcal R_{x_1}(y) - \mathrm e^\varepsilon \mathcal R_{x_1'}(y).
\]
Hence, for any bounded measurable function $h$ (possibly depending on
other random variables),
\begin{align*}
  \mathbb E_{Y_1\sim\mathcal R_{\mathrm{ref}}}
    \bigl[l_\varepsilon(Y_1)\,h(Y_1)\bigr]
  &= \int h(y)\,l_\varepsilon(y)\,\mathcal R_{\mathrm{ref}}(y)\,dy \\
  &= \int h(y)\,\bigl(\mathcal R_{x_1}(y)-\mathrm e^\varepsilon \mathcal R_{x_1'}(y)\bigr)\,dy \\
  &= \mathbb E_{Y_1\sim\mathcal R_{x_1}}[h(Y_1)]
     - \mathrm e^\varepsilon
       \mathbb E_{Y_1\sim\mathcal R_{x_1'}}[h(Y_1)].
\end{align*}
In \eqref{eq:blanket-as-expectation-X1}, take
\[
  h(Y_1)
  := \mathbf 1\Bigl\{
       \sum_{i=1}^M l_\varepsilon(Y_i) > 0
     \Bigr\},
\]
and treat $M$ and $Y_{2:n}$ as part of the outer expectation.
Then
\begin{align*}
  \mathcal{D}_{\mathrm{e}^\varepsilon,n,\mathcal{R}_\mathrm{ref},\gamma}^\text{blanket}
  &=
  \mathbb E_{M,Y_{2:n}}\Bigl[
    \mathbb E_{Y_1\sim\mathcal R_{\mathrm{ref}}}
    \bigl[
      l_\varepsilon(Y_1)\,
      \mathbf 1\{\sum_{i=1}^M l_\varepsilon(Y_i)>0\}
      \bigm| M,Y_{2:n}
    \bigr]
  \Bigr] \\
  &=
  \mathbb E_{M,Y_{2:n}}\Bigl[
    \Pr_{\substack{Y_1\sim\mathcal R_{x_1}}}
      \Bigl[
        \sum_{i=1}^M l_\varepsilon(Y_i)>0
      \Bigm| M,Y_{2:n}
      \Bigr]
    - \mathrm e^\varepsilon
      \Pr_{\substack{Y_1\sim\mathcal R_{x_1'}}}
      \Bigl[
        \sum_{i=1}^M l_\varepsilon(Y_i)>0
      \Bigm| M,Y_{2:n}
      \Bigr]
  \Bigr] \\
  &=
  \Pr_{\substack{
    Y_1\sim\mathcal R_{x_1}\\
    Y_{2:n}\stackrel{\mathrm{i.i.d.}}{\sim}\mathcal R_{\mathrm{ref}}\\
    M\sim 1+\mathrm{Bin}(n-1,\gamma)
  }}
  \Bigl[
    \sum_{i=1}^M l_\varepsilon(Y_i)>0
  \Bigr]
  - \mathrm e^\varepsilon
  \Pr_{\substack{
    Y_1\sim\mathcal R_{x_1'}\\
    Y_{2:n}\stackrel{\mathrm{i.i.d.}}{\sim}\mathcal R_{\mathrm{ref}}\\
    M\sim 1+\mathrm{Bin}(n-1,\gamma)
  }}
  \Bigl[
    \sum_{i=1}^M l_\varepsilon(Y_i)>0
  \Bigr].
\end{align*}
This is exactly the desired expression.
\end{proof}

\subsection{The proof of Theorem~\ref{thm:fft-blanket-error}}
\label{app:proof-fft-blanket-error}

\fftblanketerror*

\begin{proof}
This proof is an adaptation of the FFT error analysis of Gopi et al.~\cite{gopiNumericalCompositionDifferential2021} to our blanket-divergence setting.

Fix $n\in\mathbb N$, $\gamma\in(0,1]$, $\varepsilon\in\mathbb R$, and distinct inputs
$x_1,x_1'\in\mathcal X$.
Let $M\sim 1+\mathrm{Bin}(n-1,\gamma)$ and
$
  Y_{2:n} \stackrel{\mathrm{i.i.d.}}{\sim} \mathcal R_{\mathrm{ref}},
$
all independent of $M$.
Define
\begin{equation}
  \label{eq:px-def}
  p(x)
  :=
  \Pr_{Y_1\sim \mathcal{R}_{x}}\Bigl[l_\varepsilon(Y_1;x_1,x_1',\mathcal R_{\mathrm{ref}})+\sum_{i=2}^M l_\varepsilon(Y_i;x_1,x_1',\mathcal R_{\mathrm{ref}}) > 0\Bigr].
\end{equation}
We write $l(Y_i)$ as a shorthand for
$l_\varepsilon(Y_i;x_1,x_1',\mathcal R_{\mathrm{ref}})$ in the following.
From the transformation lemma (i.e., Lemma~\ref{lemma:trans_fft}), the blanket divergence can be written as
\[
  D_{\mathrm{e}^\varepsilon,n,\mathcal{R}_\mathrm{ref},\gamma}^\mathrm{blanket}(\mathcal{R}_{x_1} \| \mathcal{R}_{x_1^\prime})
  \;=\;
  p(x_1) - \mathrm e^\varepsilon p(x_1').
\]

Our goal is to compare each $p(x)$ with the corresponding main-term approximation
$P(c,F_x)$ output by the algorithms, and then combine the two inequalities.

\medskip

\textbf{Step 1: Truncation error.}
Fix constants $s, w^{\mathrm{in}}$ and,
for $i\ge2$, let
\[
  Z_i^{\mathrm{tr}}
  \sim l(Y_i)
  \ \text{conditioned on}\
  l(Y_i)\in\bigl[s, s + w^{\mathrm{in}}\bigr].
\]
Then for each $x\in\{x_1,x_1'\}$ we have
\begin{align}
  p(x)
  &= \Pr\Bigl[l(Y_1)+\textstyle\sum_{i=2}^M l(Y_i)>0\Bigr]
  \notag\\
  &\le
  \Pr\Bigl[
    l(Y_1)+\sum_{i=2}^M Z_i^{\mathrm{tr}} > 0
  \Bigr]
  + \Pr\bigl[\perp\bigr],
  \label{eq:trunc-upper}
  \\
  p(x)
  &\ge
  \Pr\Bigl[
    l(Y_1)+\sum_{i=2}^M Z_i^{\mathrm{tr}} > 0
  \Bigr]
  - \Pr\bigl[\perp\bigr].
  \label{eq:trunc-lower}
\end{align}
We abbreviate the truncation error by
\[
  E_{\mathrm{trunc}}(x)
  :=
  \Pr\bigl[\perp\bigr].
\]

\medskip

\textbf{Step 2: Discretization error.}
Fix a bin width $h>0$, and for $i\ge2$ let $Z_i^{\mathrm{di}}$ be a discretized version of
$Z_i^{\mathrm{tr}}$ on the grid $\mathcal G = \{x_j=jh:j\in\mathbb Z\}$ such that
$|Z_i^{\mathrm{tr}}-Z_i^{\mathrm{di}}|\le h/2$ almost surely.
Let
\[
  \Delta_{\mathrm{ES}} := \mathbb E\Bigl[\sum_{i=2}^M Z_i^{\mathrm{tr}} - \sum_{i=2}^M Z_i^{\mathrm{di}}\Bigr].
\]
Then using the union bound for any $c\ge0$ we can write
\begin{align*}
  &\Pr\Bigl[l(Y_1)+\sum_{i=2}^M Z_i^{\mathrm{tr}} > 0\Bigr] \\
  &= \Pr\Bigl[
    \sum_{i=2}^M Z_i^{\mathrm{tr}} > -l(Y_1)
  \Bigr] \\
  &= \Pr\Bigl[
    \sum_{i=2}^M Z_i^{\mathrm{tr}} - \sum_{i=2}^M Z_i^{\mathrm{di}}
    + \sum_{i=2}^M Z_i^{\mathrm{di}}
    - \Delta_{\mathrm{ES}}
    > -l(Y_1)-\Delta_{\mathrm{ES}}
  \Bigr] \\
  &\le
  \Pr\Bigl[
    \sum_{i=2}^M Z_i^{\mathrm{tr}} - \sum_{i=2}^M Z_i^{\mathrm{di}} - \Delta_{\mathrm{ES}} > c
  \Bigr]
  +
  \Pr\Bigl[
    \sum_{i=2}^M Z_i^{\mathrm{di}} > -c - l(Y_1) - \Delta_{\mathrm{ES}}
  \Bigr].
\end{align*}
We therefore define the discretization error
\[
  E_{\mathrm{disc}}(x;c,h,s,w^{\mathrm{in}})
  :=
  \Pr\Bigl[
    \sum_{i=2}^M Z_i^{\mathrm{tr}} - \sum_{i=2}^M Z_i^{\mathrm{di}} - \Delta_{\mathrm{ES}} > c
  \Bigr]
\]
and obtain the upper bound
\begin{equation}
  \label{eq:disc-upper}
  \Pr\Bigl[l(Y_1)+\sum_{i=2}^M Z_i^{\mathrm{tr}} > 0\Bigr]
  \;\le\;
  E_{\mathrm{disc}}(x;c,h,s,w^{\mathrm{in}})
  +
  \Pr\Bigl[
    \sum_{i=2}^M Z_i^{\mathrm{di}} > -c - l(Y_1) - \Delta_{\mathrm{ES}}
  \Bigr].
\end{equation}
A symmetric argument, using a window shifted by $-c$, yields a corresponding lower bound
\begin{equation}
  \label{eq:disc-lower}
  \Pr\Bigl[l(Y_1)+\sum_{i=2}^M Z_i^{\mathrm{tr}} > 0\Bigr]
  \;\ge\;
  -E_{\mathrm{disc}}(x;c,h,s,w^{\mathrm{in}})
  +
  \Pr\Bigl[
    \sum_{i=2}^M Z_i^{\mathrm{di}} > +c - l(Y_1) - \Delta_{\mathrm{ES}}
  \Bigr].
\end{equation}

\medskip

\textbf{Step 3: Aliasing error and FFT main term.}
Fix an outer window size $w^{\mathrm{out}}>0$ and define
$[-L^{\mathrm{out}},L^{\mathrm{out}}] := [-w^{\mathrm{out}}/2,w^{\mathrm{out}}/2]$.
We extend the PMF of the discretized variable $Z_i^{\mathrm{di}}$ by zero outside
$[-L^{\mathrm{out}},L^{\mathrm{out}}]$ and use the FFT on a grid of size $N$ such that
the period $P=2L^{\mathrm{out}}$ satisfies $P/m=h$ for some integer $m$.
This exactly corresponds to computing the distribution of the random sum
\[
  S
  :=
  \sum_{i=2}^M \widetilde Z_i^{\mathrm{di}},
\]
where $\widetilde Z_i^{\mathrm{di}}$ is the zero-padded periodic extension of $Z_i^{\mathrm{di}}$.
The difference between $S$ and the true discretized sum
$\sum_{i=2}^M Z_i^{\mathrm{di}}$ is the aliasing error; we define
\[
  E_{\mathrm{alias}}(x;h,s,w^{\mathrm{in}},w^{\mathrm{out}})
  :=
  \Pr\Bigl[\sum_{i=2}^M Z_i^{\mathrm{di}} \neq S\Bigr].
\]

Then using the union bound for the term appearing in~\eqref{eq:disc-upper} we have
\begin{align}
  &\Pr\Bigl[
    \sum_{i=2}^M Z_i^{\mathrm{di}} > -c - l(Y_1) - \Delta_{\mathrm{ES}}
  \Bigr]
  \notag\\
  &\le
  \Pr\Bigl[\sum_{i=2}^M Z_i^{\mathrm{di}} \neq S\Bigr]
  +
  \Pr\bigl[
    S > -c - l(Y_1) - \Delta_{\mathrm{ES}}
  \bigr]
  \notag\\
  &=
  E_{\mathrm{alias}}(x;h,s,w^{\mathrm{in}},w^{\mathrm{out}})
  +
  \sum_z \Pr[S=z]\,
  \Pr\bigl[l(Y_1) > -c - z - \mathbb E[\textstyle\sum_{i=2}^M Z_i^{\mathrm{tr}}]\bigr],
  \label{eq:main-term-raw}
\end{align}
where in the last equality we have simply re-centered the threshold by the mean of
$\sum_{i=2}^M Z_i^{\mathrm{tr}}$, which only affects $\Delta_{\mathrm{ES}}$.

Let $F_x$ be the CDF of $l(Y)$ under $Y\sim\mathcal R_x$.
Then the inner probability in~\eqref{eq:main-term-raw} is
\[
  \Pr\bigl[l(Y_1) > -c - z - \mathbb E[\textstyle\sum_{i=2}^M Z_i^{\mathrm{tr}}]\bigr]
  =
  1 - F_x\bigl(-c - z - \mathbb E[\textstyle\sum_{i=2}^M Z_i^{\mathrm{tr}}]\bigr).
\]
By construction, the algorithm \textsc{CalculatePMF} computes the PMF
$\mathbf p_{S-\mu_{S^\mathrm{di}}}$ of the centered sum $S-\mu_{S^\mathrm{di}}$, where
\[
  \mu_{S^\mathrm{di}} := (n-1)\gamma\,\mu_{Z^{\mathrm{di}}},
\]
and \textsc{CalculateMainTerm} computes
\[
  \widehat p_c(x)
  :=
  \textsc{CalculateMainTerm}(\mathbf p_{S-\mu_{S^\mathrm{di}}},F_x,c)
  =
  \sum_j \mathbf p_{S-\mu_{S^\mathrm{di}}}[j]\,
    \Bigl(1-F_x\bigl(-c-z_j-\mathbb E[\textstyle\sum_{i=2}^M Z_i^{\mathrm{tr}}]\bigr)\Bigr),
\]
which coincides with the right-hand probability in~\eqref{eq:main-term-raw}.
Thus, combining \eqref{eq:disc-upper} and \eqref{eq:main-term-raw} with the truncation step
\eqref{eq:trunc-upper}, we obtain
\begin{align}
  p(x)
  &\le
  \widehat p_c(x)
  + E_{\mathrm{alias}}(x;h,s,w^{\mathrm{in}},w^{\mathrm{out}})
  + E_{\mathrm{disc}}(x;c,h,s,w^{\mathrm{in}})
  + E_{\mathrm{trunc}}(x).
  \label{eq:p-upper-hat}
\end{align}

A completely symmetric argument, using $-c$ instead of $+c$ in the union bounds,
gives a second quantity
$\widehat p_{-c}(x)=\textsc{CalculateMainTerm}(\mathbf p_{S-\mu_S},F_x,-c)$ such that
\begin{align}
  p(x)
  &\ge
  \widehat p_{-c}(x)
  - E_{\mathrm{alias}}(x;h,s,w^{\mathrm{in}},w^{\mathrm{out}})
  - E_{\mathrm{disc}}(x;c,h,s,w^{\mathrm{in}})
  - E_{\mathrm{trunc}}(x).
  \label{eq:p-lower-hat}
\end{align}

\medskip

\textbf{Step 4: Incorporating the factor $(1-\Pr[\perp])$.}
In the theorem statement, the main term is defined as
\[
  P(c,F_x)
  :=
  (1-\Pr[\perp])\,
  \widehat p_c(x),
  \qquad
  P(-c,F_x)
  :=
  (1-\Pr[\perp])\,
  \widehat p_{-c}(x),
\]
where $\Pr[\perp]=E_{\mathrm{trunc}}(x)$ is the truncation probability (note that $\perp$
depends only on $(Y_i)_{i\ge2}$, hence is the same for $x_1$ and $x_1'$).

Since $0\le\widehat p_{\pm c}(x)\le1$, we have
\[
  \widehat p_c(x) - E_{\mathrm{trunc}}(x)
  \;\le\;
  P(c,F_x)
  \;\le\;
  \widehat p_c(x),
\]
and similarly for $-c$.
Therefore, from~\eqref{eq:p-upper-hat} we obtain
\begin{align}
  p(x)
  &\le
  P(c,F_x)
  + E_{\mathrm{alias}}(x;h,s,w^{\mathrm{in}},w^{\mathrm{out}})
  + E_{\mathrm{disc}}(x;c,h,s,w^{\mathrm{in}})
  + 2E_{\mathrm{trunc}}(x),
  \label{eq:p-upper-final}
\end{align}
and from~\eqref{eq:p-lower-hat}
\begin{align}
  p(x)
  &\ge
  P(-c,F_x)
  - E_{\mathrm{alias}}(x;h,s,w^{\mathrm{in}},w^{\mathrm{out}})
  - E_{\mathrm{disc}}(x;c,h,s,w^{\mathrm{in}})
  - 2E_{\mathrm{trunc}}(x).
  \label{eq:p-lower-final}
\end{align}

\medskip

\textbf{Step 5: Combining the two hypotheses.}
Apply \eqref{eq:p-lower-final} with $x=x_1$ and \eqref{eq:p-upper-final} with $x=x_1'$:
\begin{align*}
  p(x_1)
  &\ge
  P(-c,F_{x_1})
  - E_{\mathrm{alias}}(x_1)
  - E_{\mathrm{disc}}(x_1)
  - 2E_{\mathrm{trunc}}(x_1), \\
  p(x_1')
  &\le
  P(c,F_{x_1'})
  + E_{\mathrm{alias}}(x_1')
  + E_{\mathrm{disc}}(x_1')
  + 2E_{\mathrm{trunc}}(x_1'),
\end{align*}
where we have suppressed the explicit dependence on $(c,h,s,w^{\mathrm{in}},w^{\mathrm{out}})$
in the notation of the error terms.

Hence
\begin{align*}
  D_{\mathrm{e}^\varepsilon,n,\mathcal{R}_\mathrm{ref},\gamma}^\mathrm{blanket}(\mathcal{R}_{x_1} \| \mathcal{R}_{x_1^\prime})
  &= p(x_1) - \mathrm e^\varepsilon p(x_1') \\
  &\ge
  P(-c,F_{x_1}) - \mathrm e^\varepsilon P(c,F_{x_1'})
  - \delta_{\mathrm{err}}^{\mathrm{low}}(c,h,s,w^{\mathrm{in}},w^{\mathrm{out}}),
\end{align*}
where
\[
  \delta_{\mathrm{err}}^{\mathrm{low}}
  :=
  E_{\mathrm{alias}}(x_1)
  + E_{\mathrm{disc}}(x_1)
  + 2E_{\mathrm{trunc}}(x_1)
  + \mathrm e^\varepsilon\bigl(
    E_{\mathrm{alias}}(x_1')
    + E_{\mathrm{disc}}(x_1')
    + 2E_{\mathrm{trunc}}(x_1')
  \bigr).
\]

Similarly, apply \eqref{eq:p-upper-final} with $x=x_1$ and \eqref{eq:p-lower-final} with $x=x_1'$:
\begin{align*}
  p(x_1)
  &\le
  P(c,F_{x_1})
  + E_{\mathrm{alias}}(x_1)
  + E_{\mathrm{disc}}(x_1)
  + 2E_{\mathrm{trunc}}(x_1), \\
  p(x_1')
  &\ge
  P(-c,F_{x_1'})
  - E_{\mathrm{alias}}(x_1')
  - E_{\mathrm{disc}}(x_1')
  - 2E_{\mathrm{trunc}}(x_1'),
\end{align*}
so that
\begin{align*}
  D_{\mathrm{e}^\varepsilon,n,\mathcal{R}_\mathrm{ref},\gamma}^\mathrm{blanket}(\mathcal{R}_{x_1} \| \mathcal{R}_{x_1^\prime})
  &= p(x_1) - \mathrm e^\varepsilon p(x_1') \\
  &\le
  P(c,F_{x_1}) - \mathrm e^\varepsilon P(-c,F_{x_1'})
  + \delta_{\mathrm{err}}^{\mathrm{up}}(c,h,s,w^{\mathrm{in}},w^{\mathrm{out}}),
\end{align*}
where
\[
  \delta_{\mathrm{err}}^{\mathrm{up}}
  :=
  E_{\mathrm{alias}}(x_1)
  + E_{\mathrm{disc}}(x_1)
  + 2E_{\mathrm{trunc}}(x_1)
  + \mathrm e^\varepsilon\bigl(
    E_{\mathrm{alias}}(x_1')
    + E_{\mathrm{disc}}(x_1')
    + 2E_{\mathrm{trunc}}(x_1')
  \bigr).
\]

By construction, both $\delta_{\mathrm{err}}^{\mathrm{low}}$ and
$\delta_{\mathrm{err}}^{\mathrm{up}}$ are nonnegative and depend only on the numerical
parameters $(c,h,s,w^{\mathrm{in}},w^{\mathrm{out}})$ and on the distributions
$\mathcal R_{\mathrm{ref}},\mathcal R_{x_1},\mathcal R_{x_1'}$ through the truncation,
discretization, and aliasing probabilities.

See Section~\ref{sec:fft-implementation-details} for a discussion of how to rigorously control these error terms in practice.

This proves the desired two-sided bounds.
\end{proof}

\subsection{The proof of Theorem~\ref{thm:fft-relative-error}}
\label{app:proof-fft-relative-error}

\fftrelativeerror*

\begin{proof}

Since $\mathcal{D}^{\mathrm{blanket}} \in [L_n,U_n]$ from Theorem~\ref{thm:fft-blanket-error}, we obtain
\begin{align}
  \bigl|\widehat D_n - \mathcal{D}^{\mathrm{blanket}}\bigr|
  &\le \frac{U_n - L_n}{2} \notag\\
  &= \frac{1}{2}
     \Bigl(
       P(c;x_1) - P(-c;x_1)
       + \mathrm e^\varepsilon\bigl\{P(c;x_1') - P(-c;x_1')\bigr\}
     \Bigr)
     + \frac{\delta^{\mathrm{up}}_\mathrm{err} + \delta^{\mathrm{low}}_\mathrm{err}}{2} \notag\\
  &\le \frac{1}{2}\bigl|P(c;x_1) - P(-c;x_1)\bigr|
     + \frac{\mathrm e^\varepsilon}{2}\bigl|P(c;x_1') - P(-c;x_1')\bigr|
     + \frac{\delta^{\mathrm{up}}_\mathrm{err} + \delta^{\mathrm{low}}_\mathrm{err}}{2}.
  \label{eq:midpoint-error}
\end{align}
In particular, the total approximation error decomposes into a main
term, given by the differences $P(c;\cdot)-P(-c;\cdot)$, and the
numerical error terms $\delta^{\mathrm{up}}_\mathrm{err}$ and $\delta^{\mathrm{low}}_\mathrm{err}$
(truncation, discretization, and aliasing).
We bound each term in~\eqref{eq:midpoint-error} relative to
the true blanket divergence $\mathcal{D}^{\mathrm{blanket}}$.

We first consider the main term.
Recall that $M \sim 1+\mathrm{Bin}(n-1,\gamma)$ and
$Y_{2:n} \stackrel{\mathrm{i.i.d.}}{\sim} \mathcal R_{\mathrm{ref}}$, independent of $M$.
Let
\[
  S_{n-1}
  :=
  \sum_{i=2}^M l_\varepsilon\bigl(Y_i;x_1,x_1',\mathcal R_{\mathrm{ref}}\bigr),
\]
and let $\widetilde S_{n-1}$ denote the truncated, discretized, and
periodized version of $S_{n-1}$ that is actually used in the FFT
computation.
Moreover, for $x\in\{x_1,x_1'\}$ we write
\[
  Z_1 := l_\varepsilon\bigl(Y_1;x_1,x_1',\mathcal R_{\mathrm{ref}}\bigr),
  \qquad Y_1 \sim \mathcal R_x,
\]
and $\Delta_{\mathrm{ES}}$ is defined as in the proof of
Theorem~\ref{thm:fft-blanket-error} (i.e., $\mu_{S^{\mathrm{tr}}}-\mu_{S^{\mathrm{di}}}$).
With this notation, the main term appearing in~\eqref{eq:midpoint-error}
can be expressed as
\begin{align}
  P(c;x) - P(-c;x)
  &= (1-\Pr[\perp])\,
     \mathbb E_{Y_1\sim\mathcal R_x}\Bigl[
       \Pr\bigl[-c
                < \widetilde S_{n-1} + Z_1 + \Delta_{\mathrm{ES}}
                \le c
                \,\big|\,
                Z_1
          \bigr]
     \Bigr],
  \label{eq:main-band-prob}
\end{align}
where $\perp$ is the truncation event defined in
Theorem~\ref{thm:fft-blanket-error}.
In other words, the quantity $P(c;x)-P(-c;x)$ is precisely the probability
that the random sum $\widetilde S_{n-1}+Z_1+\Delta_{\mathrm{ES}}$ falls
inside the band $(-c,c)$, up to the multiplicative factor $(1-\Pr[\perp])$.

The approximating sum $\widetilde S_{n-1}$ is somewhat inconvenient to
analyze directly, so we next replace it by the true sum $S_{n-1}$ at the
price of an additional error term.
Let
\[
  \widetilde S_{n-1}
  = S_{n-1} + E_{n-1},
  \qquad
  E_{n-1} := \widetilde S_{n-1} - S_{n-1},
\]
where $E_{n-1}$ collects the truncation, discretization, and aliasing
errors.
Fix $c>0$ and define the good event
\[
  G := \bigl\{|E_{n-1}| \le c/2\bigr\},
  \qquad
  G^c := \bigl\{|E_{n-1}| > c/2\bigr\}.
\]
Then, for any realization of $Z_1$ and on the event $G$, the inclusion
\[
  \bigl\{
    -c < \widetilde S_{n-1} + Z_1 + \Delta_{\mathrm{ES}} \le c
  \bigr\}
  \subset
  \bigl\{
    |S_{n-1} + Z_1 + \Delta_{\mathrm{ES}}|
    < c + |E_{n-1}|
  \bigr\}
  \subset
  \bigl\{
    |S_{n-1} + Z_1 + \Delta_{\mathrm{ES}}|
    < \tfrac{3}{2}c
  \bigr\}
\]
holds.
Consequently,
\begin{align}
  &\Pr\bigl[
    -c < \widetilde S_{n-1} + Z_1 + \Delta_{\mathrm{ES}} \le c
    \,\big|\,
    Z_1
  \bigr]
  \notag\\
  &\qquad\le
  \Pr\bigl[
    |S_{n-1} + Z_1 + \Delta_{\mathrm{ES}}|
    < \tfrac{3}{2}c
    \,\big|\,
    Z_1
  \bigr]
  +
  \Pr\bigl[|E_{n-1}|>c/2\bigr].
  \label{eq:band-replace-S}
\end{align}
Inserting this bound into~\eqref{eq:main-band-prob} and taking the
expectation over $Z_1$ yields
\begin{align}
  |P(c;x) - P(-c;x)|
  &\le
  \mathbb E_{Y_1\sim\mathcal R_x}\Bigl[
    \Pr\bigl[
      |S_{n-1} + Z_1 + \Delta_{\mathrm{ES}}|
      < \tfrac{3}{2}c
      \,\big|\,
      Z_1
    \bigr]
  \Bigr]
  +
  \Pr\bigl[|E_{n-1}|>c/2\bigr].
  \label{eq:main-term-S-plus-error}
\end{align}
Here, we used $1-\Pr[\perp] \le 1$.

Next we further decompose the numerical error term
\(\Pr[|E_{n-1}|>c/2]\) into truncation, discretization, and wrap-around
(aliasing) contributions.

Recall that
\[
  S_{n-1}
  :=
  \sum_{i=2}^M l_\varepsilon\bigl(Y_i;x_1,x_1',\mathcal R_{\mathrm{ref}}\bigr)
\]
is the true sum, and that \(\widetilde S_{n-1}\) denotes the truncated,
discretized, and periodized version of \(S_{n-1}\) used in the FFT
computation. As before we write
\[
  \widetilde S_{n-1}
  = S_{n-1} + E_{n-1},
  \qquad
  E_{n-1} := \widetilde S_{n-1} - S_{n-1}.
\]

We view the construction of \(\widetilde S_{n-1}\) as a sequence of three
approximations:

\begin{itemize}
  \item \emph{Truncation:}
  \[
    S_{n-1}^{\mathrm{tr}}
    :=
    \sum_{i=2}^M Z_i^{\mathrm{tr}},
  \]
  where $Z_i^{\mathrm{tr}}$ is coupled with $l(Y_i)$ such that $Z_i^{\mathrm{tr}} = l(Y_i)$ whenever $l(Y_i) \in [s, s+w^{\mathrm{in}}]$.

  \item \emph{Discretization:}
  \[
    S_{n-1}^{\mathrm{di}}
    :=
    \sum_{i=2}^M Z_i^{\mathrm{di}},
  \]
  where \(Z_i^{\mathrm{di}}\) is the discretized version of
  \(Z_i^{\mathrm{tr}}\) on the grid \(\mathcal G = \{x_j=jh:j\in\mathbb Z\}\)
  with \(|Z_i^{\mathrm{tr}}-Z_i^{\mathrm{di}}|\le h/2\) almost surely.

  \item \emph{Wrap-around (aliasing):}
  \(\widetilde S_{n-1}\) is obtained from \(S_{n-1}^{\mathrm{di}}\) by
  periodic extension outside the outer window
  \([-w^{\mathrm{out}}/2,w^{\mathrm{out}}/2]\).
\end{itemize}

Accordingly, we decompose the total error into three components:
\begin{align*}
  E_{\mathrm{Mtrunc}}
  &:= S_{n-1}^{\mathrm{tr}} - S_{n-1},
  \\
  E_{\mathrm{Mdisc}}
  &:= S_{n-1}^{\mathrm{di}} - S_{n-1}^{\mathrm{tr}},
  \\
  E_{\mathrm{Malias}}
  &:= \widetilde S_{n-1} - S_{n-1}^{\mathrm{di}},
\end{align*}
so that
\[
  E_{n-1}
  = E_{\mathrm{Mtrunc}}
    + E_{\mathrm{Mdisc}}
    + E_{\mathrm{Malias}}.
\]

By the triangle inequality we have
\[
  |E_{n-1}|
  \le
  |E_{\mathrm{Mtrunc}}|
  + |E_{\mathrm{Mdisc}}|
  + |E_{\mathrm{Malias}}|.
\]
Hence, for any \(c>0\),
\[
  \Bigl\{|E_{n-1}|>\tfrac{c}{2}\Bigr\}
  \subset
  \Bigl\{|E_{\mathrm{Mtrunc}}|>\tfrac{c}{6}\Bigr\}
  \,\cup\,
  \Bigl\{|E_{\mathrm{Mdisc}}|>\tfrac{c}{6}\Bigr\}
  \,\cup\,
  \Bigl\{|E_{\mathrm{Malias}}|>\tfrac{c}{6}\Bigr\},
\]
and by the union bound this yields
\begin{equation}
  \label{eq:E-decomposition}
  \Pr\bigl[|E_{n-1}|>c/2\bigr]
  \le
  \Pr\bigl[|E_{\mathrm{Mtrunc}}|>c/6\bigr]
  +
  \Pr\bigl[|E_{\mathrm{Mdisc}}|>c/6\bigr]
  +
  \Pr\bigl[|E_{\mathrm{Malias}}|>c/6\bigr].
\end{equation}
This decomposes the tail probability of the total numerical error into
three contributions corresponding to truncation, discretization, and
wrap-around, respectively.
In particular, since the error terms
\(\delta_{\mathrm{err}}^{\mathrm{low}}\) and
\(\delta_{\mathrm{err}}^{\mathrm{up}}\) in
Theorem~\ref{thm:fft-blanket-error} are also built from these three errors, we evaluate each of them separately in the sequel.

Recall that the error terms in Theorem~\ref{thm:fft-blanket-error} are
given by
\[
  \delta_{\mathrm{err}}^{\mathrm{low}}=
  \delta_{\mathrm{err}}^{\mathrm{up}}:=
  E_{\mathrm{alias}}(x_1)
  + E_{\mathrm{disc}}(x_1)
  + 2E_{\mathrm{trunc}}(x_1)
  + \mathrm e^\varepsilon\bigl(
    E_{\mathrm{alias}}(x_1')
    + E_{\mathrm{disc}}(x_1')
    + 2E_{\mathrm{trunc}}(x_1')
  \bigr),
\]
where, for brevity, we suppress the dependence of
\(E_{\mathrm{alias}},E_{\mathrm{disc}},E_{\mathrm{trunc}}\) on
\((c,h,s,w^{\mathrm{in}},w^{\mathrm{out}})\).
Thus \(\delta_{\mathrm{err}}^{\mathrm{low}}=\delta_{\mathrm{err}}^{\mathrm{up}}\), and
\begin{align*}
  \delta_{\mathrm{trunc}}
  &:=
  2E_{\mathrm{trunc}}(x_1)
  + 2\mathrm e^\varepsilon E_{\mathrm{trunc}}(x_1'),
  \\
  \delta_{\mathrm{disc}}
  &:=
  E_{\mathrm{disc}}(x_1)
  + \mathrm e^\varepsilon E_{\mathrm{disc}}(x_1'),
  \\
  \delta_{\mathrm{alias}}
  &:=
  E_{\mathrm{alias}}(x_1)
  + \mathrm e^\varepsilon E_{\mathrm{alias}}(x_1').
\end{align*}

Combining this with \eqref{eq:main-term-S-plus-error} and
\eqref{eq:midpoint-error}, and using that the bound on
\(\Pr[|E_{n-1}|>c/2]\) does not depend on \(x\), we arrive at
\begin{align}
  \bigl|\widehat D_n - \mathcal{D}^{\mathrm{blanket}}\bigr|
  &\le
  \frac{1}{2}\,
  \mathbb E_{Y_1\sim\mathcal R_{x_1}}\Bigl[
    \Pr\bigl[
      |S_{n-1} + Z_1 + \Delta_{\mathrm{ES}}|
      < \tfrac{3}{2}c
      \,\big|\,
      Z_1
    \bigr]
  \Bigr]
  \notag\\
  &\quad
  +
  \frac{\mathrm e^\varepsilon}{2}\,
  \mathbb E_{Y_1\sim\mathcal R_{x_1'}}\Bigl[
    \Pr\bigl[
      |S_{n-1} + Z_1 + \Delta_{\mathrm{ES}}|
      < \tfrac{3}{2}c
      \,\big|\,
      Z_1
    \bigr]
  \Bigr]
  \notag\\
  &\quad
  + \frac{1+\mathrm e^\varepsilon}{2}\,
    \Pr\bigl[|E_{\mathrm{Mtrunc}}|>c/6\bigr]
  + \delta_{\mathrm{trunc}}
  \notag\\
  &\quad
  + \frac{1+\mathrm e^\varepsilon}{2}\,
    \Pr\bigl[|E_{\mathrm{Mdisc}}|>c/6\bigr]
  + \delta_{\mathrm{disc}}
  \notag\\
  &\quad
  + \frac{1+\mathrm e^\varepsilon}{2}\,
    \Pr\bigl[|E_{\mathrm{Malias}}|>c/6\bigr]
  + \delta_{\mathrm{alias}}.
  \label{eq:full-error-by-type}
\end{align}

In other words, the total approximation error splits into a \emph{main
band} term (the first two lines of \eqref{eq:full-error-by-type}) plus
three groups of numerical error:
\begin{itemize}
  \item truncation error:
    \(\delta_{\mathrm{trunc}}\) and
    \(\Pr[|E_{\mathrm{Mtrunc}}|>c/6]\),
  \item discretization error:
    \(\delta_{\mathrm{disc}}\) and
    \(\Pr[|E_{\mathrm{Mdisc}}|>c/6]\),
  \item aliasing (wrap-around) error:
    \(\delta_{\mathrm{alias}}\) and
    \(\Pr[|E_{\mathrm{Malias}}|>c/6]\).
\end{itemize}
Thus each of the three numerical effects (truncation, discretization,
and wrap-around) contributes via a \(\delta\)-part (coming from
Theorem~\ref{thm:fft-blanket-error}) and an \(E_{\mathrm{M}}\)-part
coming from the replacement of \(\widetilde S_{n-1}\) by the true sum
\(S_{n-1}\) in the main term.

\textbf{Step 1: Main-term error and choice of $c_n$.}
We first evaluate the main term using the following lemma.

\begin{restatable}[Band probability]{lemma}{bandprobability}
\label{lem:band-prob-fixed-x-bigO}
Assume the setting of Lemma~\ref{lem:blanket-asymptotics-general}, and suppose that the local
randomizer $\mathcal R$ satisfies Assumption~\ref{assump:regularity}.

Let $(c_n)_{n\ge1}$ be a sequence with $c_n>0$ such that
\[
  \frac{c_n}{\sigma_{\varepsilon_n}\sqrt{n-1}} \longrightarrow 0,
\]
and suppose that
$
  \Delta_{\mathrm{ES}}=O\left(\frac{\sigma_{\varepsilon_n}\sqrt{n}}{\sqrt{\log n}}\right).
$
Write
$
  t_n := -\frac{\mu_n\sqrt{n-1}}{\sigma_{\varepsilon_n}}.
$
Then there exists a constant $C<\infty$, depending only on the
constants in Assumption~\ref{assump:regularity}, such that for all
sufficiently large $n$ and for every $x\in\{x_1,x_1'\}$,
\begin{equation}
  \label{eq:band-prob-bigO}
  \mathbb E_{Z_1\sim\mathcal R_x}\Bigl[
    \Pr\bigl[
      |S_{n-1} + Z_1 + \Delta_{\mathrm{ES}}|
      < \tfrac{3}{2}c_n
      \,\big|\,
      Z_1
    \bigr]
  \Bigr]
  \;\le\;
  C \,\frac{c_n}{\sigma_{\varepsilon_n}\sqrt{n-1}}\,
  \varphi(t_n).
\end{equation}
\end{restatable}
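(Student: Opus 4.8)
The plan is to bound, for each fixed value $Z_1=z$ and each $x\in\{x_1,x_1'\}$, the conditional band probability $\Pr\bigl[|S_{n-1}+z+\Delta_{\mathrm{ES}}|<\tfrac32 c_n\bigr]$, and then integrate against the law of $Z_1$ under $\mathcal R_x$. Writing $S_{n-1}=(n-1)\mu_n+\sigma_{\varepsilon_n}\sqrt{n-1}\,T_{n-1}$ and recalling $t_n=-\mu_n\sqrt{n-1}/\sigma_{\varepsilon_n}$, this event becomes $\{T_{n-1}\in(a_n,b_n)\}$ with window width $b_n-a_n=3c_n/(\sigma_{\varepsilon_n}\sqrt{n-1})$ and midpoint $t_n-(z+\Delta_{\mathrm{ES}})/(\sigma_{\varepsilon_n}\sqrt{n-1})$. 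By Assumption~\ref{assump:regularity} and Lemma~\ref{lem:truncation-blanket}, $\sigma_{\varepsilon_n}$ is bounded above and away from $0$, so the window width tends to $0$; moreover $t_n=\Theta(\varepsilon_n\sqrt n)=O(\sqrt{\log n})$, and the hypothesis $\Delta_{\mathrm{ES}}=O(\sigma_{\varepsilon_n}\sqrt n/\sqrt{\log n})$ forces $\Delta_{\mathrm{ES}}/(\sigma_{\varepsilon_n}\sqrt{n-1})=O(1/\sqrt{\log n})$, so the midpoint differs from $t_n$ by $O(1/\sqrt{\log n})+O(|z|/\sqrt n)$; in particular $a_n,b_n$ lie within $o(1)$ of $t_n$ whenever $|z|=o(\sqrt n)$.

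\textbf{The continuous / non-lattice case.} Here I would apply the triangular-array Edgeworth expansion (Theorem~\ref{thm:edgeworth-triangular}) to $T_{n-1}$ with a fixed but large order $s$, valid since the truncated selectors satisfy the required uniform moment, non-degeneracy, and Cram\'er conditions (the Cram\'er condition coming from the absolutely continuous component in (Cont), or from non-latticeness in the bounded case). Because $t_n=O(\sqrt{\log n})$, each Edgeworth correction $(n-1)^{-j/2}P_{j,n-1}(u)\varphi(u)$ evaluated within $o(1)$ of $t_n$ has derivative of order $(n-1)^{-j/2}\mathrm{poly}(\log n)=o(1)$; hence the Edgeworth approximant of $F_{n-1}$ has density at most $2\varphi(u)$ on a neighbourhood of $t_n$ for all large $n$, and
\[
  F_{n-1}(b_n)-F_{n-1}(a_n)\;\le\;2\Bigl(\sup_{u\in[a_n,b_n]}\varphi(u)\Bigr)(b_n-a_n)\;+\;2\,o\bigl((n-1)^{-(s-2)/2}\bigr).
\]
I would then dominate $\sup_{[a_n,b_n]}\varphi$ by $\varphi(t_n)$: since $|u^2-t_n^2|\le 2|t_n|\,|u-t_n|+|u-t_n|^2$ and $|u-t_n|=O(1/\sqrt{\log n})+O(|z|/\sqrt n)+O(c_n/\sqrt n)$, one gets $\sup_{[a_n,b_n]}\varphi\le C\varphi(t_n)\exp\bigl(c|z|\sqrt{\log n}/\sqrt n\bigr)$. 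Integrating over $z=Z_1\sim\mathcal R_x$ and splitting on $\{|Z_1|\le\sqrt{n/\log n}\}$ and its complement: on the first event the exponential factor is $O(1)$, giving the main contribution $\le C'\varphi(t_n)(b_n-a_n)=3C'\varphi(t_n)c_n/(\sigma_{\varepsilon_n}\sqrt{n-1})$; on the complement the band probability is at most $1$ and $\Pr[|Z_1|>\sqrt{n/\log n}]\le\mathbb E[|Z_1|^p](\log n/n)^{p/2}$ is negligible for $p$ large. The Edgeworth remainder $o((n-1)^{-(s-2)/2})$ is negligible against $c_n\varphi(t_n)/(\sigma_{\varepsilon_n}\sqrt{n-1})$ once $s$ is large, using $\varphi(t_n)=n^{-O(1)}$. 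All constants depend only on the constants of Assumption~\ref{assump:regularity}, and the estimates are uniform over $(x_1,x_1')$ and $\mathcal R_{\mathrm{ref}}$.

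\textbf{The bounded lattice case and the main obstacle.} When $l_\varepsilon$ is bounded but \emph{finitely supported} (e.g.\ $k$-RR), Cram\'er's condition may fail, so I would instead combine Slastnikov's triangular-array moderate-deviation theorem (Theorem~\ref{thm:slastnikov-triangular}) with an exponential-tilting (Bahadur--Rao) argument: tilt the i.i.d.\ bounded summands so that the tilted mean of $S_{n-1}$ lands at the band center (the tilt parameter $\lambda_n=\Theta(t_n/\sqrt n)\to0$ exists and is interior to the analyticity region of the cumulant generating function since the summands are bounded), which extracts a factor $\exp\bigl((n-1)\Lambda(\lambda_n)-\lambda_n y\bigr)=\Theta(\varphi(t_n))$ and reduces matters to a \emph{local} estimate for a sum of bounded i.i.d.\ variables at its mean. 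I expect this local step to be the main obstacle: for a band of \emph{fixed} width a moderate-deviation local limit theorem suffices, but to keep the sharp linear-in-$c_n$ factor one needs the band width $c_n$ to dominate the lattice span of $S_{n-1}$, which forces one to exclude lattice $l_\varepsilon$ such as $2$-RR (cf.\ the non-lattice hypothesis in Theorem~\ref{thm:fft-relative-error}). In the bounded non-lattice case the local bound follows from Stone's local limit theorem together with the moderate-deviation tail control of Slastnikov, and the argument then closes with the same bookkeeping as in the continuous case; since $|Z_1|\le K$ almost surely there, the split on $|Z_1|$ is unnecessary.
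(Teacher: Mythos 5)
Your treatment of the main case is essentially the paper's own proof: the paper likewise standardizes the event to a window for $T_{n-1}$ centered at $t_n-\eta_n(Z_1)-\eta_n(\Delta_{\mathrm{ES}})$ of half-width $\eta_n(\tfrac{3}{2}c_n)$, proves a window bound $\Pr[m-\eta<T_{n-1}\le m+\eta]\le C\,\eta\,\varphi(m)$ on the moderate-deviation region $|m|\le C_1\sqrt{\log n}$, $|\eta|\le C_1/\sqrt n$ via the triangular-array Edgeworth expansion (mean-value bound for the $\Phi$ term, polynomial-growth bounds for the correction terms, and a large expansion order $s$ so the uniform remainder is $o(\eta\varphi(m))$, using $\varphi(t_n)\ge n^{-O(1)}$), and then integrates over $Z_1$ after splitting on $\{|Z_1|\le n^\beta\}$, with Markov/moment bounds on the complement and a likelihood-ratio comparison $\varphi(m_n(Z_1))\le C\varphi(t_n)$ on the good event; your threshold $\sqrt{n/\log n}$ instead of $n^\beta$, and differentiating the Edgeworth approximant rather than evaluating it at the two endpoints, are immaterial variations. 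Where you genuinely diverge is the bounded/discrete case: the paper gives no separate argument there — it imports the non-lattice hypothesis from Theorem~\ref{thm:fft-relative-error} (the lemma is only used inside that proof) and runs the same Edgeworth machinery, with the lattice obstruction (e.g.\ $2$-RR) relegated to a footnote whose content matches your remark that the band must dominate the lattice span. Your alternative of exponential tilting plus Stone's local limit theorem plus Slastnikov is a defensible and arguably more robust route for finitely supported summands, since Cram\'er's condition required by Theorem~\ref{thm:edgeworth-triangular} fails for purely atomic laws, which is precisely where the paper's direct appeal to the Edgeworth expansion is weakest. The caveat on your side is that Stone's theorem carries an additive $o(n^{-1/2})$ error that is not proportional to the window width, so retaining the linear-in-$c_n$ factor for shrinking windows still requires $c_n$ to be polynomially (here logarithmically) bounded below, exactly as in the application where $c_n=\Theta(\eta_{\mathrm{main}}/\log n)$; you flag this local step as the main obstacle rather than closing it, which leaves your discrete-case sketch at roughly the same level of completeness as the paper's, but via a different mechanism.
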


We prove this lemma in Section~\ref{app:proof-band-prob-fixed-x}.
We now evaluate the main-term error.
Recall from Lemma~\ref{lem:band-prob-fixed-x-bigO} that for any sequence
$c_n>0$ satisfying $c_n / (\sigma_{\varepsilon_n}\sqrt{n-1}) \to 0$ we have
for each $x \in \{x_1,x_1'\}$,
\begin{equation}
  \label{eq:band-prob-recall}
  \mathbb E_{Z_1\sim\mathcal R_x}\Bigl[
    \Pr\bigl[
      |S_{n-1} + Z_1 + \Delta_{\mathrm{ES}}|
      < \tfrac{3}{2}c_n
      \,\big|\,
      Z_1
    \bigr]
  \Bigr]
  \;\le\;
  C \,\frac{c_n}{\sigma_{\varepsilon_n}\sqrt{n-1}}\,
  \varphi(t_n),
\end{equation}
for some constant $C<\infty$ independent of $n$.
Inserting this bound into the midpoint error decomposition
\eqref{eq:full-error-by-type} yields
\begin{equation}
  \label{eq:main-term-abs-error}
  \bigl|\text{(main term in \eqref{eq:full-error-by-type})}\bigr|
  \;=\;
  O\!\left(
    \frac{c_n}{\sigma_{\varepsilon_n}\sqrt{n}}\,
    \varphi(t_n)
  \right).
\end{equation}

On the other hand, by the asymptotic expansion of the blanket divergence,
we have
\begin{equation}
  \label{eq:blanket-asymp}
  \mathcal{D}^{\mathrm{blanket}}
  \;=\;
  \varphi\bigl(t_n\bigr)\,
  \left(
    \frac{1}{\chi^3}\,
    \frac{1}{\varepsilon_n^2 n^{3/2}}
  \right)\bigl(1+o(1)\bigr),
\end{equation}
where $\chi>0$ is a constant.

Dividing \eqref{eq:main-term-abs-error} by \eqref{eq:blanket-asymp}, and
using that $\sigma_{\varepsilon_n}\to\sigma>0$ as $n\to\infty$, we obtain
\begin{align}
  \frac{\bigl|\text{(main term)}\bigr|}{\mathcal{D}^{\mathrm{blanket}}}
  &=
  O\!\left(
    \frac{\dfrac{c_n}{\sigma_{\varepsilon_n}\sqrt{n}}\varphi(t_n)}
         {\varphi(t_n)\,\dfrac{1}{\chi^3}\dfrac{1}{\varepsilon_n^2 n^{3/2}}}
  \right)
  (1+o(1))
  \notag\\[4pt]
  &=
  O\!\left(
    c_n \,\varepsilon_n^2 n \,\frac{\chi^3}{\sigma_{\varepsilon_n}}
  \right)
  (1+o(1))
  \notag\\[4pt]
  &=
  O\!\left(
    c_n \,\varepsilon_n^2 n
  \right),
  \label{eq:main-term-rel-error}
\end{align}
where the last step uses that $\chi^3/\sigma_{\varepsilon_n}$ is bounded
uniformly in $n$.

We would like the contribution of the main term to the overall relative
error to be at most $\eta_{\mathrm{main}}$, that is,
\[
  \frac{\bigl|\text{(main term)}\bigr|}{\mathcal{D}^{\mathrm{blanket}}}
  \;\lesssim\;
  \eta_{\mathrm{main}}.
\]
From \eqref{eq:main-term-rel-error}, a sufficient condition is
\begin{equation}
  \label{eq:cn-condition}
  c_n \,\varepsilon_n^2 n
  \;\lesssim\;
  \eta_{\mathrm{main}}.
\end{equation}
Equivalently, we may choose $c_n$ of order
\begin{equation}
  \label{eq:cn-choice-general}
  c_n
  \;=\;
  \Theta\!\left(
    \frac{\eta_{\mathrm{main}}}{\varepsilon_n^2 n}
  \right).
\end{equation}
With such a choice, the main-term contribution to the relative error is
$O(\eta_{\mathrm{main}})$.

Finally, we check that the technical assumption of
Lemma~\ref{lem:band-prob-fixed-x-bigO},
namely $c_n / (\sigma_{\varepsilon_n}\sqrt{n}) \to 0$, is satisfied.
Using \eqref{eq:cn-choice-general}, we have
\[
  \frac{c_n}{\sigma_{\varepsilon_n}\sqrt{n}}
  =
  \Theta\!\left(
    \frac{\eta_{\mathrm{main}}}{\varepsilon_n^2 n^{3/2}}
  \right).
\]
Under our standing assumption $\varepsilon_n^2 n \to \infty$
(e.g.\ $\varepsilon_n^2 = \Theta(\log n / n)$), the right-hand side tends
to $0$, so the condition of
Lemma~\ref{lem:band-prob-fixed-x-bigO} indeed holds.

In particular, in the canonical regime
$\varepsilon_n^2 = \Theta(\log n/n)$, we obtain the more explicit scaling
\[
  c_n
  \;=\;
  \Theta\!\left(
    \frac{\eta_{\mathrm{main}}}{\log n}
  \right).
\]

\medskip

\noindent

\textbf{Step 2: Truncation error and choice of $(s_n,w_n^{\mathrm{in}})$.}
Recall the truncation event
\[
  \perp
  :=
  \Bigl\{
    \exists i\in\{2,\dots,n\} :
    B_i = 1,\ 
    l_{\varepsilon_n}(Y_i;x_1,x_1',\mathcal R_{\mathrm{ref}})
      \notin [s_n,s_n+w_n^{\mathrm{in}}]
  \Bigr\},
\]
and write
\[
  E_{\mathrm{trunc}}(n;s_n,w_n^{\mathrm{in}})
  :=
  \Pr[\perp].
\]
Moreover, let
\[
  S_{n-1}
  :=
  \sum_{i=2}^M l_{\varepsilon_n}(Y_i;x_1,x_1',\mathcal R_{\mathrm{ref}}),
  \qquad
  S_{n-1}^{\mathrm{tr}}
  :=
  \sum_{i=2}^M Z_i^{\mathrm{tr}},
\]
where $Z_i^{\mathrm{tr}}$ are i.i.d.\ copies of the truncated random
variable used in the FFT construction, and define
\[
  E_{\mathrm{Mtrunc}}
  :=
  S_{n-1}^{\mathrm{tr}} - S_{n-1}.
\]

We control $E_{\mathrm{trunc}}$ and $\Pr[|E_{\mathrm{Mtrunc}}|>c_n/6]$
in terms of the blanket divergence.

\medskip

\textbf{Control of $E_{\mathrm{trunc}}$.}

Denote by
\[
  \tau_n := \max\{|s_n|,\ |s_n+w_n^{\mathrm{in}}|\}
\]
the truncation radius.
Assumption~\ref{assump:regularity} guarantees that $l_{\varepsilon_n}(Y)$
has finite moments of all integer orders.
In particular, for every $p\ge2$ there exists a constant
$C_p<\infty$ such that
\[
  \mathbb E\bigl[|l_{\varepsilon_n}(Y)|^p\bigr] \le C_p
  \qquad\text{for all }n.
\]
By Markov's inequality, this implies a polynomial tail bound
\begin{equation}
  \label{eq:ell-poly-tail}
  \Pr\bigl[|l_{\varepsilon_n}(Y)|>T\bigr]
  \;\le\;
  C_p\,T^{-p},
  \qquad T>0.
\end{equation}

Using the binomial structure of $M-1$ and a union bound, we obtain
\begin{align}
  E_{\mathrm{trunc}}(n;s_n,w_n^{\mathrm{in}})
  &=
  \Pr\bigl[\perp\bigr]
  \notag\\
  &\le
  \Pr\Bigl[
    \bigcup_{i=2}^n
    \bigl\{B_i=1,\ l_{\varepsilon_n}(Y_i)\notin[s_n,s_n+w_n^{\mathrm{in}}]\bigr\}
  \Bigr]
  \notag\\
  &\le
  K_\gamma\,n\,
  \Pr\bigl[|l_{\varepsilon_n}(Y)|>\tau_n\bigr]
  \notag\\
  &\le
  K_\gamma C_p\,n\,\tau_n^{-p},
  \label{eq:Etrunc-poly-bound}
\end{align}
for some constant $K_\gamma<\infty$ depending only on $\gamma$.

Dividing \eqref{eq:Etrunc-poly-bound} by the blanket divergence, we get
\begin{align}
  \frac{E_{\mathrm{trunc}}(n;s_n,w_n^{\mathrm{in}})}{\mathcal{D}^{\mathrm{blanket}}}
  &\le
  K'_{\gamma,p}\,
  n\,\tau_n^{-p}\,
  \frac{\varepsilon_n^2 n^{3/2}}{\varphi(t_n)}
  (1+o(1))
  \notag\\
  &=
  K'_{\gamma,p}\,
  \varepsilon_n^2 n^{5/2}\,
  \frac{\tau_n^{-p}}{\varphi(t_n)}
  (1+o(1)).
  \label{eq:Etrunc-over-Dblanket}
\end{align}
In the regime of interest (in particular when $\varepsilon_n^2 n$ and
$t_n$ grow at most logarithmically in $n$), the prefactor
$\varepsilon_n^2 n^{3/2}/\varphi(t_n)$ grows at most polynomially in $n$.
Hence we can choose $p$ large enough and a polynomially growing sequence
$\tau_n$ such that the right-hand side of
\eqref{eq:Etrunc-over-Dblanket} is arbitrarily small.
For concreteness, fix $p\ge4$ and set
\begin{equation}
  \label{eq:Tn-def-poly}
  \tau_n
  :=
  n^{\beta}(\log n)^{1/p}
\end{equation}
with some $\beta>0$ sufficiently large (depending on $p$ and on the
growth rate of $\varepsilon_n^2 n^{3/2}/\varphi(t_n)$).
Then there exists $n_0$ such that, for all $n\ge n_0$,
\begin{equation}
  \label{eq:Etrunc-vs-Dblanket}
  \frac{E_{\mathrm{trunc}}(n;s_n,w_n^{\mathrm{in}})}{\mathcal{D}^{\mathrm{blanket}}}
  \le
  \frac{\eta_{\mathrm{trunc}}}{4},
\end{equation}
where $\eta_{\mathrm{trunc}}>0$ is the prescribed accuracy knob.


\medskip
\medskip

\textbf{Control of $\Pr\left[|E_{\mathrm{Mtrunc}}|>c_n/6\right]$.}
Recall that $Z_i^{\mathrm{tr}}$ is defined as $l_{\varepsilon_n}(Y_i)$ conditioned on the truncation interval.
Under the natural coupling where $Z_i^{\mathrm{tr}} = l_{\varepsilon_n}(Y_i)$ whenever $l_{\varepsilon_n}(Y_i) \in [s_n, s_n+w_n^{\mathrm{in}}]$, we have
\[
  S_{n-1}^{\mathrm{tr}} = S_{n-1}
  \quad \text{whenever the event } \perp^c \text{ occurs}.
\]
Consequently, on the event $\perp^c$, the error $E_{\mathrm{Mtrunc}} = S_{n-1}^{\mathrm{tr}} - S_{n-1}$ is identically zero.
This implies the inclusion
\[
  \bigl\{|E_{\mathrm{Mtrunc}}| > c_n/6\bigr\}
  \;\subset\;
  \bigl\{E_{\mathrm{Mtrunc}} \neq 0\bigr\}
  \;\subset\;
  \perp.
\]
Therefore, regardless of the value of $c_n$, we have the simple bound
\begin{equation}
  \label{eq:EMtrunc-simple-bound}
  \Pr\bigl[|E_{\mathrm{Mtrunc}}| > c_n/6\bigr]
  \;\le\;
  \Pr[\perp]
  \;=\;
  E_{\mathrm{trunc}}(n;s_n,w_n^{\mathrm{in}}).
\end{equation}
Using the bound \eqref{eq:Etrunc-vs-Dblanket} derived in the previous step, we immediately obtain
\begin{equation}
  \label{eq:EMtrunc-vs-Dblanket}
  \Pr\bigl[|E_{\mathrm{Mtrunc}}| > c_n/6\bigr]
  \;\le\;
  \frac{\eta_{\mathrm{trunc}}}{4}\,\mathcal{D}^{\mathrm{blanket}}.
\end{equation}

\medskip

Combining \eqref{eq:Etrunc-vs-Dblanket} and \eqref{eq:EMtrunc-vs-Dblanket}, and recalling that $\delta_{\mathrm{trunc}}$ in \eqref{eq:full-error-by-type} is a linear combination of $E_{\mathrm{trunc}}$ terms, we conclude that the total truncation contribution is bounded by
\[
  \delta_{\mathrm{trunc}}
  +
  \Pr\bigl[|E_{\mathrm{Mtrunc}}|>c_n/6\bigr]
  \;\le\;
  C_{\mathrm{trunc}}\,
  \eta_{\mathrm{trunc}}\,
  \mathcal{D}^{\mathrm{blanket}}
\]
for a suitable constant $C_{\mathrm{trunc}}$.
Thus, the truncation contribution to the overall relative error is $O(\eta_{\mathrm{trunc}})$.

\medskip

\noindent

\textbf{Step 3: Discretization error and choice of $h_n$.}
On the truncated interval $[s_n,s_n+w_n^{\mathrm{in}}]$ we discretize
the privacy amplification random variable
$
  Z_i^{\mathrm{tr}}
$
into a grid with mesh size $h_n>0$, obtaining
$Z_i^{\mathrm{di}}$ such that
\[
  |Z_i^{\mathrm{tr}}-Z_i^{\mathrm{di}}|\le \frac{h_n}{2} \quad \text{almost surely.}
\]
We denote
\[
  S_{n-1}^{\mathrm{tr}} := \sum_{i=2}^M Z_i^{\mathrm{tr}},
  \qquad
  S_{n-1}^{\mathrm{di}} := \sum_{i=2}^M Z_i^{\mathrm{di}},
\]
and write the discretization error of the sum as
\[
  E^{\mathrm{disc}}
  :=
  S_{n-1}^{\mathrm{di}} - S_{n-1}^{\mathrm{tr}}
  =
  \sum_{i=2}^M \bigl(Z_i^{\mathrm{di}} - Z_i^{\mathrm{tr}}\bigr).
\]
Following the notation in the proof of
Theorem~\ref{thm:fft-blanket-error}, we set
\[
  \Delta_{\mathrm{ES}}
  :=
  \mathbb E[E^{\mathrm{disc}}],
  \qquad
  U_n := E^{\mathrm{disc}} - \Delta_{\mathrm{ES}}.
\]
By the hypothesis of Theorem~\ref{thm:fft-relative-error} we assume
that there exists a constant $K_{\mathrm{ES}}<\infty$ such that
\begin{equation}
  \label{eq:DeltaES-assumption}
  |\Delta_{\mathrm{ES}}| \le K_{\mathrm{ES}}\,c_n
  \qquad\text{for all sufficiently large }n.
\end{equation}
Reducing the constant in the definition of $c_n$ if necessary, we may
and do assume that for $n$ large enough
\begin{equation}
  \label{eq:DeltaES-small}
  |\Delta_{\mathrm{ES}}| \le \frac{c_n}{12}.
\end{equation}

\medskip

\textbf{Control of $E_{\mathrm{disc}}$.}
Recall that in Step~2 of the proof of
Theorem~\ref{thm:fft-blanket-error} the discretization error for a
fixed hypothesis $x\in\{x_1,x_1'\}$ was defined as
\[
  E_{\mathrm{disc}}(x;c_n,h_n,s_n,w_n^{\mathrm{in}})
  :=
  \Pr\Bigl[
    \sum_{i=2}^M Z_i^{\mathrm{tr}} - \sum_{i=2}^M Z_i^{\mathrm{di}}
    - \Delta_{\mathrm{ES}} > c_n
  \Bigr].
\]
Equivalently, if we set
\[
  R_n
  :=
  \sum_{i=2}^M
  \bigl(Z_i^{\mathrm{tr}} - Z_i^{\mathrm{di}} - \Delta_{\mathrm{ES}}\bigr),
\]
then, we have $E_{\mathrm{disc}}(x;c_n,h_n,s_n,w_n^{\mathrm{in}})=\Pr[R_n>c_n]$.
The summands of $R_n$ are conditionally independent, mean zero, and
bounded in absolute value by a constant multiple of $h_n$, while
$M-1$ has mean $(n-1)\gamma$ and sub-Gaussian tails.
Hence $R_n$ is sub-Gaussian with variance proxy of order $n h_n^2$.
By Hoeffding's inequality there exists a constant $c>0$ such that
\begin{equation}
  \label{eq:Rn-tail}
  \Pr\left[R_n>c_n\right]
  \le
  \exp\!\left(
    -c\,\frac{c_n^2}{n h_n^2}
  \right).
\end{equation}

Fix a discretization accuracy knob $\eta_{\mathrm{disc}}>0$ and define
\[
  \Lambda_h
  :=
  \log\frac{4}{\eta_{\mathrm{disc}} \mathcal{D}^{\mathrm{blanket}}}.
\]

We choose
\begin{equation}
  \label{eq:hn-def}
  h_n
  :=
  \frac{c_n}{\sqrt{c\,n\,\Lambda_h}}.
\end{equation}
Then \eqref{eq:Rn-tail} yields
\[
  E_{\mathrm{disc}}(x;c_n,h_n,s_n,w_n^{\mathrm{in}})
  \le
  \exp(-\Lambda_h)
  =
  \frac{\eta_{\mathrm{disc}}}{4}\,\mathcal{D}^{\mathrm{blanket}}.
\]
In particular, since $\delta_{\mathrm{disc}}$ in
Theorem~\ref{thm:fft-blanket-error} is a linear combination of
$E_{\mathrm{disc}}(x_1;\cdot)$ and $E_{\mathrm{disc}}(x_1';\cdot)$
with nonnegative coefficients depending only on $\varepsilon_n$,
we obtain
\[
  \delta_{\mathrm{disc}}
  \;\le\;
  C_{\mathrm{disc},1}\,
  \eta_{\mathrm{disc}}\,
  \mathcal{D}^{\mathrm{blanket}}
\]
for some constant $C_{\mathrm{disc},1}>0$.

Under our standing regime $\varepsilon_n^2 n = \Theta(\log n)$ we
have $\log(1/\mathcal{D}^{\mathrm{blanket}})=\Theta(\log n)$.
Consequently
\[
  \Lambda_h = \log\frac{4}{\eta_{\mathrm{disc}} \mathcal{D}^{\mathrm{blanket}}}
  = \Theta(\log n),
\]
and in particular $h_n$ in~\eqref{eq:hn-def} satisfies
\[
  h_n
  =
  \Theta\left(\frac{c_n}{\sqrt{n\log n}}\right)
\]
up to multiplicative constants.

\medskip

\textbf{Control of $E_{\mathrm{Mdisc}}$.}
In the decomposition \eqref{eq:E-decomposition} we introduced the
random variable
\[
  E_{\mathrm{Mdisc}}
  :=
  S_{n-1}^{\mathrm{di}} - S_{n-1}^{\mathrm{tr}}
  =
  E^{\mathrm{disc}},
\]
which measures the effect of discretization at the level of the sum
$S_{n-1}$ when passing from $S_{n-1}^{\mathrm{tr}}$ to
$S_{n-1}^{\mathrm{di}}$.
We control the tail probability
$\Pr\left[|E_{\mathrm{Mdisc}}|>c_n/6\right]$ using the decomposition
$E^{\mathrm{disc}} = U_n - \Delta_{\mathrm{ES}}$.
Here, $U_n := E_{\mathrm{Mdisc}} - \mathbb{E}[E_{\mathrm{Mdisc}}] = E_{\mathrm{Mdisc}} + \Delta_{\mathrm{ES}}$.

By \eqref{eq:DeltaES-small}, for all sufficiently large $n$ we have
$|\Delta_{\mathrm{ES}}|\le c_n/12$, and therefore
\[
  \bigl\{|E_{\mathrm{Mdisc}}|>c_n/6\bigr\}
  =
  \bigl\{|U_n - \Delta_{\mathrm{ES}}|>c_n/6\bigr\}
  \subset
  \bigl\{|U_n|>c_n/12\bigr\}.
\]
Consequently,
\begin{equation}
  \label{eq:EMdisc-tail-reduction}
  \Pr\bigl[|E_{\mathrm{Mdisc}}|>c_n/6\bigr]
  \le
  \Pr\bigl[|U_n|>c_n/12\bigr].
\end{equation}
The random variable $U_n$ is a sum of conditionally independent,
mean-zero increments, each bounded by $h_n$ in absolute value.
Applying Hoeffding's inequality once more, we obtain
\begin{equation}
  \label{eq:Un-tail}
  \Pr\bigl[|U_n|>c_n/12\bigr]
  \le
  2\exp\!\Bigl(
    -\kappa\,
    \frac{c_n^2}{n h_n^2}
  \Bigr),
\end{equation}
for some constant $\kappa>0$.
With $h_n$ chosen as in~\eqref{eq:hn-def}, the right-hand side of
\eqref{eq:Un-tail} is bounded by
\[
  2\exp(-\kappa \Lambda_h)
  \le
  \frac{\eta_{\mathrm{disc}}}{4}\,\mathcal{D}^{\mathrm{blanket}}
\]
for all sufficiently large $n$, after possibly adjusting the numerical
constant in the definition of $\Lambda_h$.
Combining this with \eqref{eq:EMdisc-tail-reduction}, we conclude that
\begin{equation}
  \label{eq:EMdisc-vs-Dblanket}
  \Pr\bigl[|E_{\mathrm{Mdisc}}|>c_n/6\bigr]
  \le
  \frac{\eta_{\mathrm{disc}}}{4}\,\mathcal{D}^{\mathrm{blanket}}.
\end{equation}

\medskip

Putting together the bounds for $E_{\mathrm{disc}}$ and
$E_{\mathrm{Mdisc}}$, and recalling that the discretization contribution
in \eqref{eq:full-error-by-type} is a linear combination of these
quantities with constants depending only on $\varepsilon_n$ and
$\gamma$, we obtain that there exists a constant $C_{\mathrm{disc}}>0$
such that
\[
  \delta_{\mathrm{disc}}
  +
  \Pr\bigl[|E_{\mathrm{Mdisc}}|>c_n/6\bigr]
  \le
  C_{\mathrm{disc}}\,
  \eta_{\mathrm{disc}}\,
  \mathcal{D}^{\mathrm{blanket}}.
\]
In particular, the discretization contribution to the overall relative
error is $O(\eta_{\mathrm{disc}})$.

\medskip

\noindent

\textbf{Step 4: Aliasing error and choice of $w_n^{\mathrm{out}}$.}
We now control the aliasing (wrap-around) error arising from the finite
FFT window. Recall that after truncation and discretization we consider
the sum
\[
  S_{n-1}^{\mathrm{di}}
  :=
  \sum_{i=2}^M Z_i^{\mathrm{di}},
\]
where the $Z_i^{\mathrm{di}}$ are i.i.d.\ copies of the discretized
summand on $[s_n,s_n+w_n^{\mathrm{in}}]$.
By construction, each $Z_i^{\mathrm{di}}$ is supported in an interval of
radius at most $\tau_n$, hence
$|Z_i^{\mathrm{di}}|\le \tau_n$ almost surely.
Assumption~\ref{assump:regularity} further implies that
$\mathrm{Var}(Z_i^{\mathrm{di}})$ is bounded away from zero and infinity
uniformly in $n$, so that
\[
  v_n
  :=
  \mathrm{Var}\bigl(S_{n-1}^{\mathrm{di}}\bigr)
  =
  \mathrm{Var}\Bigl(\sum_{i=2}^M Z_i^{\mathrm{di}}\Bigr)
  = \Theta(n).
\]

Let $\mu_n^{\mathrm{di}} := \mathbb E[S_{n-1}^{\mathrm{di}}]$.
By Bernstein's inequality, for any $u>0$,
\begin{equation}
  \label{eq:Sn-di-bernstein}
  \Pr\bigl[|S_{n-1}^{\mathrm{di}} - \mu_n^{\mathrm{di}}| > u\bigr]
  \le
  2\exp\!\left(
    -\frac{u^2}{2\bigl(v_n + \tau_n u/3\bigr)}
  \right).
\end{equation}
From the choice of $p$ in Step~2 we have
\begin{equation}
  \label{eq:Tn-o-sqrt-n-again}
  \tau_n = o(\sqrt{n}),
\end{equation}
so that for $u$ of order $\sqrt{v_n\Lambda_w}$ the second term in the
denominator of \eqref{eq:Sn-di-bernstein} is negligible compared to
$v_n$.
More precisely, define
\[
  \Lambda_w
  :=
  \log\frac{4}{\eta_{\mathrm{alias}}\mathcal{D}^{\mathrm{blanket}}},
\]
where $0<\eta_{\mathrm{alias}}<1$ is a prescribed accuracy parameter
and $\mathcal{D}^{\mathrm{blanket}}$ is the blanket divergence at level
$\varepsilon_n$.
Set
\[
  u_n := \sqrt{C_w\,v_n\,\Lambda_w}
\]
for a sufficiently small constant $C_w>0$.
Then, using $v_n=\Theta(n)$ and \eqref{eq:Tn-o-sqrt-n-again}, there exist
constants $c_1,c_2>0$ such that for all large $n$,
\[
  v_n + \tau_n u_n/3
  \le c_1 v_n
  \qquad\text{and}\qquad
  \frac{u_n^2}{2\bigl(v_n + \tau_n u_n/3\bigr)}
  \ge c_2\Lambda_w.
\]
Inserting this into \eqref{eq:Sn-di-bernstein} yields
\begin{equation}
  \label{eq:Sn-di-tail}
  \Pr\bigl[|S_{n-1}^{\mathrm{di}} - \mu_n^{\mathrm{di}}| > u_n\bigr] 
  \le
  2\exp(-c_2\Lambda_w).
\end{equation}

\medskip

\textbf{Choice of the outer window and control of $E_{\mathrm{alias}}$.}
In the FFT computation we embed the PMF of $S_{n-1}^{\mathrm{di}}$ into
a periodic grid of period $w_n^{\mathrm{out}}$.
Aliasing can occur only if $S_{n-1}^{\mathrm{di}}$ exits the FFT window.
We choose the outer window so as to contain, with high probability, the
interval
$[\mu_n^{\mathrm{di}}-u_n,\mu_n^{\mathrm{di}}+u_n]$. 
Concretely, we set
\begin{equation}
  \label{eq:wouter-def}
  w_n^{\mathrm{out}}
  :=
  2u_n + h_n 
  = \Theta\bigl(\sqrt{v_n\Lambda_w}\bigr)
  = \Theta\bigl(\sqrt{n\Lambda_w}\bigr).
\end{equation}
By construction, if $|S_{n-1}^{\mathrm{di}} - \mu_n^{\mathrm{di}}|\le u_n$ 
then $S_{n-1}^{\mathrm{di}}$ lies in the central interval of length
$w_n^{\mathrm{out}}$ and no aliasing occurs.
Thus the aliasing probability satisfies
\begin{equation}
  \label{eq:Ealias-basic}
  E_{\mathrm{alias}}(n;h_n,s_n,w_n^{\mathrm{in}},w_n^{\mathrm{out}})
  :=
  \Pr\bigl[\text{aliasing occurs}\bigr]
  \le
  \Pr\bigl[|S_{n-1}^{\mathrm{di}} - \mu_n^{\mathrm{di}}|>u_n\bigr].
\end{equation}
Combining \eqref{eq:Sn-di-tail} and \eqref{eq:Ealias-basic} and
absorbing the factor $2$ into the constants, we obtain
\[
  E_{\mathrm{alias}}(n;h_n,s_n,w_n^{\mathrm{in}},w_n^{\mathrm{out}})
  \le
  \exp(-c_2\Lambda_w)
  =
  \frac{\eta_{\mathrm{alias}}}{4}\,\mathcal{D}^{\mathrm{blanket}}
\]
for all $n$ sufficiently large.
In particular, since the aliasing error term
$\delta_{\mathrm{alias}}$ in Theorem~\ref{thm:fft-blanket-error} is a
linear combination of $E_{\mathrm{alias}}(x_1;\cdot)$ and
$E_{\mathrm{alias}}(x_1';\cdot)$ with nonnegative coefficients
depending only on $\varepsilon_n$, we have
\begin{equation}
  \label{eq:delta-alias-vs-Dblanket}
  \delta_{\mathrm{alias}}
  \;\le\;
  C_{\mathrm{alias},1}\,
  \eta_{\mathrm{alias}}\,
  \mathcal{D}^{\mathrm{blanket}}
\end{equation}
for some constant $C_{\mathrm{alias},1}>0$.

Moreover, by the asymptotic expansion of $\mathcal{D}^{\mathrm{blanket}}$ in
Lemma~\ref{lem:blanket-asymptotics-general} and our regime
$\varepsilon_n^2 n = \Theta(\log n)$, the quantity
$\log(1/\mathcal{D}^{\mathrm{blanket}})$ grows on the order of $\log n$. Hence
\[
  \Lambda_w
  =
  \log\frac{4}{\eta_{\mathrm{alias}}\mathcal{D}^{\mathrm{blanket}}}
  = \Theta(\log n),
\]
and we obtain
\begin{equation}
  \label{eq:wouter-asymp-final}
  w_n^{\mathrm{out}}
  = \Theta(\sqrt{n\log n}).
\end{equation}

\medskip

\textbf{Control of $E_{\mathrm{Malias}}$.}
In the decomposition \eqref{eq:E-decomposition} we introduced the
wrap-around error at the level of the sum,
\[
  E_{\mathrm{Malias}}
  :=
  \widetilde S_{n-1} - S_{n-1}^{\mathrm{di}},
\]
where $\widetilde S_{n-1}$ denotes the periodic extension of
$S_{n-1}^{\mathrm{di}}$ induced by the FFT grid.
By definition, $E_{\mathrm{Malias}}=0$ whenever no aliasing occurs, so
that
\[
  \bigl\{|E_{\mathrm{Malias}}|>0\bigr\}
  \subset
  \bigl\{\text{aliasing occurs}\bigr\}.
\]
In particular, for any $c_n > 0$,
\begin{equation}
  \label{eq:EMalias-tail}
  \Pr\bigl[|E_{\mathrm{Malias}}|>c_n/6\bigr]
  \le
  \Pr\bigl[|E_{\mathrm{Malias}}|>0\bigr]
  \le
  E_{\mathrm{alias}}(n;h_n,s_n,w_n^{\mathrm{in}},w_n^{\mathrm{out}}).
\end{equation}
Combining \eqref{eq:EMalias-tail} with the bound on
$E_{\mathrm{alias}}$ above, we obtain
\begin{equation}
  \label{eq:EMalias-vs-Dblanket}
  \Pr\bigl[|E_{\mathrm{Malias}}|>c_n/6\bigr]
  \le
  \frac{\eta_{\mathrm{alias}}}{4}\,\mathcal{D}^{\mathrm{blanket}}. 
\end{equation}

\medskip

Putting together \eqref{eq:delta-alias-vs-Dblanket} and
\eqref{eq:EMalias-vs-Dblanket}, and recalling that the aliasing
contribution in \eqref{eq:full-error-by-type} is built from
$\delta_{\mathrm{alias}}$ and $\Pr[|E_{\mathrm{Malias}}|>c_n/6]$ with
constants depending only on $\varepsilon_n$ and $\gamma$, we conclude
that there exists a constant $C_{\mathrm{alias}}>0$ such that
\[
  \delta_{\mathrm{alias}}
  +
  \Pr\bigl[|E_{\mathrm{Malias}}|>c_n/6\bigr]
  \;\le\;
  C_{\mathrm{alias}}\,
  \eta_{\mathrm{alias}}\,
  \mathcal{D}^{\mathrm{blanket}}. 
\]
In particular, the aliasing contribution to the overall relative error
is $O(\eta_{\mathrm{alias}})$.

\medskip
\noindent
\textbf{Step 5: Relative error bound.}
Combining the four steps above yields 
\[
  \bigl|
    \widehat D_n - \mathcal{D}^{\mathrm{blanket}}
  \bigr|
  \;\le\;
  C\bigl(
    \eta_{\mathrm{main}}
    + \eta_{\mathrm{trunc}}
    + \eta_{\mathrm{disc}}
    + \eta_{\mathrm{alias}}
  \bigr)\,\mathcal{D}^{\mathrm{blanket}},
\]
for some constant $C<\infty$, which proves the first claim of the theorem.

\medskip

\noindent
\textbf{Step 6: FFT grid size and running time.}
By construction, the FFT grid has period
$2w_n^{\mathrm{out}}$ and spacing $h_n$.
Using the bounds $w_n^{\mathrm{out}}=\Theta(\sqrt{n\log n})$ and $h_n=\Theta(c_n/\sqrt{n\log n})$, the grid size $N$ satisfies
\[
  N
  =
  \frac{2w_n^{\mathrm{out}}}{h_n}
  =
  O\left(\frac{\sqrt{n \log n}}{c_n/\sqrt{n\log n}}\right)
  =
  O\left(\frac{n \log n}{c_n}\right).
\]
Recalling from Step~1 that $c_n = \Theta(\eta_{\mathrm{main}}/\log n)$ in the regime $\varepsilon_n=\Theta(\sqrt{(\log n)/n})$, we obtain
\[
  N
  =
  O\left(\frac{n (\log n)^2}{\eta_{\mathrm{main}}}\right).
\]
Since the FFT-based algorithm runs in time $O(N\log N)$, we conclude
that the overall time complexity is
\[
  O(N\log N)
  =
  O\left(\frac{n}{\eta_{\mathrm{main}}}\left(\log \frac{n}{\eta_{\mathrm{main}}}\right)^3\right),
\]
where the hidden constants depend only on
$\gamma,\sigma,
 \eta_{\mathrm{trunc}},\eta_{\mathrm{disc}},\eta_{\mathrm{alias}}$, and we have omitted $\log\log$ factors.
This completes the proof.

\end{proof}

\subsubsection{Proof of Lemma~\ref{lem:band-prob-fixed-x-bigO}}
\label{app:proof-band-prob-fixed-x}

\bandprobability*

\begin{proof}

Fix $x\in\mathcal X$ and suppress the dependence on $x$ in the notation.
We work with the sequence $(\varepsilon_n)$ from
Lemma~\ref{lem:blanket-asymptotics-general}.

\medskip

\noindent
\textbf{Step 1: Standardization and reduction to a window event.}
Define
\[
  T_{n-1}
  := \frac{S_{n-1} - (n-1)\mu_n}{\sigma_{\varepsilon_n}\sqrt{n-1}},
  \qquad
  t_n := -\frac{\mu_n\sqrt{n-1}}{\sigma_{\varepsilon_n}}.
\]
Write
\[
  \eta_n(x)
  := \frac{x}{\sigma_{\varepsilon_n}\sqrt{n-1}},
\]
and
\[
  m_n(Z_1) := t_n - \eta_n(Z_1) - \eta_n(\Delta_{\mathrm{ES}}).
\]
Exactly as before we obtain
\begin{equation}
  \label{eq:band-window-bigO}
  \Pr\bigl[
    |S_{n-1} + Z_1 + \Delta_{\mathrm{ES}}|
    < \tfrac{3}{2}c_n
    \,\big|\,
    Z_1
  \bigr]
  =
  \Pr\bigl[
    m_n(Z_1)-\eta_n\left(\frac{3 c_n}{2}\right) < T_{n-1} \le m_n(Z_1)+\eta_n\left(\frac{3 c_n}{2}\right)
  \bigr].
\end{equation}

By the moderate-deviation scaling assumed for $\varepsilon_n$ (the same
as in Lemma~\ref{lem:blanket-asymptotics-general}), we have
$t_n=O(\sqrt{\log n})$ and $\eta_n\left(\frac{3 c_n}{2}\right)=O(c_n/\sqrt{n})$.

\medskip

\noindent
\textbf{Step 2: Gaussian window upper bound for $T_{n-1}$.}
We claim that there exist constants $C_1,C_2<\infty$ (depending only on
Assumption~\ref{assump:regularity}) such that, for all sufficiently
large $n$ and all $m,\eta$ with
\begin{equation}
  \label{eq:md-region}
  |m|\le C_1\sqrt{\log n},
  \qquad
  |\eta|\le C_1/\sqrt{n},
\end{equation}
we have
\begin{equation}
  \label{eq:gauss-window-bound}
  \Pr\left[m-\eta < T_{n-1} \le m+\eta\right]
  \;\le\;
  C_2\,\eta\,\varphi(m).
\end{equation}

Under the non-lattice assumption (i.e., the assumption of Theorem~\ref{thm:fft-relative-error})\footnote{The non-lattice assumption is essential in our setting because the band width $c_n$ tends to zero, so that the standardized window $(m-\eta,m+\eta)$ has width $2\eta = \Theta(c_n/(\sigma_{\varepsilon_n}\sqrt{n}))$. In the lattice case the mesh of the normalized sum is of order $1/\sqrt{n}$, and when $c_n\to0$ the window contains at most $O(1)$ lattice points. Consequently the window probability remains of order $n^{-1/2}\varphi(m)$, rather than shrinking linearly with the window width $\eta = \Theta(c_n/(\sigma_{\varepsilon_n}\sqrt{n}))$, and it cannot in general be bounded by a constant multiple of $\eta\,\varphi(m)=\Theta(c_n n^{-1/2}\varphi(m))$. The non-lattice condition rules out this obstruction and allows us to invoke local limit/Edgeworth theory down to such small scales, yielding a window probability of order $\eta\,\varphi(m)$.}
 in addition to Assumption~\ref{assump:regularity} and the truncation
argument of Lemma~\ref{lem:truncation-blanket}, the triangular array
$(Z_{i,n}^{(h)})$ satisfies the hypotheses of the triangular-array
Edgeworth expansion
(Theorem~\ref{thm:edgeworth-triangular}) uniformly in $n$.
In particular, for any fixed order $s\ge3$ we can write the distribution
function of $T_{n-1}$ as
\[
  F_{n-1}(u)
  := \Pr\bigl[T_{n-1}\le u\bigr]
  = \Phi(u)
    + \sum_{j=1}^{s-2}(n-1)^{-j/2}P_{j,n-1}(u)\,\varphi(u)
    + R_{n,s}(u),
\]
where $P_{j,n-1}$ are polynomials of degree at most $s-2$ whose
coefficients are uniformly bounded in $n$, and the remainder satisfies
\[
  \sup_{u\in\mathbb R}\bigl|R_{n,s}(u)\bigr|
  = O\bigl((n-1)^{-(s-2)/2}\bigr).
\]

Fix $m,\eta\in\mathbb R$ satisfying the moderate-deviation condition
\begin{equation}
  \label{eq:md-region}
  |m|\le C_1\sqrt{\log n},
  \qquad
  |\eta|\le \frac{C_1}{\sqrt{n}}
\end{equation}
for some fixed constant $C_1>0$.
We then have
\begin{align}
  \Pr\bigl[m-\eta < T_{n-1} \le m+\eta\bigr]
  &= F_{n-1}(m+\eta) - F_{n-1}(m-\eta)\notag\\
  &= I_{0,n} + \sum_{j=1}^{s-2} I_{j,n} + I_{R,n},
  \label{eq:window-decomp}
\end{align}
where
\begin{align*}
  I_{0,n}
  &:= \Phi(m+\eta) - \Phi(m-\eta),\\
  I_{j,n}
  &:= (n-1)^{-j/2}
      \Bigl[
        P_{j,n-1}(m+\eta)\,\varphi(m+\eta)
        - P_{j,n-1}(m-\eta)\,\varphi(m-\eta)
      \Bigr],\\
  I_{R,n}
  &:= R_{n,s}(m+\eta) - R_{n,s}(m-\eta).
\end{align*}

\medskip\noindent
\textit{Gaussian main term.}
By the mean-value form of Taylor's theorem applied to $\Phi$ at the
point $m$, there exists $\theta_n\in(-1,1)$ such that
\[
  I_{0,n}
  = \Phi(m+\eta)-\Phi(m-\eta)
  = 2\eta\,\varphi(m+\theta_n\eta).
\]
Under~\eqref{eq:md-region} we have
$|m+\theta_n\eta|\le C_1\sqrt{\log n}+o(1)$, and on this region the
Gaussian density is comparable at $m$ and $m+\theta_n\eta$:
there exists $C'_1<\infty$ such that
\[
  \varphi(m+\theta_n\eta)
  \le C'_1\,\varphi(m)
\]
for all $n$ large enough.
Hence
\begin{equation}
  \label{eq:gauss-window-main}
  |I_{0,n}|
  \le 2C'_1\,\eta\,\varphi(m).
\end{equation}

\medskip\noindent
\textit{Edgeworth correction terms.}
For each $j\ge1$, the polynomial $P_{j,n-1}$ has at most polynomial
growth in $u$; in particular, there exist constants $C_j<\infty$ and
integers $k_j\ge0$ such that
\[
  |P_{j,n-1}(u)|
  \le C_j\bigl(1+|u|^{k_j}\bigr)
  \qquad\text{for all }u\in\mathbb R,\,n\ge2.
\]
Using $|m|\le C_1\sqrt{\log n}$ and $|\eta|\le C_1/\sqrt{n}$ from
\eqref{eq:md-region}, we have $|m\pm\eta|\le C_2\sqrt{\log n}$ and hence
\[
  |P_{j,n-1}(m\pm\eta)|
  \le C_j'\bigl(1+(\log n)^{k_j/2}\bigr)
  \le C_j''(\log n)^{K_j}
\]
for some $K_j$ and all $n$ large enough.
Moreover $\varphi(m\pm\eta)\le C\,\varphi(m)$ on this region.
Therefore
\[
  |I_{j,n}|
  \le (n-1)^{-j/2}\cdot 2|\eta|\cdot C_j''(\log n)^{K_j}\,\sup_{u\in[m-\eta,m+\eta]}\varphi(u)
  \le C_j'''\,\eta\,\varphi(m)\,(n-1)^{-j/2}(\log n)^{K_j}.
\]
Summing over $1\le j\le s-2$ and using $j\ge1$ gives
\begin{equation}
  \label{eq:gauss-window-edgeworth}
  \sum_{j=1}^{s-2}|I_{j,n}|
  \le C_2'\,\eta\,\varphi(m)\,n^{-1/2}(\log n)^{K}
  \le C_3'\,\eta\,\varphi(m),
\end{equation}
where $K:=\max_j K_j$ and the polynomial factor in $\log n$ has been
absorbed into the constant $C_3'$.

\medskip\noindent
\textit{Remainder term.}
For the remainder we simply use the uniform bound on $R_{n,s}$:
\[
  |I_{R,n}|
  \le 2\sup_{u\in\mathbb R}|R_{n,s}(u)|
  = O\bigl((n-1)^{-(s-2)/2}\bigr).
\]
On the other hand, under the scaling~\eqref{eq:md-region} we have
\[
  \eta\,\varphi(m)
  = \Theta\left(\frac{1}{\sqrt{n}}\,
         \exp\!\bigl(-\tfrac12 m^2\bigr)\right)
  \ge c\,n^{-1/2-C_4},
\]
for some $c,C_4>0$ depending only on $C_1$, since
$|m|\le C_1\sqrt{\log n}$ implies
$\varphi(m)\ge n^{-C_4}$.
Thus, by choosing $s$ large enough we ensure
\[
  (n-1)^{-(s-2)/2}
  = o\bigl(\eta\,\varphi(m)\bigr),
\]
and therefore, for all sufficiently large $n$,
\begin{equation}
  \label{eq:gauss-window-remainder}
  |I_{R,n}|
  \le C_4'\,\eta\,\varphi(m).
\end{equation}

\medskip\noindent
\textit{Conclusion.}
Combining
\eqref{eq:window-decomp},
\eqref{eq:gauss-window-main},
\eqref{eq:gauss-window-edgeworth},
and \eqref{eq:gauss-window-remainder}, we obtain
\[
  \Pr\bigl[m-\eta < T_{n-1} \le m+\eta\bigr]
  \le C\,\eta\,\varphi(m)
\]
for some finite constant $C<\infty$ (independent of $n$, $m$, and $\eta$)
and all sufficiently large $n$, whenever \eqref{eq:md-region} holds.
This proves~\eqref{eq:gauss-window-bound} in the continuous case.

\medskip

\noindent
\textbf{Step 3: Applying the window bound to $m_n(Z_1),\eta_n\left(\frac{3 c_n}{2}\right)$.}
By Assumption~\ref{assump:regularity} we have
$\mathbb E[|Z_1|^r]<\infty$ for all integers $r\ge1$.
Fix $\beta\in(0,1/2)$ and write
\[
  A_n := \{|Z_1|\le n^\beta\},
  \qquad
  A_n^c := \{|Z_1|>n^\beta\}.
\]
On $A_n$ we have
\[
  |\eta_n(Z_1)|
  = \frac{|Z_1|}{\sigma_{\varepsilon_n}\sqrt{n-1}}
  \le C n^{\beta-1/2}
  = o(1),
\]
and the assumption $|\Delta_{\mathrm{ES}}|=O\left(\frac{\sigma_{\varepsilon_n}\sqrt{n}}{\sqrt{\log n}}\right)$ implies
$|\eta_n(\Delta_{\mathrm{ES}})|=O\left(\frac{1}{\sqrt{\log n}}\right)\to0$.
Hence on $A_n$,
\[
  |m_n(Z_1)|
  = \bigl|t_n-\eta_n(Z_1)-\eta_n(\Delta_{\mathrm{ES}})\bigr|
  \le |t_n|+o(1)
  = O(\sqrt{\log n}),
\]
and for all large $n$ we may assume that
\eqref{eq:md-region} holds with $m=m_n(Z_1)$ and
$\eta=\eta_n\left(\frac{3 c_n}{2}\right)$.

Therefore, by~\eqref{eq:gauss-window-bound},
\begin{align*}
  &\Pr\bigl[
    |S_{n-1} + Z_1 + \Delta_{\mathrm{ES}}|
    < \tfrac{3}{2}c_n
    \,\big|\,
    Z_1
  \bigr]\mathbf 1\{A_n\}
  \\
  &\qquad\le
  C_2\,\eta_n\left(\frac{3 c_n}{2}\right)\,\varphi\bigl(m_n(Z_1)\bigr)\mathbf 1\{A_n\}.
\end{align*}
Since $|m_n(Z_1)-t_n|=o(1)$ on $A_n$ and $|t_n|=O(\sqrt{\log n})$,
a direct calculation of the likelihood ratio yields
\[
  \varphi\bigl(m_n(Z_1)\bigr)\mathbf 1\{A_n\}
  \le C_5\,\varphi(t_n)\mathbf 1\{A_n\}
\]
for some constant $C_5<\infty$ and all large $n$.
Thus
\begin{equation}
  \label{eq:on-An}
  \Pr\bigl[
    |S_{n-1} + Z_1 + \Delta_{\mathrm{ES}}|
    < \tfrac{3}{2}c_n
    \,\big|\,
    Z_1
  \bigr]\mathbf 1\{A_n\}
  \le
  C_2C_5\,\eta_n\left(\frac{3 c_n}{2}\right)\,\varphi(t_n)\mathbf 1\{A_n\}.
\end{equation}

On the complement $A_n^c$ we only use the trivial bound
\[
  0
  \le
  \Pr\bigl[
    |S_{n-1} + Z_1 + \Delta_{\mathrm{ES}}|
    < \tfrac{3}{2}c_n
    \,\big|\,
    Z_1
  \bigr]\mathbf 1\{A_n^c\}
  \le \mathbf 1\{A_n^c\},
\]
and by Markov's inequality and the moment bounds,
\[
  \Pr[A_n^c]
  = \Pr\bigl[|Z_1|>n^\beta\bigr]
  = O(n^{-K})
\]
for any prescribed $K>0$ if we choose the moment order large enough.

\medskip

\noindent
\textbf{Step 4: Taking expectation over $Z_1$.}
Taking expectation of~\eqref{eq:on-An} and adding the contribution of
$A_n^c$, we obtain
\begin{align*}
  &\mathbb E_{Z_1\sim\mathcal R_x}\Bigl[
    \Pr\bigl[
      |S_{n-1} + Z_1 + \Delta_{\mathrm{ES}}|
      < \tfrac{3}{2}c_n
      \,\big|\,
      Z_1
    \bigr]
  \Bigr] \\
  &\qquad\le
  C_2C_5\,\eta_n\left(\frac{3 c_n}{2}\right)\,\varphi(t_n)
  + O\bigl(\Pr[A_n^c]\bigr).
\end{align*}
Since $\eta_n\left(\frac{3 c_n}{2}\right) = \Theta(c_n/(\sigma_{\varepsilon_n}\sqrt{n-1}))$ and
$\Pr[A_n^c]=O(n^{-K})$ can be made of smaller order than
$\eta_n\left(\frac{3 c_n}{2}\right)\varphi(t_n)$ by choosing $K$ large enough, we may absorb
the tail contribution into the constant.
Thus there exists $C,C^\prime<\infty$ (independent of $x$ and $n$) such that
for all large $n$,
\[
  \mathbb E_{Z_1\sim\mathcal R_x}\Bigl[
    \Pr\bigl[
      |S_{n-1} + Z_1 + \Delta_{\mathrm{ES}}|
      < \tfrac{3}{2}c_n
      \,\big|\,
      Z_1
    \bigr]
  \Bigr]
  \le
  C\,\eta_n\left(\frac{3 c_n}{2}\right)\,\varphi(t_n)
  =
  C^\prime\,\frac{c_n}{\sigma_{\varepsilon_n}\sqrt{n-1}}\,
  \varphi(t_n).
\]
This is exactly~\eqref{eq:band-prob-bigO}, which completes the proof.
\end{proof}

\section{The Interpretation of Blanket Divergence}
\label{sec:interepretation-blanket-divergence}

Let $\mathcal R$ be a local randomizer with blanket distribution
$\mathcal R_{\mathrm{BG}}$ and blanket mass $\gamma$, so that for every
$x\in\mathcal X$,
\[
\mathcal R_x \;=\; \gamma\,\mathcal R_{\mathrm{BG}} \;+\; (1-\gamma)\,\mathcal R_x^{\mathrm{LO}},
\qquad
\mathcal R_x^{\mathrm{LO}}(y):=\frac{\mathcal R_x(y)-\gamma\mathcal R_{\mathrm{BG}}(y)}{1-\gamma}.
\]

We define an \emph{extended mechanism} $\widetilde{\mathcal M}$ as follows (see Fig.~\ref{fig:blanket_divergence_interepretation}).
User $1$ always outputs $Y_1\sim \mathcal R_{x_1}$ (or $\mathcal R_{x_1'}$ under $x'_{1:n}$). 
For each $i\in\{2,\dots,n\}$, sample an independent coin $B_i\sim \mathrm{Bernoulli}(\gamma)$.
If $B_i=1$, output $Y_i\sim \mathcal R_{\mathrm{BG}}$; otherwise output
$Y_i\sim \mathcal R^{\mathrm{LO}}_{x_i}$ and reveal it together with
its index $i$.
Let $T:=\sum_{i=2}^n B_i$ and let $M:=1+T$ be the number of messages routed
to the shuffler. The shuffler is applied only to the $M$ messages
$\{Y_1\}\cup\{Y_i:B_i=1\}$, while the leftover messages
$\{(i,Y_i):B_i=0\}$ are released without shuffling. Denote the shuffled
part of the output by $S_{1:M}\in\mathcal Y^M$.

Since the leftover part is identical under $x_{1:n}$ and $x'_{1:n}$, the
hockey-stick divergence of $\widetilde{\mathcal M}$ reduces to that of the
shuffled part. Conditioning on $M=m$, the shuffled part consists of one
sample from $\mathcal R_{x_1}$ and $m-1$ i.i.d.\ samples from
$\mathcal R_{\mathrm{BG}}$ whose order is uniformly randomized. A direct
calculation yields the corresponding density
\[
p_{x_1}^{(m)}(s_{1:m})
=
\frac{1}{m}\sum_{j=1}^m
\mathcal R_{x_1}(s_j)\prod_{k\ne j}\mathcal R_{\mathrm{BG}}(s_k)
=
\Bigl(\prod_{k=1}^m \mathcal R_{\mathrm{BG}}(s_k)\Bigr)\cdot
\frac{1}{m}\sum_{j=1}^m
\frac{\mathcal R_{x_1}(s_j)}{\mathcal R_{\mathrm{BG}}(s_j)}.
\]
Hence, writing
\[
l_\varepsilon(y)
:=
\frac{\mathcal R_{x_1}(y)-e^\varepsilon \mathcal R_{x_1'}(y)}
{\mathcal R_{\mathrm{BG}}(y)},
\]
we have
\[
p_{x_1}^{(m)}(s_{1:m})-e^\varepsilon p_{x_1'}^{(m)}(s_{1:m})
=
\Bigl(\prod_{k=1}^m \mathcal R_{\mathrm{BG}}(s_k)\Bigr)\cdot
\frac{1}{m}\sum_{j=1}^m l_\varepsilon(s_j).
\]
Integrating the positive part, this implies the exact identity
\[
\mathcal D_{e^\varepsilon}\!\left(p_{x_1}^{(m)}\middle\|p_{x_1'}^{(m)}\right)
=
\mathbb E_{Y_{1:m}\stackrel{\mathrm{i.i.d.}}{\sim}\mathcal R_{\mathrm{BG}}}
\left[
\frac{1}{m}\left(\sum_{j=1}^m l_\varepsilon(Y_j)\right)_+
\right].
\]

Finally, we uncondition over $M$. In our construction,
$T\sim \mathrm{Bin}(n-1,\gamma)$ and $M=1+T$, so for $m\in\{1,\dots,n\}$,
\[
\Pr[M=m]=\binom{n-1}{m-1}\gamma^{m-1}(1-\gamma)^{n-m}.
\]
Therefore,
\begin{align*}
\mathcal D_{e^\varepsilon}\!\left(\widetilde{\mathcal M}(x_{1:n})\middle\|\widetilde{\mathcal M}(x'_{1:n})\right)
&=
\sum_{m=1}^n \Pr[M=m]\,
\mathbb E_{Y_{1:m}\sim\mathcal R_{\mathrm{BG}}}
\left[
\frac{1}{m}\left(\sum_{j=1}^m l_\varepsilon(Y_j)\right)_+
\right] \\
&=
\sum_{m=1}^n
\left(\binom{n-1}{m-1}\gamma^{m-1}(1-\gamma)^{n-m}\cdot\frac{1}{m}\right)
\mathbb E_{Y_{1:m}\sim\mathcal R_{\mathrm{BG}}}
\left[
\left(\sum_{j=1}^m l_\varepsilon(Y_j)\right)_+
\right].
\end{align*}
Using the combinatorial identity
$\binom{n-1}{m-1}=\frac{m}{n}\binom{n}{m}$, we obtain
\[
\binom{n-1}{m-1}\gamma^{m-1}(1-\gamma)^{n-m}\cdot\frac{1}{m}
=
\frac{1}{n\gamma}\binom{n}{m}\gamma^{m}(1-\gamma)^{n-m}.
\]
Since the $m=0$ term contributes $0$ (because $(\sum_{j=1}^0\cdot)_+=0$), we
may extend the sum to $m=0$ and recognize the binomial distribution
$\mathrm{Bin}(n,\gamma)$:
\[
\mathcal D_{e^\varepsilon}\!\left(\widetilde{\mathcal M}(x_{1:n})\middle\|\widetilde{\mathcal M}(x'_{1:n})\right)
=
\frac{1}{n\gamma}\,
\mathbb E_{\substack{M\sim \mathrm{Bin}(n,\gamma)\\
Y_{1:M}\stackrel{\mathrm{i.i.d.}}{\sim}\mathcal R_{\mathrm{BG}}}}
\left[
\left(\sum_{j=1}^M l_\varepsilon(Y_j)\right)_+
\right]
=
\mathcal D^{\mathrm{blanket}}_{e^\varepsilon,n,\mathcal R_{\mathrm{BG}},\gamma}
\bigl(\mathcal R_{x_1}\|\mathcal R_{x_1'}\bigr).
\]

Moreover, the original shuffled mechanism $\mathcal M=\mathcal S\circ\mathcal R^n$
can be obtained from $\widetilde{\mathcal M}$ by post-processing (discarding
the revealed indices and leftover messages and retaining only the fully
shuffled multiset of all $n$ messages). By the data processing inequality
for the hockey-stick divergence, this yields
\[
\mathcal D_{e^\varepsilon}\!\left(\mathcal M(x_{1:n})\middle\|\mathcal M(x'_{1:n})\right)
\le
\mathcal D_{e^\varepsilon}\!\left(\widetilde{\mathcal M}(x_{1:n})\middle\|\widetilde{\mathcal M}(x'_{1:n})\right)
=
\mathcal D^{\mathrm{blanket}}_{e^\varepsilon,n,\mathcal R_{\mathrm{BG}},\gamma}
\bigl(\mathcal R_{x_1}\|\mathcal R_{x_1'}\bigr),
\]
which recovers the blanket-divergence upper bound.

\section{The Computation of Shuffle Indices}
\label{app:computation-shuffle-indices}

\subsection{Generalized Gaussian Mechanisms}

Here, we consider the $\beta$-Gaussian local randomizer on $\mathcal X=[0,1]$:
\[
\mathcal R(x)=x+Z,\qquad
Z\sim g_{\beta,c},\qquad
g_{\beta,c}(z):=\frac{\beta}{2c\,\Gamma(1/\beta)}\exp\!\Bigl(-\bigl|\tfrac{z}{c}\bigr|^\beta\Bigr),
\]
so that $\phi_x^\beta(y)=g_{\beta,c}(y-x)$.

The blanket density is $\underline{\mathcal R}(y)=\inf_{x\in[0,1]}\phi_x^\beta(y)$.
Since $x\mapsto |y-x|$ is maximized over $[0,1]$ at the far endpoint,
\begin{equation}
\label{eq:underlineR}
\underline{\mathcal R}(y)
=
\begin{cases}
\phi_1^\beta(y), & y\le \tfrac12,\\[2pt]
\phi_0^\beta(y), & y\ge \tfrac12,
\end{cases}
\qquad
\gamma:=\int_\mathbb R \underline{\mathcal R}(y)\,dy
=
2\int_{1/2}^{\infty}\phi_0^\beta(y)\,dy.
\end{equation}

\subsubsection{Lower Shuffle Index}
From the definition, for fixed $(x_1,x_1', \mathcal{R}_{\mathrm{BG}})$, the variance term and the blanket mass in the shuffle index are
\[
  \sigma_{\mathrm{BG},\beta}^2
  :=
  \gamma
  \left\{
    \int_{-\infty}^{1/2}
      \frac{\bigl(\phi_{x_1}^{\beta}(y)-\phi_{x_1'}^{\beta}(y)\bigr)^2}
           {\phi_1^{\beta}(y)}\,dy
    \;+\;
    \int_{1/2}^{\infty}
      \frac{\bigl(\phi_{x_1}^{\beta}(y)-\phi_{x_1'}^{\beta}(y)\bigr)^2}
           {\phi_0^{\beta}(y)}\,dy
  \right\},
\]
where \(\Gamma(s,x)\) denotes the upper incomplete gamma function.
For fixed $\beta>0$, the quantity $\sigma_{\mathrm{BG},\beta}^2$ (and hence $\chi_{\mathrm{lo}}$) can be evaluated efficiently by standard one-dimensional numerical quadrature. 
As a consequence, searching over $(x_1,x_1')$ for the lower shuffle index is computationally inexpensive\footnote{In all our numerical experiments for generalized Gaussian
mechanisms (including the cases $\beta=1$ and $\beta=2$), the pair
$(x_1,x_1')=(0,1)$ consistently attains (or appears to attain) the infimum of the lower shuffle index $\chi_{\mathrm{lo}}(x_1,x_1')$. 
This is also intuitively plausible, since $(0,1)$ maximizes the separation between two inputs in the unit interval. However, a general theoretical proof that $(0,1)$ is the unique global minimizer of $\chi_{\mathrm{lo}}$ for all admissible parameters would require a delicate analysis beyond the scope of this work, and we therefore do not claim such a statement as a theorem.}.

Here, we adopt the working assumption that the worst-case pair is attained at $(x_1,x_1')=(0,1)$, and under this assumption, $\chi_{\mathrm{lo}}$ admits closed-form expressions for the Laplace ($\beta=1$) and Gaussian ($\beta=2$) mechanisms when $\mathrm{Var}(Z)=\sigma_0^2$. 
For Laplace, we have
\[
\chi_{\mathrm{lo}}^{(\beta=1)}
=
\frac{\sqrt{3}}{2}\cdot
\frac{t}{\sqrt{1-3t^2+2t^3}},
\qquad
t:=\exp\!\left(-\frac{1}{\sqrt{2}\,\sigma_0}\right).
\]
For Gaussian, we have
\[
\chi_{\mathrm{lo}}^{(\beta=2)}
=
\frac{1}{\sqrt{
2\left(
e^{1/\sigma_0^2}\,\Phi\!\left(\frac{3}{2\sigma_0}\right)
-3\,\Phi\!\left(\frac{1}{2\sigma_0}\right)
+1
\right)}},
\]
These formulas make the limiting behavior explicit: $\chi_{\mathrm{lo}}^{(\beta=1)}(0,1)\sim \sigma_0/\sqrt{2}$ and $\chi_{\mathrm{lo}}^{(\beta=2)}(0,1)\sim \sigma_0$ as $\sigma_0\to\infty$, while both decay exponentially fast as $\sigma_0\to 0$.

Fig.~\ref{fig:shuffle_index} illustrates the lower shuffle index $\chi_{\mathrm{lo}}$ for several values of the shape parameter~$\beta$.
Even when the underlying local randomizers are calibrated to have the same variance (and hence the same accuracy for mean estimation), they exhibit different shuffle indices.

\begin{figure}[h]
  \centering
  \begin{subfigure}{0.48\linewidth}
    \centering
    \includegraphics[width=\linewidth]{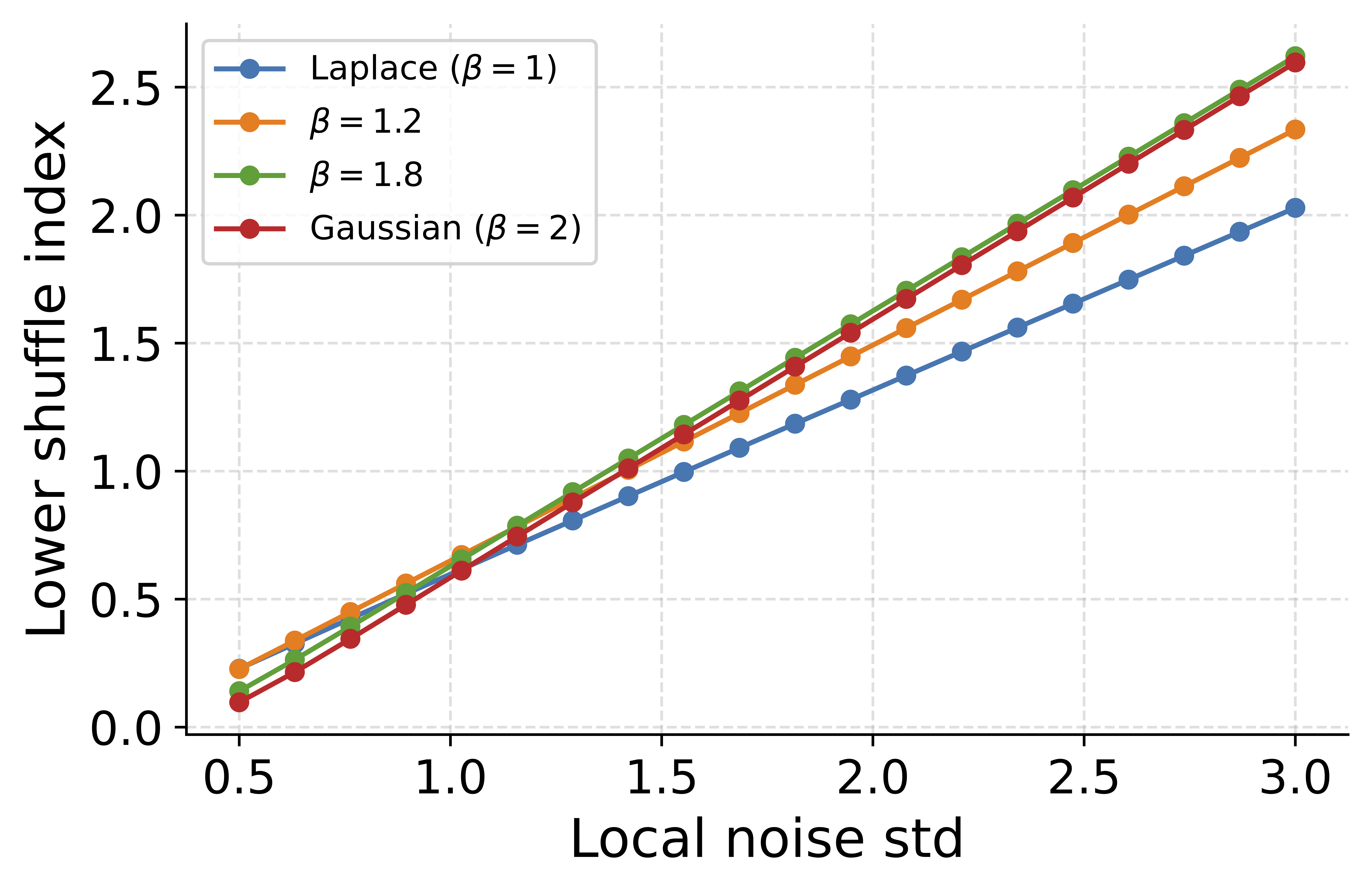}
    \caption{Lower shuffle index for several generalized Gaussian mechanisms.}
    \label{fig:shuffle_index}
  \end{subfigure}
  \hfill
  \begin{subfigure}{0.48\linewidth}
    \centering
    \includegraphics[width=\linewidth]{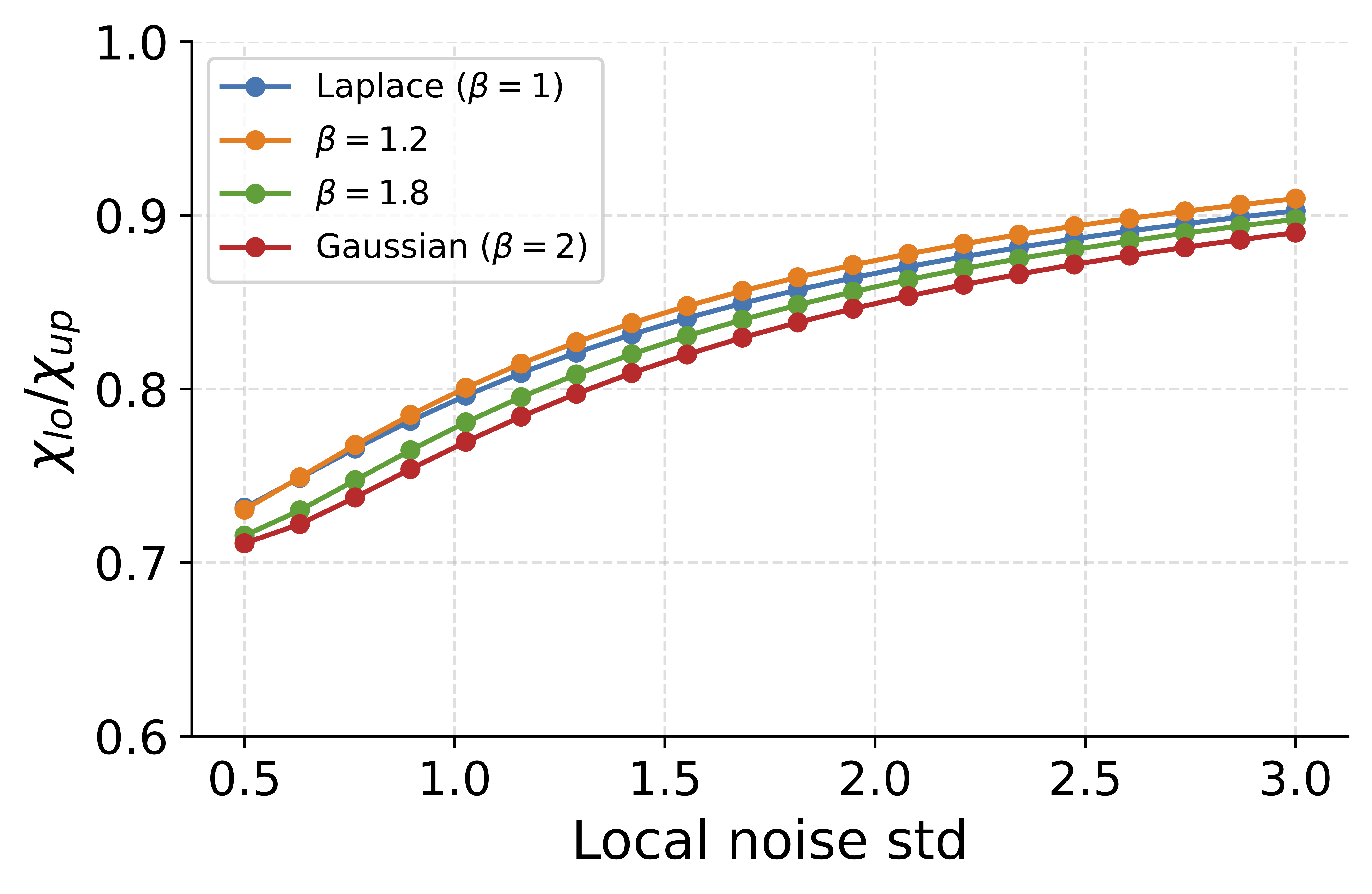}
    \caption{Ratio of lower and upper shuffle indices.}
    \label{fig:shuffle_index_ratio}
  \end{subfigure}

  \caption{
    Shuffle indices for generalized Gaussian mechanisms with different
    shape parameters~$\beta$.
  }
  \label{fig:shuffle_index_all}
\end{figure}

\subsubsection{Upper Shuffle Index}
For the upper shuffle index, we instead take the reference distribution to be one of the local randomizers $\mathcal R_x$, and only the variance is needed in our analysis.
From the definition, the variance is
\begin{equation}
  \label{eq:sigma_of_upper_shuffle_index}
  \sigma_{x,\beta}^2
  :=
  \int_{-\infty}^{\infty}
    \frac{\bigl(\phi_{x_1}^{\beta}(y)-\phi_{x_1'}^{\beta}(y)\bigr)^2}
         {\phi_x^{\beta}(y)}\,dy
  \;=\;
  c \frac{\beta}{2c\,\Gamma(1/\beta)}\,
  F_\beta(a_1,a_1'),
\end{equation}
where
\begin{align}
  \nonumber F_\beta(a_1,a_1')
  :=
    \int_{-\infty}^{\infty}
    \frac{\bigl(e^{-|z-a_1|^\beta}-e^{-|z-a_1'|^\beta}\bigr)^2}
         {e^{-|z|^\beta}}\,dz,
  \qquad
         a_1 := \frac{x_1 - x}{c},
  \qquad
  a_1' := \frac{x_1' - x}{c}
\end{align}
Although the definition of $\chi_{\mathrm{up}}$ involves a
three-point search over $(x_1,x_1',x)$, the fact that
$\{\phi_\mu^\beta\}_{\mu\in\mathbb R}$ forms a location family implies
that $\sigma_{x,\beta}^2$ in
\eqref{eq:sigma_of_upper_shuffle_index} depends on $(x_1,x_1',x)$ only
through the two offsets $(a_1,a_1') = ((x_1-x)/\alpha,(x_1'-x)/\alpha)$.
In particular, the optimization can be reduced to a two-dimensional
search over $(a_1,a_1')$. As in the case of the lower shuffle index,
the variance $\sigma_{x,\beta}^2$ can be efficiently evaluated
by standard one-dimensional numerical quadrature, so this search remains computationally inexpensive.

Under the same worst-case pair assumption $(x_1,x_1')=(0,1)$, the upper shuffle index also admits closed-form expressions for $\beta=1,2$ when the noise is parameterized by $\mathrm{Var}(Z)=\sigma_0^2$. 
For Laplace, taking $\mathcal{R}_\mathrm{ref}=\mathcal{R}_0$ yields
\[
\chi_{\mathrm{up}}^{(\beta=1)}
=
\left(\frac{2}{3}e^{2/\sigma_0^2}-1+\frac{1}{3}e^{-4/\sigma_0^2}\right)^{-1/2}.
\]
For Gaussian, taking $\mathcal{R}_\mathrm{ref}=\mathcal{R}_0$ yields
\[
\chi_{\mathrm{up}}^{(\beta=2)}
=
\left(e^{1/\sigma_0^2}-1\right)^{-1/2}.
\]

In both cases, the tightness ratio $\chi_{\mathrm{lo}}/\chi_{\mathrm{up}}$ exhibits the same qualitative limiting behavior. As $\sigma_0\to\infty$,
\[
\frac{\chi_{\mathrm{lo}}^{(\beta=1)}}{\chi_{\mathrm{up}}^{(\beta=1)}}
=
1-\Theta\!\left(\frac{1}{\sigma_0}\right),
\qquad
\frac{\chi_{\mathrm{lo}}^{(\beta=2)}}{\chi_{\mathrm{up}}^{(\beta=2)}}
=
1-\Theta\!\left(\frac{1}{\sigma_0}\right).
\]
As $\sigma_0\to 0$,
\[
\frac{\chi_{\mathrm{lo}}^{(\beta=1)}}{\chi_{\mathrm{up}}^{(\beta=1)}}
=
\frac{1}{\sqrt2}+o(1),
\qquad
\frac{\chi_{\mathrm{lo}}^{(\beta=2)}}{\chi_{\mathrm{up}}^{(\beta=2)}}
=
\frac{1}{\sqrt2}+o(1).
\]

\subsubsection{Asymptotics of $\chi_{\mathrm{lo}}$.}
Because we already have the asymptotics for $\beta=1$, we focus on the case $\beta>1$.

\noindent\textbf{(i) Large-noise regime $\sigma_0\to \infty$.}
We calibrate $c$ by $\mathrm{Var}(Z)=\sigma_0^2$, i.e.,
\begin{equation}
\label{eq:c-sigma0}
\sigma_0^2=c^2\frac{\Gamma(3/\beta)}{\Gamma(1/\beta)}
\quad\Longleftrightarrow\quad
c=\sigma_0\sqrt{\frac{\Gamma(1/\beta)}{\Gamma(3/\beta)}}.
\end{equation}

For a pair $(x_1,x_1')$, the shuffle index for this pair is
\[
\chi(x_1,x_1')
:=\frac{\sqrt\gamma}{\sigma(x_1,x_1')},
\qquad
\sigma(x_1,x_1')^2
:=\mathrm{Var}_{Y\sim\mathcal R_{\mathrm{BG}}}\!\left(l_0(Y;x_1,x_1',\mathcal R_{\mathrm{BG}})\right).
\]
We have
\begin{equation}
\label{eq:sigma2}
\sigma(x_1,x_1')^2
=
\int_{\mathbb R}\frac{(\phi_{x_1}^\beta(y)-\phi_{x_1'}^\beta(y))^2}{\mathcal R_{\mathrm{BG}}(y)}\,dy
=
\gamma\int_{\mathbb R}\frac{(\phi_{x_1}^\beta(y)-\phi_{x_1'}^\beta(y))^2}{\underline{\mathcal R}(y)}\,dy.
\end{equation}

Fix $\beta>1$ and $x_1,x_1'\in[0,1]$, and set $\Delta:=x_1-x_1'$ and $\eta:=1/c$.
Let
\[
f(z):=\frac{\beta}{2\Gamma(1/\beta)}e^{-|z|^\beta},\qquad z\in\mathbb R,
\]
so that $\int_{\mathbb R}f(z)\,dz=1$ and
\[
\phi_x^\beta(y)=\frac1c\,f\!\left(\frac{y-x}{c}\right).
\]

From the definition of $\gamma$ with $y=cz$, we have
\begin{equation}\label{eq:gamma-eta}
\gamma=\int_{\mathbb R}\underline{\mathcal R}(y)\,dy
=\int_{\mathbb R}\min\{f(z),f(z-\eta)\}\,dz.
\end{equation}
Since $\min\{f(z),f(z-\eta)\}\to f(z)$ pointwise as $\eta\downarrow 0$ and
$0\le \min\{f(z),f(z-\eta)\}\le f(z)\in L^1(\mathbb R)$, dominated convergence implies
\begin{equation}\label{eq:gamma-to-1-eta}
\gamma\to 1 \qquad (c\to\infty),
\quad\text{i.e.}\quad \gamma=1+o(1).
\end{equation}

In what follows, we show that as $c\to\infty$,
\begin{equation}
\label{eq:sigma2-asymp}
\sigma(x_1,x_1')^2
=
\gamma\int_{\mathbb R}\frac{(\phi_{x_1}^\beta(y)-\phi_{x_1'}^\beta(y))^2}{\underline{\mathcal R}(y)}\,dy
=
\Delta^2
\int_{\mathbb R}
\frac{\bigl(\partial_y\phi_0^\beta(y)\bigr)^2}{\phi_0^\beta(y)}
\,dy\;
\bigl(1+o(1)\bigr).
\end{equation}

Since $\beta>1$, we have $f\in C^1(\mathbb R)$ and hence $\phi_0^\beta\in C^1(\mathbb R)$.
Moreover,
\[
\partial_y\phi_0^\beta(y)=\frac1{c^2}f'\!\left(\frac{y}{c}\right).
\]
Therefore, by the change of variables $z=y/c$,
\begin{equation}\label{eq:fisher-scale-eta}
\int_{\mathbb R}\frac{(\partial_y\phi_0^\beta(y))^2}{\phi_0^\beta(y)}\,dy
=\int_{\mathbb R}\frac{c^{-4}f'(y/c)^2}{c^{-1}f(y/c)}\,dy
=\frac1{c^2}\int_{\mathbb R}\frac{f'(z)^2}{f(z)}\,dz.
\end{equation}
Since $\beta>1$, the integrand is bounded near $0$ and decays exponentially as $|z|\to\infty$, hence
\begin{equation}\label{eq:Ibeta-eta}
\mathcal I_\beta:=\int_{\mathbb R}\frac{f'(z)^2}{f(z)}\,dz<\infty.
\end{equation}

A direct change of variables $y=cz$ gives the exact identity
\begin{equation}\label{eq:chi2-scale-eta}
\int_{\mathbb R}\frac{\bigl(\phi_{x_1}^\beta(y)-\phi_{x_1'}^\beta(y)\bigr)^2}{\underline{\mathcal R}(y)}\,dy
=\eta^2\int_{\mathbb R} g_\eta(z)\,dz.
\end{equation}
Here, we defined
\[
m_\eta(z):=\min\{f(z),f(z-\eta)\},\qquad
d_\eta(z):=\frac{f(z-x_1\eta)-f(z-x_1'\eta)}{\eta},
\qquad
g_\eta(z):=\frac{d_\eta(z)^2}{m_\eta(z)}.
\]

Since $f\in C^1(\mathbb R)$, for every $z\in\mathbb R$,
\[
\lim_{\eta\downarrow 0} d_\eta(z)
=\lim_{\eta\downarrow 0}\frac{f(z-x_1\eta)-f(z-x_1'\eta)}{\eta}
=-(x_1-x_1')f'(z)=-\Delta f'(z).
\]
Also $m_\eta(z)\to f(z)>0$. Hence
\begin{equation}\label{eq:g-pointwise-eta}
g_\eta(z)\to \Delta^2\frac{f'(z)^2}{f(z)}
\qquad(\eta\downarrow 0)\quad\text{for all }z\in\mathbb R.
\end{equation}
Therefore, if the limit and the integral can be interchanged, we would obtain
\[
\lim_{\eta \downarrow 0} \int_{\mathbb R} g_\eta(z) \, dz
= \int_{\mathbb R} \left( \lim_{\eta \downarrow 0} g_\eta(z) \right) \, dz
= \Delta^2 \int_{\mathbb R} \frac{f'(z)^2}{f(z)} \, dz
= \Delta^2 \mathcal{I}_\beta,
\]
from which the desired asymptotic formula \eqref{eq:sigma2-asymp} follows immediately.
Thus, the remainder of the proof is devoted to justifying this exchange by applying Lebesgue's dominated convergence theorem to $g_\eta$. Specifically, we will construct an integrable dominating function $H \in L^1(\mathbb R)$ such that $0 \le g_\eta(z) \le H(z)$ for all $z \in \mathbb R$ and all sufficiently small $\eta > 0$.

Fix $\eta_0\in(0,1)$ and restrict to $\eta\in(0,\eta_0)$.
By the fundamental theorem of calculus,
\[
f(z-x_1\eta)-f(z-x_1'\eta)
=-\eta\int_{x_1'}^{x_1} f'(z-t\eta)\,dt,
\]
hence
\[
d_\eta(z)=-\int_{x_1'}^{x_1} f'(z-t\eta)\,dt.
\]
Let $a:=\min\{x_1,x_1'\}$ and $b:=\max\{x_1,x_1'\}$ so that $b-a=|\Delta|$.
Then, by Cauchy--Schwarz,
\begin{equation}\label{eq:CS-eta}
d_\eta(z)^2
\le (b-a)\int_a^b f'(z-t\eta)^2\,dt
\le |\Delta| \int_0^1 f'(z-t\eta)^2\,dt.
\end{equation}

Since $f$ is even and strictly decreasing on $[0,\infty)$, and since
$\max\{|z|,|z-\eta|\}\le |z|+\eta$, we have
\begin{equation}\label{eq:min-lower-eta}
m_\eta(z)=\min\{f(z),f(z-\eta)\}\ge f(|z|+\eta)\ge f(|z|+\eta_0).
\end{equation}

Fix $R\ge 2$. Since $f$ is continuous and strictly positive, let
\[
m_R:=\inf_{|u|\le R+1} f(u)>0,
\qquad
L_R:=\sup_{|u|\le R+1}|f'(u)|<\infty.
\]
For $|z|\le R$ and $\eta\le\eta_0<1$, we have $|z-t\eta|\le R+1$, so
\[
|d_\eta(z)|\le \int_0^1 |f'(z-t\eta)|\,dt \le L_R,
\qquad
m_\eta(z)\ge m_R,
\]
and thus
\begin{equation}\label{eq:compact-dom-eta}
0\le g_\eta(z)\le \frac{L_R^2}{m_R}=:C_R
\qquad\text{for all }|z|\le R,\ \eta\in(0,\eta_0).
\end{equation}

For $|z|\ge R$ and $\eta\le\eta_0$, combine \eqref{eq:CS-eta}--\eqref{eq:min-lower-eta} to get
\begin{equation}\label{eq:tail-start-eta}
g_\eta(z)
\le \frac{|\Delta|\int_0^1 f'(z-t\eta)^2\,dt}{f(|z|+\eta_0)}
\le \frac{|\Delta|\,\sup_{t\in[0,1]} f'(z-t\eta)^2}{f(|z|+\eta_0)}.
\end{equation}
Write $u=z-t\eta$. Then $||u|-|z||\le \eta\le \eta_0$, hence for $|z|\ge R\ge 2$,
\[
|z|-\eta_0 \le |u|\le |z|+\eta_0.
\]
Using $f'(u)^2=\beta^2|u|^{2\beta-2}f(u)^2$ and $f(v)=Ce^{-|v|^\beta}$ with
$C=\beta/(2\Gamma(1/\beta))$, we obtain for all such $u$,
\[
\frac{f'(u)^2}{f(|z|+\eta_0)}
=\beta^2|u|^{2\beta-2}\,C\,\exp\!\Bigl((|z|+\eta_0)^\beta-2|u|^\beta\Bigr)
\le K_1(1+|z|^{2\beta-2})\exp\!\Bigl((|z|+\eta_0)^\beta-2(|z|-\eta_0)^\beta\Bigr),
\]
for a constant $K_1$ depending only on $\beta$ and $\eta_0$.

We claim that there exists $R_0=R_0(\beta,\eta_0)$ such that for all $r\ge R_0$,
\begin{equation}\label{eq:exp-neg-eta}
(r+\eta_0)^\beta-2(r-\eta_0)^\beta \le -\tfrac12 r^\beta.
\end{equation}
Indeed, set $h(r):=(r+\eta_0)^\beta-2(r-\eta_0)^\beta+\tfrac12 r^\beta$.
Since $\beta>1$, we have $h(r)/r^\beta\to -\tfrac12$ as $r\to\infty$, hence $h(r)\le 0$ for all $r\ge R_0$.

Taking $R\ge R_0$ and applying \eqref{eq:exp-neg-eta} with $r=|z|$ in \eqref{eq:tail-start-eta} gives
\begin{equation}\label{eq:tail-dom-eta}
0\le g_\eta(z)\le K_2(1+|z|^{2\beta-2})e^{-|z|^\beta/2}
=:H_{\mathrm{tail}}(z),
\qquad |z|\ge R,\ \eta\in(0,\eta_0),
\end{equation}
for some constant $K_2$.
Since $(1+|z|^{2\beta-2})e^{-|z|^\beta/2}\in L^1(\{|z|\ge R\})$, we have $H_{\mathrm{tail}}\in L^1(\{|z|\ge R\})$.

Define
\[
H(z):=C_R\,\mathbf 1_{\{|z|\le R\}}+H_{\mathrm{tail}}(z)\,\mathbf 1_{\{|z|\ge R\}}.
\]
Then $H\in L^1(\mathbb R)$ and, by \eqref{eq:compact-dom-eta} and \eqref{eq:tail-dom-eta},
\begin{equation}\label{eq:dominate-eta}
0\le g_\eta(z)\le H(z)
\qquad\text{for all }z\in\mathbb R,\ \eta\in(0,\eta_0).
\end{equation}

By \eqref{eq:g-pointwise-eta} and \eqref{eq:dominate-eta}, dominated convergence yields
\begin{equation}\label{eq:g-limit-eta}
\int_{\mathbb R} g_\eta(z)\,dz
\longrightarrow
\Delta^2\int_{\mathbb R}\frac{f'(z)^2}{f(z)}\,dz
=\Delta^2\mathcal I_\beta
\qquad(\eta\downarrow 0).
\end{equation}

Combining \eqref{eq:chi2-scale-eta} and \eqref{eq:g-limit-eta} gives
\[
\int_{\mathbb R}\frac{\bigl(\phi_{x_1}^\beta(y)-\phi_{x_1'}^\beta(y)\bigr)^2}{\underline{\mathcal R}(y)}\,dy
=\eta^2\bigl(\Delta^2\mathcal I_\beta+o(1)\bigr)
=\frac{\Delta^2\mathcal I_\beta}{c^2}\,(1+o(1)).
\]
Multiplying by $\gamma=1+o(1)$ from \eqref{eq:gamma-to-1-eta} yields
\[
\sigma(x_1,x_1')^2
=\gamma\int_{\mathbb R}\frac{\bigl(\phi_{x_1}^\beta(y)-\phi_{x_1'}^\beta(y)\bigr)^2}{\underline{\mathcal R}(y)}\,dy
=\frac{\Delta^2\mathcal I_\beta}{c^2}\,(1+o(1)).
\]
Finally, by \eqref{eq:fisher-scale-eta} and \eqref{eq:Ibeta-eta},
\[
\frac{\mathcal I_\beta}{c^2}
=\int_{\mathbb R}\frac{(\partial_y\phi_0^\beta(y))^2}{\phi_0^\beta(y)}\,dy,
\]
and therefore
\[
\sigma(x_1,x_1')^2
=
\Delta^2\int_{\mathbb R}\frac{(\partial_y\phi_0^\beta(y))^2}{\phi_0^\beta(y)}\,dy\,
(1+o(1)),
\qquad c\to\infty,
\]
which is the desired asymptotic.

Since $f'(u)=-\beta\,\mathrm{sign}(u)\,|u|^{\beta-1}f(u)$,
\[
\int \frac{(f')^2}{f}
=
\beta^2\int |u|^{2\beta-2}f(u)\,du
=
\beta^2\frac{\beta}{2\Gamma(1/\beta)}\cdot 2\int_0^\infty u^{2\beta-2}e^{-u^\beta}\,du
=
\beta^2\frac{\Gamma(2-1/\beta)}{\Gamma(1/\beta)}.
\]
That is,
\begin{align}
\int_{\mathbb R}\frac{(\partial_y \phi_0^\beta(y))^2}{\phi_0^\beta(y)}\,dy
&=
\frac{1}{c^2}\int_{\mathbb R}\frac{(f'(z))^2}{f(z)}\,dz
\label{eq:fisher}\\
&=
\frac{1}{c^2}\,\beta^2\,\frac{\Gamma(2-1/\beta)}{\Gamma(1/\beta)}.
\nonumber
\end{align}
Therefore, for $|\Delta|=1$,
\[
\sigma(0,1)
=
\frac{\beta}{c}\sqrt{\frac{\Gamma(2-1/\beta)}{\Gamma(1/\beta)}}\,(1+o(1)),
\qquad
\chi_{\mathrm{lo}}(\mathcal R)
=
\frac{\sqrt\gamma}{\sigma(0,1)}
=
\frac{c}{\beta}\sqrt{\frac{\Gamma(1/\beta)}{\Gamma(2-1/\beta)}}\,(1+o(1)).
\]
Substituting \eqref{eq:c-sigma0}, we obtain the linear growth
\begin{equation}
\label{eq:chi-large-noise}
\chi_{\mathrm{lo}}(\mathcal R)
=
c_\beta\,\sigma_0\,(1+o(1)),
\qquad
c_\beta
:=
\frac{\Gamma(1/\beta)}{\beta\sqrt{\Gamma(3/\beta)\Gamma(2-1/\beta)}}.
\end{equation}

\noindent\textbf{(ii) Small-noise regime $\sigma_0\to 0$.}
Let
\[
K:=\frac{\beta}{2\Gamma(1/\beta)},\qquad
\phi_x(y)=\frac{K}{c}\exp\!\Bigl(-\bigl|\tfrac{y-x}{c}\bigr|^\beta\Bigr).
\]
For the pair $(0,1)$, the variance term in the lower shuffle index satisfies
\begin{equation}\label{eq:sigma2-lower}
\sigma^2
=
\mathrm{Var}_{Y\sim \mathcal R_{\mathrm{BG}}}\!\left(\frac{\phi_0(Y)-\phi_1(Y)}{\mathcal R_{\mathrm{BG}}(Y)}\right)
=
\gamma\left[
\int_{-\infty}^{1/2}\frac{(\phi_0-\phi_1)^2}{\phi_1}\,dy
+
\int_{1/2}^{\infty}\frac{(\phi_0-\phi_1)^2}{\phi_0}\,dy
\right].
\end{equation}
Hence,
\[
\sigma^2 \ge \gamma\int_0^c \frac{(\phi_0-\phi_1)^2}{\phi_1}\,dy.
\]
For $y\in[0,c]$,
\[
\frac{\phi_1(y)}{\phi_0(y)}
=
\exp\!\left(-\frac{(1-y)^\beta-y^\beta}{c^\beta}\right).
\]
Choose $c_0\in(0,1/4)$ small enough so that for all $c\in(0,c_0)$ and all $y\in[0,c]$,
\[
(1-y)^\beta-y^\beta \ge (1-c)^\beta-c^\beta \ge \frac12.
\]
Then for such $c$ and $y$ we have $\phi_1(y)/\phi_0(y)\le e^{-1/(2c^\beta)}\le 1/2$, and therefore
\[
(\phi_0(y)-\phi_1(y))^2 \ge \frac14\,\phi_0(y)^2.
\]
It follows that, for all $c\in(0,c_0)$,
\begin{equation}\label{eq:sigma2-lower2}
\sigma^2
\ge
\frac{\gamma}{4}\int_0^c \frac{\phi_0(y)^2}{\phi_1(y)}\,dy.
\end{equation}
Next, for $y\in[0,c]$,
\[
\frac{\phi_0(y)^2}{\phi_1(y)}
=
\frac{K}{c}\exp\!\left(\frac{(1-y)^\beta-2y^\beta}{c^\beta}\right)
\ge
\frac{K}{c}\exp\!\left(\frac{(1-c)^\beta}{c^\beta}-2\right).
\]
Possibly shrinking $c_0$ further, we may ensure $(1-c)^\beta\ge 1/2$ for all $c\in(0,c_0)$, so that
\[
\frac{\phi_0(y)^2}{\phi_1(y)}
\ge
\frac{K}{c}\exp\!\left(\frac{1}{2c^\beta}-2\right)
\qquad \text{for all }y\in[0,c],\ c\in(0,c_0).
\]
Plugging this into \eqref{eq:sigma2-lower2} yields
\[
\sigma^2
\ge
\frac{\gamma}{4}\int_0^c \frac{K}{c}\exp\!\left(\frac{1}{2c^\beta}-2\right)\,dy
=
\gamma\cdot \frac{K}{4}\exp(-2)\cdot \exp\!\left(\frac{1}{2c^\beta}\right).
\]
Therefore,
\[
\chi_{\mathrm{lo}}=\frac{\sqrt{\gamma}}{\sigma}
\le
\frac{\sqrt{\gamma}}{\sqrt{\gamma}\,\sqrt{(K/4)e^{-2}}\,\exp\!\left(\frac{1}{4c^\beta}\right)}
=
C\,\exp\!\left(-\frac{1}{4c^\beta}\right)
\]
for some constant $C>0$ and all $c\in(0,c_0)$.
Hence $1/c^\beta = \Theta(1/\sigma_0^\beta)$, so there exist $A>0$ and $\sigma_\star>0$ such that
\[
\chi_{\mathrm{lo}} \le \exp\!\Bigl(-\frac{A}{\sigma_0^\beta}\Bigr)
\qquad\text{for all }\sigma_0\in(0,\sigma_\star).
\]

\subsubsection{The Dimensional Dependence}
\label{app:dim-dependence-gamma}
We consider the $\beta=2$-Gaussian local randomizer on the $d$-dimensional
$\ell_2$-ball $\mathcal X=\{x\in\mathbb R^d:\|x\|_2\le r\}$ with isotropic noise
$Z\sim\mathcal N(0,\sigma_0^2 I_d)$:
\[
\mathcal R(x)=x+Z,\qquad 
\phi_x(y)=\frac{1}{(2\pi\sigma_0^2)^{d/2}}
\exp\!\left(-\frac{\|y-x\|_2^2}{2\sigma_0^2}\right).
\]
The blanket density is $\underline{\mathcal R}(y)=\inf_{x\in\mathcal X}\phi_x(y)$.
Since $\phi_x(y)$ decreases with $\|y-x\|_2$, the infimum is attained at a point
$x^\star(y)\in\mathcal X$ that maximizes $\|y-x\|_2$. Over the Euclidean ball,
the farthest point from $y$ is the boundary point opposite to $y$, hence
\[
\max_{\|x\|_2\le r}\|y-x\|_2=\|y\|_2+r,
\qquad
x^\star(y)=-r\,\frac{y}{\|y\|_2}\ \ (y\neq 0).
\]
Therefore,
\[
\underline{\mathcal R}(y)
=\frac{1}{(2\pi\sigma_0^2)^{d/2}}
\exp\!\left(-\frac{(\|y\|_2+r)^2}{2\sigma_0^2}\right).
\]
Let $\phi_0$ denote the $\mathcal N(0,\sigma_0^2 I_d)$ density. Expanding
$(\|y\|_2+r)^2=\|y\|_2^2+2r\|y\|_2+r^2$ gives
\[
\underline{\mathcal R}(y)
=\phi_0(y)\exp\!\left(-\frac{r}{\sigma_0^2}\|y\|_2-\frac{r^2}{2\sigma_0^2}\right).
\]
Hence the blanket mass
\[
\gamma:=\int_{\mathbb R^d}\underline{\mathcal R}(y)\,dy
=\exp\!\left(-\frac{r^2}{2\sigma_0^2}\right)\,
\mathbb E_{Y\sim\mathcal N(0,\sigma_0^2 I_d)}
\Bigl[\exp\!\bigl(-\tfrac{r}{\sigma_0^2}\|Y\|_2\bigr)\Bigr].
\]
Writing $Y=\sigma_0 G$ with $G\sim\mathcal N(0,I_d)$ and setting
$a:=r/\sigma_0>0$ yields
\begin{equation}
\label{eq:gamma-laplace-norm}
\gamma
=\exp\!\left(-\frac{r^2}{2\sigma_0^2}\right)\,
\mathbb E\bigl[e^{-a\|G\|_2}\bigr].
\end{equation}

We now derive a simple upper bound showing $\gamma=\exp(-\Omega(\sqrt d))$ for
fixed $(r,\sigma_0)$. Let $t:=\frac12\sqrt d$. Splitting on the event
$\{\|G\|_2>t\}$ gives
\[
\mathbb E[e^{-a\|G\|_2}]
=\mathbb E\!\left[e^{-a\|G\|_2}\mathbf 1_{\{\|G\|_2>t\}}\right]
+\mathbb E\!\left[e^{-a\|G\|_2}\mathbf 1_{\{\|G\|_2\le t\}}\right]
\le e^{-at}+\Pr(\|G\|_2\le t),
\]
since $e^{-a\|G\|_2}\le e^{-at}$ on $\{\|G\|_2>t\}$ and
$e^{-a\|G\|_2}\le 1$ everywhere. It remains to bound the left-tail probability.
Because the map $g\mapsto \|g\|_2$ is $1$-Lipschitz, Gaussian concentration
implies that for all $s\ge 0$,
\begin{equation}
\label{eq:gauss-conc-norm}
\Pr\bigl(\|G\|_2\le \mathbb E\|G\|_2-s\bigr)\le e^{-s^2/2}.
\end{equation}
Moreover, $\mathbb E\|G\|_2\ge \sqrt{d-\tfrac12}$. For $d\ge 8$ this yields
\[
s:=\mathbb E\|G\|_2-t
\ge \sqrt{d-\tfrac12}-\tfrac12\sqrt d
\ge \tfrac14\sqrt d,
\]
and therefore, by~\eqref{eq:gauss-conc-norm},
\[
\Pr(\|G\|_2\le t)\le \exp\!\left(-\frac{s^2}{2}\right)
\le \exp\!\left(-\frac{d}{32}\right).
\]
Combining the bounds, we obtain for $d\ge 8$,
\[
\mathbb E[e^{-a\|G\|_2}]
\le \exp\!\left(-\frac{a}{2}\sqrt d\right)+\exp\!\left(-\frac{d}{32}\right)
\le 2\exp\!\left(-\frac{a}{2}\sqrt d\right),
\]
where the last inequality holds for all sufficiently large $d$ since
$e^{-d/32}\le e^{-(a/2)\sqrt d}$. Substituting this into~\eqref{eq:gamma-laplace-norm}
gives
\[
\gamma
\le 2\exp\!\left(-\frac{r^2}{2\sigma_0^2}\right)
\exp\!\left(-\frac{r}{2\sigma_0}\sqrt d\right)
= C(r,\sigma_0)\exp\!\left(-\frac{r}{2\sigma_0}\sqrt d\right),
\]
for a constant $C(r,\sigma_0)=2\exp(-r^2/(2\sigma_0^2))$ independent of $d$.
In particular, for fixed $(r,\sigma_0)$ we have $\gamma=\exp(-\Omega(\sqrt d))$.

\subsection{$k$-RR}

We next specialize the shuffle indices to the $k$-RR. For an input $x\in[k]:=\{1,\dots,k\}$, the output distribution is
\[
  \phi_x(y)
  =
  \Pr[\mathcal R(x)=y]
  =
  \begin{cases}
    p, & y=x,\\[2pt]
    q, & y\neq x,
  \end{cases}
  \qquad
  p=\dfrac{\mathrm e^{\varepsilon_0}}{\mathrm e^{\varepsilon_0}+k-1},\quad
  q=\dfrac{1}{\mathrm e^{\varepsilon_0}+k-1}.
\]
The blanket distribution is uniform on $[k]$:
$
  \phi_{\mathrm{BG}}(y)
  = \frac{1}{k},
  \gamma=\sum_{y}\min_x\phi_x(y)
  = kq.
$

\subsubsection{Lower shuffle index.}
For any distinct pair $x_1\neq x_1'\in[k]$, the privacy amplification
random variable with respect to the blanket distribution is
\[
  l_0(y;x_1,x_1',\mathcal R_{\mathrm{BG}})
  = \frac{\phi_{x_1}(y)-\phi_{x_1'}(y)}{\phi_{\mathrm{BG}}(y)}
  = k\bigl(\phi_{x_1}(y)-\phi_{x_1'}(y)\bigr).
\]
A simple case distinction shows that, for $Y\sim\mathcal R_{\mathrm{BG}}$,
\[
  l_0(Y;x_1,x_1',\mathcal R_{\mathrm{BG}})
  =
  \begin{cases}
    k(p-q), & Y=x_1,\\
    -k(p-q), & Y=x_1',\\
    0, & \text{otherwise},
  \end{cases}
\]
so that $\mathbb E[l_0(Y)]=0$ and
$
  \sigma_{\mathrm{BG}}^2
  := \mathrm{Var}\bigl(l_0(Y;x_1,x_1',\mathcal R_{\mathrm{BG}})\bigr)
  = 2k(p-q)^2.
$
By definition,
\[
  \chi
  = \sqrt{\frac{\gamma}{\sigma_{\mathrm{BG}}^2}}
  = \sqrt{\frac{q}{2(p-q)^2}}
  = \sqrt{\frac{\mathrm e^{\varepsilon_0}+k-1}{2\bigl(\mathrm e^{\varepsilon_0}-1\bigr)^2}}.
\]
In particular, $\chi$ does not depend on the
particular choice of the distinct pair $(x_1,x_1')$, and hence
\[
  \chi_{\mathrm{lo}}
  = \inf_{x_1\neq x_1'} \chi
  = \sqrt{\frac{q}{2(p-q)^2}}.
\]

\subsubsection{Upper shuffle index.}
For the upper shuffle index we take the reference distribution to be one of
the local randomizers $\mathcal R_x$. Fix $x_1\neq x_1'$, and consider
\[
  l_0(y;x_1,x_1',\mathcal R_x)
  = \frac{\phi_{x_1}(y)-\phi_{x_1'}(y)}{\phi_x(y)},
  \qquad Y\sim\mathcal R_x.
\]
By symmetry, the variance only depends on how $x$ is positioned relative to
$\{x_1,x_1'\}$.

\begin{itemize}
  \item If $x=x_1$ (or symmetrically $x=x_1'$), then
  \[
    l_0(Y;x_1,x_1',\mathcal R_{x_1})
    =
    \begin{cases}
      \dfrac{p-q}{p}, & Y=x_1,\\[4pt]
      -\dfrac{p-q}{q}, & Y=x_1',\\[4pt]
      0, & \text{otherwise},
    \end{cases}
  \]
  so that
  $
    \mathrm{Var}\bigl(l_0(Y;x_1,x_1',\mathcal R_{x_1})\bigr)
    = (p-q)^2\Bigl(\frac{1}{p}+\frac{1}{q}\Bigr).
  $

  \item If $k\ge3$ and $x\notin\{x_1,x_1'\}$, then
  $
    \mathrm{Var}\bigl(l_0(Y;x_1,x_1',\mathcal R_x)\bigr)
    = 2\,\frac{(p-q)^2}{q}.
  $
\end{itemize}
Since for $p>q$ we have
$
  2\,\frac{(p-q)^2}{q}
  \;>\;
  (p-q)^2\Bigl(\frac{1}{p}+\frac{1}{q}\Bigr),
$
the variance is maximized (and hence $\chi$ is minimized) by choosing
$x\notin\{x_1,x_1'\}$ whenever $k\ge3$. Because $\gamma=1$ for
$\mathcal R_x$, we obtain, for $k\ge3$,
\[
  \chi_{\mathrm{up}}
  = \inf_{x\in[k]} \sqrt{\frac{1}{\mathrm{Var}(l_0(Y;x_1,x_1',\mathcal R_x))}}
  = \sqrt{\frac{q}{2(p-q)^2}}
\]
Thus, $\chi_{\mathrm{up}}$ also does not depend on the particular
choice of the distinct pair $(x_1,x_1')$, and hence, for $k$-randomized response with $k\ge3$, the lower and upper
shuffle indices coincide:
\[
  \chi_{\mathrm{up}} = \chi_{\mathrm{lo}}
  = \sqrt{\frac{q}{2(p-q)^2}}
  = \sqrt{\frac{\mathrm e^{\varepsilon_0}+k-1}{2\bigl(\mathrm e^{\varepsilon_0}-1\bigr)^2}}.
\]

\subsubsection{Comparison to the Clone Paradigm}
\label{app:krr_comparison_to_clone}

Here, we consider Theorem~7.4 of Feldman~\cite{feldmanStrongerPrivacyAmplification2023a}: the specific clone reduction of $k$RR.
Let the categorical output alphabet be $\mathcal A:=\{0,1,2,3\}$.
Fix parameters $p=\frac{1}{e^{\varepsilon_0}+k-1}$ and $q=\frac{k-2}{e^{\varepsilon_0}+k-1}$.
Define the \emph{reference} categorical distribution $Q_{\mathrm{ref}}$ on $\mathcal A$ by
\[
Q_{\mathrm{ref}}(0)=p,\qquad
Q_{\mathrm{ref}}(1)=p,\qquad
Q_{\mathrm{ref}}(2)=q,\qquad
Q_{\mathrm{ref}}(3)=1-2p-q.
\]
Define the two \emph{target} categorical distributions $Q_0,Q_1$ on $\mathcal A$:
\[
Q_{0}(0)=e^{\varepsilon_0}p,\qquad
Q_{0}(1)=p,\qquad
Q_{0}(2)=1-(1+e^{\varepsilon_0})p,\qquad
Q_0(3)=0,
\]
\[
Q_{1}(0)=p,\qquad
Q_{1}(1)=e^{\varepsilon_0}p,\qquad
Q_{1}(2)=1-(1+e^{\varepsilon_0})p,\qquad
Q_1(3)=0.
\]
The reduced shuffled mechanism is then
\[
\widehat{\mathcal M}_b
\;:=\;
\mathcal S\bigl(Z_1^{(b)},Z_2,\dots,Z_n\bigr),
\qquad
Z_1^{(b)}\sim Q_b,\ \ Z_2,\dots,Z_n\stackrel{\mathrm{i.i.d.}}{\sim}Q_{\mathrm{ref}},
\qquad b\in\{0,1\}.
\]
That is, one \emph{target} user draws a single categorical message from a distribution $Q_b$,
while each of the remaining $n-1$ users draws i.i.d.\ messages from a \emph{reference} categorical
distribution $Q_{\mathrm{ref}}$; all $n$ messages are then passed through an ideal shuffler.
From the discussion in Section~\ref{remark:clone_paradigm}, the hockey-stick divergence between the two shuffled outputs can be expressed as the blanket divergence with $\gamma=1$.
That is, we can interpret that Theorem~7.4 states that the blanket divergence upper boundeds the hockey-stick divergence of shuffled $k$RR.
Under this interpretation, we may analyze the reduced categorical instance with $\gamma=1$ and identify the corresponding shuffle index.

Writing $a:=e^{\varepsilon_0}$, we obtain the pointwise values of the privacy amplification random variable $l_0(y)$:
\[
l_{0}(0)=\frac{ap-p}{p}=a-1,\qquad
l_{0}(1)=\frac{p-ap}{p}=1-a=-(a-1),\qquad
l_{0}(2)=0,\qquad
l_{0}(3)=0,
\]
since $Q_0(2)=Q_1(2)$ and $Q_0(3)=Q_1(3)=0$.
Therefore, under $Y\sim Q_{\mathrm{ref}}$,
\[
l_0(Y)=
\begin{cases}
+(a-1) & \text{w.p. }p,\\
-(a-1) & \text{w.p. }p,\\
0 & \text{w.p. }1-2p,
\end{cases}
\]
which implies
\[
\sigma^2=\mathrm{Var}(l_0(Y))=\mathbb E[l_0(Y)^2]=2p\,(a-1)^2.
\]
Consequently, the shuffle index for this $\gamma=1$ categorical instance is
\[
\chi \;=\; \frac{1}{\sqrt{2p}\,(a-1)}.
\]
Finally, one has $p=\frac{1}{a+k-1}$ (equivalently, $p=\frac{1}{e^{\varepsilon_0}+k-1}$), hence
\[
\chi \;=\;
\frac{\sqrt{e^{\varepsilon_0}+k-1}}{\sqrt{2}\,\bigl(e^{\varepsilon_0}-1\bigr)}.
\]
This corresponds to $\chi_\mathrm{lo}$ of $k$RR computed in the previous section.

\section{Implementations of FFT algorithms}
\label{sec:fft-implementation-details}

\subsection{Parameters}

\label{sec:fft-parameter-selection}
The error terms in Theorem~\ref{thm:fft-blanket-error} admit explicit and
rigorous upper bounds.
Let 
\[
  M \sim 1 + \mathrm{Bin}(n-1,\gamma)
  \quad\text{as}\quad
  M = 1 + \sum_{i=2}^n B_i,
\]
where $B_2,\dots,B_n$ are i.i.d.\ $\mathrm{Bernoulli}(\gamma)$ variables,
independent of the blanket samples. In particular,
\[
  \sum_{i=2}^M Z_i
  \;\stackrel{d}{=}\;
  \sum_{i=2}^n B_i Z_i,
\]
so every error event can be expressed as a tail event of a sum of independent variables of the form $B_i Z_i$.

\smallskip

\emph{(i) Truncation error $E_{\mathrm{trunc}}$.}
Let $q := \Pr[l(Y)\notin[s,s+w^{\mathrm{in}}]] =
F_{\mathrm {ref}}(s) + 1 - F_{\mathrm {ref}}(s+w^{\mathrm{in}})$, where
$F_{\mathrm {ref}}$ is the blanket CDF. Then the truncation error is
\[
  E_{\mathrm{trunc}}
  :=
  \Pr\bigl[\perp\bigr]
  =
  1 - \mathbb E\bigl[(1-q)^{M-1}\bigr]
  =
  1 - (1-\gamma q)^{n-1},
\]
which is computable exactly from $F_{\mathrm {ref}}$ and the numerical
parameters $(n,\gamma,s,w^{\mathrm{in}})$. In particular,
$E_{\mathrm{trunc}}\le (n-1)\gamma q$ by the union bound.

\smallskip

\emph{(ii) Discretization error $E_{\mathrm{disc}}$.}
Let $D_i := Z_i^{\mathrm{tr}} - Z_i^{\mathrm{di}}$ and observe that,
by construction, $|D_i|\le h/2$ almost surely. Using the selector
representation we may write
\[
  \sum_{i=2}^M D_i
  \;\stackrel{d}{=}\;
  \sum_{i=2}^n B_i D_i,
\]
where $B_2,\dots,B_n$ are i.i.d.\ $\mathrm{Bernoulli}(\gamma)$
variables independent of $(D_i)_{i\ge2}$. Define
\[
  X_i := B_i D_i - \mathbb E[B_i D_i],\qquad
  S_{\mathrm{disc}} := \sum_{i=2}^n X_i.
\]
Then the variables $X_i$ are independent and bounded as
\[
  |X_i|
  \;\le\;
  K_{\mathrm{disc}} := \frac{h}{2}
  \quad\text{a.s.}
\]
Moreover, since $|D_i|\le h/2$ we have
\[
  \mathrm{Var}(X_i)
  = \mathrm{Var}(B_i D_i)
  \le \gamma\,\mathbb E[D_i^2]
  \le \gamma\,\frac{h^2}{4}
  =: v_{\mathrm{disc}}^2.
\]
Hence Bernstein's inequality yields, for any $a>0$,
\[
  \Pr\bigl[|S_{\mathrm{disc}}|>a\bigr]
  \;\le\;
  2\exp\!\left(
    -\frac{a^2}{
      2(n-1)v_{\mathrm{disc}}^2
      + \frac{2}{3}K_{\mathrm{disc}} a
    }
  \right)
  =
  2\exp\!\left(
    -\frac{a^2}{
      2(n-1)\gamma h^2/4
      + \frac{2}{3}(h/2)a
    }
  \right).
\]
Since
\[
  E_{\mathrm{disc}}(x;c,h,s,w^{\mathrm{in}})
  :=
  \Pr\Bigl[
    S_{\mathrm{disc}} > c
  \Bigr]
  \leq
  \Pr\bigl[|S_{\mathrm{disc}}|>c\bigr],
\]
we obtain the explicit upper bound.

\smallskip

\emph{(iii) Aliasing error $E_{\mathrm{alias}}$.}
Similarly, the aliasing error corresponds to the event that the sum of
the discretized blanket variables leaves the outer FFT window. Using
the selector representation,
\[
  \sum_{i=2}^M Z_i^{\mathrm{di}}
  \;\stackrel{d}{=}\;
  \sum_{i=2}^n B_i Z_i^{\mathrm{di}},
\]
where the truncated discretized variables $Z_i^{\mathrm{di}}$ lie in the
bounded interval $[s,s+w^{\mathrm{in}}]$. Setting
$Y_i := B_i Z_i^{\mathrm{di}} - \mathbb E[B_i Z_i^{\mathrm{di}}]$ we
obtain independent, bounded variables with $|Y_i|\le K_{\mathrm{alias}}$
and variance at most $v_{\mathrm{alias}}^2$, both computable from the
truncated blanket distribution and $\gamma$. Bernstein's inequality
applied to $S_{\mathrm{alias}} := \sum_{i=2}^n Y_i$ then yields an
explicit exponentially small upper bound of the form
\[
  E_{\mathrm{alias}}
  :=
  \Pr\Bigl[
    \sum_{i=2}^n B_i Z_i^{\mathrm{di}}
    \notin
    [\,-w^{\mathrm{out}}/2,\, w^{\mathrm{out}}/2\,]
  \Bigr]
  \;\le\;
  2\exp\!\left(
    -\frac{(w^{\mathrm{out}}/2)^2}{
      2(n-1)v_{\mathrm{alias}}^2
      + \frac{2}{3}K_{\mathrm{alias}} w^{\mathrm{out}}/2
    }
  \right).
\]
That is, the aliasing error can also be explicitly controlled.

In particular, all three error contributions
$E_{\mathrm{trunc}},E_{\mathrm{disc}},E_{\mathrm{alias}}$ admit explicit,
computable upper bounds in terms of the CDFs
$F_{\mathrm {ref}}$ and the numerical parameters
$(n,\gamma,s,w^{\mathrm{in}},h,w^{\mathrm{out}})$. Hence the total
error terms $\delta_{\mathrm{err}}^{\mathrm{low}}$ and
$\delta_{\mathrm{err}}^{\mathrm{up}}$ in
Theorem~\ref{thm:fft-blanket-error} can be made as small as desired by
a suitable choice of these knobs.

Here, we outline a systematic procedure for selecting the FFT
parameters given $\alpha,\eta_{\mathrm{trunc}},\eta_{\mathrm{disc}},\eta_{\mathrm{alias}},\eta_{\mathrm{main}}$.
By Lemma~\ref{lem:blanket-asymptotics-general}, we may write the blanket
divergence in the form\footnote{We use \eqref{eq:A-n-expansion} to be more precise.}
\[
  D^{\mathrm{blanket}}_{\mathrm e^\varepsilon,n,\mathcal R_{\mathrm{ref}},\gamma}
  =
  D_n(\varepsilon)\,\bigl(1+o(1)\bigr)
  \qquad\text{as } n\to\infty.
\]
Set
\[
  \varepsilon_n
  =
  \frac{\alpha}{\chi}\sqrt{\frac{\log n}{n}}.
\]
Moreover, Lemma~\ref{lem:band-prob-fixed-x-bigO} yields
\[
  E_{\mathrm{main}}(c_n) = \eta_{\mathrm{main}}\,D_n(\varepsilon_n)
\]
for the choice
\[
  c_n
  :=
  \frac{\sigma^2}{4\alpha^2}\,\frac{\eta_{\mathrm{main}}}{\log n},
\]
which is obtained by solving for $c_n$ from the leading term in
\eqref{eq:band-window-bigO}.
Note that $E_{\mathrm{main}}$ is not a numerical approximation error term
(in particular, it is not included in
$\delta_{\mathrm{err}}^{\mathrm{up}}$ or
$\delta_{\mathrm{err}}^{\mathrm{low}}$), so we do not need to enforce a
strict upper bound on $E_{\mathrm{main}}$; it suffices to match its
leading term to the target scale $D_n(\varepsilon_n)$.

We then choose the remaining FFT parameters in the following order.

\begin{enumerate}
  \item \textbf{Truncation window.}
  First we fix truncation parameters $s_n$ and $w_n^{\mathrm{in}}$ and let
  \[
    q_n
    :=
    \Pr\bigl[
      l_\varepsilon(Y;x_1,x_1',\mathcal R_{\mathrm{ref}})
      \notin [s_n,s_n+w_n^{\mathrm{in}}]
    \bigr].
  \]
  Using the exact expression
  \[
    E_{\mathrm{trunc}}
    = 1 - (1-\gamma q_n)^{n-1},
  \]
  we choose $(s_n,w_n^{\mathrm{in}})$ so that
  \[
    E_{\mathrm{trunc}}
    \le
    \eta_{\mathrm{trunc}}\,D_n(\varepsilon_n)
  \]
  for a prescribed relative budget $\eta_{\mathrm{trunc}}>0$.

  \item \textbf{Discretization step size.}
  Next we fix a grid width $h_n>0$. By the Bernstein-type bound above,
  the discretization error satisfies
  \[
    E_{\mathrm{disc}}
    \le
    2\exp\!\left(
      -\frac{c_n^2}{
        2(n-1)v_{\mathrm{disc}}^2
        + \frac{2}{3}K_{\mathrm{disc}} c_n
      }
    \right),
  \]
  with $v_{\mathrm{disc}}^2\le\gamma h_n^2/4$ and $K_{\mathrm{disc}}=h_n/2$.
  Hence
  \[
    E_{\mathrm{disc}}
    \le
    2\exp\!\left(
      -\frac{c_n^2}{
        \frac{\gamma}{2}\,n h_n^2 + \frac{1}{3}\,c_n h_n
      }
    \right),
  \]
  We then choose $h_n$ so that this upper bound satisfies
  \[
    E_{\mathrm{disc}}
    \le
    \eta_{\mathrm{disc}}\,D_n(\varepsilon_n)
  \]
  for a prescribed $\eta_{\mathrm{disc}}>0$.

  \item \textbf{Outer FFT window.}
  Finally, for the aliasing (wrap-around) error we have the bound
  \[
    E_{\mathrm{alias}}
    \le
    2\exp\!\left(
      -\frac{(w_n^{\mathrm{out}/2})^2}{
        2(n-1)v_{\mathrm{alias}}^2
        + \frac{2}{3}K_{\mathrm{alias}} w_n^{\mathrm{out}}/2
      }
    \right),
  \]
  where $v_{\mathrm{alias}}^2$ and $K_{\mathrm{alias}}$ depend only on the
  truncated blanket distribution and $\gamma$. Once $h_n$ and
  $w_n^{\mathrm{in}}$ are fixed as above, $v_{\mathrm{alias}}$ and
  $K_{\mathrm{alias}}$ can be computed explicitly from the truncated
  distribution. We then choose $w_n^{\mathrm{out}}$ so that
  \[
    E_{\mathrm{alias}}
    \le
    \eta_{\mathrm{alias}}\,D_n(\varepsilon_n)
  \]
  for a prescribed $\eta_{\mathrm{alias}}>0$.
\end{enumerate}


\subsection{Computation of CDFs}

The algorithms~\ref{alg:main_term_random_m}--\ref{alg:calc_final_prob}
require as input the CDFs of the privacy amplification random variable.
In the discrete case (e.g., $k$-randomized response), these CDFs can be
computed by direct enumeration.

In the generalized Gaussian setting, the map $l_\varepsilon(y;x_1,x_1',\mathcal R_{\mathrm{ref}})$ is strictly decreasing on $\mathbb R$ whenever\footnote{%
  This restriction on the parameter range is not essential.
  When $y\mapsto l_\varepsilon(y;x_1,x_1',\mathcal R_{\mathrm{ref}})$ is not monotone,
  one can still represent the CDF of $l_\varepsilon(Y)$ as a finite sum
  of terms $F_{\mathrm{ref}}(b_j(u))-F_{\mathrm{ref}}(a_j(u))$ over the
  finitely many intervals $[a_j(u),b_j(u)]$ on which
  $l_\varepsilon(y)\le u$, where each endpoint $a_j(u),b_j(u)$ can be
  computed by one-dimensional root finding.
  However, in the moderate-deviation regime $\varepsilon_n\to0$
  considered in this paper, the monotonicity condition is satisfied for
  all sufficiently large $n$, so we impose it in the main text for
  simplicity.
}
\[
  \varepsilon \le \ln 2 - \left(\frac{x_1-x_1^\prime}{2\alpha}\right)^\beta,
\]
Under this condition the privacy amplification map is invertible, and
for any reference distribution $Y\sim\mathcal R_{\mathrm{ref}}$ with CDF
$F_{\mathrm{ref}}$, the CDF of
\[
  X := l_\varepsilon(Y;x_1,x_1',\mathcal R_{\mathrm{ref}})
\]
is given by the simple composition formula
\[
  F_X(x)
  = \Pr[X \le x]
  = \Pr\bigl[Y \ge l_\varepsilon^{-1}(x)\bigr]
  = 1 - F_{\mathrm{ref}}\!\bigl(l_\varepsilon^{-1}(x)\bigr).
\]
Thus the only remaining technical task is to evaluate the inverse map
$l_\varepsilon^{-1}$.

For generalized Gaussian mechanisms with shape parameters
$\beta=1$ (Laplace) and $\beta=2$ (Gaussian), the inverse
$l_\varepsilon^{-1}$ can be expressed in closed form, so that
$F_X(x)$ is available analytically (in terms of $\Phi$ in the Gaussian
case). For general $\beta\in(1,2)$, the inverse is no longer available
in closed form, but it can be computed numerically by solving a
one-dimensional root-finding problem for each $x$, which yields
arbitrarily accurate values of $F_X(x)$ at a negligible cost compared
to the FFT step.



\end{document}